\let\footnote=\endnote
\def\VaR{{\rm VaR}}
\def\CVaR{{\rm CVaR}}
\def\Var{{\rm Var}}
\def\bA{{\bf A}}
\def\ba{{\bf a}}
\def\bB{{\bf B}}
\def\bb{{\bf b}}
\def\bC{{\bf C}}
\def\bc{{\bf c}}
\def\bF{{\bf F}}
\def\boldf{{\bf f}}
\def\bI{{\bf I}}
\def\bJ{{\bf J}}
\def\bK{{\bf K}}
\def\bM{{\bf M}}
\def\bP{{\bf P}}
\def\bR{{\bf R}}
\def\bS{{\bf S}}
\def\bT{{\bf T}}
\def\bU{{\bf U}}
\def\bV{{\bf V}}
\def\bv{{\bf v}}
\def\bX{{\bf X}}
\def\bY{{\bf Y}}
\def\bZ{{\bf Z}}
\def\bz{{\bf z}}
\def\betheta{\boldsymbol\theta}
\def\bealpha{\boldsymbol\alpha}
\def\balpha{\mbox{\boldmath $\alpha$}}
\def\bPhi{{\bf \Phi}}
\def\bdelta{\mbox{\boldmath $\delta$}}
\def\bGamma{{\bf \Gamma}}
\def\bLambda{{\bf \Lambda}}
\def\bmu{\mbox{\boldmath $\mu$}}
\def\bSigma{{\bf \Sigma}}
\def\bTheta{{\bf \Theta}}
\def\btheta{\mbox{\boldmath $\theta$}}
\def\bxi{\mbox{\boldmath $\xi$}}
\def\b#1{{\mathbf{#1}}}
\def\bzero{{\mathbf 0}}
\def\sfF{\mathsf{F}}
\def\blot{\quad {$\vcenter{\vbox{\hrule height.4pt
             \hbox{\vrule width.4pt height.9ex \kern.9ex \vrule
width.4pt}
             \hrule height.4pt}}$}}
\newcommand{\algorithmiclastcon}{\textbf{Lastcon:}}
\newcommand{\lastcon}{\item[\algorithmiclastcon]}
\renewcommand{\algorithmiclastcon}{\textbf{Output:}}
\def\b1{{\mathbf{1}}}
\def\bzero{{\mathbf{0}}}
\begin{document}
\graphicspath{{figures/}}


\RUNAUTHOR{He et al.}

\RUNTITLE{Adaptive Importance Sampling}

\TITLE{Adaptive Importance Sampling for Efficient Stochastic Root Finding and Quantile Estimation}

\ARTICLEAUTHORS{%
\AUTHOR{Shengyi He}
\AFF{Department of Industrial Engineering \& Operations Research, Columbia University, New York, NY 10027, USA, \EMAIL{sh3972@columbia.edu}} 
\AUTHOR{Guangxin Jiang}
\AFF{School of Management, Harbin Institute of Technology, Harbin, Heilongjiang 150001, China,\\ \EMAIL{gxjiang@hit.edu.cn}}
\AUTHOR{Henry Lam}
\AFF{Department of Industrial Engineering \& Operations Research, Columbia University, New York, NY 10027, USA, \EMAIL{khl2114@columbia.edu}} 
\AUTHOR{Michael C. Fu}
\AFF{The Robert H. Smith School of Business, Institute for Systems Research, University of Maryland, College Park, MD 20742, USA, \EMAIL{mfu@umd.edu}} 
} 

\ABSTRACT{%
In solving simulation-based stochastic root-finding or optimization problems that involve rare events, such as in extreme quantile estimation, running crude Monte Carlo can be prohibitively inefficient. To address this issue, importance sampling can be employed to drive down the sampling error to a desirable level. However, selecting a good importance sampler requires knowledge of the solution to the problem at hand, which is the goal to begin with and thus forms a circular challenge. We investigate the use of adaptive importance sampling to untie this circularity. Our procedure sequentially updates the importance sampler to reach the optimal sampler and the optimal solution simultaneously, and can be embedded in both sample average approximation and stochastic approximation-type algorithms. 
Our theoretical analysis establishes strong consistency and asymptotic normality of the resulting estimators. We also demonstrate, via a minimax perspective, the key role of using adaptivity in controlling asymptotic errors.
Finally, we illustrate the effectiveness of our approach via numerical experiments.
}%


\KEYWORDS{Monte Carlo simulation, importance sampling, adaptive algorithms, quantile estimation, stochastic root finding, stochastic optimization, central limit theorem} 

\maketitle

%
\section{Introduction}
A stochastic root-finding problem refers to the search of a solution $\btheta^*\in\mathbb{R}^d$ to an equation $\boldf(\btheta^*) =\bzero$, where the function $\boldf(\btheta)$ lacks analytical tractability and can only be accessed via noisy simulation. This problem is intimately related to stochastic optimization, where $\boldf$ is then  the gradient and we solve the first-order optimality conditions. Such problems are fundamental in many fields, including operations research and data science. Examples include quantile estimation where $\boldf$ involves the probability distribution function (\citealt{Wetherill1963}), continuous-space simulation optimization (\citealt{Fu2015book}), commonly used machine learning algorithms where model parameters are trained via empirical risk minimization (\citealt{BoCuNo2018}), and other applications such as the characterizations of convex risk measures as the roots of decreasing functions (\citealt{DunkelWeber2010}).

In this paper, we are interested in situations where the root-finding problem involves extremal or rare-event considerations. A primary example is extreme quantile estimation, in which the target probability level can be very close to 0 or 1 (e.g., $10^{-6}$). In this case, crude Monte Carlo can be prohibitively inefficient, as it takes roughly a sample size reciprocal to the target probability level to obtain meaningful statistical information. This phenomenon occurs generally in other examples involving rare events: As long as the solution depends crucially on samples in a region that is infrequently hit, the effectiveness of crude Monte Carlo could be substantially hampered.

To address estimation challenges related to rare events, importance sampling (IS) is commonly used (e.g., \citealt{Asmussen2007} Chapters 5 and 6; \citealt{PG} Chapter 4; \citealt{rubinstein2016simulation} Chapter 5). IS is a variance reduction technique that draws samples using a distribution distinct from the original (the importance sampler) that hits the rare-event region more often. At the same time, the estimator maintains unbiasedness via a multiplication of the sample output with the so-called likelihood ratio. If the IS distribution is carefully chosen, so that the hitting frequency and the likelihood ratio magnitude are properly controlled, then the estimation efficiency can be significantly boosted. Choosing and analyzing good IS schemes have been a focus in many studies (see, e.g., the surveys \citealt{bucklew2013introduction,juneja2006rare,blanchet2012state}).

Although there exists a rich literature on IS, there are relatively few theoretical results on applying IS to resolve the efficiency issues for crude Monte Carlo in extreme quantile estimation and other root-finding or stochastic optimization problems involving rare events. Most of the literature in IS focuses on the estimation of a target probability or expectation-type risk quantities, and the question is whether the same techniques can be easily adapted for root-finding. To this end, IS is known to be sensitive to the input choice: That is, let us suppose we already have found a good parametric IS distribution class, in the sense that there is a parameter value in the class that achieves high efficiency in the resulting sampler. If this parameter is wrongly chosen, the efficiency could be bad -- in some cases even worse than crude Monte Carlo. In other words, choosing good parameter values is crucial to the success of IS. On top of this, this value typically depends highly on the problem specification (e.g., \citealt{GylIg1989,sadowsky1991large,l2010asymptotic}).


To illustrate the above, suppose we want to estimate $P(Y>\gamma)$ for some high exceedance level $\gamma$ and model output $Y$. Consider an IS distribution class, say $\{Q_\alpha\}$ for $Y$ that is parameterized over $\alpha$. The efficiency of the resulting IS for estimating $P(Y>\gamma)$ could be highly sensitive to the choice of $\alpha$, which in turn depends on $\gamma$. That is, we can think of a good or an optimal $\alpha$ to be $I(\gamma)$ for a specific function $I(\cdot)$. We argue that this can cause significant challenges in root-finding, as the dependence of parameter choice on the target probability's specification will result in a circular choice in an inverse problem like root-finding. In this example, suppose we would like to use the IS distribution $Q_\alpha$ to improve the efficiency in estimating the extreme quantile of $Y$. Then in principle we would like to select $\alpha=I(q)$, where $q$ is the quantile. This, however, is clearly not obtainable, as it requires knowledge on the quantile $q$, which is what we want to determine in the first place.

Our main goal in this paper is to provide a mechanism to untie the above circularity. Specifically, given an efficient IS scheme designed to estimate a target probability or expectation, we offer a mechanism to convert this IS into one that is also  efficient for the inverse problem of root-finding (see Figure \ref{fig:AdaIS}). Our conversion mechanism does not require knowledge on how the initial IS works, i.e., the initial IS algorithm can be a black box, and the only thing we know is that it works well for the initial estimation problem.

\begin{figure}[H]
\centering
\includegraphics[width=16cm]{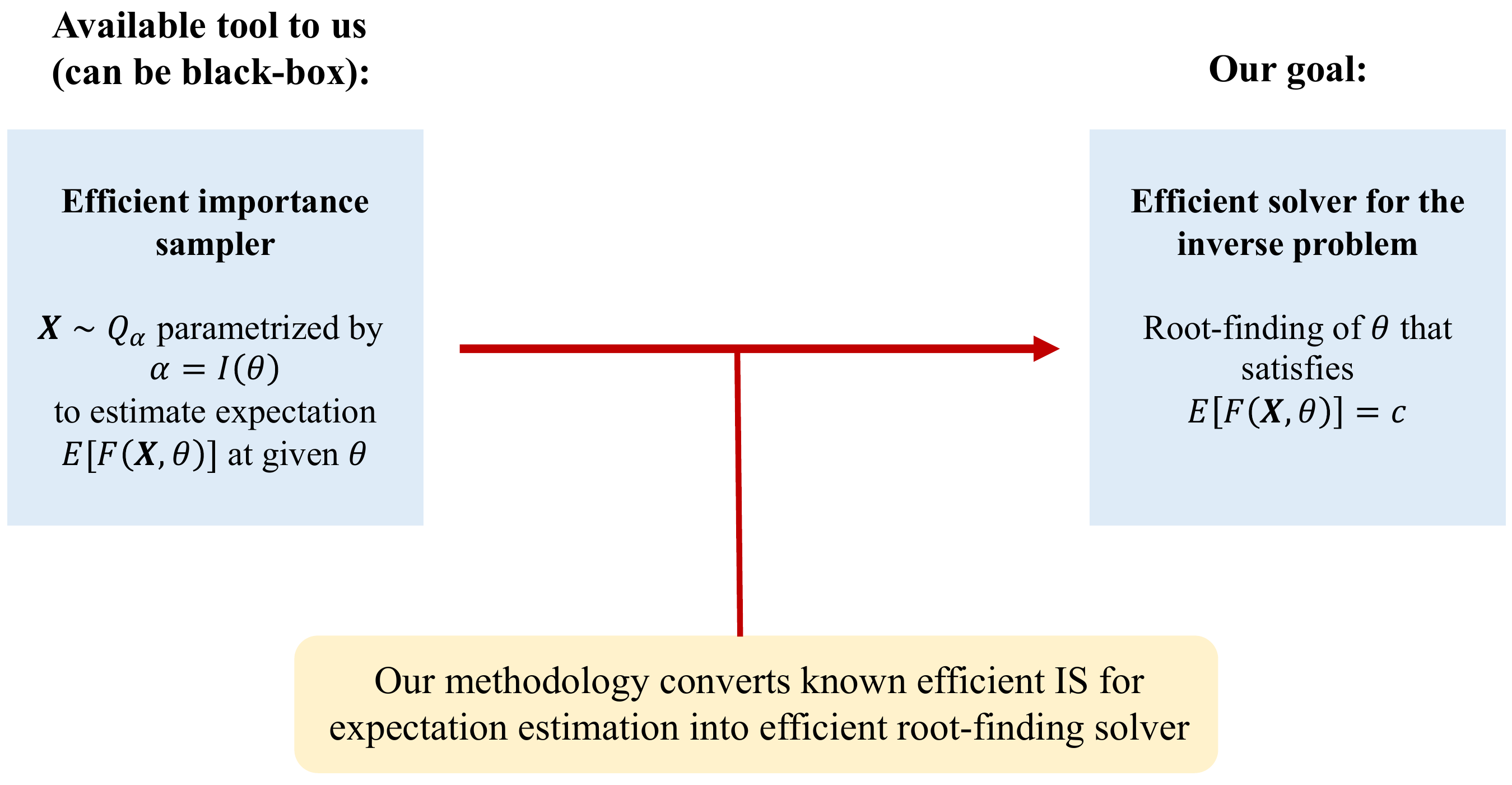}
\caption{Our goal and main usage of our methodology}
\label{fig:AdaIS}
\end{figure}

We propose an adaptive sampling approach that aims to  iteratively reach the root and update the IS parameter simultaneously. That is, at each iterative step we use the best myopic parameter value pretending that the target estimation problem is specified by the current root estimate. Using this IS, we generate new samples and update the root, which is then used to update the IS parameter again in a continuing manner. 

We support the necessity of the iterative approach above from a minimax perspective. If one does not allow iteration, then, since we do not know the root and hence the proper IS parameter, our root-finding procedure can have a large error in the worst case, as the parameter value that is used can misalign badly with the true root. On the other hand, we demonstrate that our iterative approach may achieve a substantial reduction of estimation error, regardless of the location of the root, by attaining a lower bound on the worst-case error dictated by a weak duality of the minimax error. Put in another way, this means that our approach exhibits the same asymptotic error \emph{as if we know the root in advance}, and thus is the best possible within the considered class of IS.

We discuss how to embed our adaptive IS into the two main numerical procedures for root-finding and stochastic optimization. First is sample average approximation (SAA), which replaces an unknown expectation with the empirical counterpart and applies deterministic solvers to locate the root (\citealt{Sha2003}). In quantile estimation, this corresponds to using the empirical quantile. The second main approach is stochastic approximation (SA), which can be viewed as a stochastic analog of the quasi-Newton method in deterministic optimization (\citealt{QRS2007}, \citealt{kushner2003stochastic}). SA iteratively updates the solution estimate by taking incremental steps to get closer to the true solution. In quantile estimation, this means adjusting the current quantile estimate by adding or subtracting a step depending on whether the new sample falls above or below the quantile estimate. Our adaptive IS can be embedded easily in both SAA and SA. In SAA, our adaptive approach leads to an iterative run of SAA programs, each with additional new samples. While this could be computationally intense for some problems, this approach is well suited for extreme quantile estimation, as each SAA corresponds simply to finding an empirical quantile. On the other hand, since SA is already an iterative approach, our IS can be naturally embedded at every iteration and does not cause extra computational complexity.

We investigate the consistency and central limit convergence of our adaptive IS embedded in both SAA and SA. These convergence theorems characterize the asymptotic behavior to support the superiority of our approach in performing at the same level as if we know the solution in advance, which is the best possible when using the considered IS class. Our technical developments require developing asymptotic normality results for a variant of SAA constructed from dependent data via martingale differences, and combining with functional complexity measures such as bracketing numbers, which are suitably adapted for our setting. Finally, we conduct experiments to validate our theoretical claims by comparing with other benchmarks. 

The rest of this paper is as follows. Section \ref{sec:lit review} first reviews related literature. Section \ref{sec:problem_set} formulates the stochastic root-finding problems and presents the challenges in applying standard IS via a worst-case analysis on the asymptotic variances.
Section \ref{sec: adaptiveIS} presents the main procedures in embedding our adaptive IS in SAA and SA, and provides  theoretical guarantees in single-dimensional settings.
Section \ref{sec:quantile} specializes to quantile estimation. Section \ref{sec:multidim} generalizes our framework to multi-dimensional problems. 
Section \ref{sec:example} applies our approach on numerical examples, including analyses to verify required assumptions. Section \ref{sec:concl} concludes our paper.
Proofs of all the theoretical results are provided in the Appendix.

\section{Related Work}\label{sec:lit review}
Our work is related to several others on quantile estimation that, similar to our approach, choose IS distributions adaptively in simulation rounds. 
\cite{morio2012extreme} proposes a nonparametric approach using a Gaussian kernel density to build the IS. 
\cite{egloff2010quantile} updates the IS parameter using an SA procedure similar to our proposed IS embedding, leading to a consistent quantile estimator where the parameter converges to the variance minimizer, but they do not analyze asymptotic variance. \cite{pan2020adaptive} considers adaptive IS for quantile estimation using a two-layer model where the inner layer is a black box, and are able to establish consistency and demonstrate variance reduction empirically by comparing with crude Monte Carlo.

The closest work to ours is \cite{bardou2009computing}, which considers estimation of Value-at-Risk (VaR) and conditional Value-at-Risk (CVaR), and also views the optimal IS parameter as a solution to a stochastic root-finding problem.
An iterative (SA) algorithm is proposed, for which they establish a central limit theorem (CLT) showing that this approach will lead to the smallest asymptotic variance among the chosen IS class. 
Our work considers a general framework to translate efficient IS from expectation estimation to stochastic root-finding, and generalizes \cite{bardou2009computing} in two main directions: 
(i) We can handle general parametrizations, whereas their work considers only location translations and exponential twists. (ii) We embed our adaptive IS in both SA and SAA, whereas they consider only the SA setting. In particular, our analysis of SAA is substantially more involved and requires developing new tools via empirical process theory.

We also briefly mention some variance reduction in quantile estimation using other adaptive methods. \cite{cannamela2008controlled} uses a reduced model to design the variance reduction scheme.
\cite{hu2008bootstrap} studies the use of adaptive IS in bootstrap quantile estimation. 
Other variance reduction approaches for quantile estimation that do not use adaptive methods include 
classical IS (\citealt{Glynn1996,SunHong2010}), control variates (\citealt{HsuNel1990, HesNel1998}), Latin hypercube sampling (\citealt{JinFuXiong2003, DongNaka2017}),
stratified sampling (\citealt{GlHeSh2000, ChuNa2012}), and splitting (\citealt{GuNeMa2011}). 

Work on using adaptive IS for expected value estimation is abundant, and we mention just a few here. 
\cite{au1999new} uses the Metropolis algorithm 
combined with kernel method to get an approximation to the optimal IS distribution. 
\cite{fu2002optimal} and  \cite{egloff2005optimal} find an optimal importance sampler using SA and then use this sampler to estimate the expectation. \cite{ryu2014adaptive} updates the IS sampler simultaneously with the expectation estimation. \cite{cornuet2012adaptive} adaptively chooses the IS density by fitting the moments and reweights all of the past samples in each step. \cite{KBCP99} and \cite{ABJ06} use adaptive IS along the time horizon to estimate the first passage of Markov chains and provide convergence guarantees. A comprehensive review of adaptive IS for expectation estimation can be found in \cite{bugallo2017adaptive}. 

Since our adaptive method is built on IS samplers for expectation estimation where rare events are involved, our work naturally relates to rare event simulation. \cite{bucklew2013introduction}, \cite{juneja2006rare} and \cite{blanchet2012state} provide surveys on the literature. Common approaches to design good IS for rare-event estimation utilize large deviations (\citealt{budhiraja2019analysis}) by scrutinizing the exponential twist in the rate function (\citealt{nicola2001techniques,dupuis2012importance,collamore2002importance,BlLiNa2019,BL14}), which also leads to the dominating point method (\citealt{sadowsky1990large,dieker2005asymptotically,owen2019importance}), subsolution approaches (\citealt{dupuis2009importance,blanchet2012lyapunov}), and mixture-based schemes that are especially useful in heavy-tailed problems (\citealt{blanchet2008efficient,blanchet2008state,chen2019efficient,blanchet2012efficient,murthy2015state,hult2012importance}). A more closely related approach to our study is the large body of work on the cross-entropy method (\citealt{rubinstein2004CE,rubinstein2016simulation, DEBKROMANRUB05}), which originally was used to design IS for estimating rare-event probabilities and later was applied also to root-finding and optimization. This approach involves iteratively updating the IS parameters by minimizing the Kullback-Leibler divergence between the considered IS class and a zero-variance estimator. This latter step typically results in an SAA problem constructed from samples drawn from the most recent IS. While our adaptive IS and cross entropy have similar characteristics in using sequential IS updates and SAA formulations, the settings and the formulated SAA are different. The SAA in cross entropy is an empirical counterpart of the Kullback-Leibler divergence minimization. In contrast, our SAA, or SA, arises from the objective function in our target root-finding problem, and we assume a priori availability of a good IS for the corresponding expectation estimation problem. As a result, the guarantees we achieve are also different from the cross-entropy method. Finally, we mention extensions and variants to cross entropy such as model reference adaptive search \citep{HuFuMa07} for the optimization setting and methods using, e.g., Markov chain Monte Carlo (\citealt{botev2013markov,grace2014automated,chan2012improved,botev2020sampling}).

We briefly review methods for stochastic root-finding and optimization problems, where SAA and SA are commonly used.
The Robbins-Monro SA (RM-SA) algorithm (\citealt{RM1951}) is one of the most widely used stochastic root-finding and optimization methods, which can be viewed as a stochastic counterpart to the quasi-Newton iteration in solving deterministic root-finding problems. 
To increase robustness and alleviate the well-known sensitivity of RM-SA to its stepsize sequence, iterate averaging, first proposed by \cite{polyak1992}, is often used. 
Other recent variants for improving the practical performance of the SA method include robust SA (\citealt{NemJudLanSha2009}),  accelerated SA (\citealt{GhaLan2013}), and the secant-tangent averaged SA (\citealt{ChQuFu14,ChauFu2017}). General finite-time bounds for SA can be found in, e.g., \cite{broadie2011general} and \cite{srikant2019finite}. More details of SA can be found in \cite{Fu2015}, \cite{kushner2003stochastic}, \cite{borkar2009stochastic}, and the references therein.


The SAA method (\citealt{shapiro2014lectures,KPH2015}), also known as the Monte Carlo sampling method (\citealt{Sha2003}, \citealt{ShaNem2005}, \citealt{homem2014monte}), is another commonly used technique for stochastic root-finding and optimization. 
Theoretical properties like consistency and asymptotic normality of SAA can be found in \cite{Rob1996} and \cite{kleywegt2002sample}. The lines of work \cite{mak1999monte}, \cite{bayraksan2006assessing}, \cite{bayraksan2011sequential}, \cite{freimer2012impact}, \cite{lam2017empirical} and \cite{lam2018bounding} study the statistical properties and estimation of optimality gaps in SAA. 
To improve computational efficiency, the retrospective approximation method is proposed to generate a sequence of SAA problems with progressively increasing sample size and then solve these problems with decreasing error tolerances (\citealt{PaSch2009,Pasu2010}). 

Recently, the probabilistic bisection algorithm (PBA), first proposed by \cite{Hor1963}, has been applied to stochastic root-finding problems. 
\cite{WaePeHe2013} derives a PBA where the expected absolute residuals converge to zero at a geometric rate, and \cite{PeHeWae2019} proposes an extended PBA that has a convergence rate arbitrarily close to, but slower than, the rate of SA.
\cite{RoLu2020} extends PBA to unknown sampling distributions and location-dependent settings. 
More methods for stochastic root-finding problems can be found in \cite{PaKi2011}, \cite{Waeber2013}, and the references therein.

\section{Problem Setting and Motivation}\label{sec:problem_set}
We consider a stochastic root-finding problem in the following standard form. Let $\boldf(\cdot)$ be a vector-valued function $\mathbb R^d\to\mathbb R^d$. Let $\bF(\mathbf X,\cdot)$ be an unbiased observation for $\boldf(\cdot)$ generated from simulation, where $\bF(\cdot,\cdot):\mathbb R^r\times\mathbb R^d\to\mathbb R^d$ and $\bX=(X_1,X_2,\ldots,X_r)\in\mathbb R^r$ is a random vector with probability distribution $P$. We are interested in finding the unique root $\btheta^*\in \mathbb R^{d}$ to the equation 
\begin{equation}\label{eq:basicRFeq}
\boldf(\btheta)\triangleq\mathds{E}_{\bX\sim P}\left[\bF(\bX,\btheta)\right]=\bc,
\end{equation}
where $\mathds{E}_{\bX\sim P}[\cdot]$ denotes the expectation under $P$. Our premise is that the function $\boldf(\btheta)$ lacks analytical tractability and can only be accessed via the unbiased simulation output $\bF(\bX,\btheta)$. Important examples that can be cast in the above general formulation include:
\begin{example}[Quantile estimation]
We are interested in finding the $p$th quantile of a random output, say $h(\bX)$, that can be simulated. Denote $F_{h}$ as the cumulative distribution function of $h(\bX)$. This problem is equivalent to finding the root of
\begin{equation*}
F_{h}(\theta)= P(h(\bX)\leq \theta)=\mathds{E}_{\bX\sim P}\left[\mathbf{1}{\{h(\bX)\leq \theta\}}\right] = p,
\end{equation*}
which is \eqref{eq:basicRFeq} with $\bF(\bX,\btheta) = \mathbf{1}{\{h(\bX)\leq \theta\}}$ and $\bc=p$.
\end{example}

\begin{example}[Stochastic optimization]
We are interested in an optimization problem
\begin{equation*}
\min_{\betheta} \mathds{E}_{\bX\sim P}[H(\bX,\btheta)],
\end{equation*}
where 
we have a stochastic gradient estimator $\mathbf F(\bX,\btheta)$ with $\mathds{E}_{\bX\sim P}[\mathbf F(\bX,\btheta)] = \nabla\mathds{E}_{\bX\sim P}[H(\bX,\btheta)]$ that can be simulated.
Then the first-order optimality condition becomes finding the root to
$$
\nabla\mathds{E}_{\bX\sim P}[H(\bX,\btheta)] =
\mathds{E}_{\bX\sim P}[\mathbf F(\bX,\btheta)]=\bzero,
$$
which is \eqref{eq:basicRFeq} with $\bc=\bzero$. Under appropriate regularity conditions, 
such a $\mathbf F(\bX,\btheta)$
can be found via techniques such as infinitesimal perturbation analysis (\citealt{heidelberger1988convergence,ho1983infinitesimal,PG1991,l1990unified}), the likelihood ratio or the score function method (\citealt{glynn1990likelihood,rubinstein1986score,reiman1989sensitivity}), measure-valued differentiation (\citealt{heidergott2010gradient}), or other variants 
(e.g., \citealt{FuHu97,peng2018new}). Alternately, when a ``direct" unbiased gradient estimator is unavailable, we can also approximate $\nabla\mathds{E}_{\bX\sim P}[H(\bX,\btheta)]$ via finite-difference schemes using $H(\bX,\cdot)$ to define $\mathbf F(\bX,\btheta)$. 



\end{example}

To solve \eqref{eq:basicRFeq}, the two main approaches in the literature are sample average approximation (SAA) and stochastic approximation (SA). They work as follows. In SAA, we first generate $n$ simulation samples $\{\bX_1,\bX_2,\ldots,\bX_n\}$ from $P$, and approximate \eqref{eq:basicRFeq}  by replacing the expectation with its empirical counterpart, namely
\begin{equation*}
\frac{1}{n}\sum_{i=1}^n \bF(\bX_i,\btheta) =\bc.
\end{equation*}
Then we apply deterministic root-finding procedures, e.g., the Newton-Raphson method, to obtain the estimated root $\hat \btheta_n$. 
Under regularity conditions (\citealt{shapiro2014lectures}, \citealt{ShaNem2005}), this estimator satisfies the asymptotic normality
\begin{equation}\label{SAA_asymptotic_standard}
    \sqrt{n}\left(\hat{\betheta}_n-\betheta^*\right)\Rightarrow \mathcal{N}\left(\bzero, [\bJ(\btheta^*)]^{-\top}\Var_{\bX\sim P}\left( \bF(\bX,\btheta^*)\right)  [\bJ(\btheta^*)]^{-1}\right),
\end{equation}
where ``$\Rightarrow$'' means convergence in distribution, $\bJ(\btheta)\triangleq D\boldf(\betheta)/D\betheta $ is the Jacobian matrix of $\boldf(\betheta)$, and ``$\top$'' denotes the transpose (``$-\top$'' denotes the inverse of the transpose).

In SA, the Robbins-Monro SA (RM-SA) procedure estimates the root by generating a sequence of iterates $\{\hat \btheta_{k}\}$ via the recursion
\begin{equation}\label{eq:SA-RM}
\hat \btheta_{k+1}=\hat \btheta_{k}-\gamma_{k}\bK\left( \bF\left(\bX_k, \hat \btheta_{k}\right)-\bc\right),~k=1,2,\ldots,
\end{equation}
where $\gamma_k$ is an appropriate stepsize such that 
\begin{equation*}\label{eq:RM_stepsize}
\sum_{i=1}^\infty \gamma_k = \infty~~\text{and}~~\sum_{i=1}^\infty \gamma^2_k < \infty,
\end{equation*}
and $\bK$ is an appropriate matrix. 
In practice, we usually set $\gamma_k = \gamma/k$, where $\gamma$ is a prescribed constant. With this choice of stepsize, under regularity conditions (\citealt{fabian1968}) we have asymptotic normality  (which is also an implication of our Theorem \ref{prop: SA_CLT} later) as follows.
Let $\bP$ be an orthogonal matrix such that 
\[
\gamma\bP^{\top}\bK\bJ(\btheta^{*})\bP=\bLambda
\]
is diagonal, then
\[
\sqrt{n}\left(\hat{\btheta}_{n}-\btheta^{*}\right)\Rightarrow\mathcal{N}\left(\bzero,\bP\bM\bP^{\top}\right),
\]
where $[\bM]_{ij}=\gamma^2[\bP^{\top}\bK\Var_{\bX\sim P}\left( \bF(\bX,\btheta^*)\right)\bK^{\top} \bP]_{ij}([\bLambda]_{ii}+[\bLambda]_{jj}-1)^{-1}$.

A variant of the RM-SA algorithm \eqref{eq:SA-RM} is the Polyak-Ruppert averaging SA (PR-SA; \citealt{polyak1992}), motivated by the desire to reduce sensitivity to stepsize in RM-SA. 
This approach takes bigger steps in \eqref{eq:SA-RM} 
(e.g., $\gamma_k = \gamma/k^{\alpha}$ for $1/2<\alpha<1$) and, when the iteration stops, averages the historical iterates to obtain
\begin{equation*}
\bar \btheta_n = \frac{1}{n}\sum_{k=1}^n \hat \btheta_k.
\end{equation*}
Under proper regularity conditions, SAA and PR-SA achieve the same asymptotic variance, which is optimal among RM-SA when the stepsize constant $\gamma$ and $\bK$ is chosen such that $\gamma \bK= [\bJ (\betheta^*)]^{-1}$ (\citealt{polyak1992,Asmussen2007}).
That is, we have 
\begin{equation}\label{SA_average_asymptotic_standard}
    \sqrt{n}\left(\bar{\betheta}_n-\betheta^*\right)\Rightarrow \mathcal{N}\left(\bzero, [\bJ(\btheta^*)]^{-\top}\Var_{\bX\sim P}\left( \bF(\bX,\btheta^*)\right)  [\bJ(\btheta^*)]^{-1}\right).
\end{equation}
The methodology that we propose in this paper applies to all three algorithms depicted above. 

\subsection{Challenges in Incorporating Importance Samplers}\label{sec:challenge IS}

In this paper, we consider situations where the root-finding problem \eqref{eq:basicRFeq} involves a rare event. A prime example is extreme quantile estimation, where the root $\theta^*$ is the quantile corresponding to a very high (or low) probability level. To facilitate presentation of our main ideas, we consider a one-dimensional output, i.e., $\bF(\cdot,\cdot)\in\mathbb R$ (also denoted as unbold form $F(\cdot,\cdot)$), in this section, and use the extreme quantile (Example 1) as our running example.

In the one-dimensional case, the asymptotic normality \eqref{SAA_asymptotic_standard} is simplified to 
$$\sqrt{n}\left(\hat{\theta}_n-\theta^*\right) \Rightarrow \mathcal{N}\left(0,\frac{\Var_{\bX\sim P}(F(\bX,\theta^*)) }{(f^{\prime}(\theta^*))^2}\right).$$
In quantile estimation, this is $\sqrt{n}(\hat{\theta}_n-\theta^*) \Rightarrow N\left(0,{p(1-p)}/{\phi(\theta^*)^2}\right)$, where $\phi(\theta)$ is the density of $h(\bX)$. Typically, when $p$ is very close to 0 or 1, $\phi(\theta)$ is correspondingly tiny and the asymptotic variance ${p(1-p)}/{\phi(\theta^*)^2}$ becomes very large. Note that the asymptotic normality dictates that the sample size $n$ required to achieve a target accuracy level is proportional to the asymptotic variance, so that a large variance implies a large required sample size. In fact, in many situations where the rare-event probability is governed by a large deviations theory (\citealt{bucklew2013introduction}), this required sample size is exponentially large in the ``rarity parameter". Such an issue motivates one to consider variance reduction, in particular IS.

The basic idea of IS is to change the probability measure from which random variables are generated, which results in more frequent hits on the important regions. To maintain unbiasedness, the outputs are weighted by the so-called likelihood ratios, which are the Radon-Nikodym derivatives between the IS and the original measures. IS achieves variance reduction by using a generating distribution that has well-controlled likelihood ratios in the target hit region.  
Consider the estimation of $f(\theta)$. The crude Monte Carlo estimator for $f(\theta)$ is given by $\sum_{i=1}^n F(\bX_i,\theta)/n$, where $\bX_1,\dots,\bX_n$ are i.i.d. samples drawn from the original distribution $P$. 
In contrast, IS generates samples $\bX_1,\dots,\bX_n$ from an IS distribution $P_{\bealpha}$, and estimates $f(\theta)$ by $\sum_{i=1}^n F(\bX_i,\theta)\ell(\bX_i,\bealpha)/n$, where $\ell(\bX,\bealpha) \triangleq \phi(\bX)/\phi_{\bealpha}(\bX)$, with $\phi$ and $\phi_{\bealpha}$ the densities of $\bX$ under $P$ and $P_{\bealpha}$, respectively, is the likelihood ratio. This latter estimator is unbiased as we can write $f(\theta) = \mathds{E}_{\bX\sim P_{\bealpha}}\left[ F(\bX,\theta) \ell(\bX,\bealpha)\right]$. Here, a good choice of the IS sampler $P_{\bealpha}$ should exhibit a small variance of $F(\bX_i,\theta)\ell(\bX_i,\bealpha)$.

However, for root-finding problems, the design of an IS sampler is more challenging. To see this, suppose we want to use some sampler $P_{\bealpha}$, and we apply SAA, RM-SA and PR-SA. More specifically, with IS, SAA would generate $\bX_1,\dots,\bX_n$ i.i.d. from $P_{\bealpha}$ and estimate the objective function $f(\theta)$ using
$$\frac{1}{n}\sum_{i=1}^n F(\bX_i,\theta)\ell(\bX_i,\bealpha)$$ 
from which we output the root, denoted $\hat{\theta}_n$. Under regularity conditions, the behavior of this root estimation is governed by the following CLT:
$$\sqrt{n}\left(\hat{\theta}_n-\theta^*\right) \Rightarrow \mathcal{N}\left(0,\frac{\Var_{\bX\sim P_{\bealpha}}\left(F(\bX,\theta^*)\ell(\bX,\bealpha)\right)}{(f^{\prime}(\theta^*))^2}\right).$$

For RM-SA and PR-SA, the recursion is replaced by 
$$\hat{\theta}_{k+1} = \hat{\theta}_{k}-\gamma_k F(\bX_k,\hat{\theta}_k)\ell(\bX_k,\bealpha),$$
where $\bX_k\sim P_{\bealpha}$ is independent of the past samples. Under regularity conditions, the errors of the root estimation using these methods are also governed by CLTs. For PR-SA, we know that 
$$\sqrt{n}\left(\bar{\theta}_n-\theta^*\right) \Rightarrow \mathcal{N}\left(0,\frac{\Var_{\bX\sim P_{\bealpha}}\left(F(\bX,\theta^*)\ell(\bX,\bealpha)\right)}{(f^{\prime}(\theta^*))^2}\right).$$
For RM-SA, it is known that if the stepsize is chosen as $\gamma_k = \gamma/k$, then 
$$\sqrt{n}\left(\hat{\theta}_n-\theta^*\right) \Rightarrow \mathcal{N}\left(0,\frac{\gamma^2\Var_{\bX\sim P_{\bealpha}}\left(F(\bX,\theta^*)\ell(\bX,\bealpha)\right)}{2\gamma f^{\prime}(\theta^*) -1}\right).$$
From the above asymptotic normalities, we observe that the asymptotic variance would depend on the choice of $P_{\bealpha}$ through the variance $\Var_{\bX\sim P_{\bealpha}}\left(F(\bX,\theta^*)\ell(\bX,\bealpha)\right) $. A good sampler $P_{\bealpha}$ needs to make this variance small. However, this expression for the variance involves $\theta^*$, which is unknown a priori as it is exactly what we want to solve. This leads to a circular challenge: {\em On one hand we need a good sampler to efficiently estimate the root; on the other hand, we need the root to decide the efficient sampler.} This phenomenon is fundamental in the design of IS for stochastic root-finding problems.

\subsection{A Worst-Case Perspective of Asymptotic Variances}\label{sec:minimax}

To more concretely illustrate the challenge and what our proposed adaptive IS can achieve, we use a worst-case perspective and look at the minimax error. To set up this discussion, suppose that the candidate importance samplers form a parametric family $\left\{ P_{\bealpha},\bealpha\in\Lambda\right\}$. Let $\Theta$ be the family of all possible roots, and we consider the family of root-finding problems in which we find the root of $\mathds{E}[F(\theta,\bX)]-c$ where $c$ varies in $\{f(\theta):\theta\in\Theta\} $. 
Note that, since we do not know the true root in advance, even if we know $c$, we cannot tell what is the $\theta$ that makes $f(\theta)=c$ beforehand.


Let us consider SAA and PR-SA first. If we use any fixed importance sampler $P_{\bealpha}$, then when $c=f(\theta_1)$, the estimator has the following asymptotic normality (suppose that $f^{\prime}(\theta)>0$ for every $\theta\in\Theta$)
\[
\sqrt{n}\left(\hat{\theta}_{n}-\theta_{1}\right)\Rightarrow \mathcal{N}\left(0,\frac{\Var_{\bX\sim P_{\bealpha}}(F(\mathbf{X},\theta_{1})\ell(\mathbf{X},\bealpha))}{(f^{\prime}(\theta_{1}))^{2}}\right).
\]
For any fixed sampler $P_{\bealpha}$ determined without knowledge of $\theta_{1}$, we have that the worst-case variance when $c$ varies in $\{f(\theta):\theta\in\Theta\} $ is
\[
\max_{\theta_{1}\in\Theta}\frac{\Var_{\bX\sim P_{\bealpha}}(F(\mathbf{X},\theta_{1})\ell(\mathbf{X},\bealpha))}{(f^{\prime}(\theta_{1}))^{2}}
\]
and the minimum possible worst-case asymptotic variance for any fixed IS is given by
\[
\min_{\alpha\in\Lambda}\max_{\theta_{1}\in\Theta}\frac{\Var_{\bX\sim P_{\bealpha}}(F(\mathbf{X},\theta_{1})\ell(\mathbf{X},\mathbf{\bealpha}))}{(f^{\prime}(\theta_{1}))^{2}}.
\]

On the other hand, suppose we can make use of the root when designing the importance sampler (i.e., we can choose $\bealpha$ based on $\theta_1$). Then we would choose $\bealpha=\argmin_{\bealpha}\Var_{\bX\sim P_{\bealpha}}(F(\bX,\theta_1)\ell (\bX,\bealpha)) $, and be able to achieve the following worst-case asymptotic variance 
$$ \max_{\theta_1\in\Theta}\min_{\bealpha\in\Lambda}  \frac{\Var_{\bX\sim P_{\bealpha}}(F(\bX,\theta_1)\ell (\bX,\bealpha))}{(f^{\prime}(\theta_1))^2}.$$
The relation between the two asymptotic variances can be seen by a weak duality: 
\begin{theorem}\label{thm: weak_duality} Suppose that $f^{\prime}(\theta)\neq 0$ for any $\theta\in\Theta$. Then
$$\max_{\theta_1\in\Theta}\min_{\bealpha\in\Lambda}  \frac{\Var_{\bX\sim P_{\bealpha}}(F(\bX,\theta_1)\ell (\bX,\bealpha))}{(f^{\prime}(\theta_1))^2}\leq  \min_{\bealpha\in\Lambda}\max_{\theta_1\in\Theta}  \frac{\Var_{\bX\sim P_{\bealpha}}(F(\bX,\theta_1)\ell (\bX,\bealpha))}{(f^{\prime}(\theta_1))^2}.$$
\end{theorem}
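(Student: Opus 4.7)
The statement is the classical max-min versus min-max inequality applied to a specific function of the root location $\theta_1$ and the sampler parameter $\bealpha$. The plan is to treat it as an instance of generic weak duality, with the hypothesis $f'(\theta)\neq 0$ playing only the role of ensuring the objective is well defined as a real-valued (finite or $+\infty$) function on $\Theta\times\Lambda$.

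First I would introduce the shorthand
\[
g(\theta_1,\bealpha) \;\triangleq\; \frac{\Var_{\bX\sim P_{\bealpha}}\bigl(F(\bX,\theta_1)\ell(\bX,\bealpha)\bigr)}{(f'(\theta_1))^2},
\]
which is well defined in $[0,+\infty]$ for every $(\theta_1,\bealpha)\in\Theta\times\Lambda$ since $f'(\theta_1)\neq 0$ by hypothesis. The inequality to establish then reads
\[
\sup_{\theta_1\in\Theta}\inf_{\bealpha\in\Lambda} g(\theta_1,\bealpha) \;\le\; \inf_{\bealpha\in\Lambda}\sup_{\theta_1\in\Theta} g(\theta_1,\bealpha),
\]
with the ``$\min$'' and ``$\max$'' in the statement interpreted as $\inf$/$\sup$ should the extrema not be attained.

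Next I would carry out the standard two-line argument. Fix arbitrary $\theta_1'\in\Theta$ and $\bealpha'\in\Lambda$. By the definition of infimum over $\bealpha$ and supremum over $\theta_1$,
\[
\inf_{\bealpha\in\Lambda} g(\theta_1',\bealpha) \;\le\; g(\theta_1',\bealpha') \;\le\; \sup_{\theta_1\in\Theta} g(\theta_1,\bealpha').
\]
Since the left-most quantity depends only on $\theta_1'$ and the right-most only on $\bealpha'$, I would then take the supremum over $\theta_1'\in\Theta$ on the left and the infimum over $\bealpha'\in\Lambda$ on the right, which yields exactly the claimed inequality.

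There is essentially no technical obstacle here: the result is the universal weak-duality bound and does not rely on any structure of the variance, the likelihood ratio, or the family $\{P_{\bealpha}\}$ beyond the fact that $g$ is well defined. The only mild caution is interpretational, namely that one should allow the value $+\infty$ in the objective (for example, when a likelihood ratio fails to be square-integrable) so that the outer sup/inf are always meaningful; the displayed chain of inequalities is valid in the extended real line $[0,+\infty]$ with the usual conventions.
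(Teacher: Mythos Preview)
Your proposal is correct and is precisely the standard weak-duality argument one would expect. The paper does not spell out a separate proof for this theorem (it is treated as an elementary observation), so your two-line $\inf/\sup$ sandwich is exactly the intended justification; the hypothesis $f'(\theta)\neq 0$ indeed serves only to make $g(\theta_1,\bealpha)$ well defined, as you note.
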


This theorem tells us that, if we use a fixed IS scheme, then we will suffer from a loss in efficiency due to a lack of a priori knowledge on the root. The gap between the two sides in Theorem \ref{thm: weak_duality} could be huge especially if only very limited knowledge of $\theta_1$ is available beforehand, i.e., $\Theta$ is a large set. The key of our proposed approach is to achieve the lower end of the inequality, by using suitable adaptive schemes. 
In other words, we achieve an asymptotic variance {\em as if we know the root}. 

We make two remarks. First, we consider the paradigm where we are already given an efficient IS for estimating the expectation (i.e., $f(\theta)$), which can be a ``black-box" in which we do not need to know any algorithmic details. The expectation estimation problem has been a long-standing focus of variance reduction, and our approach builds on the availability of these good IS schemes from the large literature. Second, we point out that in most practically interesting cases, it may not be easy or worthwhile to obtain an accurate minimizer for $\Var_{\bX\sim P_{\bealpha}}(F(\bX,\theta_1)\ell (\bX,\bealpha))$, even in expectation estimation problems when $\theta_1$ is known. This has led to different efficiency notions such as weak or logarithmic efficiency in the rare-event literature (\citealt{l2010asymptotic}). While Theorem \ref{thm: weak_duality} does not directly capture the comparisons using these specialized notions, our main assertion of achieving an asymptotic variance as if we know the root, which is not attainable using a fixed IS, still holds. Theorem \ref{thm: weak_duality} makes our main assertion clear by considering the basic setting where the minimization can be accurately solved.

We can do a similar analysis for the RM-SA algorithm, the only difference being that the asymptotic variance of the RM-SA algorithm is sensitive to the stepsize, and so we compare the behaviors of different IS schemes based on the same choice of stepsize, say $\gamma_k=\gamma/k$. For this choice, we have the following weak duality  
$$\max_{\theta_1\in\Theta}\min_{\bealpha\in\Lambda}  \frac{\gamma^2\Var_{\bX\sim P_{\bealpha}}(F(\bX,\theta_1)\ell (\bX,\balpha))}{2\gamma f^{\prime}(\theta_1)-1}\leq  \min_{\bealpha\in\Lambda}\max_{\theta_1\in\Theta}  \frac{\gamma^2\Var_{\bX\sim P_{\bealpha}}(F(\bX,\theta_1)\ell (\bX,\balpha))}{2\gamma f^{\prime}(\theta_1)-1},$$
where the right-hand side (RHS) is the best possible for all fixed IS schemes and the left-hand side (LHS) is the best we can do as if we know the root, which can be achieved using our adaptive IS. 









\section{Adaptive Importance Sampling}\label{sec: adaptiveIS}

In this section, we first present the main procedures of our adaptive IS to embed in SAA and SA. We then present asymptotic results and connect to our minimax discussion in Section \ref{sec:minimax}. As in Sections \ref{sec:challenge IS} and \ref{sec:minimax}, here we focus on the single-dimensional case with $d=1$, i.e., $\theta\in\mathbb R$, to highlight the main ideas of our developments. We generalize to multi-dimensional settings in Section \ref{sec:multidim} and the Appendix.

\subsection{Main Procedures}
We want to estimate the root of $f(\theta)=c$, where $f(\theta)=\mathds{E}_{\bX\sim P}[F(\bX,\theta)] $. We suppose there is an available good IS in the class $P_{\bealpha}$ for estimating the expectation $f(\theta)$ for a given $\theta$, i.e., once $\theta$ is given to us, we can choose $\bealpha = I(\theta)$ with a low variance  $\Var_{\bX\sim P_{\bealpha}}(F(\bX,\theta)\ell (\bX,\bealpha))$. Our procedures utilize this choice $I(\cdot)$.

Our procedure is adaptive and consists of iterations to update both the IS parameter and the root estimate simultaneously. More precisely, at each iteration, given the most updated root estimate, we create a new IS sampler with parameter $\bealpha=I(\theta)$ that aims to estimate the expectation with $\theta$ being precisely the most updated root estimate, and use it to draw new samples. From these new samples, we update our root estimate, via either SAA or SA, and repeat the iteration. 

Algorithm \ref{alg:SAA_rootfinding_blackbox} shows our adaptive IS embedded in SAA. At iteration $n=1,2,\ldots$, we are first given the most updated root estimate $\hat\theta_n$, and we set a new IS parameterized by $\bealpha_n=I(\hat\theta_n)$ (with some technical adjustment that we discuss momentarily). We use this IS to generate sample $\bX_n$. Then, we solve a new SAA problem constructed from all the observed samples $\bX_1,\dots,\bX_n$ with a proper importance weighting, given by \eqref{eq:SAA_AIS}, to obtain $\hat\theta_{n+1}$ and repeat the process.

For technicality reasons to ensure correct convergence, in Algorithm \ref{alg:SAA_rootfinding_blackbox} we add a truncation to the IS parameter set so that it cannot diverge too fast. This means we construct a series of deterministic sets $A_n$ such that $A_1\subset A_2 \subset\dots$ and their union contains the optimal IS parameter $\bealpha^* := I(\theta^*)\in\cup_{i=1}^\infty A_i$. Our implemented choice of IS would have parameter $\Pi_{A_{k+1}}[\bealpha_{k+1}]$, where $\Pi_A$ means a projection to set $A$. For the choice of the truncation sets $A_n$, if we have some prior knowledge that the optimal IS belongs to some compact set $A$, then we can simply let $A_n=A$ for each $n$. In particular, for quantile estimation problems, if we have crude knowledge on the possible deterministic range for the true quantile, then the possible value of $I(q)$ would belong to a computable compact set that implies the truncation set. If we do not have this prior knowledge, then we can let $A_i$ increase to the whole space: $\cup_{i=1}^\infty A_i = \mathbb{R}$. We also need that $A_i$ does not grow too fast (so that Assumption \ref{assu: truncation_alpha} in our sequel is satisfied).

\begin{algorithm}[htp]
\caption{SAA with adaptive importance sampling for stochastic root finding}
\label{alg:SAA_rootfinding_blackbox}
\begin{algorithmic}[1]
\Ensure Original sampling distribution $P$; initial IS parameter $\balpha_1$; initial iteration index $n=1$; truncation sets $A_1\subset A_2 \subset \dots $; black-box IS function $I$.
\While{stopping criteria not met}
\State Generate sample $\bX_n\sim P_{\bealpha_n}$; 
\State Update root estimate $\hat \theta_n$ by solving the equation 
\begin{equation}\label{eq:SAA_AIS}
\frac{1}{n}\sum_{i=1}^n F(\bX_i,\theta)\ell(\bX_i,\balpha_i)=c;
\end{equation}
\State Update IS parameter $\balpha_{n+1} = \Pi_{A_{n+1}}[I(\hat \theta_n)]$;
\State Set $n=n+1$;
\EndWhile
\lastcon{Root estimate $\hat \theta_n$.
}
\end{algorithmic}
\end{algorithm}

Algorithm \ref{alg: SA_rootfinding_blackbox} shows our adaptive IS embedded in SA, where we replace the SAA problem in each iteration with an SA move, with the corresponding importance weight in \eqref{eq: update_SA_QE_proj1}. At the end of the procedure, we either output the final root iterate $\hat\theta_n$ (RM-SA) or the average $\bar\theta_n=\sum_{i=1}^n\hat\theta_i/n$ (PR-SA). Note that in Algorithm \ref{alg: SA_rootfinding_blackbox} the updating step in each iteration only utilizes the current sample $\bX_n$, as opposed to using all past samples as in SAA.

\begin{algorithm}[htb]
\caption{SA with adaptive importance sampling for stochastic root finding}
\label{alg: SA_rootfinding_blackbox}
\begin{algorithmic}[1]
\Ensure Original sampling distribution $P$; initial IS parameter $\bealpha_1$; initial root $\hat \theta_0$; stepsize constant $\gamma$; prior information set $A$; initial iteration index $n=1$; black-box IS function $I$.
\While{stopping criteria not met}
\State Generate sample $\bX_n\sim P_{\bealpha_n}$, and calculate $F(\bX_n,\hat \theta_{n-1})$ and $\ell(\bX_n,\balpha_n)$;
\State Set $\gamma_n=\gamma/n^{\alpha}$ (usually $\alpha=1$ for RM-SA; $1/2<\alpha<1$ for PR-SA);
\State Update root estimate
\begin{equation}\label{eq: update_SA_QE_proj1}
\hat{\theta}_{n}=\Pi_A\left[\hat{\theta}_{n-1}-\gamma_{n}\left(F(\bX_n,\hat \theta_{n-1})\ell(\bX_{n},\balpha_{n})-c\right)\right];
\end{equation}
\State Update IS parameter $\balpha_{n+1} = I(\hat \theta_n)$;
\State Set $n=n+1$;
\EndWhile
\lastcon{Root estimate $\hat \theta_n$ for RM-SA, or $\bar \theta_n=\sum_{i=1}^n \hat \theta_i/n$ for PR-SA.
}
\end{algorithmic}
\end{algorithm}

\subsection{Theoretical Results}\label{sec: theoretical_1dim}

We present our main theoretical results on the consistency and asymptotic normality of our SAA and SA with embedded adaptive IS. 

\subsubsection{SAA with Adaptive IS.}

We first consider SAA, i.e., Algorithm \ref{alg:SAA_rootfinding_blackbox}. We need several assumptions and intermediate lemmas.
The first assumption is about the growth rate of the IS parameter $\bealpha_n$. 
\begin{assumption}\label{assu: truncation_alpha}
For each $\theta \in \Theta$,  $\mathds{E}_{\bX\sim P_{\bealpha_n}}\left[\left(F(\bX,\theta)\ell(\bX,\balpha_{n})\right)^{2}\right]=O(n^{1-\epsilon})$ holds for some $\epsilon>0$.
\end{assumption}

In our algorithm, this is guaranteed by introducing the truncation sets $A_n$ (see the discussion before Algorithm \ref{alg:SAA_rootfinding_blackbox}).
With this assumption, we have pointwise convergence of the estimated objective function in the following lemma, the proof of which is provided in Appendix \ref{appx:prooflemma1}. Note that $\bX_i$ refers to a sample drawn from the adaptive algorithm, i.e., $\bX_i\sim P_{\bealpha_i}$ and given $\bealpha_i$, the distribution of $\bX_i$ is independent of $\bX_1,\dots,\bX_{i-1}$.
\begin{lemma}\label{lem: average_consistent}
Under Assumption \ref{assu: truncation_alpha}, for each $\theta\in\Theta$,
we have
\begin{equation*}
\frac{\sum_{i=1}^{n}F(\bX_{i},\theta)\ell(\bX_{i},\balpha_{i})}{n}\rightarrow \mathds{E}_{\bX\sim P}\left[F(\bX,\theta)\right]~\text{a.s.}\
\end{equation*}
\end{lemma}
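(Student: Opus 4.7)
\medskip

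The plan is to recognize the centered summands as a martingale difference sequence with respect to the natural filtration of the adaptive scheme, and then apply a Kolmogorov/Chow-type strong law for martingale differences, the summability hypothesis of which is handed to us almost directly by Assumption~\ref{assu: truncation_alpha}.

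More precisely, let $\mathcal{F}_i = \sigma(\bX_1,\ldots,\bX_i)$ and note that $\balpha_i$ is $\mathcal{F}_{i-1}$-measurable (it is a deterministic function of past iterates, modulo the truncation $\Pi_{A_i}$). Hence, conditional on $\mathcal{F}_{i-1}$, the sample $\bX_i$ is drawn from $P_{\balpha_i}$, so the likelihood ratio identity gives the key conditional unbiasedness
\[
\mathds{E}\bigl[F(\bX_i,\theta)\ell(\bX_i,\balpha_i)\,\big|\,\mathcal{F}_{i-1}\bigr]
= \mathds{E}_{\bX\sim P_{\balpha_i}}\bigl[F(\bX,\theta)\ell(\bX,\balpha_i)\bigr]
= \mathds{E}_{\bX\sim P}[F(\bX,\theta)].
\]
Setting $Y_i = F(\bX_i,\theta)\ell(\bX_i,\balpha_i) - \mathds{E}_{\bX\sim P}[F(\bX,\theta)]$ therefore yields a martingale difference sequence $(Y_i,\mathcal{F}_i)$, and the claim reduces to showing $n^{-1}\sum_{i=1}^n Y_i \to 0$ a.s.

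For this I would invoke Chow's strong law for martingale differences: if $\sum_{i=1}^\infty \mathds{E}[Y_i^2]/i^2 < \infty$, then $n^{-1}\sum_{i=1}^n Y_i \to 0$ almost surely. By the $c_r$-inequality,
\[
\mathds{E}[Y_i^2] \le 2\,\mathds{E}\!\left[\bigl(F(\bX_i,\theta)\ell(\bX_i,\balpha_i)\bigr)^2\right] + 2\bigl(\mathds{E}_{\bX\sim P}[F(\bX,\theta)]\bigr)^2,
\]
and Assumption~\ref{assu: truncation_alpha} bounds the first term by $O(i^{1-\epsilon})$ for some $\epsilon>0$ (the second is a fixed constant, provided $f(\theta)$ is finite, which is a standing regularity assumption). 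Thus $\mathds{E}[Y_i^2]/i^2 = O(i^{-1-\epsilon})$, and the series is summable, so Chow's law delivers a.s.\ convergence of the averages to zero.

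The only genuinely delicate point is making sure that the $\mathcal{F}_{i-1}$-measurability of $\balpha_i$ is legitimate (so that the conditional unbiasedness step is valid); this is immediate from the construction in Algorithm~\ref{alg:SAA_rootfinding_blackbox}, where $\balpha_i = \Pi_{A_i}[I(\hat\theta_{i-1})]$ depends only on samples drawn at or before step $i-1$. Everything else is bookkeeping: the second-moment bound from Assumption~\ref{assu: truncation_alpha} is tailor-made to kill the $1/i^2$ weight in Chow's criterion with room to spare, and no uniformity in $\theta$ is required since the lemma is stated pointwise.
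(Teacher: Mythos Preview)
Your proposal is correct and follows essentially the same approach as the paper: identify the centered summands as a martingale difference sequence, then use the summability of $\mathds{E}[Y_i^2]/i^2$ (supplied by Assumption~\ref{assu: truncation_alpha}) to invoke a martingale strong law. The only cosmetic difference is that you cite Chow's theorem directly, whereas the paper unpacks that result by showing $M_n=\sum_{i=1}^n Y_i/i$ is an $L^2$-bounded martingale, applying the martingale convergence theorem, and then Kronecker's lemma---but this is precisely the standard proof of the result you invoke.
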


Based on this pointwise convergence, we can show a uniform convergence of ${\sum_{i=1}^{n}F(\bX_{i},\theta)\ell(\bX_{i},\balpha_{i})}/{n}$ over all possible values of $\theta$. Following the bracketing number approach (see Section 2.4 of \citealt{vanWell1996}), we make the following assumption.
\begin{assumption}\label{assu: bracket}
Define $\sfF=\{f(\bX,\balpha):=F(\bX,\theta)\ell(\bX,\balpha), \theta\in \Theta\}$.
For each $\epsilon>0$, there exists a finite set $K_{\epsilon}$ whose
elements are pairs of functions such that:\\
(1) For each $f\in\sfF$, there exists $(f_{L},f_{R})\in K_{\epsilon}$ such that $f_{L}\leq f\leq f_{R}$;\\
(2) For each pair of $(f_{L},f_{R})\in K_{\epsilon}$, the limits 
\begin{equation*}
\lim_{n\rightarrow\infty}\frac{1}{n}\sum_{i=1}^{n}f_{L}(\bX_{i},\balpha_{i})~~\text{and}~~\lim_{n\rightarrow\infty}\frac{1}{n}\sum_{i=1}^{n}f_{R}(\bX_{i},\balpha_{i})
\end{equation*}
exist, and
\begin{equation*}\label{bracket_size}
\lim_{n\rightarrow\infty}\frac{1}{n}\sum_{i=1}^{n}\left[f_{R}(\bX_{i},\balpha_{i})-f_{L}(\bX_{i},\balpha_{i})\right]\leq\epsilon.
\end{equation*}
\end{assumption}

To verify Assumption \ref{assu: bracket}, we can also follow similar arguments to bound bracketing numbers as in, e.g., Section 2.7 of \cite{vanWell1996}. We have intentionally stated our assumption in a general way without using any condition like Lipschitz continuity or smoothness of function $F$. That is because, for quantile estimation problems which we consider as an important example,  $F(\bX,\theta)=\mathbf{1}\{h(\bX)\leq\theta\}$ is not even continuous in $\theta$. On the other hand, if we have some smoothness conditions for $F$, it could help verify this assumption. For example, if the function class is Lipschitz continuous in $\theta$, in the sense that 
\[
| F(\bX,\theta_1) - F(\bX,\theta_2)|\leq d(\theta_1,\theta_2)h(\bX)
\]
for some metric $d$ on the index set, and $\lim_{n\rightarrow\infty} \sum_{i=1}^n h(\bX_i)\ell(\bX_i,\balpha_i)/n<\infty$ (which can be shown in a similar way as Lemma \ref{lem: average_consistent}), then similar to Theorem 2.7.11 of \cite{vanWell1996}, a sufficient condition for Assumption \ref{assu: bracket} is that the $\epsilon$-covering number on set $\Theta$, $N(\epsilon,\Theta,d)$, is finite. Here the covering number is the minimum number of $\epsilon$-balls under metric $d$ needed to cover $\Theta$, where an $\epsilon$-ball centered at $\theta_0$  under metric $d$ means $\{\theta\in\Theta:d(\theta,\theta_0)<\epsilon\}$.
More concretely, if $d$ is the Euclidean distance, then the compactness of $\Theta$ would be sufficient for the aforementioned condition. But the compactness of $\Theta$ is not necessary because it could be the case that $d(\theta_1,\theta_2)$ shrink to 0 when $\theta_1$ and $\theta_2$ are large.

The following lemma presents the uniform convergence of the estimated objective function. Its proof uses a generalized bracketing number to handle  the sum of martingale difference array and is in Appendix \ref{appx:prooflemma2}. 

\begin{lemma}\label{Lem: unif_converg_aver_obj}
Under Assumptions \ref{assu: truncation_alpha} and \ref{assu: bracket},
we have that, as $n\rightarrow\infty$, 
\begin{equation*}
\sup_{\theta\in \Theta}\left|\frac{\sum_{i=1}^{n}F(\bX_{i},\theta)\ell(\bX_{i},\balpha_{i})}{n}-\mathds{E}_{\bX\sim P}\left[F(\bX,\theta)\right]\right|\rightarrow0~\text{a.s.}
\end{equation*}
\end{lemma}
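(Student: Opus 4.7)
The plan is to run a classical bracketing-number proof of a uniform strong law of large numbers, with the only nontrivial twist being that the samples are generated by the adaptive scheme and hence form a martingale-difference-driven array rather than an i.i.d.\ sequence. The two ingredients I need are already in hand: Lemma \ref{lem: average_consistent} supplies pointwise a.s.\ convergence of $A_n(\theta):=n^{-1}\sum_{i=1}^n F(\bX_i,\theta)\ell(\bX_i,\balpha_i)$ to $\mu(\theta):=\mathds{E}_{\bX\sim P}[F(\bX,\theta)]$ at each $\theta\in\Theta$, and Assumption \ref{assu: bracket} supplies a finite $\epsilon$-bracket $K_\epsilon$ together with a.s.\ convergence of the empirical averages $L_n=n^{-1}\sum_i f_L(\bX_i,\balpha_i)$ and $R_n=n^{-1}\sum_i f_R(\bX_i,\balpha_i)$ of each bracket to deterministic limits $\mu_L,\mu_R$ satisfying $\mu_R-\mu_L\leq\epsilon$.

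For the main argument, I would fix $\epsilon>0$ and, for each $\theta\in\Theta$, pick a bracket $(f_L^\theta,f_R^\theta)\in K_\epsilon$ sandwiching $F(\cdot,\theta)\ell(\cdot,\cdot)$. Let $L_n^\theta,R_n^\theta$ and $\mu_L^\theta,\mu_R^\theta$ be the corresponding sample averages and their limits. Because the sandwich $L_n^\theta \leq A_n(\theta) \leq R_n^\theta$ is pathwise, combining it with the a.s.\ convergence from Lemma \ref{lem: average_consistent} and Assumption \ref{assu: bracket} forces $\mu_L^\theta\leq\mu(\theta)\leq\mu_R^\theta$ -- a deterministic inequality between deterministic quantities, and hence universally valid even though the a.s.\ event from Lemma \ref{lem: average_consistent} depends on $\theta$. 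Now restrict attention to the (single) a.s.\ event on which Assumption \ref{assu: bracket} holds simultaneously for every $K_{1/k}$, $k\geq 1$ (a countable intersection of a.s.\ events). On this event, for $n$ large enough one has $\max_{(f_L,f_R)\in K_\epsilon}(|L_n-\mu_L|\vee|R_n-\mu_R|)\leq \epsilon$, and the sandwich together with $\mu_R^\theta-\mu_L^\theta\leq\epsilon$ yields $|A_n(\theta)-\mu(\theta)|\leq (R_n^\theta-\mu_L^\theta)\vee(\mu_R^\theta-L_n^\theta)\leq 2\epsilon$ uniformly over $\theta$. Sending $\epsilon\downarrow 0$ gives the claim.

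The main obstacle I anticipate is the bookkeeping that couples Assumption \ref{assu: bracket} to the adaptive scheme. The bracketing functions $(f_L,f_R)$ in general lie \emph{outside} the class $\mathsf{F}$, so convergence of $L_n,R_n$ cannot be inherited from Lemma \ref{lem: average_consistent} and must be packaged into Assumption \ref{assu: bracket} itself -- in particular one needs to read that assumption as providing a.s.\ convergence along the very sample path that drives the algorithm, so that a single probability-one event supports both the bracket convergences over $K_\epsilon$ and (via the deterministic trick above) the sandwiching of each $\mu(\theta)$. A secondary care point is that Lemma \ref{lem: average_consistent} relies on a martingale SLLN enabled by the growth control of Assumption \ref{assu: truncation_alpha}, so the appeal to it inside the bracket sandwich is genuinely using the full adaptive structure rather than i.i.d.\ machinery; once both pieces are in place the remainder is the standard finite-supremum reduction of a classical Glivenko--Cantelli-type proof.
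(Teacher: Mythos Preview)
Your proposal is correct and follows essentially the same bracketing argument as the paper: sandwich $A_n(\theta)$ between the bracket averages $L_n^\theta$ and $R_n^\theta$, use Lemma~\ref{lem: average_consistent} to pin down $\mu_L^\theta\le \mu(\theta)\le \mu_R^\theta$ (exploiting that this is a deterministic inequality), and then reduce the uniform error to a maximum over the finite set $K_\epsilon$. The only cosmetic difference is that the paper writes the upper/lower bound via the telescope $R_n-\lim_n R_n + (\lim_n R_n - \mu(\theta))$ rather than your $(R_n^\theta-\mu_L^\theta)\vee(\mu_R^\theta-L_n^\theta)$ form, and is less explicit than you are about the countable intersection of a.s.\ events over $\epsilon=1/k$; the substance is identical.
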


Based on the uniform convergence of the sample-averaged objective function, similar in spirit to Theorem 5.7 of \cite{vaart_1998}, we make the following assumption to ensure the root to $f(\theta)-c$ is well separated.
\begin{assumption} \label{assu: Obj_regular}
The objective function $f(\theta)$ is differentiable at $\theta$=$\theta^{*}$ with continuously invertible derivative, and $\theta^*$ is the unique root in the sense that for any $\epsilon>0$,
\begin{equation*}
\inf_{| \theta-\theta^{*}|\geq\epsilon}\left| f(\theta)- c\right| >0.
\end{equation*}
\end{assumption}

With this assumption, and using the uniform convergence derived in Lemma \ref{Lem: unif_converg_aver_obj}, we can show the strong consistency of the SAA estimator with adaptive IS. The proof is included in Appendix \ref{appx:proof_Thm2}.
\begin{theorem}[Consistency of SAA with embedded adaptive IS]\label{prop:SAA_consistency}
Under Assumptions \ref{assu: truncation_alpha} - \ref{assu: Obj_regular}, the root estimator generated by Algorithm \ref{alg:SAA_rootfinding_blackbox} is strongly consistent, i.e., $$\hat{\theta}_{n}\rightarrow\theta^{*}~a.s.$$
\end{theorem}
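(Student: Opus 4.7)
The plan is to run the classical consistency argument for $M$-estimators (in the spirit of Theorem 5.7 of \cite{vaart_1998}), specialized to our adaptive-IS root-finding setting, leveraging the uniform convergence already delivered by Lemma \ref{Lem: unif_converg_aver_obj} together with the well-separated root condition in Assumption \ref{assu: Obj_regular}. Introduce the empirical objective
\[
\hat f_n(\theta) \triangleq \frac{1}{n}\sum_{i=1}^{n} F(\bX_i,\theta)\,\ell(\bX_i,\balpha_i),
\]
so that by construction \eqref{eq:SAA_AIS} the iterate $\hat\theta_n$ satisfies $\hat f_n(\hat\theta_n)=c$.

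The first step is to transfer the uniform convergence into convergence of $f(\hat\theta_n)$. Simply observe the identity and bound
\[
|f(\hat\theta_n) - c| \;=\; |f(\hat\theta_n) - \hat f_n(\hat\theta_n)| \;\leq\; \sup_{\theta\in\Theta}\bigl|\hat f_n(\theta) - f(\theta)\bigr|,
\]
where the first equality uses $\hat f_n(\hat\theta_n)=c=f(\theta^*)$. By Lemma \ref{Lem: unif_converg_aver_obj} the right-hand side tends to $0$ almost surely, so $f(\hat\theta_n)\to c$ a.s.

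The second step converts this into convergence of $\hat\theta_n$ itself via the well-separated root property. For any $\epsilon>0$, set $\delta_\epsilon \triangleq \inf_{|\theta-\theta^*|\geq \epsilon}|f(\theta)-c|$, which is strictly positive by Assumption \ref{assu: Obj_regular}. Then
\[
\{|\hat\theta_n-\theta^*|\geq \epsilon\} \;\subseteq\; \{|f(\hat\theta_n)-c|\geq \delta_\epsilon\},
\]
and since the right-hand event occurs only for finitely many $n$ almost surely (because $f(\hat\theta_n)\to c$ a.s.), we conclude $\hat\theta_n\to\theta^*$ a.s.

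The main technical obstacle I expect is not in the two steps above, which are essentially automatic once Lemma \ref{Lem: unif_converg_aver_obj} and Assumption \ref{assu: Obj_regular} are in place, but rather in justifying that $\hat\theta_n$ is well-defined as a solution of \eqref{eq:SAA_AIS} lying in $\Theta$. When $F(\bX,\cdot)$ is continuous and monotone near $\theta^*$, existence follows from the intermediate value theorem combined with the continuously invertible derivative $f'(\theta^*)$ and uniform convergence. In the quantile-estimation case where $F(\bX,\theta)=\mathbf{1}\{h(\bX)\leq\theta\}$, the empirical objective is a step function and \eqref{eq:SAA_AIS} need not have an exact solution; one then takes $\hat\theta_n$ to be a generalized crossing point and verifies that the residual $|\hat f_n(\hat\theta_n)-c|$ is $o(1)$ a.s.\ (controlled by $\max_{i\leq n}\ell(\bX_i,\balpha_i)/n$, which vanishes under Assumption \ref{assu: truncation_alpha}), so that the displayed bound on $|f(\hat\theta_n)-c|$ picks up only a vanishing extra term and the rest of the argument goes through unchanged.
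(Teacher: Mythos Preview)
Your proposal is correct and follows essentially the same route as the paper: uniform convergence from Lemma \ref{Lem: unif_converg_aver_obj} gives $|f(\hat\theta_n)-c|\to 0$ a.s., and then the well-separated root condition in Assumption \ref{assu: Obj_regular} forces $\hat\theta_n\to\theta^*$ a.s. Your additional discussion of the well-definedness of $\hat\theta_n$ (and the approximate-root relaxation) mirrors what the paper handles explicitly only in the quantile case (proof of Theorem \ref{prop: SAA_QE_opt}), so it is a reasonable remark rather than a gap.
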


Next, we will establish a CLT for the estimator using the weak convergence of martingale processes. To develop this, we introduce additional notation and some preliminary technical tools. 
Let $\Theta_{\delta}:=\{\theta:\left\Vert\theta-\theta^*\right\Vert\leq\delta\}$, $f_{\theta}$ be the function defined by $f_{\theta}(\bX):= F(\bX,\theta)$, and $\sfF_{\delta}=\{f_{\theta},\theta\in\Theta_{\delta}\}$. 
Furthermore, for each measurable function $g$ on the probability space $(\Omega,\mathcal{F},P)$, let
\[
V_{n,i}(g)=\frac{g(\bX_{i})\ell(\bX_{i},\balpha_{i})-\mathds{E}_{\bX\sim P}[g(\bX)]}{\sqrt{n}}
~~\text{and}~~S_{n}(g)=\sum_{i=1}^n V_{n,i}(g).
\]
Notice that
when $\theta$ varies in $\Theta_\delta$, $V_{n,i}(f_{\theta})$ and
$S_{n}(f_{\theta})$ could be regarded as processes indexed by $\Theta_{\delta}$. We denote $\mathcal F_n=\sigma(\bX_1,\ldots,\bX_n)$ as the filtration. We write $\mathds{E}_{i-1}$ as the conditional expectation given $\mathcal{F}_{i-1}$, and similarly $\Var_{i-1}$ as the conditional variance given
$\mathcal{F}_{i-1}$. 
Notice that, for each $\theta\in\Theta_{\delta}$, 
\[
V_{n,i}\left(f_{\theta}\right)=\frac{F(\bX_{i},\theta)\ell(\bX_{i},\balpha_{i})-\mathds{E}_{\bX\sim P}[F(\bX,\theta)]}{\sqrt{n}}
\]
is a martingale difference array. For each fixed $\theta$, the standard martingale CLT can give the asymptotic normality for $S_n(f_{\theta})$, but for the asymptotic normality of the root, we need a uniform behavior of $S_n(f_{\theta})$ for $\theta$ in a neighborhood of $\theta^*$. To this end, enlightened by Donsker-type theorems and the analysis of the weak convergence of function-valued martingale difference arrays in \cite{nishiyama2000}, we construct the following assumption. 
\begin{assumption}\label{assu:(uniform-integrable-entropy)}
There exists $\Pi=\{\Pi(\epsilon)\}_{\epsilon\in(0,\Delta_{\Pi}]}$ such that each $\Pi(\epsilon)=\{\sfF(\epsilon;k):1\leq k \leq N_{\Pi}(\epsilon)\}$ is a cover of $\sfF_{\delta}$ (i.e., $\cup_{1\leq k \leq N_{\Pi}(\epsilon)}\sfF(\epsilon;k)=\sfF_{\delta}$) and $N_{\Pi}(\Delta_{\Pi})=1$. Here for each $1\leq k\leq N_{\Pi}(\epsilon)$, $\sfF(\epsilon;k)$ is an $\epsilon$-ball under $L_2$-distance $\rho(g,h) := (\mathds{E}_{\bX\sim P}[(g(\bX)-h(\bX))^2])^{1/2}$. Moreover,
\[
\sup_{\epsilon\in(0,\Delta_{\Pi}]\cap \mathbb{Q}} \max_{1\leq k\leq N_{\Pi}(\epsilon)}\frac{\sqrt{\sum_{j=1}^{n}\mathds{E}_{j-1}\left[\left\vert V_{n,j} (\sfF(\epsilon;k))\right\vert^2\right]}}{\epsilon}=O_p(1),
\]
where for a set $\sfF^{\prime}$, $V_{n,j}(\sfF^{\prime})$ is defined as the smallest $\mathcal{F}_i$-measurable function that is greater than $\sup_{f,g\in \sfF^{\prime}}\left\vert V_{n,j}(f)-V_{n,j}(g)\right\vert$. 
Furthermore,
\[
\int_0^{\Delta_{\Pi}}\sqrt{\log N_{\Pi}(\epsilon)} d\epsilon<\infty.
\]
\end{assumption}

Here, the first displayed condition requires that each set in the cover $\Pi(\epsilon)$ should be small enough and the second displayed condition requires that there cannot be too many sets in the cover. The following proposition provides a more transparent sufficient condition for Assumption \ref{assu:(uniform-integrable-entropy)}. With slight abuse of notations, we let $\rho(\theta_1,\theta_2):=\left(\mathds{E}_{\bX\sim P}[(F(\bX,\theta_1)-F(\bX,\theta_2))^2]\right)^{1/2}$ be a pseudo-metric on $\Theta_{\delta}$. Notice that in Assumption \ref{assu:(uniform-integrable-entropy)} we also use $\rho$ to denote the $L_2$-distance between functions under $P$, and we have that $\rho(\theta_1,\theta_2)=\rho(f_{\theta_1},f_{\theta_2})$. Let the {\em covering number} $N(\epsilon,\Theta_{\delta},\rho)$ be the minimum number of balls $\{\theta: \rho(\theta,\theta_1)<\epsilon\}$ of radius $\epsilon$ needed to cover $\Theta_{\delta}$. 
\begin{proposition}\label{prop: verify_entropy}
Suppose that we are given $\hat{\theta}_{n}\rightarrow\theta^{*}$ and $\bealpha_{n}\rightarrow\bealpha^{*}$, both a.s. Then the following condition is sufficient for Assumption \ref{assu:(uniform-integrable-entropy)}: There exists a $\delta>0$ such that 
\[
(i)~~~~~~~~~~~~~~~~~~~~~~~~~~~~~~~~~~~~~~~~
\int_{0}^{1}\sqrt{\log N\left(\epsilon,\Theta_{\delta},\rho\right)}d\epsilon<\infty,
~~~~~~~~~~~~~~~~~~~~~~~~~~~~~~~
\]
and (ii) there exists constant $\delta_1>0$ and real-valued function $L(\bX,\bealpha)$ 
such that for any $\theta_1,\theta_2\in\Theta_{\delta}$,
\[
\left(F(\bX,\theta_1)-F(\bX,\theta_2)\right)^2\ell(\bX,\bealpha)\leq L(\bX,\bealpha)(\rho(\theta_1,\theta_2))^2
\mbox{~~with~}  \sup_{\left\Vert\bealpha-\bealpha^*\right\Vert\leq \delta_1}\mathds{E}_{\bX\sim P}[L(\bX,\bealpha)]<\infty. 
\]
\end{proposition}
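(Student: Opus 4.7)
The plan is to construct the cover $\Pi$ directly from an $\epsilon$-net of $\Theta_\delta$ in the pseudo-metric $\rho$ (guaranteed by condition (i)) and then use the pointwise Lipschitz control in (ii) to show that each $\epsilon$-ball's conditional envelope has second moment of order $\epsilon^2$, so that the ratio in the first displayed condition of Assumption \ref{assu:(uniform-integrable-entropy)} is $O_p(1)$.

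For the cover, for each $\epsilon \in (0,\Delta_\Pi]$ (with $\Delta_\Pi$ taken as the $\rho$-diameter of $\Theta_\delta$, finite so that $N_\Pi(\Delta_\Pi)=1$) pick an $\epsilon$-net $\{\theta_k\}_{k=1}^{N(\epsilon,\Theta_\delta,\rho)}$ and set $\sfF(\epsilon;k) := \{f_\theta : \theta \in \Theta_\delta,\, \rho(\theta,\theta_k) \le \epsilon\}$. Because $\rho(f_{\theta_1},f_{\theta_2}) = \rho(\theta_1,\theta_2)$ by definition of $f_\theta$, each $\sfF(\epsilon;k)$ is an $L_2(P)$-ball of radius $\epsilon$ inside $\sfF_\delta$, so $N_\Pi(\epsilon) \le N(\epsilon,\Theta_\delta,\rho)$ and the entropy integral $\int_0^{\Delta_\Pi}\sqrt{\log N_\Pi(\epsilon)}\,d\epsilon < \infty$ follows directly from (i) (handling the tail past $\epsilon=1$, if any, with the coarse bound $N_\Pi(\epsilon) \le N_\Pi(1)$).

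For the first displayed condition, bound the envelope pointwise using
\[
V_{n,j}(f_{\theta_1}) - V_{n,j}(f_{\theta_2}) = \frac{(F(\bX_j,\theta_1) - F(\bX_j,\theta_2))\ell(\bX_j,\bealpha_j) - \mathds{E}_{\bX\sim P}[F(\bX,\theta_1) - F(\bX,\theta_2)]}{\sqrt{n}},
\]
applying $(a-b)^2 \le 2a^2 + 2b^2$, and taking the sup over $\theta_1,\theta_2 \in B_\rho(\theta_k,\epsilon)$. Taking $\mathds{E}_{j-1}$ on the data-dependent piece invokes the change-of-measure identity $\mathds{E}_{\bX\sim P_{\bealpha_j}}[(\cdot)^2 \ell^2] = \mathds{E}_{\bX\sim P}[(\cdot)^2 \ell]$, after which condition (ii) dominates the integrand pointwise in $\bX$ by $L(\bX,\bealpha_j)(2\epsilon)^2$ (since $\rho(\theta_1,\theta_2) \le 2\epsilon$ inside each ball), while Jensen's inequality bounds the centering term by $\rho(\theta_1,\theta_2)^2 \le (2\epsilon)^2$. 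Summing and dividing by $\epsilon^2$ produces
\[
\frac{\sum_{j=1}^n \mathds{E}_{j-1}[V_{n,j}(\sfF(\epsilon;k))^2]}{\epsilon^2} \le C\left(\frac{1}{n}\sum_{j=1}^n \mathds{E}_{\bX\sim P}[L(\bX,\bealpha_j)] + 1\right),
\]
uniformly in $\epsilon \in (0,\Delta_\Pi]\cap\mathbb{Q}$ and $k$. The a.s.~convergence $\bealpha_j \to \bealpha^*$ eventually places $\bealpha_j$ in the $\delta_1$-neighborhood where (ii) supplies $\mathds{E}_{\bX\sim P}[L(\bX,\bealpha_j)] \le \sup_{\|\bealpha-\bealpha^*\|\le\delta_1}\mathds{E}_{\bX\sim P}[L(\bX,\bealpha)] < \infty$; a Cesàro argument then gives the required $O_p(1)$ bound.

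The main obstacle is the adapted, random nature of $\bealpha_j$: $\mathds{E}_{\bX\sim P}[L(\bX,\bealpha_j)]$ is itself a random variable that cannot be pulled out of the sum as in the i.i.d.~case, so the whole argument hinges on the Cesàro averaging combined with the a.s.~convergence hypothesis $\bealpha_j \to \bealpha^*$ plus the local uniform integrability of $L$ in (ii). A smaller technicality is that $V_{n,j}(\sfF(\epsilon;k))$ is formally the \emph{smallest} $\mathcal F_j$-measurable majorant of $\sup_{f,g}|V_{n,j}(f)-V_{n,j}(g)|$; since the explicit upper bound constructed above is itself $\mathcal F_j$-measurable, it may be substituted for the envelope throughout.
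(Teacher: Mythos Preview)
Your proposal is correct and follows essentially the same approach as the paper's proof: build the cover $\Pi(\epsilon)$ from an $\epsilon$-net of $\Theta_\delta$ in the pseudo-metric $\rho$, use (i) for the entropy integral, and for the first displayed condition apply $(a-b)^2\le 2a^2+2b^2$ to the increment $V_{n,j}(f_{\theta_1})-V_{n,j}(f_{\theta_2})$, bound the centering term by $\rho(\theta_1,\theta_2)^2$ via Jensen, use the change of measure $\mathds{E}_{\bX\sim P_{\bealpha_j}}[(\cdot)^2\ell^2]=\mathds{E}_{\bX\sim P}[(\cdot)^2\ell]$ together with (ii) on the random piece, then average and invoke $\bealpha_j\to\bealpha^*$ a.s.\ with the uniform bound in (ii) to get $O_p(1)$. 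Your remarks on $\Delta_\Pi$ and on the $\mathcal F_j$-measurability of the explicit majorant are accurate and, if anything, make the argument slightly more careful than the paper's write-up.
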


Here, the first part is a uniform entropy condition that is commonly assumed for Donsker-type theorems. The second can be regarded as a Lipschitz condition for function $F$ when $\theta$ is close to $\theta^*$ and $\bealpha$ is close to $\bealpha^*$.  
Similar to the conditions for the martingale CLT (see e.g. Theorem 8.2.8 of \citealt{Durret2019}), we introduce our last assumption.
\begin{assumption}(Lindeberg's condition)\label{assu:(Lindeberg-Feller)}
There exists a $\delta>0$ such that for every $\epsilon>0$, 
\[
\sum_{i=1}^{n}\mathds{E}_{i-1}\left[\left(V_{n,i}(E)\right)^{2}\mathbf{1}\left\{V_{n,i}(E)>\epsilon\right\}\right]\stackrel{P}{\longrightarrow}0,
\] 
where $V_{n,i}(E)$ is the adapted envelope for $V_{n,i}(f),f\in\sfF_{\delta}$, i.e., $V_{n,i}(E)$ is the smallest $\mathcal{F}_i$-measurable random variable such that  $\sup_{f\in\sfF_{\delta}}|V_{n,i}(f)|\leq V_{n,i}(E)~a.s$. 
\end{assumption}

To verify this assumption, we can bound $V_{n,i}(E)$ when $\bealpha_i$ and $\hat{\theta}_i$ are close to $\bealpha^*$ and $\theta^*$ respectively. We summarize this observation in the following proposition. The proof is in Appendix \ref{appx:proofprop13}.
\begin{proposition}\label{prop: LF_sufficient}
Suppose that we are given $\hat{\theta}_{n}\rightarrow\theta^{*}$ and $\bealpha_{n}\rightarrow\bealpha^{*}$, both a.s. Suppose that there exist $\delta,\delta_{1}>0$ and $V(\bX)$ such that $V(\bX)\geq (F(\bX,\theta))^2\ell(\bX,\bealpha)$ for all $\theta\in\Theta_{\delta},\left\Vert \bealpha-\bealpha^{*}\right\Vert \leq\delta_{1}$. Also suppose $\mathds{E}_{\bX\sim P} [V(\bX)]<\infty$. Then Assumption \ref{assu:(Lindeberg-Feller)} holds.
\end{proposition}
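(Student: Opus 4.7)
The plan is to bound the adapted envelope $V_{n,i}(E)$ pointwise in terms of the dominating function $V$, and then reduce everything to expectations under $P$ via change of measure, where the integrability $\mathds{E}_{\bX\sim P}[V(\bX)]<\infty$ together with dominated convergence closes the argument.

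First I would use the expansion $V_{n,i}(f_\theta)=(F(\bX_i,\theta)\ell(\bX_i,\bealpha_i)-\mathds{E}_{\bX\sim P}[F(\bX,\theta)])/\sqrt n$ to obtain the pointwise bound $V_{n,i}(E)\leq(U_i+M)/\sqrt n$, where $U_i:=\sup_{\theta\in\Theta_\delta}|F(\bX_i,\theta)|\ell(\bX_i,\bealpha_i)$ and $M:=\sup_{\theta\in\Theta_\delta}|\mathds{E}_{\bX\sim P}[F(\bX,\theta)]|$. The hypothesis $F(\bX,\theta)^2\ell(\bX,\bealpha)\leq V(\bX)$, evaluated at $\bealpha=\bealpha_i$ once $\|\bealpha_i-\bealpha^*\|\leq\delta_1$, gives $U_i^2\leq V(\bX_i)\ell(\bX_i,\bealpha_i)$ after multiplying by $\ell(\bX_i,\bealpha_i)$, and Cauchy-Schwarz combined with change of measure gives $M^2\leq \mathds{E}_{\bX\sim P}[F(\bX,\theta)^2\ell(\bX,\bealpha^*)]\leq \mathds{E}_{\bX\sim P}[V(\bX)]<\infty$. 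Hence $V_{n,i}(E)^2\leq\frac{2}{n}\bigl(V(\bX_i)\ell(\bX_i,\bealpha_i)+\mathds{E}_{\bX\sim P}[V(\bX)]\bigr)$ once $\bealpha_i$ lies in the $\delta_1$-ball. Since $\bealpha_n\to\bealpha^*$ almost surely, there is an almost surely finite random index $N$ with $\|\bealpha_i-\bealpha^*\|\leq\delta_1$ for all $i\geq N$, and I would split the Lindeberg sum accordingly: each of the $N-1$ head terms vanishes as $n\to\infty$ because $V_{n,i}(E)$ is almost surely finite and decays like $1/\sqrt n$, so eventually $\mathbf{1}\{V_{n,i}(E)>\epsilon\}=0$.

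For the tail $i\geq N$, the envelope bound implies that, once $n$ is large enough that $\mathds{E}_{\bX\sim P}[V]<n\epsilon^2/4$, the event $\{V_{n,i}(E)>\epsilon\}$ forces $V(\bX_i)\ell(\bX_i,\bealpha_i)>c_n:=n\epsilon^2/4$. Taking the conditional expectation $\mathds{E}_{i-1}$ and applying the change of measure from $P_{\bealpha_i}$ back to $P$, each tail summand is bounded by $\frac{2}{n}\mathds{E}_{\bX\sim P}[V(\bX)\mathbf{1}\{V(\bX)\ell(\bX,\bealpha_i)>c_n\}]$ plus a remainder of order $n^{-2}$ coming from a Markov bound on $P_{i-1}(V(\bX_i)\ell(\bX_i,\bealpha_i)>c_n)$. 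Summing over $N\leq i\leq n$ gives a total tail contribution of $2\sup_{i\geq N}\mathds{E}_{\bX\sim P}[V(\bX)\mathbf{1}\{V(\bX)\ell(\bX,\bealpha_i)>c_n\}]+O(1/n)$.

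The main obstacle is to make this last conditional expectation vanish uniformly in the random index $\bealpha_i$. For each fixed $\bealpha$, dominated convergence with dominant $V\in L^1(P)$ yields $\mathds{E}_{\bX\sim P}[V(\bX)\mathbf{1}\{V(\bX)\ell(\bX,\bealpha)>c_n\}]\to 0$ as $c_n\to\infty$, since $\ell(\bX,\bealpha)<\infty$ $P$-almost surely. To upgrade this to a statement that is uniform along the trajectory $\bealpha_i\to\bealpha^*$, I would either invoke a Dini-type argument on the compact ball $\{\|\bealpha-\bealpha^*\|\leq\delta_1\}$ (using monotonicity of the tail integral in $c_n$ and continuity of $\ell$ in $\bealpha$), or dominate $\ell(\bX,\bealpha_i)\leq L^*(\bX):=\sup_{\|\bealpha-\bealpha^*\|\leq\delta_1}\ell(\bX,\bealpha)$ and apply dominated convergence once with dominant $V$. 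Either route sends the tail sum to $o(1)$, so combining with the vanishing head contribution yields $\sum_{i=1}^n\mathds{E}_{i-1}[V_{n,i}(E)^2\mathbf{1}\{V_{n,i}(E)>\epsilon\}]\stackrel{P}{\longrightarrow}0$, verifying Assumption~\ref{assu:(Lindeberg-Feller)}.
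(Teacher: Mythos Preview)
Your proposal is correct and follows essentially the same route as the paper's proof: both bound the squared envelope by a constant multiple of $V(\bX_i)\ell(\bX_i,\bealpha_i)$ plus a bounded term (the paper uses $\max\{V\ell,C^2\}\geq V_i(E)^2/4$ with $C=\sup_{\Theta_\delta}|f|$, you use the equivalent additive bound via $(U_i+M)^2$), then convert conditional expectations under $P_{\bealpha_i}$ to expectations under $P$ by change of measure, and finally obtain uniformity in $\bealpha_i$ exactly as in your route (b), i.e., by replacing $\ell(\bX,\bealpha_i)$ with $\sup_{\|\bealpha-\bealpha^*\|\leq\delta_1}\ell(\bX,\bealpha)$ in the indicator and invoking dominated convergence with dominant $V\in L^1(P)$. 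The paper does not use a Dini argument (your route (a)); it goes straight to the supremum bound, and like you it leaves implicit that this supremum is $P$-a.s.\ finite so that the indicator tends to zero.
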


With these assumptions, we can first prove asymptotic equicontinuity for the sum of martingale difference arrays at $f_{\theta^*}$. That is, for any large $n$ and $\theta\in\Theta_{\delta}$, as long as $\rho(\theta,\theta^*)$ is small, $S_n(f_{\theta})$ and $S_n(f_{\theta^*})$ should be close enough. Similar to Donsker-type theorems, this would guarantee a uniform behavior in the martingale CLT $S_n(f_\theta)\Rightarrow Z(\theta)$ for all $\theta\in\Theta_{\delta}$, where $Z(\theta)$ is a normal random variable with mean zero and variance determined by $\theta$. Another main ingredient of our proof is that, since $\hat{\theta}_n\rightarrow\theta^*$ a.s., we have that $\bealpha_n\rightarrow\bealpha$ a.s., thus the variance of $S_n(\hat{\theta}_n)$ would converge to $\Var_{\bX\sim P_{\bealpha^*}}(F(\bX,\theta^*)\ell(\bX,\bealpha^*))$. This tells us the variance of $Z(\theta^*)$. Then, based on this, we derive the asymptotic normality of the SAA with adaptive IS in the following theorem. The full proof is in Appendix \ref{appx:proof_of_A2}.

\begin{theorem}[Asymptotic normality of SAA with embedded adaptive IS]\label{thm: SAA_CLT}
Under Assumptions \ref{assu: truncation_alpha} - \ref{assu:(Lindeberg-Feller)}, suppose that the function $\mathds{E}_{\bX\sim P_{\bealpha}}\left[\left(\bF(\bX,\theta)\ell(\bX,\balpha)\right)^{2}\right]$ is continuous in $\bealpha$, and $\rho\left(\theta,\theta^{*}\right)\rightarrow0$ as $\theta\rightarrow\theta^{*}$.
Suppose further that the black-box function $I$ is continuous.
Then we have asymptotic normality of $\hat{\theta}_{n}$ generated from Algorithm \ref{alg:SAA_rootfinding_blackbox}, given by
\begin{equation*}
\sqrt{n}\left(\hat{\theta}_n-\theta^*\right)\Rightarrow \mathcal{N}\left(0,\frac{\Var_{\bX\sim P_{\bealpha^*}}\left(F(\bX,\theta^*)\ell(\bX,\bealpha^*)\right)}{(f'(\theta^*))^2}\right),
\end{equation*}
where $\balpha^*=I(\theta^*)$.
\end{theorem}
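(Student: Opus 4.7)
My plan is a ``linearize-then-CLT'' strategy, adapted to the fact that the samples here are generated from an evolving, data-dependent sequence of importance samplers. First, Theorem \ref{prop:SAA_consistency} gives $\hat\theta_n \to \theta^*$ a.s.; the continuity of $I$ then forces $I(\hat\theta_{n-1}) \to \bealpha^* = I(\theta^*)$ a.s.; and because $\bealpha^* \in \cup_n A_n$ with the $A_n$ increasing, the projection $\Pi_{A_n}$ becomes the identity on $I(\hat\theta_{n-1})$ for all large $n$, so $\bealpha_n \to \bealpha^*$ a.s.

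I then linearize the SAA root equation. Writing $M_n(\theta) := \frac{1}{n}\sum_{i=1}^n F(\bX_i,\theta)\ell(\bX_i,\bealpha_i) - c$, equation \eqref{eq:SAA_AIS} gives $M_n(\hat\theta_n)=0$. Using $f(\theta^*)=c$ and multiplying through by $\sqrt{n}$ yields
\begin{align*}
-S_n(f_{\theta^*}) - \bigl[S_n(f_{\hat\theta_n}) - S_n(f_{\theta^*})\bigr] \;=\; \sqrt{n}\bigl(f(\hat\theta_n)-f(\theta^*)\bigr) \;=\; \bigl(f'(\theta^*)+\epsilon_n\bigr)\sqrt{n}(\hat\theta_n-\theta^*),
\end{align*}
where the last step uses Assumption \ref{assu: Obj_regular} and $\epsilon_n\to 0$ a.s.\ by consistency. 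The leading term $S_n(f_{\theta^*})$ is a martingale-difference sum $\sum_i V_{n,i}(f_{\theta^*})$. Its aggregated conditional variance $\frac{1}{n}\sum_{i=1}^n\bigl(\mathds{E}_{\bX\sim P_{\bealpha_i}}[(F(\bX,\theta^*)\ell(\bX,\bealpha_i))^2]-f(\theta^*)^2\bigr)$ converges a.s.\ to $\sigma^2 := \Var_{\bX\sim P_{\bealpha^*}}(F(\bX,\theta^*)\ell(\bX,\bealpha^*))$ by Ces\`aro averaging, continuity of the second moment in $\bealpha$, and $\bealpha_i\to\bealpha^*$. Combined with Lindeberg's condition (Assumption \ref{assu:(Lindeberg-Feller)}), the martingale CLT gives $S_n(f_{\theta^*}) \Rightarrow \mathcal{N}(0,\sigma^2)$.

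The main obstacle is the remainder $S_n(f_{\hat\theta_n})-S_n(f_{\theta^*})$, which I must show is $o_p(1)$. Since $\hat\theta_n\to\theta^*$ a.s.\ and $\rho(\theta,\theta^*)\to 0$ as $\theta\to\theta^*$, it suffices to establish asymptotic equicontinuity: for every $\eta,\epsilon>0$ there is $\kappa>0$ with $\limsup_n P\bigl(\sup_{\rho(\theta,\theta^*)<\kappa}|S_n(f_\theta)-S_n(f_{\theta^*})|>\epsilon\bigr)<\eta$. The difficulty is that the summands in $S_n$ are not independent and their conditional distributions depend on the adapted parameters $\bealpha_i$, so classical empirical-process tools need to be replaced by their martingale-difference-array analogues. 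I would follow the chaining/entropy framework of \cite{nishiyama2000}: Assumption \ref{assu:(uniform-integrable-entropy)} supplies both the finite entropy integral $\int_0^{\Delta_\Pi}\sqrt{\log N_\Pi(\epsilon)}\,d\epsilon<\infty$ and the uniform conditional $L_2$ control of maximal increments over $\rho$-balls, which together feed into Nishiyama's maximal inequality; a standard peeling argument across shrinking $\rho$-balls around $\theta^*$ then delivers the required equicontinuity.

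Combining the pieces, the right-hand side of the linearization is $-S_n(f_{\theta^*})+o_p(1) = O_p(1)$, so $\sqrt{n}(\hat\theta_n-\theta^*) = O_p(1)$ after dividing by $f'(\theta^*)+o_p(1)\neq 0$. A final application of Slutsky's lemma then yields
\[
\sqrt{n}(\hat\theta_n-\theta^*) \;=\; \frac{-S_n(f_{\theta^*})+o_p(1)}{f'(\theta^*)+o_p(1)} \;\Rightarrow\; \mathcal{N}\!\left(0,\frac{\sigma^2}{(f'(\theta^*))^2}\right),
\]
which is the stated asymptotic normality.
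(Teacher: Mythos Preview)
Your proposal is correct and follows essentially the same route as the paper: both establish $\bealpha_n\to\bealpha^*$ a.s., apply the martingale CLT to $S_n(f_{\theta^*})$ (using continuity of the second moment in $\bealpha$ for the conditional-variance convergence and Assumption \ref{assu:(Lindeberg-Feller)} for Lindeberg), and control the remainder $S_n(f_{\hat\theta_n})-S_n(f_{\theta^*})$ via asymptotic equicontinuity derived from Nishiyama's maximal inequality under Assumption \ref{assu:(uniform-integrable-entropy)}. The only cosmetic difference is that the paper packages your linearization step by invoking Theorem 3.3.1 of \cite{vanWell1996} (the general $Z$-estimator result), whereas you write out the same expansion directly; the substantive ingredients are identical.
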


This theorem tells us that using our adaptive IS, the asymptotic variance would be the same as if we use a fixed sampler $\bealpha^*$, or in other words as if we know the root in advance. In particular, when  $I(\theta) = \argmin_{\bealpha} \Var_{\bX\sim P_{\bealpha}}(F(\bX,\theta)\ell(\bX,\bealpha))$, we would achieve the asymptotic normality
\begin{equation*}
\sqrt{n}\left(\hat{\theta}_n-\theta^*\right)\Rightarrow \mathcal{N}\left(0,\min_{\bealpha}\frac{\Var_{\bX\sim P_{\bealpha}}\left(F(\bX,\theta^*)\ell(\bX,\bealpha)\right)}{(f'(\theta^*))^2}\right).
\end{equation*}
So the worst-case variance would be the LHS of the weak duality in Theorem \ref{thm: weak_duality}.

\subsubsection{SA with Adaptive IS.}\label{sec:SIS_SA}
Now we turn to SA, i.e., Algorithm \ref{alg: SA_rootfinding_blackbox}. For consistency, we adopt the projected ordinary differential equation (ODE) approach in \cite{kushner2003stochastic}. Since $\bealpha_{n+1}\in\mathcal{F}_{n}$, we have that 
\begin{equation*}\label{eq: SA_unbias}
\mathds{E}_{\bX_{n+1}\sim P_{\bealpha_{n+1}}}\left[F(\bX_{n+1},\hat{\theta}_{n})\ell(\bX_{n+1},\balpha_{n+1})|\mathcal{F}_{n}\right]={f}\left(\hat{\theta}_{n}\right),
\end{equation*}
so iteration \eqref{eq: update_SA_QE_proj1} has a martingale difference noise. 
To proceed, we state some assumptions.
The first is the boundedness of the conditional variance of the gradient estimator. Let $V_{n}=F(\bX_{n+1},\hat{\theta}_{n})\ell(\bX_{n+1},\balpha_{n+1})-f(\hat{\theta}_{n})$ be the error of the estimated objective function. We assume the following.
\begin{assumption}\label{assu:. sup_L2}
There exists a constant $C>0$ such that 
 $\mathds{E}_{n}[V_n^2]<C$.
\end{assumption}

In our adaptive method, $\bealpha_{n+1}$ can be chosen to make this conditional variance small, so that this assumption is typically readily verifiable. 
Our next assumption is a one-dimensional version of the constraint set condition (A4.3.2) of \cite{kushner2003stochastic} and the uniqueness of the root. 
\begin{assumption}\label{assu: unique_solution}
$A=[a,b]$ for some $-\infty<a<b<\infty$, $\theta^{*}$ belongs to the interior of $A$, $f(a)<c<f(b)$ and $\theta^*$ is the unique root of $f(\theta)=c$.
\end{assumption}

By verifying the conditions of Theorem 5.2.3 in \cite{kushner2003stochastic} (See Appendix \ref{sec: proof_consistency_SA}), we have the consistency of $\hat{\theta}_{n}$ and $\bar \theta_n$.
\begin{theorem}[Consistency of SA with embedded adaptive IS]\label{prop:RM_consistency}
Under Assumptions \ref{assu:. sup_L2}-\ref{assu: unique_solution}, both the RM-SA estimator
$\hat{\theta}_{n}$ and the PR-SA estimator $\bar\theta_n$ defined in Algorithm \ref{alg: SA_rootfinding_blackbox} converge
to $\theta^{*}$ a.s. 
\end{theorem}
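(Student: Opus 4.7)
The plan is to apply the projected ODE framework of Kushner and Yin (Theorem 5.2.3 in \citealt{kushner2003stochastic}) to the recursion \eqref{eq: update_SA_QE_proj1}. First, I would rewrite the update in the canonical form
\[
\hat{\theta}_{n}=\Pi_A\!\left[\hat{\theta}_{n-1}-\gamma_{n}\bigl(f(\hat\theta_{n-1})-c + V_{n-1}\bigr)\right],\qquad V_{n-1}:= F(\bX_n,\hat\theta_{n-1})\ell(\bX_n,\balpha_n)-f(\hat\theta_{n-1}).
\]
The key structural observation is that because $\balpha_n=I(\hat\theta_{n-1})$ is $\mathcal F_{n-1}$-measurable by construction in Algorithm \ref{alg: SA_rootfinding_blackbox}, the IS identity gives
$\mathds{E}_{n-1}[F(\bX_n,\hat\theta_{n-1})\ell(\bX_n,\balpha_n)]=f(\hat\theta_{n-1})$, so $\{V_{n-1}\}$ is a martingale difference sequence with respect to $\{\mathcal F_{n-1}\}$. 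Assumption \ref{assu:. sup_L2} supplies the uniform second moment bound $\mathds{E}_{n-1}[V_{n-1}^2]<C$ needed to verify the noise conditions (A2.4)/(A2.5) in Kushner-Yin. The stepsize choices $\gamma_n=\gamma/n^\alpha$ with $\alpha\in(1/2,1]$ satisfy $\sum\gamma_n=\infty$, $\sum\gamma_n^2<\infty$, as required.

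Next I would identify the mean ODE associated with the recursion as $\dot\theta = -(f(\theta)-c)+z(t)$, $\theta(t)\in A=[a,b]$, where $z(t)$ is the minimal boundary reflection term keeping $\theta$ inside $A$. I would then verify global asymptotic stability of $\theta^*$ for this ODE. Using the candidate Lyapunov function $W(\theta)=(\theta-\theta^*)^2/2$, continuity of $f$ together with $f(a)<c<f(b)$ and uniqueness of the root $\theta^*$ (Assumption \ref{assu: unique_solution}) forces $f(\theta)<c$ for $\theta<\theta^*$ and $f(\theta)>c$ for $\theta>\theta^*$, so $(\theta-\theta^*)(f(\theta)-c)\geq 0$ with equality only at $\theta^*$. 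Since $\theta^*$ lies in the interior of $A$, the reflection $z(t)$ points inward and satisfies $(\theta-\theta^*)z(t)\leq 0$ whenever it is active. Hence $\dot W\leq 0$ on trajectories of the ODE, with $\dot W<0$ off $\theta^*$, establishing that $\theta^*$ is the unique asymptotically stable equilibrium and $A$ is contained in its domain of attraction.

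With these ingredients---compactness of $A$, martingale-difference noise with bounded conditional second moment, stepsize regularity, continuity of the mean field $-(f-c)$, and global asymptotic stability of $\theta^*$ inside the constraint set---the hypotheses of Theorem 5.2.3 of \cite{kushner2003stochastic} are met, delivering $\hat\theta_n\to\theta^*$ a.s. For the PR-SA estimator $\bar\theta_n=(1/n)\sum_{k=1}^n\hat\theta_k$, a.s.\ convergence of the averages follows immediately from Ces\`aro's lemma applied pathwise to the a.s.\ convergent sequence $\{\hat\theta_k\}$.

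The main obstacle is verifying the Lyapunov/stability analysis for the \emph{projected} ODE rather than the unconstrained one: one must carefully argue that, because $\theta^*\in\mathrm{int}(A)$, trajectories starting in $A$ either never touch the boundary after finite time or touch it only with inward-pointing mean field, so the reflection term $z(t)$ does not spoil the decrease of $W$. The rest of the argument is a relatively mechanical check of the Kushner-Yin hypotheses, and the averaging extension is a one-line Ces\`aro argument.
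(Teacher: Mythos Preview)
Your proposal is correct and follows essentially the same route as the paper: verify the hypotheses (A2.1)--(A2.6) and (A4.3.2) of Theorem 5.2.3 in \cite{kushner2003stochastic} for the projected recursion \eqref{eq: update_SA_QE_proj1}, using the martingale-difference structure of the noise (guaranteed by $\balpha_n\in\mathcal F_{n-1}$), Assumption~\ref{assu:. sup_L2} for the second-moment bound, and Assumption~\ref{assu: unique_solution} to identify $\theta^*$ as the unique stationary point of the projected ODE. Your explicit Lyapunov computation and the Ces\`aro step for $\bar\theta_n$ simply flesh out details the paper leaves implicit.
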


Next we present the asymptotic normality of the root estimator, considering first the RM-SA estimator $\hat \theta_n$.
Following \cite{fabian1968}, we introduce the following assumption on the uniform integrability of the squared noise. 
\begin{assumption}
\label{assu: Polyak_noise}
\[
\sup_{n}\mathds{E}_{n-1}[|V_{n}|^{2}\mathbf{1}\{|V_{n}|>R\}]\stackrel{P}{\longrightarrow}0~\text{ as \ensuremath{R\rightarrow\infty}.}
\]
\end{assumption}
Similar to Assumption \ref{assu:(Lindeberg-Feller)}, we also have the following sufficient condition to verify Assumption \ref{assu: Polyak_noise}. The proof is in Appendix \ref{appx:proofprop13}.
\begin{proposition}\label{prop: PN_sufficient}
Suppose that we are given $\hat{\theta}_{n}\rightarrow\theta^{*}$ and $\bealpha_{n}\rightarrow\bealpha^{*}$, both a.s. Suppose that there exist $\delta,\delta_{1}>0$ and $V(\bX)$ such that $V(\bX)\geq (F(\bX,\theta))^2\ell(\bX,\bealpha)$ for all $\theta\in\Theta_{\delta},\left\Vert \bealpha-\bealpha^{*}\right\Vert \leq\delta_{1}$. Also suppose $\mathds{E}_{\bX\sim P} [V(\bX)]<\infty$. Then Assumption \ref{assu: Polyak_noise} holds. 
\end{proposition}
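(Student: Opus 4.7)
The plan is to exploit the envelope $V$ together with a change of measure from $P_{\bealpha_{n+1}}$ back to $P$, and then combine with the almost-sure convergences of $\hat\theta_n$ and $\bealpha_{n+1}$ to obtain uniform-in-$n$ control via dominated convergence. Write $\tilde V_n:=F(\bX_{n+1},\hat\theta_n)\ell(\bX_{n+1},\bealpha_{n+1})$, so $V_n=\tilde V_n-f(\hat\theta_n)$. Since $\hat\theta_n\to\theta^*$ and $\bealpha_{n+1}\to\bealpha^*$ a.s., and $f$ is continuous at $\theta^*$, there exist a random a.s.-finite index $N$ and a constant $M<\infty$ with $\hat\theta_n\in\Theta_\delta$, $\|\bealpha_{n+1}-\bealpha^*\|\leq\delta_1$, and $|f(\hat\theta_n)|\leq M$ for all $n\geq N$. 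Using $V_n^2\leq 2\tilde V_n^2+2M^2$ and $\{|V_n|>R\}\subset\{|\tilde V_n|>R-M\}$, I get
\[
\mathds{E}_{n-1}[V_n^2\mathbf{1}\{|V_n|>R\}]\leq 2\,\mathds{E}_{n-1}[\tilde V_n^2\mathbf{1}\{\tilde V_n^2>(R-M)^2\}]+2M^2\,P_{n-1}(|V_n|>R).
\]
The second term is $O(1/R^2)$ uniformly by conditional Markov together with the bound $\mathds{E}_{n-1}[V_n^2]\leq C$ from Assumption \ref{assu:. sup_L2}.

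The dominant term is handled by change of measure plus envelope: for $n\geq N$,
\[
\mathds{E}_{n-1}[\tilde V_n^2\mathbf{1}\{\tilde V_n^2>K\}]=\mathds{E}_{P}\!\left[F(\bX,\hat\theta_n)^2\ell(\bX,\bealpha_{n+1})\,\mathbf{1}\{(F\ell)^2>K\}\right]\leq\mathds{E}_{P}\!\left[V(\bX)\,\mathbf{1}\{(F(\bX,\hat\theta_n)\ell(\bX,\bealpha_{n+1}))^2>K\}\right],
\]
where the first equality uses $\ell=dP/dP_{\bealpha_{n+1}}$ and the inequality uses $F^2\ell\leq V$. For each $\omega$ in the full-measure event of parameter convergence, the integrand on the right is pathwise dominated by $V(\bX)\mathbf{1}\{W(\bX)>K\}$, where $W(\bX):=\sup_{\theta\in\Theta_\delta,\,\|\bealpha-\bealpha^*\|\leq\delta_1}(F(\bX,\theta)\ell(\bX,\bealpha))^2$ is a single deterministic envelope; since $W<\infty$ $P$-a.s., the $P$-measure of $\{W>K\}$ vanishes as $K\to\infty$, and dominated convergence against the $P$-integrable $V$ then gives $\sup_{n\geq N}\mathds{E}_P[V\mathbf{1}\{\cdot\}]\to 0$ as $K\to\infty$.

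Finally, for the a.s.\ finitely many indices $n<N$, each $\mathds{E}_{n-1}[V_n^2\mathbf{1}\{|V_n|>R\}]\to 0$ a.s.\ as $R\to\infty$ by conditional dominated convergence (using $\mathds{E}_{n-1}[V_n^2]\leq C$ from Assumption \ref{assu:. sup_L2}), so the max over these vanishes almost surely. Splitting $\sup_n=\max(\sup_{n<N},\sup_{n\geq N})$ yields the claimed convergence in probability. The main obstacle is decoupling the tail event from the random IS parameter $\bealpha_{n+1}$; this is accomplished by replacing the $n$-dependent event with the single deterministic event $\{W>K\}$ coming from the supremum envelope over the limit neighborhood of $(\theta^*,\bealpha^*)$, whose $P$-measure is then tamed by dominated convergence against the integrable envelope $V$.
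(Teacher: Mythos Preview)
Your argument is correct and takes essentially the same approach as the paper: change measure from $P_{\bealpha_{n+1}}$ back to $P$, dominate $F^2\ell$ by the integrable envelope $V$, replace the random tail event by a deterministic one via a supremum over the $(\theta,\bealpha)$-neighborhood, and finish by dominated convergence against $V$. The only differences are cosmetic: you handle the centering $f(\hat\theta_n)$ by the additive split $V_n^2\le 2\tilde V_n^2+2M^2$, whereas the paper absorbs it through the envelope inequality $\max\{V\ell,C^2\}\ge\tfrac14\bigl(\sup_{\theta\in\Theta_\delta}|F\ell-f(\theta)|\bigr)^2$; and you are more explicit than the paper about the finitely many indices $n<N$, where your appeal to Assumption~\ref{assu:. sup_L2} is not formally among the proposition's hypotheses but is always in force when the proposition is applied.
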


With these, we have the following CLT for the RM-SA estimator.
\begin{theorem}[Asymptotic normality of RM-SA with embedded adaptive IS] \label{prop: SA_CLT}
Suppose that $f(\cdot)$ is a twice differentiable function with $f^{\prime}(\theta^*)\neq 0 $. Also suppose that the function $\mathds{E}_{\bX\sim P_{\bealpha}}\left[\left(\bF(\bX,\theta)\ell(\bX,\balpha)\right)^{2}\right]$ is continuous in $\bealpha$ and the black-box function $I$ is continuous.
Under Assumptions \ref{assu:. sup_L2}-\ref{assu: Polyak_noise}, the RM-SA estimator $\hat{\theta}_n$ in Algorithm  \ref{alg: SA_rootfinding_blackbox} is asymptotically normal, viz.,
\begin{equation*}
\sqrt{n}\left(\hat{\theta}_n-\theta^*\right)\Rightarrow \mathcal{N}\left(0,\frac{\gamma^2\Var_{\bX\sim P_{\bealpha^*}}\left(F(\bX,\theta^*)\ell(\bX,\balpha^*)\right)}{2\gamma f^{\prime}(\theta^*) -1}\right),
\end{equation*}
where $\balpha^*=I(\theta^*)$. 
\end{theorem}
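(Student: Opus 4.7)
The plan is to apply the classical Fabian CLT for Robbins--Monro SA to the recursion in Algorithm \ref{alg: SA_rootfinding_blackbox}, viewed as a martingale-difference driven scheme. By Theorem \ref{prop:RM_consistency}, $\hat\theta_n \to \theta^*$ a.s., and since $\theta^*$ lies in the interior of $A$, the projection $\Pi_A$ becomes inactive for all sufficiently large $n$ almost surely, so one may treat the iteration as the unprojected form
\begin{equation*}
\hat\theta_n = \hat\theta_{n-1} - \frac{\gamma}{n}\bigl(f(\hat\theta_{n-1})-c + V_n\bigr),\qquad V_n := F(\bX_n,\hat\theta_{n-1})\ell(\bX_n,\bealpha_n) - f(\hat\theta_{n-1}).
\end{equation*}
Since $\bealpha_n = I(\hat\theta_{n-1})$ is $\mathcal F_{n-1}$-measurable, $\{V_n\}$ is a martingale difference sequence relative to $\{\mathcal F_n\}$. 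Twice differentiability of $f$ then gives the Taylor expansion $f(\hat\theta_{n-1})-c = f'(\theta^*)(\hat\theta_{n-1}-\theta^*) + o(\hat\theta_{n-1}-\theta^*)$, placing the recursion into the standard Fabian form with gain $\gamma f'(\theta^*)$ and noise term $-\gamma V_n$.

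The key intermediate step is to identify the asymptotic conditional variance $\sigma^2 := \lim_n \mathds{E}_{n-1}[V_n^2]$. Writing
\begin{equation*}
\mathds{E}_{n-1}[V_n^2] = \mathds{E}_{\bX\sim P_{\bealpha_n}}\!\bigl[\bigl(F(\bX,\hat\theta_{n-1})\ell(\bX,\bealpha_n)\bigr)^2\bigr] - \bigl(f(\hat\theta_{n-1})\bigr)^2,
\end{equation*}
I would combine (i) $\hat\theta_{n-1}\to\theta^*$ a.s.\ from Theorem \ref{prop:RM_consistency}; (ii) $\bealpha_n = I(\hat\theta_{n-1}) \to I(\theta^*)=\bealpha^*$ a.s.\ by continuity of $I$; and (iii) continuity in $\bealpha$ of the assumed second-moment map together with continuity of $f$ at $\theta^*$, to conclude $\mathds{E}_{n-1}[V_n^2] \to \Var_{\bX\sim P_{\bealpha^*}}(F(\bX,\theta^*)\ell(\bX,\bealpha^*))$. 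The Lindeberg-type uniform integrability required by Fabian's theorem is exactly Assumption \ref{assu: Polyak_noise}, the second-moment boundedness of $V_n$ is Assumption \ref{assu:. sup_L2}, and the stability condition $2\gamma f'(\theta^*) > 1$ is implicit in the positivity of the denominator appearing in the stated asymptotic variance. Applying Fabian's CLT then yields the claimed asymptotic normality with $\sigma^2$ as identified above.

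The main obstacle I anticipate is step (iii), namely passing from joint convergence $(\hat\theta_{n-1},\bealpha_n)\to(\theta^*,\bealpha^*)$ to convergence of the conditional second moment, because the assumption supplies only continuity in $\bealpha$, whereas $F(\bX,\theta)$ may fail to be continuous in $\theta$ (most notably in quantile estimation, where $F(\bX,\theta) = \mathbf{1}\{h(\bX)\le\theta\}$). The planned workaround is to rewrite via change of measure
\begin{equation*}
\mathds{E}_{\bX\sim P_{\bealpha_n}}\!\bigl[F(\bX,\hat\theta_{n-1})^2\ell(\bX,\bealpha_n)^2\bigr] = \mathds{E}_{\bX\sim P}\!\bigl[F(\bX,\hat\theta_{n-1})^2\ell(\bX,\bealpha_n)\bigr],
\end{equation*}
use the envelope $V(\bX)$ supplied by Proposition \ref{prop: PN_sufficient} for domination, and apply dominated convergence. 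For the discontinuous-$F$ case, differentiability of $f$ at $\theta^*$ implies the distribution of $h(\bX)$ has no atom at $\theta^*$, so $F(\bX,\hat\theta_{n-1})^2 \to F(\bX,\theta^*)^2$ holds $P$-a.s.; combined with continuity of $\ell(\bX,\cdot)$ at $\bealpha^*$, this delivers the required limit. Once this convergence is in hand, the remainder of the argument is routine bookkeeping within Fabian's framework.
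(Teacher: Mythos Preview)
Your proposal is correct and follows essentially the same route as the paper: cast the recursion into Fabian's (1968) framework (the paper's Lemma~A.5), use strong consistency from Theorem~\ref{prop:RM_consistency} together with $\theta^*\in A^\circ$ to make the projection term vanish, verify $\mathds{E}_{n-1}[V_n^2]\to\Var_{\bX\sim P_{\bealpha^*}}(F(\bX,\theta^*)\ell(\bX,\bealpha^*))$ via $\bealpha_n\to\bealpha^*$ and the assumed continuity in $\bealpha$, and invoke Assumption~\ref{assu: Polyak_noise} for the Lindeberg condition. You are in fact more careful than the paper on the variance-convergence step---the paper simply refers back to the SAA argument (which treats $\theta=\theta^*$ fixed) without explicitly addressing the simultaneous $\theta$-variation in $V_n$, whereas you flag this and propose a dominated-convergence workaround; just note that the envelope $V(\bX)$ you borrow comes from Proposition~\ref{prop: PN_sufficient}, which gives \emph{sufficient} conditions for Assumption~\ref{assu: Polyak_noise} rather than being directly assumed in the theorem.
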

The proof of Theorem \ref{prop: SA_CLT} requires reformulating the recursion in our Algorithm \ref{alg: SA_rootfinding_blackbox} in an alternative form used in \cite{fabian1968}. See Appendix \ref{appx:proofTheorem3} for the details.

Lastly, we consider the PR-SA estimator $\bar \theta_n$. 
Following \cite{polyak1992}, we have the following asymptotic result, whose proof is in Appendix \ref{appx:proofprop6}. 
\begin{theorem}[Asymptotic normality of PR-SA with embedded adaptive IS]\label{prop: SA_average_CLT}
Under the same assumptions in Theorem \ref{prop: SA_CLT}, the PR-SA estimator $\bar \theta_n$ in Algorithm \ref{alg: SA_rootfinding_blackbox} is asymptotically normal, viz., 
\begin{equation}\label{eq:asym_norm}
\sqrt{n}\left(\bar{\theta}_{n}-\theta^{*}\right)\Rightarrow \mathcal{N}\left(0,\frac{\Var_{\bX\sim P_{\bealpha^*}}\left(F(\bX,\theta^*)\ell(\bX,\bealpha^*)\right)}{f'(\theta^*)^2}\right),
\end{equation}
where $\bar{\theta}_{n}=\sum_{i=1}^n\hat{\theta}_{i}/n$ and $\balpha^*=I(\theta^*)$. 
\end{theorem}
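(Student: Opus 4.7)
The plan is to invoke the Polyak--Juditsky averaging theorem (\citealt{polyak1992}) after massaging the adaptive recursion into its standard unconstrained martingale-noise form and verifying the noise and smoothness hypotheses in our setting.

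I would first reduce the projected recursion to an unconstrained one. Theorem \ref{prop:RM_consistency} gives $\hat\theta_n \to \theta^*$ a.s., and $\theta^*$ lies in the interior of $A$ by Assumption \ref{assu: unique_solution}, so the projection $\Pi_A$ in \eqref{eq: update_SA_QE_proj1} is a.s.\ inactive for all sufficiently large $n$. For such $n$, the recursion can be rewritten as
\[
\hat\theta_n - \theta^* = (\hat\theta_{n-1} - \theta^*) - \gamma_n\bigl(f(\hat\theta_{n-1}) - c\bigr) - \gamma_n V_{n-1},
\]
with $V_{n-1} := F(\bX_n,\hat\theta_{n-1})\ell(\bX_n,\bealpha_n) - f(\hat\theta_{n-1})$. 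Since $\bealpha_n = I(\hat\theta_{n-1})$ is $\mathcal{F}_{n-1}$-measurable, unbiasedness of IS gives $\mathds{E}_{n-1}[V_{n-1}] = 0$, so $(V_n)$ forms a martingale-difference sequence, which is the structural requirement of Polyak--Juditsky.

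Next I would verify the noise hypotheses. Continuity of $I$, together with consistency, yields $\bealpha_n \to \bealpha^* = I(\theta^*)$ a.s. Continuity in $\bealpha$ of $\mathds{E}_{\bX\sim P_\bealpha}[(F(\bX,\theta)\ell(\bX,\bealpha))^2]$, applied along $(\hat\theta_{n-1},\bealpha_n) \to (\theta^*,\bealpha^*)$, then gives
\[
\mathds{E}_{n-1}[V_{n-1}^2] = \mathds{E}_{\bX\sim P_{\bealpha_n}}\!\bigl[(F(\bX,\hat\theta_{n-1})\ell(\bX,\bealpha_n))^2\bigr] - f(\hat\theta_{n-1})^2 \;\longrightarrow\; \Var_{\bX\sim P_{\bealpha^*}}\!\bigl(F(\bX,\theta^*)\ell(\bX,\bealpha^*)\bigr) =: \sigma^2
\]
almost surely. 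The Lindeberg-type conditional uniform integrability required by Polyak--Juditsky is exactly Assumption \ref{assu: Polyak_noise}. The step size $\gamma_n = \gamma/n^\alpha$ with $1/2 < \alpha < 1$ satisfies the Polyak--Juditsky step-size conditions, and $f$ is differentiable at $\theta^*$ with $f'(\theta^*) \neq 0$ by the theorem's hypotheses.

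Applying the averaging theorem of \cite{polyak1992} in its one-dimensional form then delivers
\[
\sqrt{n}(\bar\theta_n - \theta^*) \Rightarrow \mathcal{N}\!\left(0,\ \sigma^2 / f'(\theta^*)^2\right),
\]
which is \eqref{eq:asym_norm}. The main technical subtlety is the adaptive dependence: $V_{n-1}$ depends on the history through both $\hat\theta_{n-1}$ and $\bealpha_n$, so the classical i.i.d.\ noise framework does not apply directly. This is resolved by the $\mathcal{F}_{n-1}$-measurability of $\bealpha_n$, which preserves the martingale-difference structure; consistency together with the continuity and integrability assumptions then ensure that the conditional variance and tail behavior of the noise stabilize at their correct limiting values, so the Polyak--Juditsky averaging result applies.
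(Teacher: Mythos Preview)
Your proposal is correct and follows essentially the same route as the paper: both invoke Theorem 2 of \cite{polyak1992}, verify the martingale-difference structure and noise hypotheses (Assumptions 3.1--3.4 there) using our Assumptions \ref{assu:. sup_L2}--\ref{assu: Polyak_noise} together with consistency and continuity of $I$, and handle the projection by observing it is eventually inactive since $\theta^*$ lies in the interior of $A$. The paper phrases the projection argument slightly differently---adding a correction term $z_n$ that vanishes for large $n$ and checking it does not disturb Polyak--Juditsky's proof---but this is the same idea as your ``inactive for large $n$'' reduction.
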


Similar to the argument after Theorem \ref{thm: SAA_CLT}, when  $I(\theta) = \argmin_{\bealpha} \Var_{\bX\sim P_{\bealpha}}(F(\bX,\theta)\ell(\bX,\bealpha))$, the asymptotic normality \eqref{eq:asym_norm} is equivalent to
\begin{equation*}
\sqrt{n}\left(\bar{\theta}_{n}-\theta^{*}\right)\Rightarrow \mathcal{N}\left(0,\min_{\bealpha}\frac{\Var_{\bX\sim P_{\bealpha}}\left(F(\bX,\theta^*)\ell(\bX,\bealpha)\right)}{f'(\theta^*)^2}\right),
\end{equation*}
so the worst-case variance using PR-SA with our adaptive IS is the LHS in the weak duality result in Theorem \ref{thm: weak_duality}.

\section{Quantile Estimation with Adaptive Importance Sampling}\label{sec:quantile}
In this section, we apply our adaptive IS to the quantile estimation problem. 
Let $h:\mathbb{R}^d\to\mathbb{R}$ be a performance function of a stochastic system, and recall that $F_h(x)=P\{h(\bX)\leq x\}$ is the cumulative distribution function of $h(\bX)$. Then the $p$-quantile of $h(\bX)$ is
\begin{equation}
q^*=\inf\{x: P\{h(\bX)\leq x\}\geq p\} =\inf\{x: F_h(x)\geq p\}.\label{quantile original}
\end{equation}
Henceforth, we assume that $h(\bX)$ is a continuous random variable, so the $p$-quantile of $h(\bX)$ is the root of the equation
\begin{equation}\label{eq:quantile_root}
F_h(q)=\mathds{E}_{\bX\sim P}[\mathbf{1}\{h(\bX)\leq q\}] = p.
\end{equation}
Note that in this section, $q$ is used in place of $\theta$ as the variable in the root-finding equation.

As discussed in the introduction, when $p$ is close
to 0 or 1, standard Monte Carlo could perform poorly, where the standard approach means using the empirical quantile, i.e., generating i.i.d. samples $\bX_{1},\bX_{2},\cdots,\bX_{n}\sim P$, and solving the empirical root-finding problem \eqref{eq:quantile_root} by plugging in the empirical distribution for $F_h$. To address the extreme value estimation problem, we apply the adaptive IS approach to the quantile estimation setting, for which we also provide milder and easier-to-verify conditions for the asymptotic analysis.

\subsection{Empirical Quantile}
The empirical quantile depicted above is the analog of the SAA solution in quantile estimation. 
Similar to the previous section, the IS distribution is parameterized by $\balpha$, on which we know a black-box function $I(q)$ that gives a good IS parameter $\balpha$ for estimating $P(h(\bX)\leq q)$. Algorithm \ref{alg:SAA_quantile} presents our adaptive IS embedded in the empirical quantile, where the truncation set $A_n$ is the same as in Algorithm \ref{alg:SAA_rootfinding_blackbox} that is used to guarantee strong consistency. 
\begin{algorithm}[htb]
\caption{Empirical quantile with adaptive importance sampling}
\label{alg:SAA_quantile}
\begin{algorithmic}[1]
\Ensure Original sampling distribution $P$; initial IS parameter $\balpha_1$; initial iteration index $n=1$; truncation sets $A_1\subset A_2 \subset \dots $; black-box IS function $I$. 
\While{stopping criteria not met}
\State Generate sample $\bX_n\sim P_{\bealpha_n}$ ;
\State Update quantile estimate
\[
\hat q_n=\inf\left\{\frac{1}{n}\sum_{i=1}^n \mathbf{1}\{h(\bX_i)\leq q\}\ell(\bX_i,\balpha_i)\geq p\right\};
\]
\State Update IS parameter $\balpha_{n+1}=\Pi_{A_{n+1}}\left[I(\hat{q}_n)\right]$;
\State Set $n=n+1$;
\EndWhile
\lastcon{Quantile estimate $\hat q_n$.
}
\end{algorithmic}
\end{algorithm}

Because of the special monotone structure of the objective function, the assumptions required for the quantile estimation asymptotics are considerably simpler than those used in the general case presented previously. Our first assumption is analogous to Assumption \ref{assu: truncation_alpha}.
\begin{assumption} \label{assu: SAA_QE_consistency}
There exist $\delta,\epsilon>0$ such that  $\mathds{E}\left[\left(\mathbf{1}\{h(\bX_n)\leq q^*+\delta\}\ell(\bX_{n},\balpha_{n})\right)^{2}\right]=O(n^{1-\epsilon})$.
\end{assumption}

The next assumption corresponds to the condition in Proposition \ref{prop: LF_sufficient} (or \ref{prop: PN_sufficient}) specialized for the quantile estimation problem. Observe that $F(\bX,\theta)$ in Proposition \ref{prop: LF_sufficient} corresponds to $\mathbf{1}\{h(\bX)\leq q\}$ in the quantile estimation case, which is monotone w.r.t. $q$. 
\begin{assumption}
\label{assu: QE_Feller}There exist $\delta_1,\delta_2>0$ such that there exists $V(\bX)\geq \sup_{\left\Vert\bealpha-\bealpha^*\right\Vert\leq\delta_2}\mathbf{1}\{h(\bX)\leq q^*+\delta_1\}\ell(\bX,\bealpha)$ with $\mathds{E}_{\bX\sim P}[V(\bX)]< \infty$.  
\end{assumption}

The next assumptions about the smoothness of the variance and objective function are Assumption \ref{assu: Obj_regular} and the condition depicted in Theorem \ref{thm: SAA_CLT} specialized to the quantile estimation case. 
\begin{assumption}\label{assu: continuous_variance}
For $(q,\bealpha)$ in a neighborhood of $(q^*,\bealpha^*)$, the function 
$$\Var_{\bX\sim P_{\bealpha^*}}\left(\mathbf{1}\{h(\bX)\leq q\}\ell(\bX,\bealpha)\right)$$ is continuous in $(q,\bealpha)$.
\end{assumption}

\begin{assumption}
\label{assu: SAA_QE_obj}The distribution function $F_h(x)$ is differentiable
at $x=q^{*}$, and the density $f_h(x)$ is strictly positive at $x=q^{*}$.
\end{assumption}

With these, we have the consistency and asymptotic normality of the quantile estimator $\hat q_n$. 
\begin{theorem}{\bf (Consistency and asymptotic normality of empirical quantile with embedded adaptive IS)}.
\label{prop: SAA_QE_opt}Under Assumptions \ref{assu: SAA_QE_consistency}-\ref{assu: SAA_QE_obj}, suppose that the black-box function $I$ is continuous.
Then the quantile estimator $\hat q_n$ obtained from Algorithm \ref{alg:SAA_quantile} is strongly consistent, i.e., $\hat{q}_n\rightarrow q^*$ a.s., and asymptotically normal, i.e.,
\[
\sqrt{n}(\hat{q}_{n}-q^{*})\Rightarrow\mathcal{N}\left(0,\frac{\Var_{\bX\sim P_{\bealpha^*}}\left(\mathbf{1}\{h(\bX)\leq q^{*}\}\ell(\bX,\bealpha^*)\right)}{\left(f_h(q^{*})\right)^{2}}\right),
\]
where $\balpha^*=I(q^*)$.
\end{theorem}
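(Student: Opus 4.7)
The strategy is to specialize the abstract SAA convergence results, Theorems \ref{prop:SAA_consistency} and \ref{thm: SAA_CLT}, to the quantile setting where $F(\bX,q)=\mathbf{1}\{h(\bX)\le q\}$. The monotonicity of the indicator in $q$, its boundedness by $1$, and the classical VC structure of monotone indicator classes will let me verify the abstract Assumptions \ref{assu: truncation_alpha}--\ref{assu:(Lindeberg-Feller)} using only the simpler quantile-specific Assumptions \ref{assu: SAA_QE_consistency}--\ref{assu: SAA_QE_obj}.

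For consistency, I check Assumptions \ref{assu: truncation_alpha}--\ref{assu: Obj_regular} after localizing to a neighborhood of $q^*$. Assumption \ref{assu: truncation_alpha} follows from Assumption \ref{assu: SAA_QE_consistency} because $\mathbf{1}\{h(\bX)\le q\}\le \mathbf{1}\{h(\bX)\le q^*+\delta\}$ for $q\le q^*+\delta$. The bracketing Assumption \ref{assu: bracket} is handled by monotonicity: pick $q_0<q_1<\cdots<q_N$ with $F_h(q_{k+1})-F_h(q_k)\le\epsilon$ and let $K_\epsilon$ consist of the pairs $(\mathbf{1}\{h(\bX)\le q_k\}\ell(\bX,\bealpha),\mathbf{1}\{h(\bX)\le q_{k+1}\}\ell(\bX,\bealpha))$; Lemma \ref{lem: average_consistent} applied at the $q_k$ supplies the required almost-sure limits and bracket-size bound. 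Assumption \ref{assu: Obj_regular} follows from Assumption \ref{assu: SAA_QE_obj} since positivity of $f_h(q^*)$ makes $F_h$ strictly monotone near $q^*$, so that $q^*$ is the unique well-separated root of $F_h(q)=p$. Theorem \ref{prop:SAA_consistency} then yields $\hat q_n\to q^*$ a.s., and continuity of $I$ gives $\bealpha_n\to\bealpha^*=I(q^*)$ a.s.

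For asymptotic normality I verify Assumptions \ref{assu:(uniform-integrable-entropy)}--\ref{assu:(Lindeberg-Feller)} using the monotone structure directly rather than via Proposition \ref{prop: verify_entropy} (whose pointwise Lipschitz condition does not accommodate indicators). For Assumption \ref{assu:(uniform-integrable-entropy)} I construct $\Pi(\epsilon)$ explicitly by partitioning $\Theta_\delta$ at points with $F_h$-gaps of size $\epsilon^2$, giving $N_\Pi(\epsilon)=O(1/\epsilon^2)$ and hence integrable $\sqrt{\log N_\Pi}$; the stochastic-equicontinuity bound inside each cover element then reduces, through conditional second-moment calculations, to a control of $\mathds{E}_{j-1}[\mathbf{1}\{h(\bX_j)\in[q_k,q_{k+1}]\}(\ell(\bX_j,\bealpha_j))^2]$ for which Assumption \ref{assu: QE_Feller} supplies the needed uniform bound. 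Assumption \ref{assu:(Lindeberg-Feller)} follows directly from Proposition \ref{prop: LF_sufficient} with the same envelope $V$ provided by Assumption \ref{assu: QE_Feller} (note $(\mathbf{1}\{h(\bX)\le q\})^2=\mathbf{1}\{h(\bX)\le q\}$). Continuity of the second-moment function in $\bealpha$ is Assumption \ref{assu: continuous_variance}, and continuity of $\rho(q,q^*)$ in $q$ follows from continuity of $F_h$. Theorem \ref{thm: SAA_CLT} then yields the asymptotic normality with the stated variance $\Var_{\bX\sim P_{\bealpha^*}}(\mathbf{1}\{h(\bX)\le q^*\}\ell(\bX,\bealpha^*))/(f_h(q^*))^2$.

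The main obstacle will be bypassing Proposition \ref{prop: verify_entropy} for the entropy condition, since its Lipschitz-envelope form is tailored to smooth integrands and breaks down pointwise for indicators; the equicontinuity bound therefore has to be re-derived directly from the monotone-bracket structure using the envelope from Assumption \ref{assu: QE_Feller}. A secondary subtlety is justifying the localization to $\Theta_\delta$ without a compactness hypothesis on $\Theta$, which is legitimate only after the consistency half has been carried out; once both issues are handled, the limiting variance emerges from evaluating the general CLT at $(q^*,\bealpha^*)$.
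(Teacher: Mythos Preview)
Your consistency argument matches the paper's: both verify Assumptions \ref{assu: truncation_alpha}--\ref{assu: Obj_regular} via monotonicity and then invoke Theorem \ref{prop:SAA_consistency} (with the minor additional remark that $\hat q_n$ is only an approximate root of the empirical equation). The asymptotic-normality part, however, is where you and the paper diverge, and your route has a gap.

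You propose to verify Assumption \ref{assu:(uniform-integrable-entropy)} directly by covering $\Theta_\delta$ with intervals $(q_k,q_{k+1}]$ of $F_h$-mass $\epsilon^2$ and then bounding $\mathds{E}_{j-1}[\mathbf{1}\{h(\bX_j)\in(q_k,q_{k+1}]\}(\ell(\bX_j,\bealpha_j))^2]$ using Assumption \ref{assu: QE_Feller}. But that assumption only gives an \emph{integrable} envelope $V$ with $\mathds{E}_P[V(\bX)]<\infty$; what Assumption \ref{assu:(uniform-integrable-entropy)} needs here is
\[
\mathds{E}_P\bigl[\mathbf{1}\{h(\bX)\in(q_k,q_{k+1}]\}\,\ell(\bX,\bealpha_j)\bigr]\le C\,\epsilon^2
\]
uniformly in $k$ and $\epsilon$, which amounts to a bound on the \emph{conditional} expectation $\mathds{E}_P[V(\bX)\mid h(\bX)\in(q_k,q_{k+1}]]$. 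Integrability of $V$ does not give this: e.g.\ if $h(\bX)$ is uniform near $q^*$ and $V(\bX)=|h(\bX)-q^*|^{-1/2}$, then $\mathds{E}_P[V]<\infty$ but $\mathds{E}_P[\mathbf{1}\{h(\bX)\in(q^*,q^*+\epsilon^2]\}V(\bX)]\asymp\epsilon$, so the ratio in Assumption \ref{assu:(uniform-integrable-entropy)} blows up like $1/\epsilon$. Thus the entropy condition is not verifiable from Assumptions \ref{assu: SAA_QE_consistency}--\ref{assu: SAA_QE_obj} alone, and the paper says as much just before the theorem: ``the counterpart of Assumption \ref{assu:(uniform-integrable-entropy)} is not required.''

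The paper instead bypasses Theorem \ref{thm: SAA_CLT} for the CLT and argues in the classical Serfling style: for fixed $t$, rewrite $\{\sqrt{n}(\hat q_n-q^*)\le t\}=\{p\le n^{-1}\sum_i\mathbf{1}\{h(\bX_i)\le q^*+t/\sqrt n\}\ell(\bX_i,\bealpha_i)\}$, apply a triangular-array martingale CLT to the standardized sum at the single evaluation point $q^*+t/\sqrt n$ (checking Lindeberg via Assumption \ref{assu: QE_Feller} and the variance limit via Assumption \ref{assu: continuous_variance}), and finish with $\sqrt{n}(F_h(q^*+t/\sqrt n)-p)\to t f_h(q^*)$ and Slutsky. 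Because this needs only a pointwise CLT rather than stochastic equicontinuity over a neighborhood, the envelope merely has to be integrable, which is exactly what Assumption \ref{assu: QE_Feller} provides.
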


In the proof of Theorem \ref{prop: SAA_QE_opt}, we first verify the conditions for invoking Theorem \ref{prop:SAA_consistency} to show strong consistency. Regarding asymptotic normality, instead of using Theorem \ref{thm: SAA_CLT} directly, we exploit the special structure of quantile estimation, and the required condition is slightly weaker (the counterpart of Assumption \ref{assu:(uniform-integrable-entropy)} is not required for Theorem \ref{prop: SAA_QE_opt}). While the idea follows generally from \cite{serfling2009approximation} Theorem A, Section 2.3.3, one notable difference is that to show the weak convergence of the empirical estimate of $F_h \left(q+t\sqrt{n^{-1}}\right)$, we need to use a triangular-array martingale CLT instead of the Berry-Essen bound in \cite{serfling2009approximation}. As a result, we also do not need to assume that the likelihood ratio has bounded third-order moment as required by Berry-Esseen.

\subsection{SA with Adaptive IS for Quantile Estimation}

Algorithm \ref{alg:SA_quantile} presents our procedure to embed IS in SA for quantile estimation where, as in Section \ref{sec: adaptiveIS}, we consider both the RM-SA quantile estimator $\hat q_n$ and the PR-SA quantile estimator $\bar q_n$.
\begin{algorithm}[htb]
\caption{SA with adaptive importance sampling for quantile estimation}
\label{alg:SA_quantile}
\begin{algorithmic}[1]
\Ensure Original sampling distribution $P$; initial IS parameter $\balpha_1$; initial quantile estimate $\hat q_0$; stepsize constant $\gamma$; prior information set $A$; initial iteration index $n=1$; black-box IS function $I$.
\While{stopping criteria not met}
\State Generate sample $\bX_n\sim P_{\bealpha_n}$, and calculate $h(\bX_n)$ and $\ell(\bX_n,\balpha_n)$;
\State Set $\gamma_n=\gamma/n^{\alpha}$ (usually $\alpha=1$ for RM-SA; $1/2<\alpha<1$ for PR-SA);
\State Update quantile estimate
\begin{equation}\label{eq: update_SA_QE_proj}
\hat{q}_{n}=\Pi_A\left[\hat{q}_{n-1}-\gamma_{n}\left(\mathbf{1}\{h(\bX_{n})\leq\hat{q}_{n-1})\}\ell(\bX_{n},\balpha_{n})-p\right)\right];
\end{equation}
\State Update IS parameter $\balpha_{n+1}=I(\hat q_n)$;
\State Set $n=n+1$;
\EndWhile
\lastcon{Quantile estimate $\hat q_n$ for RM-SA, or $\bar q_n=\sum_{i=1}^n \hat q_i/n$ for PR-SA.
}
\end{algorithmic}
\end{algorithm}

Let $v(q)=\mathds{E}_{\bX\sim P_{I(q)}}\left[\mathbf{1}\{h(\bX)\leq q\}\ell (\bX,I(q))^2\right]$.
The following is Assumption \ref{assu:. sup_L2} specialized to quantile estimation.
\begin{assumption} \label{assu: QE_Fabian_noise}
There exists a constant $C$ such that $v(\hat{q}_n)<C$. 
\end{assumption}

We get the following special case of Theorem \ref{prop: SA_CLT}.

\begin{theorem}\label{prop:Asynorm_quantile_RM}
{\bf (Consistency and asymptotic normality of RM-SA with embedded adaptive IS for quantile estimation).}
Under Assumptions \ref{assu: QE_Feller}, \ref{assu: continuous_variance} and \ref{assu: QE_Fabian_noise},
suppose that $A=[a,b]$ for $a<q^*<b$,  {$F_h$ is a differentiable function and $f_h(q^{*})>0$},
and $\gamma>{1}/{(2f(q))}$. Suppose further that the black-box function $I$ is continuous. Then $\hat{q}_n\rightarrow q^*$ a.s. and 
\[
\sqrt{n}(\hat{q}_{n}-q^{*})\Rightarrow\mathcal{N}\left(0,\frac{\gamma^{2}}{2\gamma f_h(q^*)-1}\Var_{\bX\sim P_{\bealpha^*}}\left(\mathbf{1}\{h(\bX)\leq q^{*})\}\ell(\bX,\balpha^*)\right)\right),
\]
where ${\balpha}^{*}=I(q^{*})$.
\end{theorem}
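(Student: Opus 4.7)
The plan is to specialize Theorems \ref{prop:RM_consistency} and \ref{prop: SA_CLT} to the quantile-estimation setting via the identification $F(\bX,\theta)\leftrightarrow \mathbf{1}\{h(\bX)\leq q\}$, $f(\theta)\leftrightarrow F_h(q)$, $c\leftrightarrow p$, and $\theta^*\leftrightarrow q^*$, and then to check the abstract hypotheses one by one.

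\textbf{Consistency.} For Assumption \ref{assu:. sup_L2}, since $\balpha_{n+1}=I(\hat q_n)$ is $\mathcal{F}_n$-measurable and the indicator squares to itself, I would bound
\[
\mathds{E}_n[V_n^2]\leq \mathds{E}_n\!\left[\bigl(\mathbf{1}\{h(\bX_{n+1})\leq \hat q_n\}\ell(\bX_{n+1},\balpha_{n+1})\bigr)^{2}\right]=v(\hat q_n)<C,
\]
where the last inequality is exactly Assumption \ref{assu: QE_Fabian_noise}. For Assumption \ref{assu: unique_solution}, differentiability of $F_h$ together with $f_h(q^*)>0$ implies that $F_h$ is strictly increasing across $q^*$, so taking $A=[a,b]$ inside a sufficiently small neighborhood of $q^*$ (permissible since $A$ is a user-chosen prior range) guarantees $F_h(a)<p<F_h(b)$ and that $q^*$ is the unique root of $F_h(q)=p$ on $A$. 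Theorem \ref{prop:RM_consistency} then yields $\hat q_n\to q^*$ almost surely, and continuity of $I$ upgrades this to $\balpha_{n+1}=I(\hat q_n)\to I(q^*)=\balpha^*$ almost surely.

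\textbf{Asymptotic normality.} Next I would discharge the remaining hypotheses of Theorem \ref{prop: SA_CLT}. Assumption \ref{assu: Polyak_noise} follows from Proposition \ref{prop: PN_sufficient}: the almost-sure convergences just established provide its preamble, and Assumption \ref{assu: QE_Feller} supplies the dominating envelope $V(\bX)\geq\mathbf{1}\{h(\bX)\leq q\}\ell(\bX,\balpha)=(F(\bX,q))^2\ell(\bX,\balpha)$ on a neighborhood of $(q^*,\balpha^*)$ with $\mathds{E}_{\bX\sim P}[V(\bX)]<\infty$. Continuity of $\balpha\mapsto \mathds{E}_{\bX\sim P_\balpha}[(F(\bX,q)\ell(\bX,\balpha))^2]$ at $(q^*,\balpha^*)$ is obtained from Assumption \ref{assu: continuous_variance} combined with the change-of-measure identity $\mathds{E}_{\bX\sim P_\balpha}[(\mathbf{1}\{h(\bX)\leq q\}\ell(\bX,\balpha))^2]=\mathds{E}_{\bX\sim P}[\mathbf{1}\{h(\bX)\leq q\}\ell(\bX,\balpha)]$. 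Continuity of $I$ is assumed outright, and the stepsize condition $\gamma>1/(2f_h(q^*))$ makes the Fabian denominator $2\gamma f_h(q^*)-1$ strictly positive, so that the limiting variance is well defined. Invoking Theorem \ref{prop: SA_CLT} then produces the claimed CLT.

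\textbf{Main obstacle.} The one genuine gap is that Theorem \ref{prop: SA_CLT} is stated under twice differentiability of $f$, while here only first-order differentiability of $F_h$ at $q^*$ is assumed. Inspection of the Fabian-type argument underlying Theorem \ref{prop: SA_CLT} reveals that only the first-order Taylor expansion $F_h(q)=p+f_h(q^*)(q-q^*)+o(q-q^*)$ at $q^*$ is ever used, so the conclusion remains valid under the present weaker smoothness; I would note this explicitly, or equivalently invoke \cite{fabian1968} directly with the first-order hypothesis. A secondary technicality is ensuring uniqueness of the root on $A$ in Assumption \ref{assu: unique_solution}, which is handled as above by shrinking $A$ around $q^*$.
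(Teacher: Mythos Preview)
Your proposal is correct and follows exactly the paper's approach: the paper presents this result simply as a special case of Theorems \ref{prop:RM_consistency} and \ref{prop: SA_CLT}, and you have carried out the hypothesis verification in detail. One minor note: shrinking $A$ is unnecessary for Assumption \ref{assu: unique_solution}, since monotonicity of the CDF $F_h$ together with $f_h(q^*)>0$ already gives $F_h(a)<p<F_h(b)$ and global uniqueness of the root for any $a<q^*<b$; your remark that the Fabian argument underlying Theorem \ref{prop: SA_CLT} needs only first-order differentiability is on point.
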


We also have asymptotic normality of our adaptive IS embedded in PR-SA for quantile estimation.
\begin{theorem}{\bf (Consistency and asymptotic normality of PR-SA with embedded adaptive IS for quantile estimation).}
\label{prop: QE_Polyak_General}
Suppose that $I$ is continuous. Under Assumptions \ref{assu: QE_Feller}, \ref{assu: continuous_variance} and \ref{assu: QE_Fabian_noise},
suppose that $A=[a,b]$ for $a<q^*<b$, {$F_h$ is a twice differentiable function and $f_h(q^{*})>0$}, then $\bar q_n\rightarrow q^*$ a.s. and
\[
\sqrt{n}(\bar q_n-q^*)\Rightarrow\mathcal{N}\left(0,\frac{\Var_{\bX\sim P_{\bealpha}}\left(\mathbf{1}\{h(\bX)\leq q^{*}\}\ell(\bX,\balpha^*)\right)}{\left(f_h(q^{*})\right)^{2}}\right),
\]
where $\bar q_n = \sum_{i=1}^n \hat q_i/n$ and ${\balpha}^{*}=I(q^{*})$.
\end{theorem}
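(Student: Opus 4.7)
The plan is to derive this result by specializing the general PR--SA asymptotic normality theorem (Theorem~\ref{prop: SA_average_CLT}) to the quantile-estimation setting with $F(\bX,q) = \mathbf{1}\{h(\bX) \le q\}$, $f(q) = F_h(q)$, and $c = p$. The overall strategy splits naturally into two pieces: first establish strong consistency $\hat{q}_n \to q^*$ a.s. (so that $\bar{q}_n \to q^*$ follows immediately by Ces\`aro averaging), and then verify the ingredients that drive the Polyak--Juditsky CLT.

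For consistency, I would invoke Theorem~\ref{prop:RM_consistency} with $V_n := \mathbf{1}\{h(\bX_n) \le \hat{q}_{n-1}\}\ell(\bX_n,\balpha_n) - F_h(\hat{q}_{n-1})$. Assumption~\ref{assu:. sup_L2} reduces to Assumption~\ref{assu: QE_Fabian_noise} since $\mathds{E}_{n-1}[V_n^2] \le v(\hat{q}_{n-1})$; Assumption~\ref{assu: unique_solution} follows because $q^* \in \mathrm{int}([a,b])$, the CDF $F_h$ is monotone, and the strict positivity $f_h(q^*) > 0$ forces $q^*$ to be the unique root of $F_h(q) = p$ on $[a,b]$ with $F_h(a) < p < F_h(b)$. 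Consistency of the iterates then yields $\balpha_n = I(\hat{q}_{n-1}) \to I(q^*) = \balpha^*$ a.s. by continuity of $I$, and averaging gives $\bar{q}_n \to q^*$ a.s.

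For the central limit theorem for $\bar q_n$, I would rewrite the SA recursion \eqref{eq: update_SA_QE_proj} (noting that the projection is eventually inactive since $q^* \in \mathrm{int}(A)$ and $\hat{q}_n \to q^*$) as
\[
\hat{q}_n - q^* = (\hat{q}_{n-1} - q^*) - \gamma_n\bigl[F_h(\hat{q}_{n-1}) - p + M_n\bigr],
\]
where $M_n := \mathbf{1}\{h(\bX_n) \le \hat{q}_{n-1}\}\ell(\bX_n,\balpha_n) - F_h(\hat{q}_{n-1})$ is a martingale difference adapted to $\mathcal{F}_{n-1}$. The twice differentiability of $F_h$ gives the Taylor expansion $F_h(\hat{q}_{n-1}) - p = f_h(q^*)(\hat{q}_{n-1} - q^*) + O\bigl((\hat{q}_{n-1} - q^*)^2\bigr)$, linearizing the drift exactly in the form required by the Polyak--Juditsky framework. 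The two remaining conditions to verify are (i) the conditional variance convergence $\mathds{E}_{n-1}[M_n^2] \to \Var_{\bX\sim P_{\balpha^*}}\bigl(\mathbf{1}\{h(\bX) \le q^*\}\ell(\bX,\balpha^*)\bigr)$ a.s., which follows from the joint a.s. convergence $(\hat{q}_{n-1},\balpha_n) \to (q^*,\balpha^*)$ and Assumption~\ref{assu: continuous_variance}; and (ii) a Lindeberg-type uniform integrability of $\{M_n^2\}$, which is obtained exactly as in Proposition~\ref{prop: PN_sufficient} from the envelope $V(\bX)$ supplied by Assumption~\ref{assu: QE_Feller}. Feeding these into the standard Polyak--Juditsky theorem yields the announced variance $\Var_{\bX\sim P_{\balpha^*}}\bigl(\mathbf{1}\{h(\bX) \le q^*\}\ell(\bX,\balpha^*)\bigr)/f_h(q^*)^2$.

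The main obstacle is that the per-sample surrogate $\mathbf{1}\{h(\bX_n) \le q\}$ is discontinuous in $q$, which rules out any sample-path smoothness of $F(\bX,\cdot)$ and any direct Taylor expansion at the microscopic level. The way around this is to push all smoothness onto the expectation: the twice differentiability of $F_h$ and $f_h(q^*) > 0$ handle the drift, Assumption~\ref{assu: continuous_variance} controls the local variance, and Assumption~\ref{assu: QE_Feller}'s envelope controls the tails and delivers the Lindeberg condition. The adaptive convergence $\balpha_n \to \balpha^*$, combined with continuity of $I$, then transfers the asymptotic variance of $\sqrt{n}(\bar{q}_n - q^*)$ to the value at the optimal importance-sampling parameter $\balpha^*$, closing the argument.
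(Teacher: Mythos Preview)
Your proposal is correct and follows essentially the same route as the paper: Theorem~\ref{prop: QE_Polyak_General} is obtained by specializing the general PR--SA result (Theorem~\ref{prop: SA_average_CLT}, whose proof in turn verifies the Polyak--Juditsky conditions), and you have correctly identified how Assumptions~\ref{assu: QE_Feller}, \ref{assu: continuous_variance}, \ref{assu: QE_Fabian_noise} together with the interval structure of $A$ and the smoothness of $F_h$ map onto Assumptions~\ref{assu:. sup_L2}--\ref{assu: Polyak_noise} and the remaining hypotheses of that theorem. Your remark that the discontinuity of the indicator is not an obstacle because all regularity is needed only at the level of $F_h$ and the conditional variance is exactly the point.
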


Finally, we note that Algorithms \ref{alg:SAA_quantile} and \ref{alg:SA_quantile}, with IS outputs of the form $\mathbf{1}\{h(\bX)\leq q)\ell(\bX,\balpha)$, are designed for the case when $p$ is close to 0.
When $p$ is close to 1,
we should use outputs of the form $\mathbf{1}\{h(\bX)\geq q)\ell(\bX,\balpha)$, because the area $h(\bX)\geq q$ becomes more important. 
Correspondingly, we can replace the indicator function in Algorithms \ref{alg:SAA_quantile} and \ref{alg:SA_quantile} by $\mathbf{1}(h(\bX)\geq \hat q_n)$ and $p$ by $1-p$. All our theoretical results continue to hold, as we can view this case equivalently as simply adding a negative sign to $h(\bX)$.

\section{Multidimensional Setting}\label{sec:multidim}
In Sections \ref{sec: adaptiveIS} and \ref{sec:quantile}, we restricted our discussion to a one-dimensional root $\theta\in\mathbb{R}$. 
In this section, we generalize our developments to multidimensional settings. Most of these generalizations follow naturally, but one major new issue arises in comparing (asymptotic) performance, since the scalar measure of (asymptotic) variance is now replaced by a matrix. 
To make this comparison well-defined, we introduce a scalar-valued performance function $g$, and we measure errors in terms of $g(\hat{\betheta}_n)-g(\betheta^*)$, e.g., $g$ can be an approximation of the objective function in a considered optimization problem. 
With this, the best IS parameter would be an optimal solution to minimize the variance of the approximated objective function. 

We study the asymptotic variance using this performance function $g$. By the delta method, and recalling formulas \eqref{SAA_asymptotic_standard} and \eqref{SA_average_asymptotic_standard}, if $g$ is continuously differentiable, we have that for a fixed IS parameter $\bealpha$, 
\begin{equation*}\label{eq: SAA_asymptotic_fixed_IS_multidim}
    \sqrt{n}\left(g(\hat{\betheta}_n)-g(\betheta^*)\right)\Rightarrow \mathcal{N}\left(0, \nabla g (\betheta^*)^\top[\bJ(\btheta^*)]^{-\top}\Var_{X\sim P_{\bealpha} } \left( \bF(\bX,\btheta^*) \ell(\bX,\bealpha)\right)  [\bJ(\btheta^*)]^{-1}\nabla g(\betheta^*)\right),
\end{equation*}
where $\nabla g(\betheta)$ is the gradient of $g(\betheta)$, and recall that $\bJ(\betheta)= D\boldf(\betheta)/D\betheta$ is the Jacobian matrix of $\boldf(\betheta)$.

Since the variance now becomes one-dimensional, we can follow our developed one-dimensional approach and say that a good IS parameter $\balpha$ should minimize the quantity $$\nabla g (\betheta^*)^\top[\bJ(\btheta^*)]^{-\top}\Var_{X\sim P_{\bealpha} } \left( \bF(\bX,\btheta^*) \ell(\bX,\bealpha)\right)  [\bJ(\btheta^*)]^{-1}\nabla g(\betheta^*). $$  
Note that this quantity is exactly the variance when we use  $$\nabla g (\betheta^*)^\top[\bJ(\btheta^*)]^{-\top}\bF(\bX,\btheta^*)\ell(\bX,\bealpha)$$ as an IS estimator for the expectation $\mathds{E}_{\bX\sim P}\left[\nabla g (\betheta^*)^\top[\bJ(\btheta^*)]^{-\top}\bF(\bX,\btheta^*)\right]$. 
Similar to the one-dimensional case, this means if given $\betheta^*$ and $\bJ (\betheta^*)$, then what we want is simply a good sampler for this expectation estimation problem. Suppose we have already an available good IS for this problem, i.e., we know a function $I(\betheta, \bJ)$ that parameterizes a good IS to estimate $\mathds{E}_{\bX\sim P}\left[\nabla g (\betheta)^\top[\bJ]^{-\top}\bF(\bX,\btheta)\right]$ for given  $\betheta$ and $\bJ$. Then in Step 4 of Algorithms \ref{alg:SAA_rootfinding_blackbox} and \ref{alg: SA_rootfinding_blackbox}, we plug in $\hat{\betheta}_n$ and the estimate
\begin{equation*} 
\hat{\bJ}_n=\frac{D}{D\btheta}\left[\frac{1}{n}\sum_{i=1}^{n}\bF(\bX_{i},\hat{\btheta}_{n})\ell(\bX_{i},\balpha_i)\right]
\end{equation*} 
and use $\bealpha_{n+1}= I(\hat{\betheta}_n, \hat{\bJ}_n)$ to parameterize the IS in the next iteration. 
Algorithms \ref{alg:SAA_rootfinding_blackbox_multidim} and \ref{alg: SA_rootfinding_blackbox_multidim} in Appendix \ref{appx:algorithmsmd} provide the respective procedures of our adaptive IS embedded in SAA and SA for this multivariate setting.

We have consistency and asymptotic normality for adaptive IS in the multivariate setting as follows.
\begin{theorem}{\bf (Consistency and asymptotic normality of SAA with embedded adaptive IS (multivariate case))} (Full version in Theorem \ref{thm: SAA_CLT_multidim})
Under a multidimensional version of the assumptions in Theorem \ref{thm: SAA_CLT}, for the root estimate $\hat{\betheta}_n$ generated by the SAA algorithm with embedded adaptive IS, $\hat{\betheta}_n\rightarrow \betheta^*$ a.s. and
\[
\sqrt{n}\left(g(\hat{\btheta}_{n})-g(\btheta^{*})\right)\Rightarrow \mathcal{N}\left(\bzero,\nabla g(\btheta^{*})^{\top}\bV\nabla g(\btheta^{*})\right),
\]
where $\bV=\left(\bJ(\betheta^*)\right)^{-\top}\bSigma\left(\bJ(\betheta^*)\right)^{-1}$ and $\bSigma = \Var_{\bX\sim P_{\bealpha^{*}}}\left(\bF(\bX,\btheta^{*})\ell(\bX,\balpha^{*})\right)$.
\end{theorem}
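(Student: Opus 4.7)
The plan is to establish this result by first proving a multivariate analog of Theorem \ref{thm: SAA_CLT} for the root estimate $\hat{\btheta}_n$ itself, and then applying the delta method to obtain the CLT for $g(\hat{\btheta}_n)$. That is, I aim to show both $\hat{\btheta}_n \to \btheta^*$ a.s.\ and $\sqrt{n}(\hat{\btheta}_n - \btheta^*) \Rightarrow \mathcal{N}(\bzero, \bV)$, from which the stated weak limit for $\sqrt{n}(g(\hat{\btheta}_n) - g(\btheta^*))$ follows immediately from the continuous differentiability of $g$.

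For consistency, I would extend Lemmas \ref{lem: average_consistent} and \ref{Lem: unif_converg_aver_obj} coordinate-wise to obtain uniform a.s.\ convergence of $n^{-1}\sum_{i=1}^n \bF(\bX_i,\btheta)\ell(\bX_i,\balpha_i)$ to $\boldf(\btheta)$ on a compact neighborhood of $\btheta^*$. Combining this with a multivariate version of Assumption \ref{assu: Obj_regular} (well-separated root and invertible Jacobian) gives $\hat{\btheta}_n \to \btheta^*$ a.s.\ by the standard SAA consistency argument. The continuity of $I$ and of the Jacobian together with the a.s.\ convergence of $\hat{\bJ}_n$ (which follows from uniform convergence of the sample-averaged derivative near $\btheta^*$, or equivalently, from differentiating the limit of a smooth bracketing argument) then ensures $\bealpha_n = I(\hat{\btheta}_{n-1}, \hat{\bJ}_{n-1}) \to \bealpha^* := I(\btheta^*, \bJ(\btheta^*))$ a.s.

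For asymptotic normality, letting $\hat{\boldf}_n(\btheta) := n^{-1}\sum_{i=1}^n \bF(\bX_i,\btheta)\ell(\bX_i,\balpha_i)$, the SAA equation gives $\hat{\boldf}_n(\hat{\btheta}_n) = \bc = \boldf(\btheta^*)$. Taylor-expanding around $\btheta^*$ and rearranging,
\[
\sqrt{n}(\hat{\btheta}_n - \btheta^*) = -\bigl[\bJ(\btheta^*) + o_p(1)\bigr]^{-1} \sqrt{n}\bigl(\hat{\boldf}_n(\btheta^*) - \boldf(\btheta^*)\bigr).
\]
The vector on the right-hand side is a normalized sum of vector-valued martingale differences, and I would apply the Cram\'er--Wold device: for each fixed $\bu \in \mathbb{R}^d$, the scalar process $\bu^\top \sqrt{n}(\hat{\boldf}_n(\btheta^*) - \boldf(\btheta^*))$ fits exactly into the martingale CLT framework of Theorem \ref{thm: SAA_CLT}, yielding joint asymptotic normality with covariance $\bSigma = \Var_{\bX\sim P_{\bealpha^*}}(\bF(\bX,\btheta^*)\ell(\bX,\balpha^*))$. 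This gives $\sqrt{n}(\hat{\btheta}_n - \btheta^*) \Rightarrow \mathcal{N}(\bzero,\bV)$ with $\bV = \bJ(\btheta^*)^{-\top}\bSigma \bJ(\btheta^*)^{-1}$, after which the delta method closes out the argument.

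The main obstacle is carefully lifting the asymptotic equicontinuity machinery from the scalar setting (Assumption \ref{assu:(uniform-integrable-entropy)} and Proposition \ref{prop: verify_entropy}) to the vector-valued process $\{S_n(f_{\btheta}) : \btheta \in \Theta_\delta\}$ indexed by a multidimensional parameter. One natural route is to apply the entropy bounds coordinate-wise and use a union bound on brackets, which is manageable because coverings of $\Theta_\delta \subset \mathbb{R}^d$ under an $L_2$-pseudometric are standard; a second and more elegant route is to use Cram\'er--Wold at the process level, reducing equicontinuity of the vector-valued martingale array to the scalar equicontinuity already proved in Theorem \ref{thm: SAA_CLT} applied to each linear functional $\bu^\top \bF(\cdot,\btheta)$. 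A secondary technical point is verifying that $\hat{\bJ}_n \to \bJ(\btheta^*)$ a.s.\ under the multidimensional analog of Assumption \ref{assu: bracket}, so that the adaptive IS parameter sequence $\bealpha_n$ converges and the limiting variance $\bSigma$ is indeed evaluated at $\bealpha^*$.
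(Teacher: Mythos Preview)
Your proposal is correct and mirrors the paper's proof: coordinate-wise uniform convergence for consistency, convergence $\hat{\bJ}_n\to\bJ(\btheta^*)$ (via a separate bracketing assumption on the derivative class) to get $\bealpha_n\to\bealpha^*$, asymptotic equicontinuity of the martingale-difference process (handled coordinate-wise, your route (i)) combined with the Cram\'er--Wold device and the martingale CLT, and finally the delta method for $g$. The only cosmetic difference is that the paper packages your ``Taylor expansion plus equicontinuity'' step by invoking Theorem~3.3.1 of van der Vaart--Wellner, whose condition (3.3.2) is exactly the equicontinuity statement you identify as the main obstacle.
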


\begin{theorem}{\bf (Consistency and asymptotic normality of RM-SA with embedded adaptive IS (multivariate case))}. (Full version in Theorem \ref{prop: SA_CLT_multidim})
Under a multidimensional version of the assumptions in Theorem \ref{prop: SA_CLT},
let $\bP$ be an orthogonal matrix such that 
\[
\gamma\bP^{\top}\bJ(\btheta^{*})\bP=\bLambda
\]
is diagonal. 
Then for the root estimate $\hat{\betheta}_n$ generated by the RM-SA algorithm with embedded adaptive IS, $\hat{\betheta}_n\rightarrow \betheta^*$ a.s. and
\[
\sqrt{n}\left(g(\hat{\btheta}_{n})-g(\btheta^{*})\right)\Rightarrow \mathcal{N}\left(\bzero,\nabla g(\btheta^{*})^{\top}\bP\bM\bP^{\top}\nabla g(\btheta^{*})\right),
\]
where $[\bM]_{ij}=\gamma^2[\bP^{\top}\bSigma\bP]_{ij}([\bLambda]_{ii}+[\bLambda]_{jj}-1)^{-1}$ and $\bSigma = \Var_{\bX\sim P_{\bealpha^{*}}}\left(\bF(\bX,\btheta^{*})\ell(\bX,\balpha^{*})\right)$.

\end{theorem}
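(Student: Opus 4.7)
The plan is to adapt the argument behind Theorem \ref{prop: SA_CLT} to the multivariate setting and then conclude via the delta method. First, almost-sure consistency $\hat{\btheta}_n \to \btheta^*$ follows from the multivariate analog of Theorem \ref{prop:RM_consistency} via the projected ODE approach of \cite{kushner2003stochastic}: the noise $\boldsymbol{V}_n := \bF(\bX_{n+1}, \hat{\btheta}_n)\ell(\bX_{n+1}, \balpha_{n+1}) - \boldf(\hat{\btheta}_n)$ is a martingale difference because $\balpha_{n+1}$ is $\mathcal{F}_n$-measurable and $\bX_{n+1}\sim P_{\balpha_{n+1}}$ conditional on $\mathcal{F}_n$, and the multidimensional version of Assumption \ref{assu:. sup_L2} provides a uniformly bounded conditional second-moment matrix, so the standard ODE-tracking argument closes. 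Continuity of $I$ then yields $\balpha_n = I(\hat{\btheta}_{n-1}) \to I(\btheta^*) = \balpha^*$ a.s.

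For the CLT, I would cast the recursion into the form required by Fabian (1968). Taylor expanding $\boldf$ at $\btheta^*$ and using $\gamma_n = \gamma/n$, the iteration becomes
\begin{equation*}
\hat{\btheta}_{n+1} - \btheta^* = \left(\mathbf{I} - \frac{\gamma}{n}\bJ(\btheta^*)\right)(\hat{\btheta}_n - \btheta^*) - \frac{\gamma}{n}\boldsymbol{V}_n + r_n,
\end{equation*}
with remainder $r_n = o_p(n^{-1}\|\hat{\btheta}_n - \btheta^*\|)$ by the twice differentiability of $\boldf$ and the established consistency. The orthogonal change of variables by $\bP$ diagonalizes the linear part to $\bLambda/n$, decoupling the asymptotics along eigen-directions and reducing the problem to the classical multivariate SA CLT, provided the noise covariance stabilizes to a deterministic limit.

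The main obstacle is verifying Fabian's conditional second-moment condition, namely $\mathds{E}_{n-1}[\boldsymbol{V}_n\boldsymbol{V}_n^{\top}] \stackrel{P}{\to} \bSigma := \Var_{\bX\sim P_{\balpha^*}}(\bF(\bX,\btheta^*)\ell(\bX,\balpha^*))$. Since $\balpha_n \to \balpha^*$ and $\hat{\btheta}_n \to \btheta^*$ almost surely, continuity of the matrix-valued map $(\btheta,\balpha)\mapsto \mathds{E}_{\bX\sim P_{\balpha}}[\bF(\bX,\btheta)\bF(\bX,\btheta)^{\top}\ell(\bX,\balpha)^2]$ (the multivariate extension of the continuity hypothesis assumed in Theorem \ref{prop: SA_CLT}) delivers the convergence entry-wise, while the multivariate analog of Assumption \ref{assu: Polyak_noise} supplies the Lindeberg-type uniform integrability of $\|\boldsymbol{V}_n\|^2$. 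Fabian's theorem then yields
\begin{equation*}
\sqrt{n}(\hat{\btheta}_n - \btheta^*) \Rightarrow \mathcal{N}(\bzero, \bP\bM\bP^{\top}),
\end{equation*}
where $\bM$ is the matrix specified in the statement, obtained as the solution to the Lyapunov equation $\bLambda\bM + \bM\bLambda - \bM = \gamma^2\bP^{\top}\bSigma\bP$ whose decoupling along eigenvalues gives precisely $[\bM]_{ij}=\gamma^2[\bP^{\top}\bSigma\bP]_{ij}([\bLambda]_{ii}+[\bLambda]_{jj}-1)^{-1}$. Finally, since $g$ is continuously differentiable, the delta method produces
\begin{equation*}
\sqrt{n}(g(\hat{\btheta}_n) - g(\btheta^*)) \Rightarrow \mathcal{N}(0, \nabla g(\btheta^*)^{\top}\bP\bM\bP^{\top}\nabla g(\btheta^*)),
\end{equation*}
which is the desired conclusion.
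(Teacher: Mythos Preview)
Your proposal follows the same route as the paper: consistency via the projected-ODE argument of \cite{kushner2003stochastic}, then casting the recursion into Fabian's (1968) framework with $\mathds{E}_{n-1}[\bV_n\bV_n^\top]\to\bSigma$ verified through $\balpha_n\to\balpha^*$ and continuity, and finishing with the delta method. The paper differs only in technical packaging: it uses the mean value theorem to write $\boldf(\hat{\btheta}_n)=\bJ(\bxi_n)(\hat{\btheta}_n-\btheta^*)$ exactly, setting $\bGamma_n=\gamma\bJ(\bxi_n)\to\gamma\bJ(\btheta^*)$ in Fabian's decomposition (which sidesteps the mild circularity in fitting your Taylor remainder $r_n$ into the $n^{-3/2}\bT_n$ slot); it explicitly carries the projection correction as a $\bT_n$ term and argues it is eventually zero since $\btheta^*\in A^\circ$; and in the multivariate algorithm the IS selector is $I(\hat{\btheta}_n,\hat{\bJ}_n)$ rather than $I(\hat{\btheta}_{n-1})$, so the convergence $\balpha_n\to\balpha^*$ additionally relies on $\hat{\bJ}_n\to\bJ(\btheta^*)$, established separately via a uniform-convergence assumption on the Jacobian estimate.
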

\begin{theorem}{\bf (Consistency and asymptotic normality of PR-SA with embedded adaptive IS (multivariate case))}.
(Full version in Theorem \ref{prop: SA_average_CLT_multidim}). Under a multidimensional version of the assumptions in Theorem \ref{prop: SA_average_CLT}, then for the root estimate $\bar{\betheta}_n$ generated by PR-SA with embedded adaptive IS, we have $\bar{\betheta}_n\rightarrow\betheta^*$ a.s. and
\[
\sqrt{n}(g(\bar{\btheta}_{n})-g(\btheta^{*}))\Rightarrow \mathcal{N}\left(\bzero,\nabla g(\btheta^{*})^{\top}\bV\nabla g(\btheta^{*})\right),
\]
where $\bV=\left(\bJ(\betheta^*)\right)^{-\top}\bSigma\left(\bJ(\betheta^*)\right)^{-1}$ and $\bSigma = \Var_{\bX\sim P_{\bealpha^{*}}}\left(\bF(\bX,\btheta^{*})\ell(\bX,\balpha^{*})\right)$.
\end{theorem}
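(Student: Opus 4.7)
The plan is to mirror the univariate argument of Theorem \ref{prop: SA_average_CLT} in three stages: first establish almost sure consistency of $\bar{\btheta}_n$, then prove a $d$-dimensional CLT for $\sqrt{n}(\bar{\btheta}_n-\btheta^*)$ with covariance $\bV=(\bJ(\btheta^*))^{-\top}\bSigma(\bJ(\btheta^*))^{-1}$, and finally push this through $g$ by the delta method to obtain the scalar CLT in the statement.

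For the consistency part, I would first invoke the multidimensional analog of Theorem \ref{prop:RM_consistency} (the projected ODE argument of \cite{kushner2003stochastic}) to get $\hat{\btheta}_n\to\btheta^*$ a.s.\ for the RM iterates. Convergence of the Cesaro average $\bar{\btheta}_n=n^{-1}\sum_{i=1}^n\hat{\btheta}_i$ to $\btheta^*$ then follows immediately. The key structural property needed downstream is that, since $\bealpha_{n+1}=I(\hat{\btheta}_n,\hat{\bJ}_n)$ is $\mathcal F_n$-measurable, the one-step errors
\[
\bV_n=\bF(\bX_{n+1},\hat{\btheta}_n)\ell(\bX_{n+1},\bealpha_{n+1})-\boldf(\hat{\btheta}_n)
\]
form a martingale difference sequence in $\mathbb R^d$. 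Combined with the continuity of $I$ and of the map $\bealpha\mapsto\mathds{E}_{\bX\sim P_{\bealpha}}[\bF(\bX,\btheta)\ell(\bX,\bealpha)\bF(\bX,\btheta)^\top\ell(\bX,\bealpha)]$ assumed in the theorem, the conditional covariance of $\bV_n$ converges a.s.\ to $\bSigma=\Var_{\bX\sim P_{\bealpha^*}}(\bF(\bX,\btheta^*)\ell(\bX,\bealpha^*))$.

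The core step is the multivariate CLT for $\sqrt{n}(\bar{\btheta}_n-\btheta^*)$. I would follow the Polyak--Juditsky framework (\citealt{polyak1992}): linearize the recursion around $\btheta^*$, writing
\[
\hat{\btheta}_{n+1}-\btheta^*=(\bI-\gamma_{n+1}\bJ(\btheta^*))(\hat{\btheta}_n-\btheta^*)-\gamma_{n+1}\bV_n+\gamma_{n+1}\br_n,
\]
where $\br_n$ collects the second-order remainder controlled by $\|\hat{\btheta}_n-\btheta^*\|^2$ (absorbing the projection, which becomes inactive once $\hat{\btheta}_n$ enters the interior of $A$, and using twice-differentiability of $\boldf$). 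Summing and averaging, the dominant term becomes $-\tfrac{1}{n}\bJ(\btheta^*)^{-1}\sum_{i=1}^n\bV_{i-1}$, while the remainder terms are shown to be $o_P(n^{-1/2})$ via Abel summation exactly as in the proof of Theorem \ref{prop: SA_average_CLT}. Applying the multivariate martingale CLT to $n^{-1/2}\sum_{i=1}^n\bV_{i-1}$, with conditional covariance converging to $\bSigma$, then yields
\[
\sqrt{n}(\bar{\btheta}_n-\btheta^*)\Rightarrow\mathcal{N}\bigl(\bzero,\,\bJ(\btheta^*)^{-1}\bSigma\bJ(\btheta^*)^{-\top}\bigr)=\mathcal{N}(\bzero,\bV).
\]

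The main technical obstacle will be verifying the Lindeberg condition for the vector-valued martingale difference array $\{\bV_n/\sqrt n\}$ in the adaptive setting. Unlike the univariate case, $\bV_n$ must be controlled uniformly in direction while $\bealpha_n$ is itself a data-driven sequence converging to $\bealpha^*$. I would handle this by exploiting the multidimensional counterpart of Proposition \ref{prop: PN_sufficient}: a dominating envelope $V(\bX)$ with $\mathds{E}_{\bX\sim P}[V(\bX)]<\infty$ bounding $\|\bF(\bX,\btheta)\|^2\ell(\bX,\bealpha)$ on a neighborhood of $(\btheta^*,\bealpha^*)$, applied coordinate-wise together with the Cramer--Wold device, reduces the vector Lindeberg condition to its scalar analog already used in Theorem \ref{prop: SA_average_CLT}. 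The remaining remainder-term bookkeeping, and the verification that stochastic equicontinuity of $\br_n$ only requires the regularity assumptions listed, mirrors the single-dimensional proof; the only genuinely new ingredient is the matrix-valued conditional covariance convergence, which is handled by the continuity assumption on $\mathds{E}_{\bX\sim P_{\bealpha}}[\cdot]$ together with $\bealpha_n\to\bealpha^*$.

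Finally, since $g$ is continuously differentiable in a neighborhood of $\btheta^*$, the delta method applied to the vector CLT above delivers
\[
\sqrt{n}(g(\bar{\btheta}_n)-g(\btheta^*))\Rightarrow\mathcal{N}\bigl(\bzero,\,\nabla g(\btheta^*)^\top\bV\nabla g(\btheta^*)\bigr),
\]
which is the claim. The full version in Theorem \ref{prop: SA_average_CLT_multidim} should therefore state the multivariate analogs of Assumptions \ref{assu:. sup_L2}--\ref{assu: Polyak_noise}, plus continuity of $\bealpha\mapsto\Var_{\bX\sim P_{\bealpha}}(\bF(\bX,\btheta)\ell(\bX,\bealpha))$ and of the black-box map $I(\cdot,\cdot)$.
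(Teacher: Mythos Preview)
Your proposal is correct and follows essentially the same route as the paper. Both rely on the Polyak--Juditsky framework of \cite{polyak1992}: the paper's proof simply verifies Assumptions 3.1--3.4 of \cite{polyak1992} (using $h$ from Assumption \ref{assu: continuity_multidim} as the Lyapunov function, the second-order differentiability and positive definiteness of $\bJ(\btheta^*)$ from Assumption \ref{assu:Polyak_obj1_multidim}, and the noise conditions from Assumptions \ref{assu: sup_L2_multidim} and \ref{assu: Polyak_noise_multidim}) and then invokes Theorem 2 of that reference as a black box, adding a brief remark that the projection term vanishes eventually since $\btheta^*$ lies in the interior of $A$. You instead spell out the internal structure of that argument (linearization, Abel summation, martingale CLT, delta method), which is more informative but not a different strategy.
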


These results are natural generalizations of those in Section \ref{sec: adaptiveIS}. One new challenge in proving these results is to show the consistency of the Jacobian estimate $\hat{\bJ}_n\rightarrow\bJ(\betheta^*)$, and we also need an extra assumption (Assumption \ref{assu: GC_Jacobian}). 
The other required assumptions can be regarded as the component-wise generalizations of the assumptions in Section \ref{sec: adaptiveIS}. To avoid repetition, we defer the explicit formulations of these assumptions, theorems, and their proofs to Appendices \ref{appx:multi_assu_result} - \ref{appx:proof_prop_SAA_QE_opt}.

\section{Examples}\label{sec:example}
In this section, we consider two sets of examples. The first set comprises toy examples on extreme quantile estimation for a standard normal distribution, an exponential distribution, and a Pareto-tailed distribution. 
The second set considers estimation of VaR and CVaR for a financial portfolio.

\subsection{Toy Examples}
We consider quantile estimation for three toy examples: $Z\sim\mathcal{N}(0,1)$, 
$Z\sim\exp(\lambda)$, 
$P\{Z>x\} = x^{-\lambda}$. 
We describe theoretical analysis and numerical results for the first example here, with analogous analysis and results provided for the other two distributions in Appendix \ref{appx:add_example}.

\subsubsection{Theoretical Analysis.}
Suppose we want to use Monte Carlo samples of $Z\sim\mathcal{N}(0,1)$ to estimate its $p$-quantile, where $p$ is large. 
We use Algorithms \ref{alg:SAA_quantile} and \ref{alg:SA_quantile}. 
To specify the algorithms, we first give the choice of the IS class $P_{\alpha}$ (here we are using $\alpha$ instead of $\balpha$, as it will be seen that $\alpha$ is one-dimensional for this example), the black-box IS function $I(q)$, and the truncation scheme $A_n$. 
As is well-known in the IS literature (e.g., \citealt{bucklew2013introduction}), a natural choice of $P_{\alpha}$ is the set of IS samplers derived from exponential shifting, where in the normal case this would be a normal distribution with mean $\alpha$. Moreover, for a given $q$, a good IS estimator for $P(Z\geq q)$ is to set its mean at $q$, i.e., we choose $I(q)=q$. To complete the algorithm specification, we select the truncation sets $\{A_n\}$. One simple way to do this is to estimate a lower and upper bound for $q^*$ using some concentration inequality, and from a knowledge $q^*\in[q_{\min},q_{\max}]$ we can let $A_n = [q_{\min},q_{\max}]$. If we do not have this bound, then another way is to let $A_n$ grow to $\mathbb{R}$. 
The next proposition gives some conditions for which the required assumptions of our theoretical results would be satisfied. 
 
\begin{proposition}\label{prop: verification_normal}
(i) When $A_n=[-\sqrt{\log(an^{1-\epsilon)}},\sqrt{\log(an^{1-\epsilon)}}]$ or $A_n = [q_{\min},q_{\max}]$, the assumptions for Theorem \ref{prop: SAA_QE_opt} hold.
(ii) When $A=[q_{\min},q_{\max}]$, the assumptions for Theorems \ref{prop:Asynorm_quantile_RM} and \ref{prop: QE_Polyak_General} hold. 
\end{proposition}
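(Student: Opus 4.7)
\textbf{Proof proposal for Proposition \ref{prop: verification_normal}.} The plan is to exploit the explicit form of the IS likelihood ratio in the Gaussian setting and reduce every assumption in Theorems \ref{prop: SAA_QE_opt}, \ref{prop:Asynorm_quantile_RM}, \ref{prop: QE_Polyak_General} to a short moment generating function computation. Since $P_\alpha=\mathcal{N}(\alpha,1)$ and $I(q)=q$, we have
\[
\ell(Z,\alpha)=\exp\!\left(-\alpha Z+\tfrac{\alpha^{2}}{2}\right), \qquad \ell(Z,\alpha)^{2}=\exp(-2\alpha Z+\alpha^{2}),
\]
and $\alpha^{*}=I(q^{*})=q^{*}$. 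The key computation, via the Gaussian MGF under $Z\sim\mathcal{N}(\alpha,1)$, is
\[
\mathds{E}_{Z\sim P_\alpha}[\ell(Z,\alpha)^{2}]=\exp(\alpha^{2})\,\mathds{E}[\exp(-2\alpha Z)]=\exp(\alpha^{2}).
\]

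First I would verify Assumption \ref{assu: SAA_QE_consistency}. Since $\mathbf 1\{Z\leq q^{*}+\delta\}\leq 1$, the bound above gives $\mathds{E}[(\mathbf 1\{Z_n\leq q^{*}+\delta\}\ell(Z_n,\alpha_n))^{2}]\leq \exp(\alpha_n^{2})$. Under $A_n=[-\sqrt{\log(an^{1-\epsilon})},\sqrt{\log(an^{1-\epsilon})}]$ we have $\alpha_n^{2}\leq\log(an^{1-\epsilon})$, hence the second moment is $O(n^{1-\epsilon})$; under the compact truncation $A_n=[q_{\min},q_{\max}]$ the second moment is simply bounded, hence trivially $O(n^{1-\epsilon})$. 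Next, for Assumption \ref{assu: QE_Feller}, I would construct the envelope
\[
V(Z)=\exp\!\left((|q^{*}|+\delta_{2})|Z|+\tfrac{(|q^{*}|+\delta_{2})^{2}}{2}\right),
\]
which by the elementary bound $-\alpha Z\leq (|q^{*}|+\delta_{2})|Z|$ for $|\alpha-q^{*}|\leq\delta_{2}$ dominates $\mathbf 1\{Z\leq q^{*}+\delta_{1}\}\ell(Z,\alpha)$ uniformly in $\alpha$; its $P$-expectation is finite because the standard normal has finite $\exp(c|Z|)$ moment for every $c$. Assumption \ref{assu: continuous_variance} then follows from continuity of the integrand $\mathbf 1\{Z\leq q\}\ell(Z,\alpha)$ in $(q,\alpha)$ at almost every $Z$ (the jump at $Z=q$ has $P_{\alpha^{*}}$-measure zero) combined with dominated convergence using the square of $V$, which is still $P_{\alpha^*}$-integrable by the same MGF argument. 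Assumption \ref{assu: SAA_QE_obj} is immediate because $F_h=\Phi$ is smooth and $f_h(q^{*})=\phi(q^{*})>0$.

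For part (ii), in addition to Assumptions \ref{assu: QE_Feller} and \ref{assu: continuous_variance} (already verified above), I need Assumption \ref{assu: QE_Fabian_noise}: a uniform-in-$n$ bound on $v(\hat q_n)=\mathds{E}_{Z\sim P_{I(\hat q_n)}}[\mathbf 1\{Z\leq \hat q_n\}\ell(Z,I(\hat q_n))^{2}]$. Since $\hat q_n\in A=[q_{\min},q_{\max}]$ after projection and $I(q)=q$, the earlier MGF identity yields $v(\hat q_n)\leq\exp(\hat q_n^{2})\leq\exp(\max(q_{\min}^{2},q_{\max}^{2}))$, giving the required deterministic constant $C$. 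I would also note the hypotheses that $F_h$ is twice differentiable, $f_h(q^{*})>0$, and the step‑size condition $\gamma>1/(2f_h(q^{*}))$, all of which are standard for the normal.

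The only nontrivial point is the matching between the growth of $\alpha_n$ and the $O(n^{1-\epsilon})$ rate required in Assumption \ref{assu: SAA_QE_consistency}, which is exactly why the logarithmic truncation schedule $A_n=[-\sqrt{\log(an^{1-\epsilon})},\sqrt{\log(an^{1-\epsilon})}]$ is calibrated to offset the $\exp(\alpha_n^{2})$ blow-up of the squared likelihood ratio; once this is arranged, everything else reduces to routine Gaussian moment bounds and dominated convergence. I do not anticipate any serious obstacle beyond bookkeeping this calibration carefully.
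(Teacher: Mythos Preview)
Your proposal is correct and mirrors the paper's verification: both hinge on the bound $\mathds{E}_{P_\alpha}[\ell(Z,\alpha)^2]\leq e^{\alpha^2}$ to handle Assumption \ref{assu: SAA_QE_consistency} (and the calibration of the logarithmic truncation schedule against it), and on an exponential envelope of the form $e^{c|Z|}$ for Assumption \ref{assu: QE_Feller}. The only notable deviations are that the paper works throughout with the upper-tail indicator $\mathbf 1\{Z\geq q\}$ (appropriate for $p$ close to $1$, cf.\ the remark at the end of Section \ref{sec:quantile}) and, for Assumption \ref{assu: QE_Fabian_noise}, observes the sharper pointwise bound $\mathbf 1\{z\geq q\}\ell(z,q)=\mathbf 1\{z\geq q\}e^{-qz+q^2/2}\leq 1$ for $q\geq 0$, which gives $|V_n|\leq 1$ directly rather than your cruder but equally valid moment bound $v(\hat q_n)\leq e^{\hat q_n^{2}}$.
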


From Theorems \ref{prop: SAA_QE_opt}, \ref{prop:Asynorm_quantile_RM} and \ref{prop: QE_Polyak_General}, both SAA and PR-SA exhibit the asymptotic variance
\begin{equation}\label{eq: normal_asymptotic_formula}
\frac{\Var_{Z\sim P_{q^{*}}}(\mathbf{1}\{Z\geq q^{*}\}\ell(Z,q^{*}))}{(\phi(q^{*}))^{2}},
\end{equation}
where $q^*$ is the true quantile, $\ell(Z,\alpha)={\exp(-{Z^2}/{2}})/
{{\exp(-{(Z-\alpha)^2}/{2}})} = \exp(-\alpha Z+\alpha^2/2)$ is the likelihood ratio, and $\phi(x)$ is the standard normal density.
RM-SA with stepsize $\gamma_{n}={\gamma}/{n}$ exhibits the asymptotic variance
\[
\frac{\gamma^{2}}{2\gamma \phi(q^{*})-1}\Var_{Z\sim P_{q^{*}}}(\mathbf{1}\{Z\geq q^{*}\}\ell(Z,q^{*})).
\]

We now analyze the variance reduction. The numerator of \eqref{eq: normal_asymptotic_formula} is bounded by 
\begin{align*}
\mathds{E}_{Z\sim \mathcal{N}(q^{*},1)}\left[\left(\mathbf{1}\{Z\geq q^{*}\}\ell(Z,q^{*})\right)^{2}\right] & =\int_{q^{*}}^{\infty}\frac{1}{\sqrt{2\pi}}e^{-\frac{(x-q^*)^{2}}{2}}e^{-2q^{*}x+q^{*2}}dx\\
 & =e^{q^{*2}}\int_{q^{*}}^{\infty}\frac{1}{\sqrt{2\pi}}e^{-\frac{(x+q^{*})^{2}}{2}}dx.
\end{align*}
When $q^{*}\geq0$, the RHS is bounded by $\exp({-q^{*2}})/2$, using a tail bound for the standard normal distribution $P(Z \geq t)\leq \exp({-t^2/2})/2$ for $t\geq 0 $.
So the asymptotic variance of $\sqrt{n}(\hat{q}_{n}-q^{*})$ is bounded
by ${\exp({-q^{*2}}})/{(2{\phi(q^{*})^{2}})}=\pi$ in SAA. 

On the other hand, if we use SAA without IS, then the variance of $\sqrt{n}(\hat{q}_n-q^*)$ is given by ${p(1-p)}/{\phi(q^*)^2}$. When $p$ is close to 1, we have 
\[
1-p = \int_{q^*}^{\infty}\frac{1}{\sqrt{2\pi}}e^{-\frac{x^{2}}{2}}dx,
\] 
so the variance 
\[
\frac{p(1-p)}{\phi(q^*)^2}\approx \frac{\int_{q^*}^{\infty}\frac{1}{\sqrt{2\pi}}e^{-\frac {x^{2}}{2}}dx}{\phi(q^*)^2}\geq \frac{\pi e^{-(q^*+1)^2/2}}{2 e^{-q^{*2}}} = \frac{\pi}{2}e^{q^{*2}/2-2q^*-1}
\]
would grow to infinity exponentially as $q^*$ goes to infinity, where the inequality used another tail bound for the standard normal distribution $P(Z \geq t)\geq \exp{\{-{(t+1)^2}/{2}\}}/2$. 
Comparing variances in the above cases, we see that our adaptive IS significantly reduces the asymptotic variance when $p$ is close to 1. 

\subsubsection{Numerical Experiments.}\label{sec:Norm_num_example}

We report numerical experiments to estimate the quantile of the standard normal distribution using our adaptive IS, using the following settings:
\begin{itemize}
\item For SAA, we set $\alpha_n =\Pi_{A_n} [\hat q_{n-1}]$, where $A_n = [-\sqrt{\log(an^{1-\epsilon})}, \sqrt{\log(an^{1-\epsilon})}]$ and $a=5,\epsilon=0.1$.
\item For RM-SA, we set the stepsize $\gamma_n = \gamma/n$ with $\gamma = 1/\phi(q^*)$ (the optimal choice of the stepsize parameter), where $\phi(x)$ is the standard normal density and $q^*$ is the true quantile, and the projection set $A=[0,5]$. 
\item For PR-SA, we set the stepsize $\gamma_n = \gamma/n^{0.9}$ with $\gamma = 1/\phi(q^*)$.
We average the estimates $\hat q_n$ beginning from the ($N_0+1$)th iteration, i.e., for $n=1,2,\ldots,N_0$, we set $\bar q_n = \hat q_n$; for $n=N_0+1,N_0+2,\ldots,N$, we set $\bar q_n = \sum_{n=N_0+1}^N\hat q_n/(N-N_0)$. We set $N_0=100$ and use the projection set $A=[0,5]$.
\end{itemize}

We also run the algorithms under the same setting but without using IS. We set $p=0.99,0.999,0.9999$, and vary the total number of simulation samples from $500$ to $500\times 2^8$ to estimate the quantiles. We repeat the procedure $200$ times to estimate the variance and mean squared error (MSE) of the estimated quantiles. Figures \ref{fig:normal_var}-\ref{fig:normal_mse} show the results. Tables \ref{tab:normal01}-\ref{tab:normal0001} further show their numerical details, and also the ratios of the variance of the without-IS estimator over the with-IS counterpart. 
\begin{figure}[H]
\vspace*{-10pt}
\begin{minipage}[t]{0.33\linewidth}
\centering
\includegraphics[width=5.5cm]{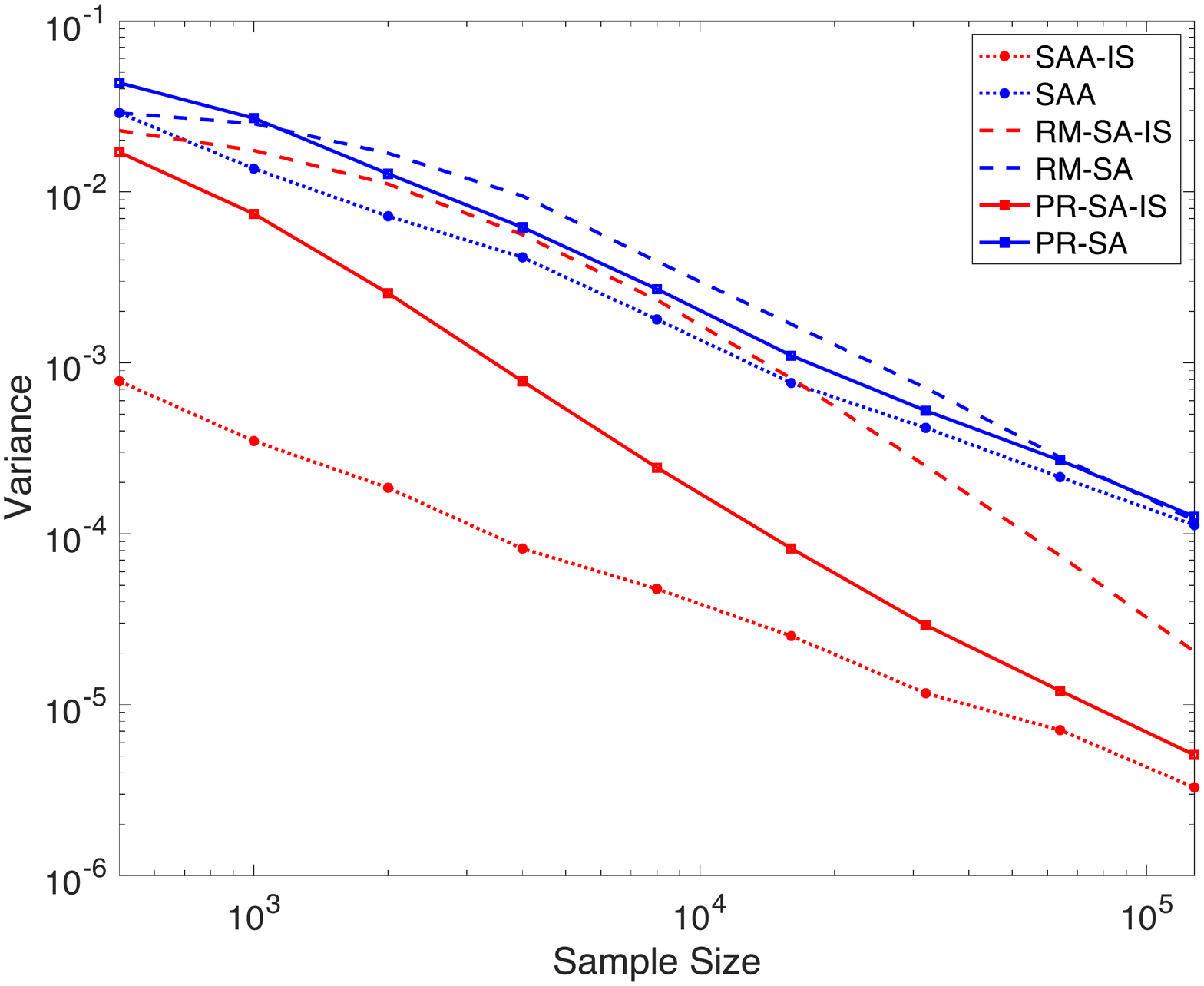}
\end{minipage}
\begin{minipage}[t]{0.33\linewidth}
\centering
\includegraphics[width=5.5cm]{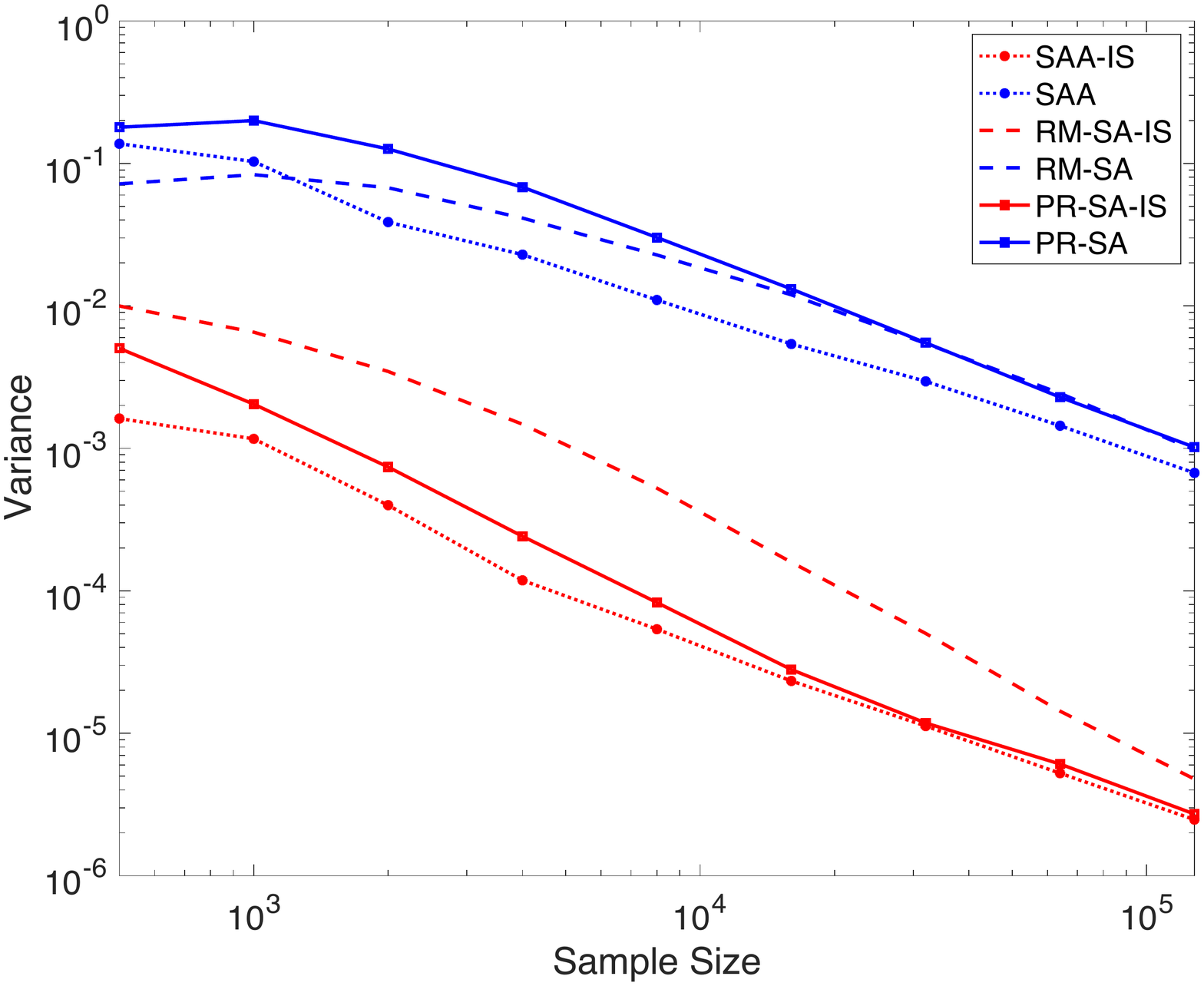}
\end{minipage}
\begin{minipage}[t]{0.32\linewidth}
\centering
\includegraphics[width=5.5cm]{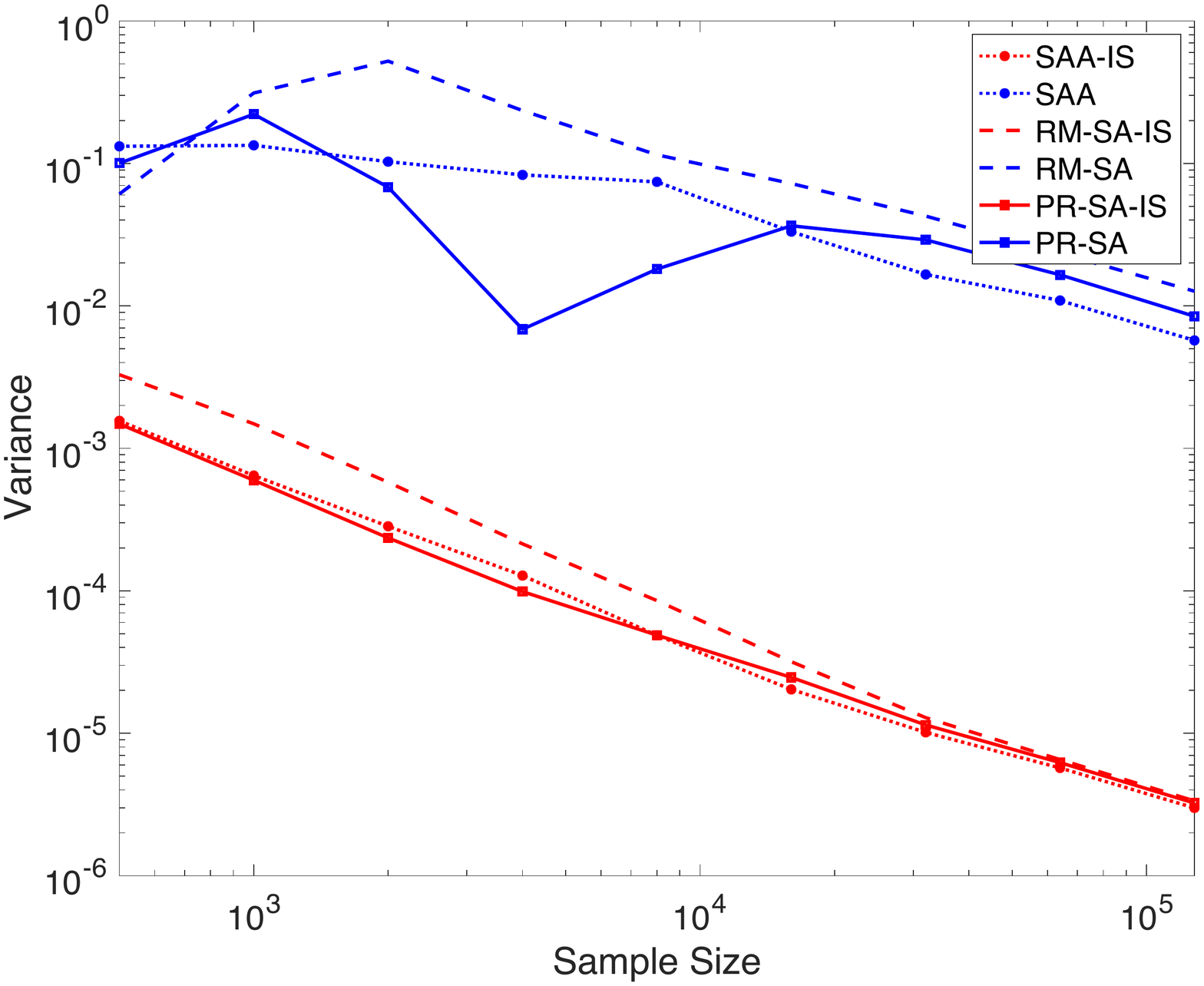}
\end{minipage}
\vspace*{-5pt}
\caption{Variance of SAA, RM-SA, PR-SA, with and without adaptive IS ($p=0.99$ for the left panel; $p=0.999$ for the middle panel; $p=0.9999$ for the right panel)} 
\label{fig:normal_var}
\end{figure}

\begin{figure}[H]
\vspace*{-10pt}
\begin{minipage}[t]{0.33\linewidth}
\centering
\includegraphics[width=5.5cm]{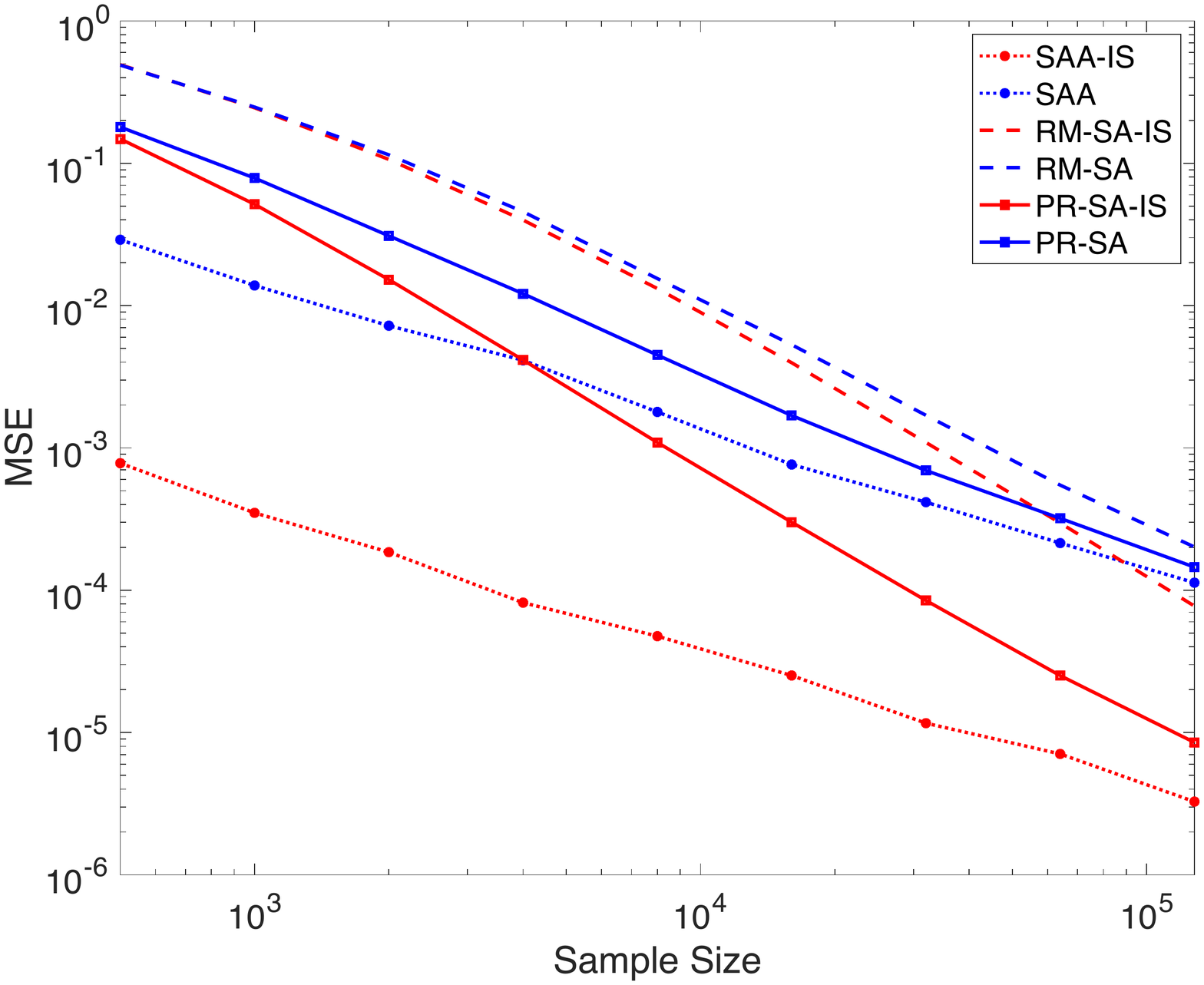}
\end{minipage}
\begin{minipage}[t]{0.33\linewidth}
\centering
\includegraphics[width=5.5cm]{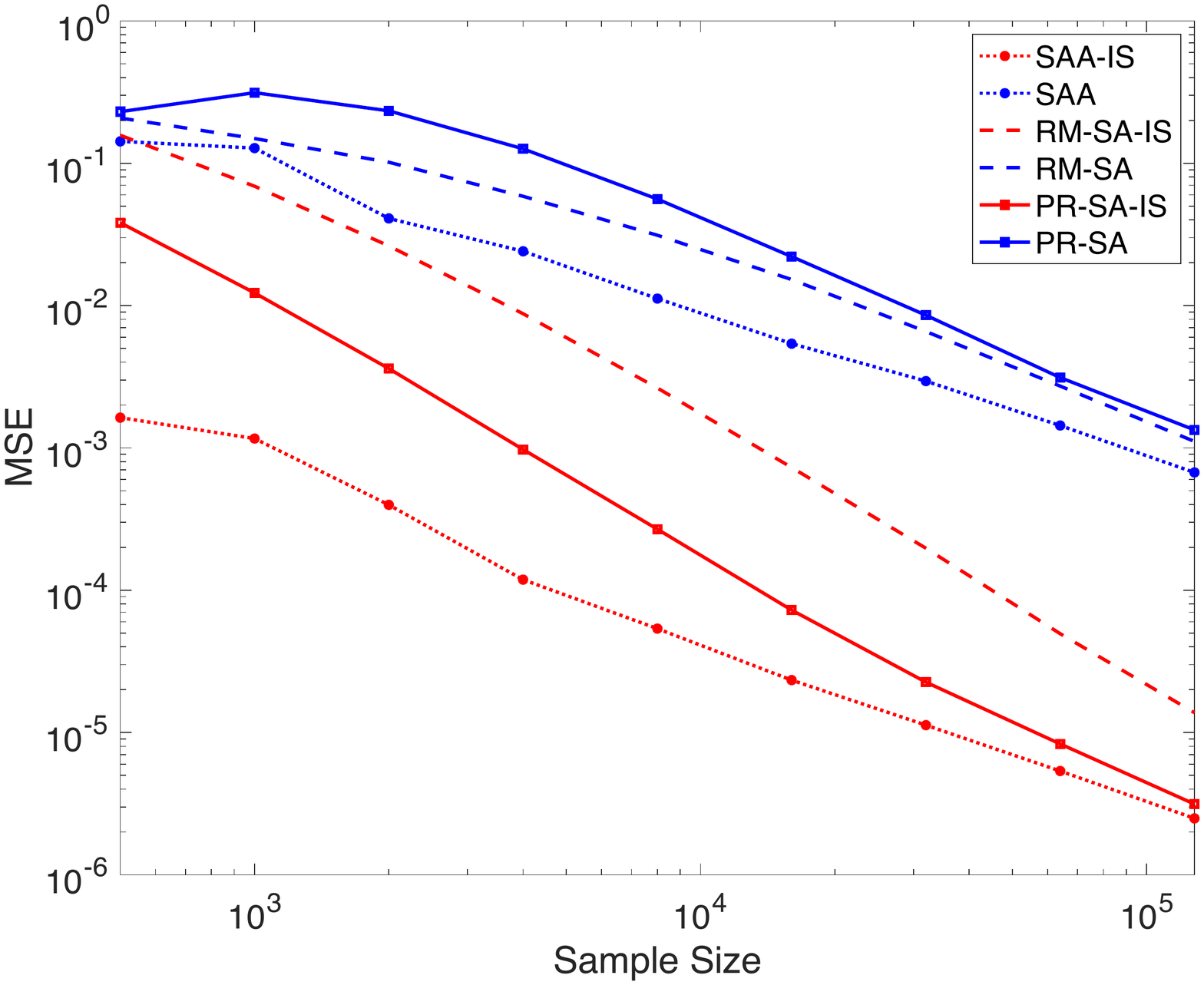}
\end{minipage}
\begin{minipage}[t]{0.32\linewidth}
\centering
\includegraphics[width=5.5cm]{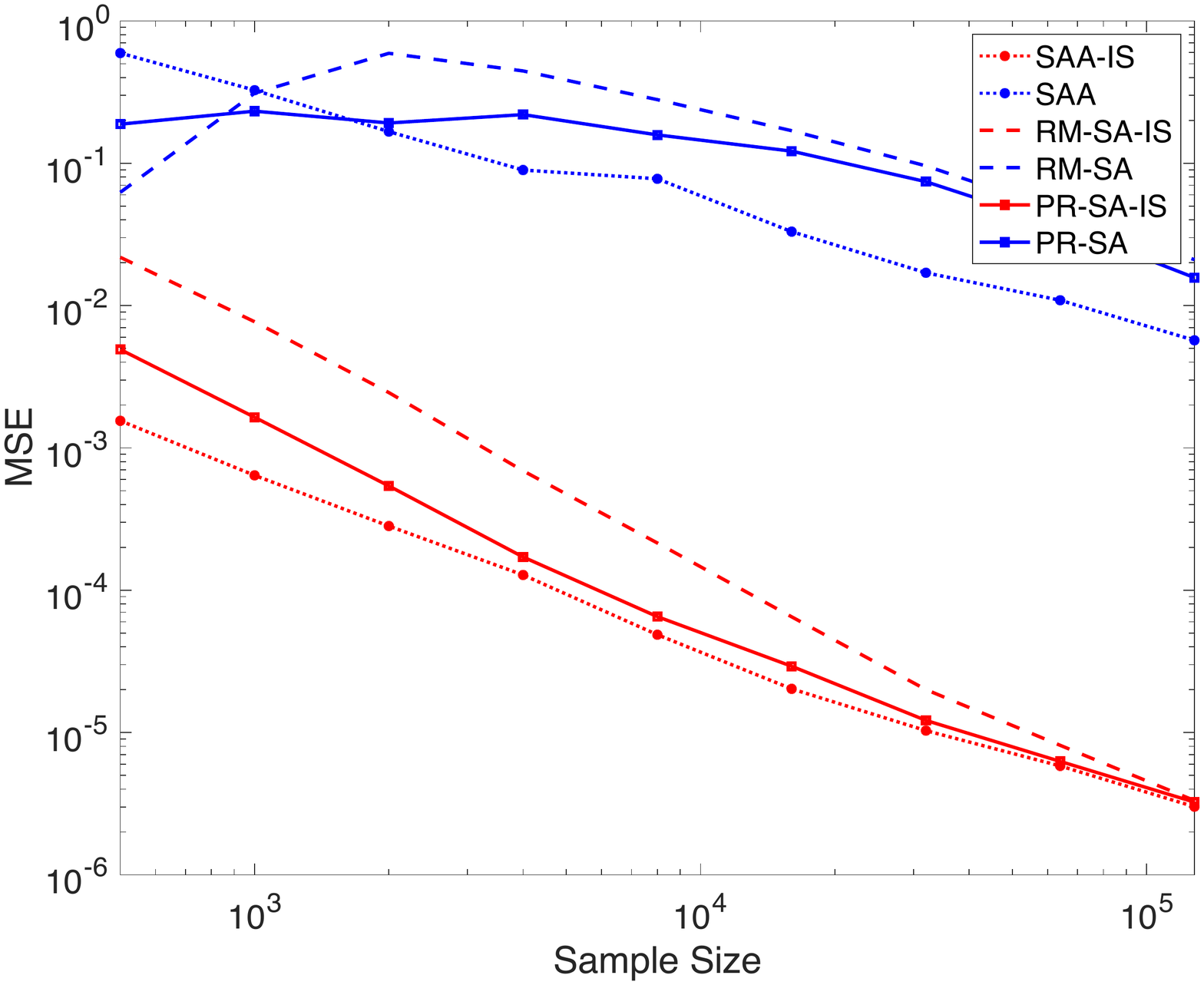}
\end{minipage}
\vspace*{-5pt}
\caption{MSE of SAA, RM-SA, PR-SA, with and without adaptive IS ($p=0.99$ for the left panel; $p=0.999$ for the middle panel; $p=0.9999$ for the right panel)}
\label{fig:normal_mse}
\end{figure}

\begin{table}[H]
\centering
\vspace*{-5pt}
\caption{Variance of SAA, RM-SA and PR-SA with and without adaptive IS ($p=0.99$)}
\label{tab:normal01}
\small
\begin{tabular}{c|l l l|l l l|l l l l|}
\toprule
 Sample Size     & SAA-IS   & SAA      & ratio  & RM-SA-IS    & RM-SA       & ratio  & PR-SA-IS   & PR-SA      & ratio \\
 \midrule
 500    & 7.80E-04 & 2.89E-02 & 37 & 2.28E-02 & 2.90E-02 & 1.3 & 1.70E-02 & 4.34E-02 & 2.5  \\
1000   & 3.49E-04 & 1.37E-02 & 39 & 1.75E-02 & 2.51E-02 & 1.4 & 7.44E-03 & 2.70E-02 & 3.6  \\
2000   & 1.86E-04 & 7.20E-03 & 39 & 1.11E-02 & 1.68E-02 & 1.5 & 2.55E-03 & 1.27E-02 & 5.0  \\
4000   & 8.19E-05 & 4.14E-03 & 51 & 5.62E-03 & 9.46E-03 & 1.7 & 7.80E-04 & 6.20E-03 & 7.9  \\
8000   & 4.76E-05 & 1.80E-03 & 38 & 2.34E-03 & 3.92E-03 & 1.7 & 2.44E-04 & 2.70E-03 & 11 \\
16000  & 2.52E-05 & 7.63E-04 & 30 & 8.11E-04 & 1.68E-03 & 2.1 & 8.20E-05 & 1.10E-03 & 13 \\
32000  & 1.17E-05 & 4.17E-04 & 36 & 2.49E-04 & 7.13E-04 & 2.9 & 2.92E-05 & 5.25E-04 & 18 \\
64000  & 7.10E-06 & 2.14E-04 & 30 & 7.45E-05 & 2.80E-04 & 3.8 & 1.21E-05 & 2.69E-04 & 22 \\
128000 & 3.28E-06 & 1.12E-04 & 34 & 2.05E-05 & 1.20E-04 & 5.9 & 5.08E-06 & 1.26E-04 & 25\\
\bottomrule
\end{tabular}
\vspace*{-5pt}
\end{table}

\begin{table}[H]
\centering
\vspace*{-5pt}
\caption{Variance of SAA, RM-SA and PR-SA with and without adaptive IS ($p=0.999$)}
\label{tab:normal001}
\small
\begin{tabular}{c|l l l|l l l|l l l l|}
\toprule
 Sample Size     & SAA-IS   & SAA      & ratio  & RM-SA-IS    & RM-SA       & ratio  & PR-SA-IS   & PR-SA      & ratio  \\  
\midrule
500    & 1.62E-03 & 1.37E-01 & 85  & 9.97E-03 & 7.18E-02 & 7.2   & 5.03E-03 & 1.79E-01 & 36  \\
1000   & 1.17E-03 & 1.03E-01 & 88  & 6.53E-03 & 8.33E-02 & 13  & 2.04E-03 & 2.00E-01 & 98  \\
2000   & 3.99E-04 & 3.87E-02 & 97  & 3.46E-03 & 6.74E-02 & 20  & 7.39E-04 & 1.26E-01 & 171 \\
4000   & 1.18E-04 & 2.29E-02 & 193 & 1.47E-03 & 4.13E-02 & 28  & 2.41E-04 & 6.81E-02 & 282 \\
8000   & 5.37E-05 & 1.10E-02 & 205 & 5.25E-04 & 2.28E-02 & 43  & 8.26E-05 & 3.02E-02 & 365 \\
16000  & 2.33E-05 & 5.40E-03 & 232 & 1.59E-04 & 1.19E-02 & 75  & 2.80E-05 & 1.31E-02 & 469 \\
32000  & 1.12E-05 & 2.95E-03 & 263 & 5.02E-05 & 5.44E-03 & 108 & 1.18E-05 & 5.50E-03 & 466 \\
64000  & 5.24E-06 & 1.44E-03 & 275 & 1.43E-05 & 2.43E-03 & 170 & 6.08E-06 & 2.28E-03 & 376 \\
128000 & 2.47E-06 & 6.71E-04 & 271 & 4.76E-06 & 9.75E-04 & 205 & 2.71E-06 & 1.02E-03 & 376  \\ 
 \bottomrule
\end{tabular}
\end{table}

\begin{table}[H]
\centering
\vspace*{-5pt}
\caption{Variance of SAA, RM-SA and PR-SA with and without adaptive IS ($p=0.9999$)}
\label{tab:normal0001}
\small
\begin{tabular}{c|l l l|l l l|l l l l|}
\toprule
 Sample Size     & SAA-IS   & SAA      & ratio  & RM-SA-IS    & RM-SA       & ratio  & PR-SA-IS   & PR-SA      & ratio  \\  
 \midrule
 500    & 1.56E-03 & 1.32E-01 & 85   & 3.29E-03 & 6.10E-02 & 19   & 1.48E-03 & 1.00E-01 & 68   \\
1000   & 6.44E-04 & 1.34E-01 & 208  & 1.49E-03 & 3.12E-01 & 210  & 5.97E-04 & 2.21E-01 & 371  \\
2000   & 2.83E-04 & 1.03E-01 & 363  & 5.78E-04 & 5.22E-01 & 903  & 2.35E-04 & 6.81E-02 & 290  \\
4000   & 1.28E-04 & 8.30E-02 & 649  & 2.13E-04 & 2.35E-01 & 1100 & 9.86E-05 & 6.85E-03 & 69   \\
8000   & 4.87E-05 & 7.42E-02 & 1524 & 8.53E-05 & 1.15E-01 & 1346 & 4.88E-05 & 1.81E-02 & 371  \\
16000  & 2.03E-05 & 3.32E-02 & 1635 & 3.18E-05 & 7.22E-02 & 2274 & 2.46E-05 & 3.65E-02 & 1480 \\
32000  & 1.01E-05 & 1.66E-02 & 1638 & 1.29E-05 & 4.27E-02 & 3316 & 1.14E-05 & 2.90E-02 & 2541 \\
64000  & 5.71E-06 & 1.09E-02 & 1909 & 6.53E-06 & 2.34E-02 & 3580 & 6.22E-06 & 1.65E-02 & 2654 \\
128000 & 2.99E-06 & 5.72E-03 & 1913 & 3.35E-06 & 1.27E-02 & 3786 & 3.23E-06 & 8.44E-03 & 2612 \\ 
 \bottomrule
\end{tabular}
\end{table}

We observe the following: 
(i) All three procedures (SAA, RM-SA and PR-SA) are improved by the adaptive IS, as clearly indicated in 
Figures \ref{fig:normal_var} and \ref{fig:normal_mse} by comparing the corresponding red and blue curves. Moreover, the variance reduction ratios (ratio between the variances of without-IS and with-IS estimators) increase quickly as $p$ approaches 1. For example, in Tables \ref{tab:normal01}-\ref{tab:normal0001} where $p$ takes $0.99,0.999$, and $0.9999$, respectively, fixing the sample size as $128000$, the variance reduction ratio for SAA is $34,~271$, and $1913$, respectively. 
(ii) SAA generally has the smallest variance of the three procedures, 
and the variance of PR-SA is generally smaller than that of RM-SA. This can be seen by comparing the red curves in Figure \ref{fig:normal_var}). More precisely, in Table \ref{tab:normal01} when $p=0.99$, fixing the sample size as 128000 for instance, the variance of SAA-IS is about {16\% (3.28E-06/2.05E-05)} of the variance of RM-SA-IS and the variance of PR-SA-IS is about {24\% (5.08E-06/2.05E-05)} of the variance of RM-SA-IS. When $p=0.9999$, the variances of the three procedures are closer. In Table \ref{tab:normal0001}, again fixing the sample size at 128000 for instance, the variance of SAA-IS is about {89\% (2.99E-06/3.35E-06)} of the variance of RM-SA-IS and the variance of PR-SA-IS is about {96\% (3.23E-06/3.35E-06)} of the variance of RM-SA-IS. 
(iii) When $p$ is very close to 1 (e.g., $p=0.999$ and $p=0.9999$), the variance reduction ratios  increase with the sample size up to some point and then plateau. 
This observation is clearer for SAA. For example, in Tables {\ref{tab:normal001} and \ref{tab:normal0001}}, each ratio column tends to increase at first as the sample size increases. When the sample size is more than $64000$, the ratios appear roughly unchanged. This could be attributed to that with a large sample size, the IS sampler stabilizes around the optimum and additional samples provide negligible variance reduction improvements.  

Furthermore, we compare the adaptive IS with the fixed optimal IS where we assume $q^*$ is known and use the IS parameter
\begin{equation*}
\alpha^* = \mathop{\arg\min}_\alpha \Var_{Z\sim P_\alpha}(\mathbf{1}\{Z>q^*\}\ell(Z,\alpha)).
\end{equation*}
This optimal IS is the best we can do among the considered IS class. Using this fixed optimal IS parameter, we run SAA, RM-SA and PR-SA using the same truncation set and stepsize as described in the beginning of this subsection. Figure \ref{fig:comp_norm} shows that all of SAA, RM-SA and PR-SA that embed our adaptive IS (represented by the red curves) give variances and MSEs very close to the counterparts with the fixed optimal IS (represented by the light blue curves). This indicates our adaptive IS achieves almost the same variance reduction as the fixed optimal IS, which coincides with the implications of Theorems \ref{thm: SAA_CLT}, \ref{prop: SA_CLT} and \ref{prop: SA_average_CLT} (see the discussions right after these theorems).

\begin{figure}[H]
\begin{minipage}[t]{0.48\linewidth}
\centering
\includegraphics[width=8cm]{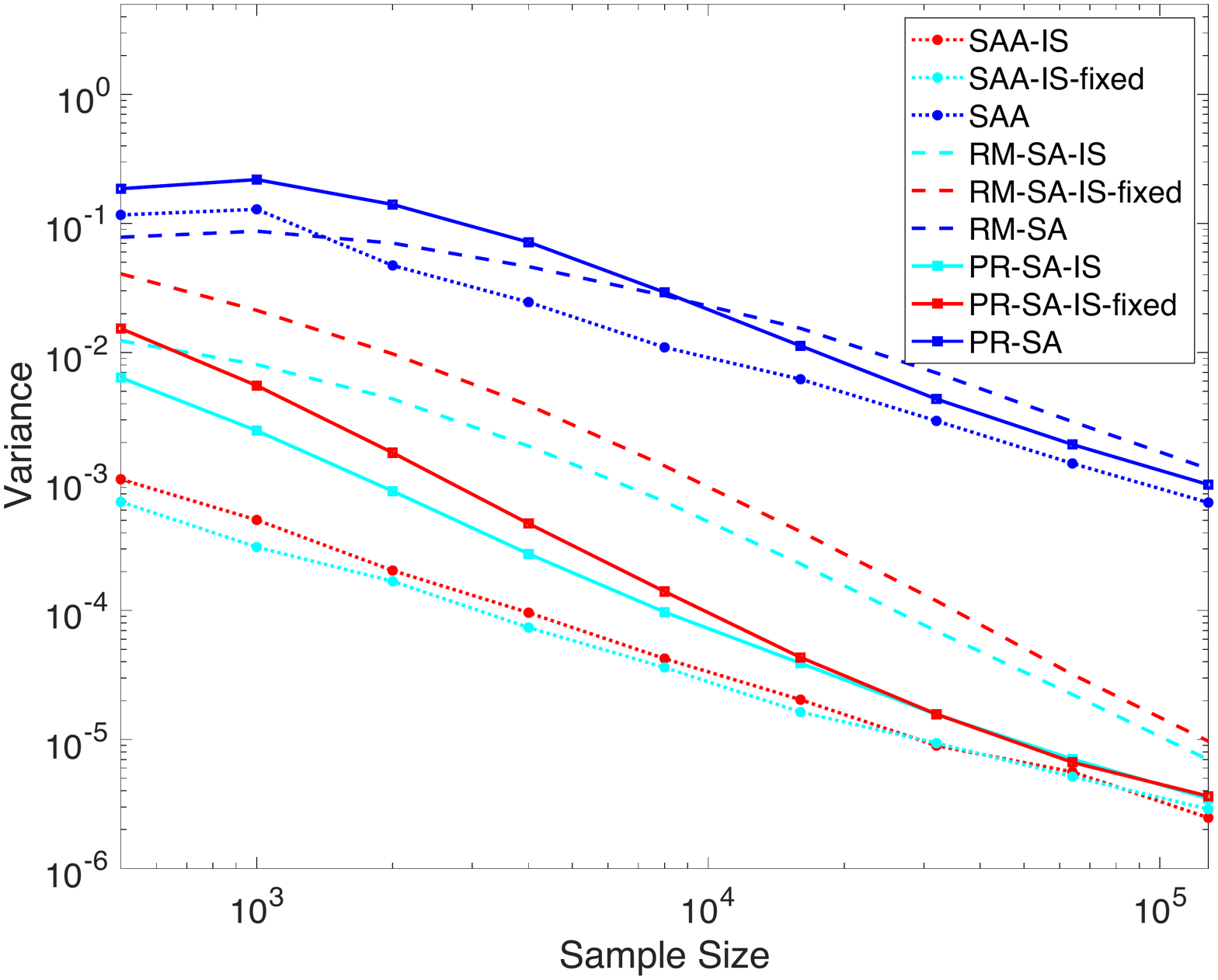}
\end{minipage}
\begin{minipage}[t]{0.48\linewidth}
\centering
\includegraphics[width=8cm]{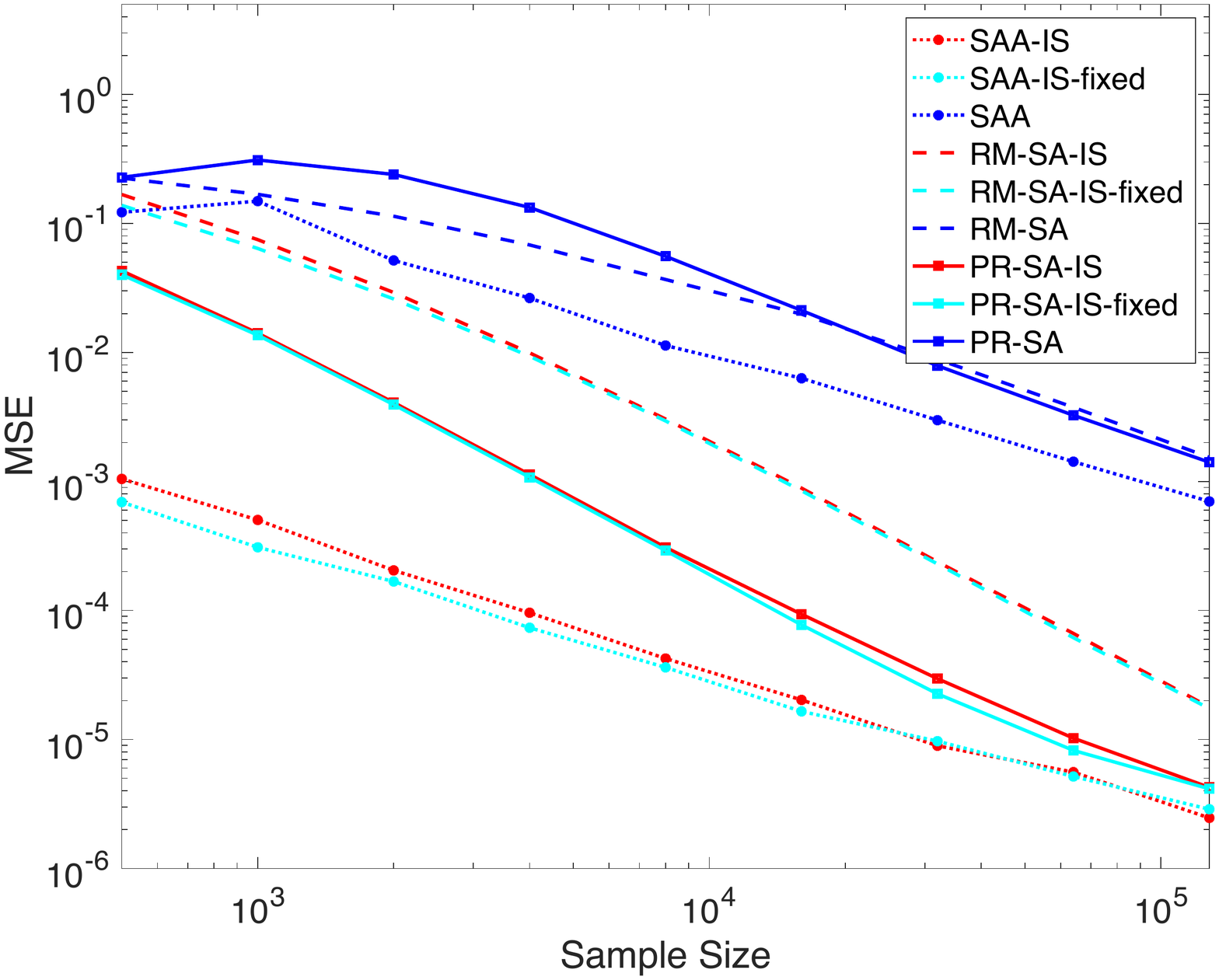}
\end{minipage}
\caption{Comparison of adaptive IS and fixed optimal IS for different approaches in Example 1 with $p=0.999$ (Variance in the left panel; MSE in the right panel)}
\label{fig:comp_norm}
\end{figure}

\subsection{VaR and CVaR for a Financial Portfolio}\label{sec:numerical_financial}
In this example, we estimate the VaR and CVaR of a financial portfolio. 
Let $\Phi(\bS(t),t)$ be the value of a portfolio at time $t$, where $\bS(t)=[S_1(t),\ldots,S_m(t)]^\top$ are $m$ risk factors, 
and $L(t) \equiv L(\bS(t),t)= \Phi(\bS(0),0) - \Phi(\bS(t),t)$ be the loss of this portfolio at time $t$. The $p$-VaR is the $p$-quantile of $L$:
\begin{equation*}
v_p\equiv\VaR_p(L(t)) = \inf\{x: P\{L(t)\leq x\}\geq p\}.
\end{equation*}

The $p$-CVaR is the expected return on the portfolio in the worst $p\%$ of cases, i.e., 
\begin{eqnarray*}
c_p&\equiv&\CVaR_p(L(t)) = \mathds{E}[L(t) | L(t)\geq \VaR_{p}(L(t))] \\
&=& \frac{1}{1-p}\int_{\VaR_p(L(t))}^{+\infty} yf_{L(t)}(y) dy =\frac{1}{1-p}\int_{p}^1 \VaR_{\beta}(L(t)) d\beta.
\end{eqnarray*} 

If $L$ has a positive density in the neighborhood of $v_p$, then $p$-CVaR  and $p$-VaR are related by the following equation \citep{RoUry2000}:
\begin{equation}\label{eq:CVaR_quantile}
c_p= v_p+\frac{1}{1-p}\mathds{E}\left[ (L(t) - v_p)^+\right].
\end{equation}

\subsubsection{Choice of IS Sampler and Other Algorithmic Configurations.}
We first specify an initial IS to estimate $P(L(t)>x)$. We use the sampler suggested by \cite{GlHeSh2000}. Let $\Delta \bS(t)= \bS(t) - \bS(0)$. To simplify notation, henceforth we drop the dependency on time $t$ in $L(t)$ and $\Delta \bS(t)$. When $t$ is small, the loss of the portfolio can be approximated by
\begin{equation*}\label{eq:Lapprox}
L\approx a_0 + \ba'\Delta\bS + \Delta\bS^\top \bA \Delta\bS \equiv a_0 + Q,
\end{equation*}
where $a_0 = -\Theta t$, $\ba = -\bdelta$, and $\bA = \bGamma$. Here $\Theta$, $\bdelta$ and $\bGamma$ are the Greeks for this portfolio, i.e., $\Theta = \partial \Phi/\partial t$  is the partial derivative of $\Phi$ on $t$, $\bdelta=[\delta_1,\ldots,\delta_m]^\top$ is the gradient of $\Phi$ on $\bS$ with $\delta_i = \partial \Phi / \partial S_i$, and $\bGamma$ is the Hessian matrix of $\Phi$ on $\bS$ with the $(i,j)$th element $[\bGamma]_{i,j} = \partial^2 \Phi/\partial S_i \partial S_j$. 

Suppose that $\Delta \bS$ has a multivariate normal distribution with mean $\bzero$ and covariance matrix $\bSigma$.
Then following \cite{GlHeSh2000}, we can write $Q$  in a diagonalized quadratic form: 
\begin{eqnarray*}
Q &=&\ba^{\top} \Delta \bS+\Delta \bS^{\top} \bA \Delta \bS\stackrel{d}{=}\ba^{\top} \bC \bZ+\bZ^{\top} \bC^{\top} \bA \bC \bZ\\ &=&\bb^{\top} \bZ+\bZ^{\top} \bLambda \bZ \equiv \sum_{i=1}^{m}\left(b_{i} Z_{i}+\lambda_{i} Z_{i}^{2}\right),
\end{eqnarray*}
where $\bZ\sim\mathcal{N}(0,\bI_m)$, $\bC$ is a matrix such that $\bC\bC^{\top}=\bSigma $, $\bC ^{\top}\bA\bC=\Lambda$ is a diagonal matrix, $\bb^{\top}=\ba^{\top} \bC$, and $\lambda_i$ is the $i$th diagonal element of $\bLambda$.

The IS sampler is obtained through exponential twisting with likelihood ratio
\begin{equation*}
\ell\equiv\ell(Q,\alpha) =\exp \left(-\alpha Q+\psi(\alpha)\right) =\exp \left(-\alpha\left(\bb^{\top} \bZ+\bZ^{\top} \bLambda \bZ^{\top}\right)+\psi(\alpha)\right),
\end{equation*}
where $\psi(\alpha)$ is the logarithm of the moment generating function of $Q$. 
Note that such a change of measure is equivalent to changing the distribution of $\bZ$ from $\mathcal{N}(0,\bI_m)$ to $N(\bmu(\alpha),\bB(\alpha))$, where $\bB(\alpha)=(I-2 \alpha \bLambda)^{-1}$ and $\bmu(\alpha)=\alpha \bB(\alpha) \bb$. The second moment of the estimator using this sampler is given by 
\begin{equation}\label{eq:m2upbound}
m_{2}(x, \alpha)=\mathds{E}_{\alpha}\left[\mathbf{1}\{L>x\} \ell^{2}\right] \leq \exp \left(2 \psi(\alpha)-2 \alpha\left(x-a_{0}\right)\right),
\end{equation}
where $\mathds{E}_{\alpha}$ denotes expectation under IS with twisting parameter $\alpha$.

A good choice of $\alpha$ for the estimation of $P(L>x)$ should make $m_2(x,\alpha)$ small. However, finding the value of $\alpha$ to minimize $m_{2}(x, \alpha)$ is computationally expensive. Instead, \cite{GlHeSh2000} suggested minimizing the upper bound in \eqref{eq:m2upbound}, i.e., we choose
$$\alpha^*_x = I(x) =\arg\min_{\alpha}\exp \left(2 \psi(\alpha)-2 \alpha\left(x-a_{0}\right)\right).$$ 
From the convexity of $\psi$, we know that $I(x)$ satisfies the first-order condition
\begin{equation*}
\psi^{\prime}\left(I(x)\right)=x-a_{0}.
\end{equation*}
Solving this first-order condition yields $I(x)$, which will serve as our black-box IS function. 

For the truncation set needed in our algorithms, we can estimate an upper bound for $v_p$ using
\begin{equation*}
\Pr\{L>x\} =\mathds{E}_{\alpha}[\mathbf{1}\{L>x\} \ell] \leq \exp \left\{\psi(\alpha)-\alpha\left(x-a_{0}\right)\right\}, 
\end{equation*}
and a lower bound can be derived similarly. With these bounds that give $v_p\in[q_{\min},q_{\max}]$, we can now use our Algorithms \ref{alg:SAA_quantile} and \ref{alg:SA_quantile} with $A_n=A=[q_{\min},q_{\max}]$. The following proposition verifies the conditions needed to provide the asymptotic guarantees of our algorithms.
\begin{proposition}\label{prop: verification_pVaR}
(i) If $A_n = [I(q_{\min}),I(q_{\max})]$, then the assumptions for 
 Theorem \ref{prop: SAA_QE_opt} hold.
(ii) When $A=[q_{\min},q_{\max}]$, the assumptions for Theorems \ref{prop:Asynorm_quantile_RM} and \ref{prop: QE_Polyak_General} hold. 
\end{proposition}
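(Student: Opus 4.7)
The plan is to verify each assumption of Theorems~\ref{prop: SAA_QE_opt}, \ref{prop:Asynorm_quantile_RM} and \ref{prop: QE_Polyak_General} in turn, exploiting two common features of this VaR setting: (a) the projection sets keep the IS parameter $\alpha$ in a compact interval, and (b) the GHS bound \eqref{eq:m2upbound} already delivers the uniform second-moment control that drives every tail-based assumption. Throughout, I would write $h(\bX)=L$ and work with the high-quantile version of the assumptions (indicator $\mathbf{1}\{L\ge q\}$), as flagged at the end of Section~\ref{sec:quantile}.

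For the moment-type conditions, Assumption~\ref{assu: SAA_QE_consistency} would follow because $\alpha_n\in A_n\subseteq[I(q_{\min}),I(q_{\max})]$, so \eqref{eq:m2upbound} gives
\[
\mathds{E}\!\left[\mathbf{1}\{L\ge q^{*}-\delta\}\,\ell(\bX,\alpha_n)^{2}\right]\le \sup_{\alpha\in[I(q_{\min}),I(q_{\max})]}\exp\!\big(2\psi(\alpha)-2\alpha(q^{*}-\delta-a_{0})\big),
\]
an $O(1)$ quantity, hence trivially $O(n^{1-\epsilon})$. The same calculation with $q^{*}$ replaced by $\hat q_n\in[q_{\min},q_{\max}]$ and $\alpha=I(\hat q_n)\in I([q_{\min},q_{\max}])$ verifies Assumption~\ref{assu: QE_Fabian_noise}. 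For Assumption~\ref{assu: QE_Feller} I would construct the envelope
\[
V(\bX)\ :=\ \exp\!\Big(\!\!\!\sup_{|\alpha-\alpha^{*}|\le\delta_{2}}\!\!\psi(\alpha)\Big)\Big(\exp(-\alpha_{-}Q(\bX))+\exp(-\alpha_{+}Q(\bX))\Big),
\]
where $\alpha_{\pm}=\alpha^{*}\pm\delta_{2}$, using $\sup_{|\alpha-\alpha^{*}|\le\delta_{2}}\exp(-\alpha Q)\le\exp(-\alpha_{-}Q)+\exp(-\alpha_{+}Q)$. Since $Q=\bb^{\top}\bZ+\bZ^{\top}\bLambda\bZ$ with $\bZ\sim\mathcal{N}(\bzero,\bI_{m})$ has an MGF that is finite on the open interval containing $0$ where $1-2t\lambda_i>0$ for every eigenvalue $\lambda_i$ of $\bLambda$, $\mathds{E}_{P}[V(\bX)]<\infty$ provided $\delta_{2}$ is chosen small enough that $-\alpha_{\pm}$ lies inside this domain. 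This same envelope dominates $\mathbf{1}\{L\ge q\}\ell(\bX,\alpha)$ for $(q,\alpha)$ in a neighborhood of $(q^{*},\alpha^{*})$, so Assumption~\ref{assu: continuous_variance} follows by dominated convergence; the indicator's only discontinuity is on the $P$-null set $\{L=q\}$, using that $L$ admits a density.

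The remaining pieces are routine. Continuity (in fact smoothness) of $I$ follows from the defining relation $\psi'(I(x))=x-a_{0}$ and strict convexity of $\psi$, which makes $\psi'$ a $C^{\infty}$ diffeomorphism on the interior of its domain; this also guarantees $I([q_{\min},q_{\max}])$ is compact, as invoked above. Assumption~\ref{assu: SAA_QE_obj} and the (twice) differentiability plus positive-density conditions in Theorems~\ref{prop:Asynorm_quantile_RM}--\ref{prop: QE_Polyak_General} are standing regularity requirements on the loss $L$, inherited from the setup, and the step-size bound $\gamma>1/(2f_L(v_p))$ is a user-chosen hyperparameter. The main obstacle is pinning down the integrability radius for $V(\bX)$: the quadratic $\bZ^{\top}\bLambda\bZ$ grows superlinearly in $\|\bZ\|$, so the argument must carefully constrain $\delta_{2}$ to stay strictly inside the convergence radius $\min_i\{1/(2\lambda_i)\,:\,\lambda_i>0\}$ (and, if any $\lambda_i<0$, the lower counterpart) of the Gaussian quadratic MGF. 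Given the compactness of $A_n$ and $A$, this is a matter of choosing $q_{\min},q_{\max}$ (equivalently $\delta_2$) conservatively rather than a structural obstacle.
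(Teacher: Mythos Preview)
Your argument is correct and follows essentially the same route as the paper: use the GHS bound \eqref{eq:m2upbound} together with compactness of the $\alpha$-range for the second-moment conditions, and build the envelope for Assumption~\ref{assu: QE_Feller} by bounding $\ell(Q,\alpha)=\exp(-\alpha Q+\psi(\alpha))$ at the two endpoints $\alpha^{*}\pm\delta_{2}$. The only stylistic difference is in (SA1): the paper observes directly that since $I(\hat q_n)$ minimizes the upper bound in \eqref{eq:m2upbound} and that bound equals $1$ at $\alpha=0$, one gets $m_{2}(\hat q_n,\alpha_{n+1})\le 1$ without invoking compactness; your compactness route yields the same conclusion with a slightly larger (but still uniform) constant.
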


Finally, from the estimator for $v_p$, we can estimate CVaR using Equation \eqref{eq:CVaR_quantile}.

\subsubsection{Numerical Results on a Portfolio with Ten Risk Factors.}\label{sc:example4}
We use the example in \cite{GlHeSh2000}. 
Suppose that there are $250$ trading days in one year, and we investigate the loss of the portfolio over $10$ days, i.e., the risk time period $t=0.04$ (years). 
We take the risk-free interest rate as $r=5\%$ and
assume that there are ten uncorrelated underlying assets (risk factors), with all assets having the same initial value of $100$ and the same volatility of $0.3$. 
The portfolio shorts $10$ at-the-money call options and $5$ at-the-money put options on each asset, all options having a half-year maturity.
The Greeks $\Theta$, $\bdelta$, and $\bGamma$ can be calculated through the Black-Scholes formula, and then the parameters $a_0$, $\bb$, $\bLambda$, and $\bSigma$ can be calculated accordingly. 



We first estimate the $p$-VaR $v_p$. 
We set $p=0.999,0.9999$, and vary the total number of simulation samples from $500$ to $500\times 2^8$. {
We use the following configurations to run the algorithms:
\begin{itemize}
\item For SAA, we set $A_n=[-5,5]$ for both $p=0.999$ and $p=0.9999$.
\item For RM-SA, when $p=0.999$, we set the stepsize $\gamma_n = \gamma/n$ with $\gamma = 30000$ and the projection set $A=[-280,-100]$. When $p=0.9999$, we set $\gamma=400000$ and the projection set $A=[-310,-110]$.
\item For PR-SA, we set the stepsize $\gamma_n = \gamma/n^{0.9}$, and the choices of $\gamma$ and the projection sets are the same as in RM-SA. Similar to the toy example, we average the estimates beginning from the $(N_0+1)$th iteration with $N_0=100$.
\end{itemize}
}

The samples of the loss of the portfolio $\{L_i = \Phi(\bS_i(0),0)-\Phi(\bS_i(t),t)\}_{i=1}^n$ are calculated through the Black-Scholes formula based on the approximated $\bS_i(t)$, where $\bS_i(t)$ represents the $i$th sample path of these ten underlying assets.
We repeat the procedure $200$ times to estimate the variance of the estimated $p$-VaRs obtained from the three procedures. 
The results are shown in Figure \ref{fig:port1_var} and Tables \ref{tab:port1_999} and \ref{tab:port1_9999}. 

\begin{figure}[H]
\begin{minipage}[t]{0.49\linewidth}
\centering
\includegraphics[width=8.0cm]{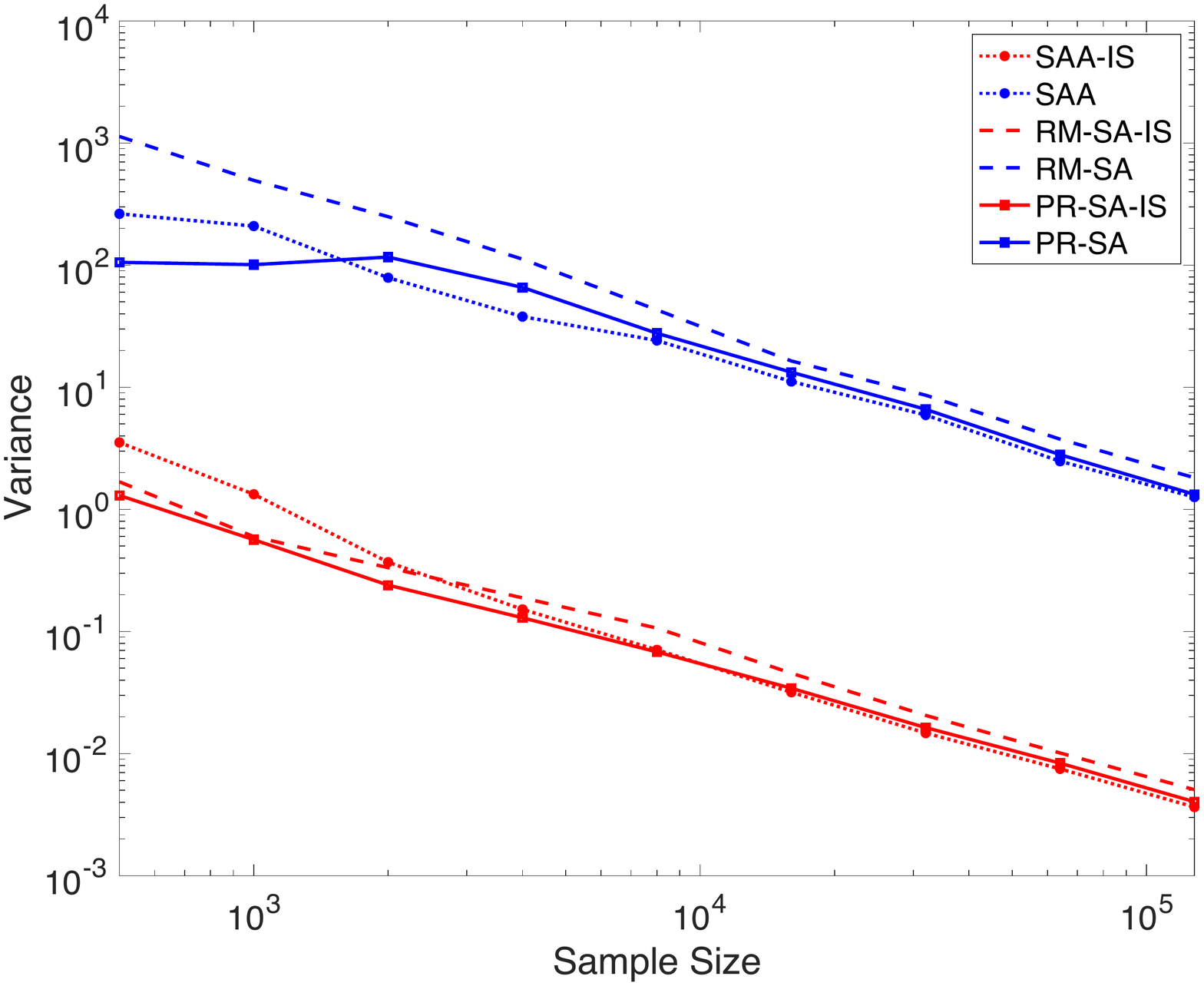}
\end{minipage}
\begin{minipage}[t]{0.49\linewidth}
\centering
\includegraphics[width=8.0cm]{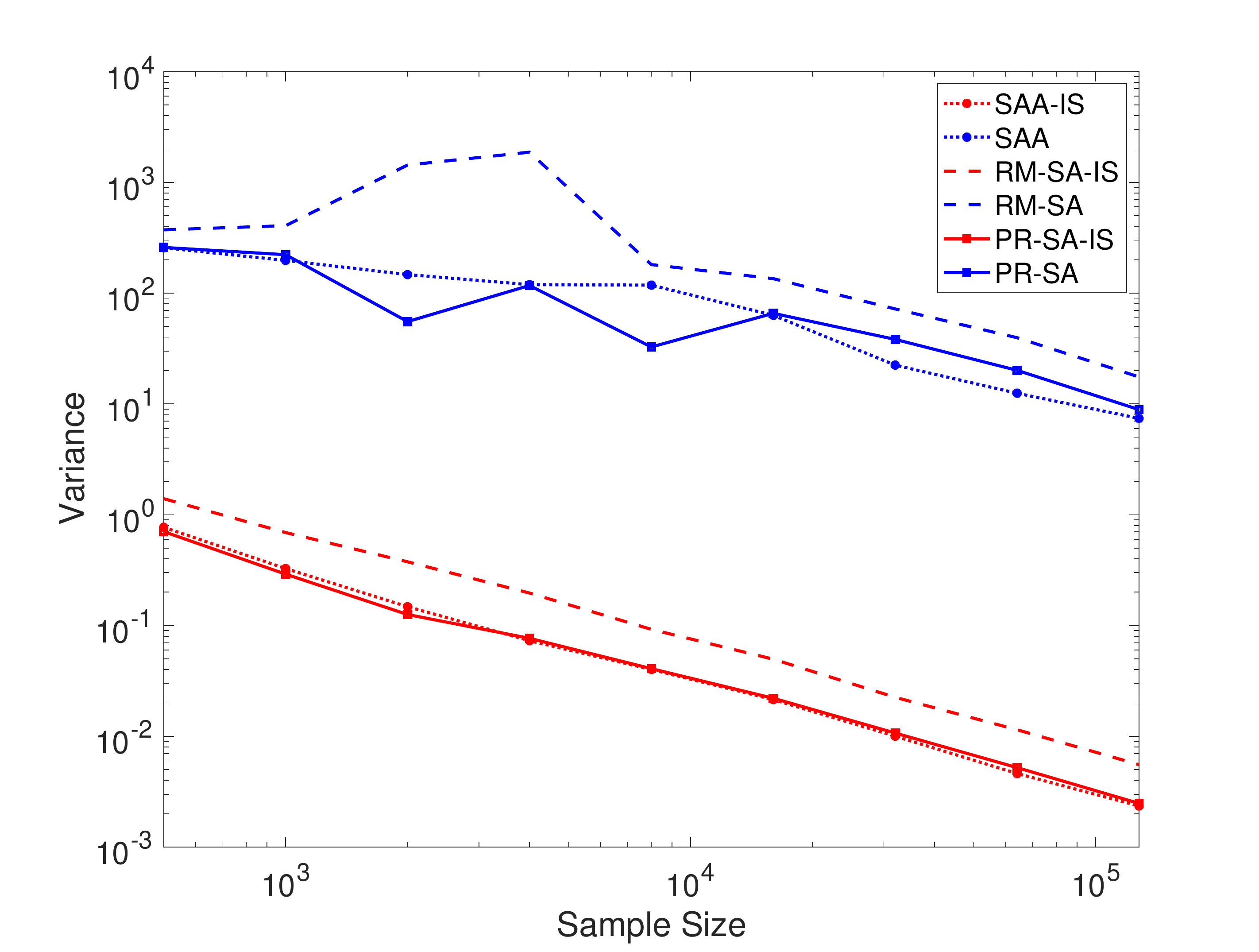}
\end{minipage}
\caption{Variance of SAA, RM-SA, PR-SA, with and without adaptive IS ($p=0.999$ for the left panel; $p=0.9999$ for the right panel)}
\label{fig:port1_var}
\end{figure}

\begin{table}[H]
\centering
\caption{Variance of SAA, RM-SA and PR-SA with and without adaptive IS ($p=0.999$)}
\label{tab:port1_999}
\small
\begin{tabular}{c|l l l|l l l|l l l l|}
\toprule
 Sample Size     & SAA-IS   & SAA      & ratio  & RM-SA-IS    & RM-SA       & ratio  & PR-SA-IS   & PR-SA      & ratio  \\  
 \midrule
 500    & 3.53E+00 & 2.63E+02 & 75  & 1.68E+00 & 1.13E+03 & 673 & 1.30E+00 & 1.05E+02 & 81  \\
1000   & 1.33E+00 & 2.09E+02 & 157 & 5.98E-01 & 4.94E+02 & 826 & 5.64E-01 & 1.01E+02 & 179 \\
2000   & 3.69E-01 & 7.89E+01 & 214 & 3.33E-01 & 2.49E+02 & 749 & 2.40E-01 & 1.16E+02 & 486 \\
4000   & 1.52E-01 & 3.79E+01 & 250 & 1.88E-01 & 1.12E+02 & 594 & 1.29E-01 & 6.57E+01 & 508 \\
8000   & 7.07E-02 & 2.41E+01 & 341 & 1.07E-01 & 4.30E+01 & 404 & 6.80E-02 & 2.77E+01 & 407 \\
16000  & 3.18E-02 & 1.11E+01 & 350 & 4.55E-02 & 1.64E+01 & 360 & 3.43E-02 & 1.33E+01 & 387 \\
32000  & 1.47E-02 & 5.92E+00 & 402 & 2.05E-02 & 8.60E+00 & 419 & 1.63E-02 & 6.60E+00 & 404 \\
64000  & 7.49E-03 & 2.48E+00 & 331 & 1.01E-02 & 3.76E+00 & 371 & 8.35E-03 & 2.80E+00 & 336 \\
128000 & 3.65E-03 & 1.26E+00 & 345 & 5.06E-03 & 1.81E+00 & 358 & 4.03E-03 & 1.33E+00 & 329\\ 
 \bottomrule
\end{tabular}
\end{table}
\vspace{-0.5cm}
\begin{table}[H]
\centering
\caption{Variance of SAA, RM-SA and PR-SA with and without adaptive IS ($p=0.9999$)}
\label{tab:port1_9999}
\small
\begin{tabular}{c|l l l|l l l|l l l l|}
\toprule
 Sample Size     & SAA-IS   & SAA      & ratio  & RM-SA-IS    & RM-SA       & ratio  & PR-SA-IS   & PR-SA      & ratio  \\  
 \midrule
 500    & 7.70E-01 & 2.57E+02 & 333  & 1.40E+00 & 3.72E+02 & 267  & 7.04E-01 & 2.58E+02 & 367  \\
1000   & 3.27E-01 & 1.97E+02 & 603  & 6.91E-01 & 4.06E+02 & 588  & 2.91E-01 & 2.22E+02 & 764  \\
2000   & 1.48E-01 & 1.47E+02 & 993  & 3.76E-01 & 1.43E+03 & 3807 & 1.26E-01 & 5.53E+01 & 440  \\
4000   & 7.30E-02 & 1.19E+02 & 1633 & 1.96E-01 & 1.87E+03 & 9517 & 7.67E-02 & 1.17E+02 & 1526 \\
8000   & 4.01E-02 & 1.18E+02 & 2940 & 9.24E-02 & 1.81E+02 & 1958 & 4.08E-02 & 3.26E+01 & 800  \\
16000  & 2.15E-02 & 6.30E+01 & 2937 & 4.97E-02 & 1.35E+02 & 2717 & 2.21E-02 & 6.57E+01 & 2978 \\
32000  & 1.00E-02 & 2.24E+01 & 2233 & 2.25E-02 & 7.20E+01 & 3199 & 1.07E-02 & 3.82E+01 & 3568 \\
64000  & 4.62E-03 & 1.25E+01 & 2702 & 1.15E-02 & 3.96E+01 & 3452 & 5.32E-03 & 2.01E+01 & 3776 \\
128000 & 2.35E-03 & 7.40E+00 & 3147 & 5.56E-03 & 1.75E+01 & 3155 & 2.48E-03 & 8.89E+00 & 3580 \\ 
 \bottomrule
\end{tabular}
\end{table}

From Figure \ref{fig:port1_var} and Tables \ref{tab:port1_999} and \ref{tab:port1_9999}, we reach similar conclusions as in Section \ref{sec:Norm_num_example}: (i) Our adaptive IS can reduce the variance of the VaR estimator significantly for SAA, RM-SA and PR-SA. For example, from Table \ref{tab:port1_9999}, we see that when $p=0.9999$ and the sample size is 128000, the variance reduction ratio is more than 3000 for all of the three procedures. (ii) SAA has the smallest variance and PR-SA has smaller variance than RM-SA. For example, in Table \ref{tab:port1_9999}, when $p=0.9999$, the variance of SAA-IS is about 42\% (2.35E-03/5.56E-03) of the variance of RM-SA-IS and the variance of PR-SA-IS about 45\% (2.48E-03/5.56E-03) of the variance of RM-SA-IS. 
(iii) For both SAA and PR-SA, the variance reduction ratios tend to increase as sample size increases and then start to plateau when the sample size is about 32000. This can be seen from the ratio column for SAA and PR-SA in Table \ref{tab:port1_999} or \ref{tab:port1_9999}. For RM-SA, the ratio is less stable. In Table \ref{tab:port1_9999}, when the sample size is 4000, the ratio for RM-SA is 9517, which appears to be an outlier, as it is quite large compared to the other ratios; however, when the sample size becomes larger (larger than 32000), the ratio tends to become stable.

Next, we estimate CVaRs using the estimated $v_p$. Since we do not have an analytical expression for computing $\mathds{E}[(L-v_p)^+]$, we use $10^7$ samples to approximate the expectation, i.e., we generate $10^7$ samples of $L$ and then plug into the estimated $v_p$ using the three different procedures (SAA, RM-SA, PR-SA) and estimate the variances. By comparing the red curves with corresponding blue curves in  Figure \ref{fig:port1_cvar_var},  we see that our adaptive IS again substantially reduces the variances of all three procedures. 
%
\begin{figure}[H]
\begin{minipage}[t]{0.49\linewidth}
\centering
\includegraphics[width=8.0cm]{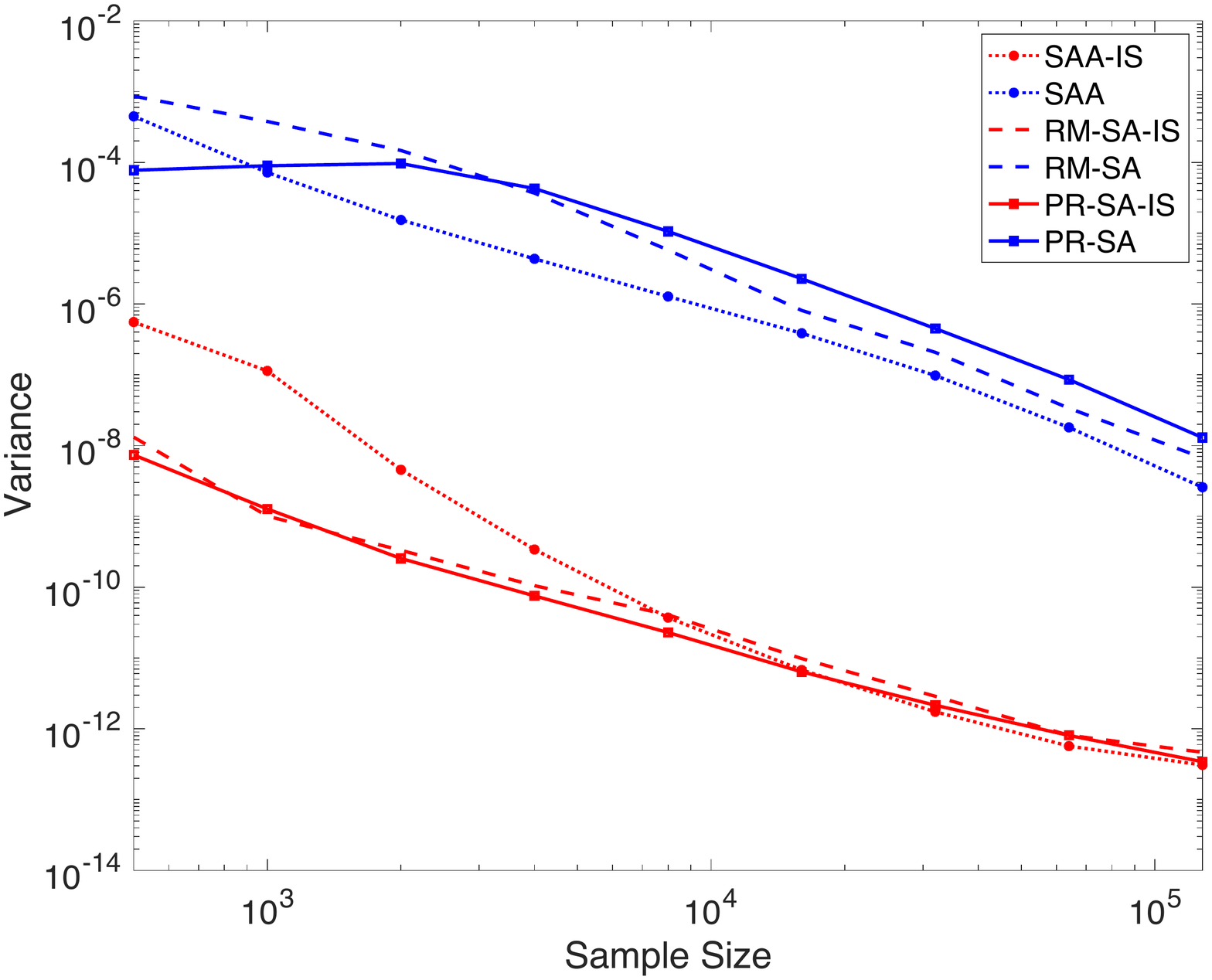}
\end{minipage}
\begin{minipage}[t]{0.49\linewidth}
\centering
\includegraphics[width=8.0cm]{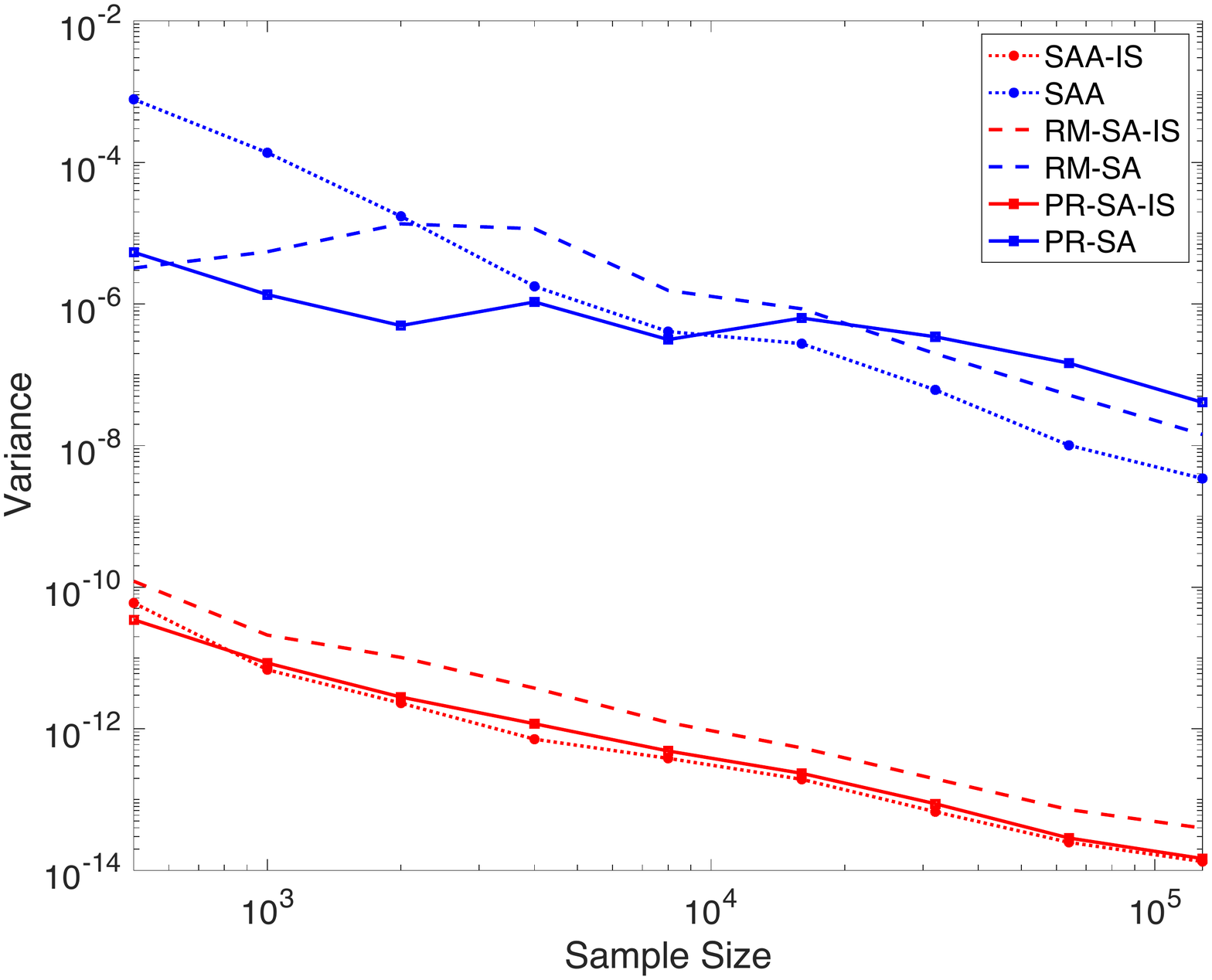}
\end{minipage}
\caption{Variance of CVaR for SAA, RM-SA, PR-SA, with and without adaptive IS ($p=0.999$ for the left panel; $p=0.9999$ for the right panel)}
\label{fig:port1_cvar_var}
\end{figure}

\section{Conclusion}\label{sec:concl}
In this paper, we propose an adaptive IS scheme to resolve the circular challenge in stochastic root-finding problems involving rare events. The circular challenge arises because, in the presence of rare events, variance reduction via IS is crucial to boost the estimation accuracy to an acceptable level, yet configuring a good IS relies on knowing the root, which is our a priori unknown target. We design an adaptive approach to simultaneously estimate the root and the IS parameters, and embed it in three commonly used root-finding procedures: SAA, RM-SA and PR-SA. 
We use a worst-case asymptotic variance comparison to show the benefit and necessity of adaptivity, and support our algorithms with theoretical analysis on strong consistency and asymptotic normality. 
We use extreme quantile estimation as a concrete example, and obtain the corresponding theoretical results under milder conditions than in the general setting. 
Finally, we demonstrate via numerical experiments the effectiveness of our adaptive IS.


\ACKNOWLEDGMENT{This material is based upon work supported by the National Science Foundation under Grants CAREER CMMI-1834710 and IIS-1849280, the U.S. Air Force Office of Scientific Research under Grant FA95502010211, and the National Natural Science Foundation of China under Grant 71801148.
A preliminary version of this work (\citealt{lam2018efficiencies}) was published in the 2018 Proceedings of the Winter Simulation Conference.}

\bibliographystyle{ormsv080}
\bibliography{SIS}

\ECSwitch
\ECHead{Appendices}
\setcounter{equation}{0}
\setcounter{lemma}{0}
\setcounter{algorithm}{0}
\setcounter{subsection}{0}
\setcounter{assumption}{0}
\setcounter{theorem}{1}
\setcounter{proposition}{0}
\renewcommand{\theequation}{A.\arabic{equation}}
\renewcommand{\thelemma}{A.\arabic{lemma}}
\renewcommand{\thesubsection}{\thesection.\arabic{subsection}}
\renewcommand{\thealgorithm}{A.\arabic{algorithm}}
\renewcommand{\theassumption}{A.\arabic{assumption}}
\renewcommand{\thetheorem}{A.\arabic{theorem}}
\renewcommand{\theproposition}{A.\arabic{proposition}}

\section{Multivariate Version of Algorithms and Theoretical Results}
In this section, we provide the multidimensional versions of algorithms for SAA and SA with adaptive IS. We also extend the assumptions under the one-dimensional setting to the multidimensional setting, and present the theoretical results including consistency and asymptotic normality under the multidimensional setting.
\subsection{Multidimensional Version of Algorithms \ref{alg:SAA_rootfinding_blackbox} and \ref{alg: SA_rootfinding_blackbox}}{\label{appx:algorithmsmd}} 

Recall that $\bF(\cdot,\cdot):\mathbb R^r\times\mathbb R^d \to\mathbb R^d$, $\bX=(X_1,X_2,\ldots,X_r)\in\mathbb R^r$, $\btheta\in \Theta \subseteq \mathbb R^{d}$, and $D/D\btheta$ is the Jacobian operator. Then we have the multidimensional version (i.e., $d\geq2$) of Algorithms \ref{alg:SAA_rootfinding_blackbox} and \ref{alg: SA_rootfinding_blackbox} presented in Algorithms \ref{alg:SAA_rootfinding_blackbox_multidim} and \ref{alg: SA_rootfinding_blackbox_multidim}.
\begin{algorithm}[htp]
\caption{SAA with adaptive importance sampling for stochastic root finding}
\label{alg:SAA_rootfinding_blackbox_multidim}
\begin{algorithmic}[1]
\Ensure Original sampling distribution $P$; initial IS parameter $\balpha_1$;  initial iteration index $n=1$; truncation sets $A_1\subset A_2 \subset \dots $; black-box IS function $I$.
\While{stopping criteria not met}
\State Generate sample $\bX_n\sim P_{\bealpha_n}$; 
\State Update root estimate $\hat \btheta_n$ by solving the equation 
\[
\frac{1}{n}\sum_{i=1}^n \bF(\bX_i,\betheta)\ell(\bX_i,\balpha_i)={\bold{c}};
\]
\State Update Jacobian estimate
\[
\hat{\bJ}_n=\frac{D}{D\btheta}\left[\frac{1}{n}\sum_{i=1}^{n}\bF(\bX_{i},{\hat{\btheta}_n})\ell(\bX_{i},\balpha_i)\right];
\]
\State Update IS parameter $\balpha_{n+1} = \Pi_{A_{n+1}}[I(\hat \btheta_n, \hat{\bJ}_n)]$ ;
\State Set $n=n+1$;
\EndWhile
\lastcon{Root estimate $\hat \btheta_n$.
}
\end{algorithmic}
\end{algorithm}


\begin{algorithm}[htb]
\caption{SA with adaptive importance sampling for stochastic root finding}
\label{alg: SA_rootfinding_blackbox_multidim}
\begin{algorithmic}[1]
\Ensure Original sampling distribution $P$; initial IS parameter $\balpha_1$; initial root $\hat \btheta_0 $; stepsize constant $\gamma$; prior information set $A$; initial iteration index $n=1$; black-box IS function $I$.
\While{stopping criteria not met}
\State Generate sample $\bX_n\sim P_{\bealpha_n}$, and calculate $\bF(\bX_n,\hat \btheta_{n-1})$ and $\ell(\bX_n,\balpha_n)$;
\State Set $\gamma_n=\gamma/n^{\alpha}$ (usually $\alpha=1$ for RM-SA; $1/2<\alpha<1$ for PR-SA);
\State Update root estimate
\begin{equation}\label{eq:SA_update_multidim}
\hat{\btheta}_{n}=\Pi_A\left[\hat{\btheta}_{n-1}-\gamma_{n}\left(\bF(\bX_n,\hat \btheta_{n-1})\ell(\bX_{n},\balpha_{n})-\bc\right)\right];
\end{equation}
\State Update Jacobian estimate
\[
\hat{\bJ}_n=\frac{D}{D\btheta}\left[\frac{1}{n}\sum_{i=1}^{n}\bF(\bX_{i},\hat{\btheta}_{n})\ell(\bX_{i},\balpha_i)\right];
\]

\State Update IS parameter $\balpha_{n+1} = {I(\hat \btheta_n, \hat{\bJ}_n)}$ ;
\State Set $n=n+1$;
\EndWhile
\lastcon{Root estimate $\hat \btheta_n$ for RM-SA, or $\bar \btheta_n=\sum_{i=1}^n \hat \btheta_i/n$ for PR-SA.
}
\end{algorithmic}
\end{algorithm}

\subsection{Assumptions and Theoretical Results for Multidimensional Case}\label{appx:multi_assu_result}
We first state the assumptions for the multidimensional case, and then provide the theoretical results of SAA and SA with adaptive IS. We will later provide the proofs of these theoretical results in Sections \ref{appx:prooflemma1}-\ref{appx:proofprop6}, to which the proofs under the one-dimensional setting (i.e., $\theta\in\mathbb{R}$) can be regarded as special cases.
Throughout the appendix, we write $\bF^{(k)}$ for the $k$-th dimension of $\bF$.

\subsubsection{SAA.}
\begin{assumption}\label{assu: truncation_alpha_Multidim}
For each $\btheta \in \Theta$ and each $k=1,2,\dots,d$, $\mathds{E}_{\bX\sim P_{\bealpha_n}}\left[\left(\bF^{(k)}(\bX,\btheta)\ell(\bX,\balpha_{n})\right)^{2}\right]=O(n^{1-\epsilon})$ holds for some $\epsilon>0$.
\end{assumption}
\begin{assumption}\label{assu: bracket_multidim}
Define $\sfF=\{f(\bX,\balpha):=\bF^{(k)}(\bX,\btheta)\ell(\bX,\balpha), k=1,2,\cdots,d,\btheta\in \Theta\}$.
For each $\epsilon>0$, there exists a finite set $K_{\epsilon}$ whose
elements are pairs of functions such that:\\
(1) For each $f\in\sfF$, there exists $(f_{L},f_{R})\in K_{\epsilon}$ such that $f_{L}\leq f\leq f_{R}$;\\
(2) For each pair of $(f_{L},f_{R})\in K_{\epsilon}$, the limits 
\begin{equation}
\lim_{n\rightarrow\infty}\frac{1}{n}\sum_{i=1}^{n}f_{L}(\bX_{i},\balpha_{i})~~\text{and}~~\lim_{n\rightarrow\infty}\frac{1}{n}\sum_{i=1}^{n}f_{R}(\bX_{i},\balpha_{i})
\end{equation}
exist, and
\begin{equation}\label{bracket_size}
\lim_{n\rightarrow\infty}\frac{1}{n}\sum_{i=1}^{n}\left[f_{R}(\bX_{i},\balpha_{i})-f_{L}(\bX_{i},\balpha_{i})\right]\leq\epsilon.
\end{equation}
\end{assumption}

\begin{assumption} \label{assu: Obj_regular_multidim}
The objective function $\boldf(\betheta)$ 
is differentiable at $\btheta=\btheta^{*}$ with continuously invertible Jacobian matrix, and $\betheta^*$ is the unique root in the sense that for any $\epsilon>0$,
\begin{equation*}
\inf_{|| \betheta-\betheta^{*}||\geq\epsilon}\left\Vert \boldf(\betheta)-\bc\right\Vert >0.
\end{equation*}
\end{assumption}
With the first three assumptions, we have strong consistency: 
\begin{theorem}[Consistency of SAA with embedded adaptive IS (multivariate case)] \label{prop:SAA_consistency_multidim}
Under Assumptions \ref{assu: truncation_alpha_Multidim} - \ref{assu: Obj_regular_multidim}, the root estimator generated by Algorithm \ref{alg:SAA_rootfinding_blackbox_multidim} is strongly consistent, i.e., $$\hat{\btheta}_{n}\rightarrow\btheta^{*}~a.s.$$
\end{theorem}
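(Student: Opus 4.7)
The plan is to reduce the multidimensional consistency to the componentwise one-dimensional case already treated in Lemmas \ref{lem: average_consistent} and \ref{Lem: unif_converg_aver_obj} and Theorem \ref{prop:SAA_consistency}. The overall strategy is classical: obtain uniform a.s.\ convergence of the sample-averaged map $\btheta \mapsto \frac{1}{n}\sum_{i=1}^{n}\bF(\bX_i,\btheta)\ell(\bX_i,\balpha_i)$ to $\boldf(\btheta)$, then invoke the well-separatedness in Assumption \ref{assu: Obj_regular_multidim} to transfer this to convergence of the zeros.

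First I would establish pointwise convergence. For each fixed $\btheta \in \Theta$ and each coordinate $k \in \{1,\dots,d\}$, the sequence $\bF^{(k)}(\bX_i,\btheta)\ell(\bX_i,\balpha_i) - \mathds{E}_{\bX\sim P}[\bF^{(k)}(\bX,\btheta)]$ forms a martingale difference array with respect to $\mathcal{F}_i$, because $\balpha_i \in \mathcal{F}_{i-1}$ implies $\mathds{E}_{i-1}[\bF^{(k)}(\bX_i,\btheta)\ell(\bX_i,\balpha_i)] = \mathds{E}_{\bX\sim P}[\bF^{(k)}(\bX,\btheta)]$. Assumption \ref{assu: truncation_alpha_Multidim} yields the second-moment bound $O(n^{1-\epsilon})$, which is precisely what the argument of Lemma \ref{lem: average_consistent} needs (via a Chow-type strong law for martingale differences). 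This gives $\frac{1}{n}\sum_{i=1}^n \bF^{(k)}(\bX_i,\btheta)\ell(\bX_i,\balpha_i) \to f^{(k)}(\btheta)$ a.s.\ componentwise, for every $\btheta \in \Theta$.

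Next I would upgrade to uniform convergence over $\Theta$. Here Assumption \ref{assu: bracket_multidim} is tailored so that the class $\sfF$ containing all coordinates $\bF^{(k)}(\bX,\btheta)\ell(\bX,\balpha)$ admits, for every $\epsilon>0$, a finite bracketing cover whose bracket widths have sample-average limits bounded by $\epsilon$. Following Lemma \ref{Lem: unif_converg_aver_obj}, for any $f = \bF^{(k)}(\cdot,\btheta)\ell(\cdot,\cdot) \in \sfF$ choose a bracket $(f_L, f_R)$ with $f_L \leq f \leq f_R$; then
\[
\frac{1}{n}\sum_{i=1}^n f_L(\bX_i,\balpha_i) \;\leq\; \frac{1}{n}\sum_{i=1}^n f(\bX_i,\balpha_i) \;\leq\; \frac{1}{n}\sum_{i=1}^n f_R(\bX_i,\balpha_i),
\]
and sandwiching with the limits of the bracket averages (which exist a.s.\ by Assumption \ref{assu: bracket_multidim}) yields a uniform bound of order $\epsilon$ in the limit. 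Taking $\epsilon \downarrow 0$ along a countable sequence gives
\[
\sup_{\btheta \in \Theta}\left\| \frac{1}{n}\sum_{i=1}^n \bF(\bX_i,\btheta)\ell(\bX_i,\balpha_i) - \boldf(\btheta)\right\| \to 0 \quad\text{a.s.}
\]

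Finally, since $\hat{\btheta}_n$ solves $\frac{1}{n}\sum_{i=1}^n \bF(\bX_i,\btheta)\ell(\bX_i,\balpha_i) = \bc$, the uniform convergence above forces $\|\boldf(\hat{\btheta}_n) - \bc\| \to 0$ a.s. Assumption \ref{assu: Obj_regular_multidim} asserts that $\inf_{\|\btheta - \btheta^*\| \geq \epsilon}\|\boldf(\btheta) - \bc\| > 0$ for every $\epsilon > 0$, so the contrapositive gives $\|\hat{\btheta}_n - \btheta^*\| \to 0$ a.s., establishing strong consistency. The main technical obstacle I anticipate is the uniform convergence step, since the data are not i.i.d.\ but form a triangular array with time-varying sampling parameters $\balpha_i$; however, Assumption \ref{assu: bracket_multidim} is stated directly in terms of the behavior of the dependent sample averages of bracket functions, which absorbs this difficulty into a verifiable hypothesis and allows the classical bracketing sandwich argument to proceed coordinatewise.
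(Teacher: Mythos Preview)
Your proposal is correct and follows essentially the same route as the paper: establish componentwise pointwise convergence via a martingale strong law (this is the paper's Lemma \ref{lem: average_consistent_multidim}), upgrade to uniform convergence via the bracketing sandwich (the paper's Lemma \ref{Lem: unif_converg_aver_obj_multidim}), and then use the well-separatedness in Assumption \ref{assu: Obj_regular_multidim} to pass from $\|\boldf(\hat{\btheta}_n)-\bc\|\to 0$ to $\hat{\btheta}_n\to\btheta^*$. The paper's own proof is very short because it simply cites these two lemmas and carries out the final $\epsilon$-argument you describe.
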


Next we show asymptotic normality, which requires an extra assumption that does not have its counterpart for the one-dimensional case. This assumption arises because for the multidimensional case, we need to estimate the Jacobian $\bJ(\betheta)=D\boldf(\betheta)/D\betheta$ at $\betheta=\betheta^*$. Let $\Theta_{\delta}=\{\betheta\in\Theta: \left\Vert\betheta-\betheta^*\right\Vert\leq\delta\}$, and $[\bA]_{ij}$ is the $(i,j)$th element of matrix $\bA$. Then we make the following assumption.
\renewcommand{\theassumption}{A.E.1}
\begin{assumption}\label{assu: GC_Jacobian}
Let $\sfF_{D,\delta}=\{f(\bX,\balpha):=\left[{D}\bF(\bX,\btheta)\ell(\bX,\balpha)/{D\btheta}\right]_{ij},\btheta\in\Theta_{\delta}, i,j=1,2,\cdots,d\}$. There exists a $\delta>0$ such that $\sfF_{D,\delta}$ satisfies
the conditions in Assumptions \ref{assu: bracket_multidim}.
\end{assumption}

Let
$f_{\betheta}^{(k)}$ be the function defined by $f_{\betheta}^{(k)}(\bX):=\bF^{(k)}(\bX,\btheta)$, and $\sfF^{(k)}_{\delta}=\{f_{\betheta}^{(k)},\btheta\in\Theta_{\delta}\}$. So from the definition of $V_{n,i}(g)$ we have that \[
V_{n,i}\left(f_{\betheta}^{(k)}\right)=\frac{\bF^{(k)}(\bX_{i},\btheta)\ell(\bX_{i},\balpha_{i})-\mathds{E}_{\bX\sim P}[\bF^{(k)}(\bX,\btheta)]}{\sqrt{n}},
\]
and recall that $\mathds{E}_{i-1}$ is the conditional expectation given $\mathcal{F}_{i-1}$. Then we have the multidimensional version of Assumptions \ref{assu:(uniform-integrable-entropy)} and \ref{assu:(Lindeberg-Feller)}.

\renewcommand{\theassumption}{A.\arabic{assumption}}
\setcounter{assumption}{3}
\begin{assumption}\label{assu:(uniform-integrable-entropy_multidim)}
For each $1\leq k\leq d$, the following holds: There exists $\Pi=\{\Pi(\epsilon)\}_{\epsilon\in(0,\Delta_{\Pi}]}$ such that each $\Pi(\epsilon)=\{\sfF(\epsilon;m):1\leq m \leq N_{\Pi}(\epsilon)\}$ is a covering of $\sfF_{\delta}^{(k)}$ (i.e., $\cup_{1\leq m \leq N_{\Pi}(\epsilon)}\sfF(\epsilon;m)=\sfF_{\delta}^{(k)}$) and $N_{\Pi}(\Delta_{\Pi})=1$. Here for each $1\leq m\leq N_{\Pi}(\epsilon)$, $\sfF(\epsilon;m)$ is an $\epsilon$-ball under $L_2$-distance $\rho(g,h) := (\mathds{E}_{\bX\sim P}[(g(\bX)-h(\bX))^2])^{1/2}.$ Moreover,
\[
\sup_{\epsilon\in(0,\Delta_{\Pi}]\cap \mathbb{Q}} \max_{1\leq m\leq N_{\Pi}(\epsilon)}\frac{\sqrt{\sum_{j=1}^{n}\mathds{E}_{j-1}\left[\left\vert V_{n,j} (\sfF(\epsilon;m))\right\vert^2\right]}}{\epsilon}=O_p(1),
\]
where for a set $\sfF^{\prime}$, $V_{n,j}(\sfF^{\prime})$ is defined as the smallest $\mathcal{F}_i$-measurable function that is greater than $\sup_{f,g\in \sfF^{\prime}}\left\vert V_{n,j}(f)-V_{n,j}(g)\right\vert$. 
Furthermore, 
\[
\int_0^{\Delta_{\Pi}}\sqrt{\log N_{\Pi}(\epsilon)} d\epsilon<\infty.
\]
\end{assumption}

\begin{assumption}(Lindeberg's condition)\label{assu:(Lindeberg-Feller_multidim)}
There exists a $\delta>0$ such that for each dimension $k$ and $\betheta\in\Theta_{\delta}$,
for every $\epsilon>0$, 
\[
\sum_{i=1}^{n}\mathds{E}_{i-1}\left[\left(V_{n,i}^{(k)}(E)\right)^{2}\mathbf{1}\left\{V_{n,i}^{(k)}(E)>\epsilon\right\}\right]\stackrel{P}{\longrightarrow}0,
\] 
where $V_{n,i}^{(k)}(E)$ is the adapted envelope for $V_{n,i}(f),f\in\sfF^{(k)}_\delta$, i.e., $V_{n,i}^{(k)}(E)$ is the smallest $\mathcal{F}_i$-measurable random variable such that  $\sup_{f\in\sfF^{(k)}_\delta}|V_{n,i}(f)|\leq V_{n,i}^{(k)}(E)~a.s$. 
\end{assumption}

Let $\rho (\betheta_1,\betheta_2):= \left(\mathds{E}_{\bX\sim P}\left\Vert\bF(\bX,\betheta_1)-\bF(\bX,\betheta_2)\right\Vert_2^2\right)^{1/2}$
and recall that $\bJ(\betheta)=D\boldf(\betheta)/D\betheta$. Then we have the following theorem:

\begin{theorem} \label{thm: SAA_CLT_multidim}
{\bf (Asymptotic normality of SAA with embedded adaptive IS (multivariate case))} Under Assumptions \ref{assu: truncation_alpha_Multidim} - \ref{assu:(Lindeberg-Feller_multidim)} and Assumption \ref{assu: GC_Jacobian}, suppose that the function $\mathds{E}_{\bX\sim P_{\bealpha}}\left[\left(\bF(\bX,\btheta)\ell(\bX,\balpha)\right)^{2}\right]$ is continuous in $\bealpha$, and  $\rho\left(\betheta,\betheta^*\right)\rightarrow0$ as $\btheta\rightarrow\btheta^{*}$.
Suppose further that the black-box function $I$ is continuous function.
Then, we have the asymptotic normality of $\hat{\btheta}_{n}$ generated from Algorithm \ref{alg:SAA_rootfinding_blackbox_multidim}, given by
\[
\sqrt{n}\left(\hat{\btheta}_{n}-\btheta^{*}\right)\Rightarrow \mathcal{N}\left(\bzero,\bV\right).
\]
where 
$\bV=\left[\bJ(\betheta^*)\right]^{-\top}\bSigma\left[\bJ(\betheta^*)\right]^{-1}$
and 
$
\bSigma = \Var_{\bX\sim P_{\bealpha^{*}}}\left(\bF(\bX,\btheta^{*})\ell(\bX,\balpha^{*})\right)$ {with $\balpha^{*} = I(\betheta^*,\bJ(\betheta^*))$}. 

Assume further that the scalar-valued performance function $g$ has all continuous partial derivatives at all dimensions at $\btheta^{*}$ and not all are zero. 
Then
\[
\sqrt{n}\left(g(\hat{\btheta}_{n})-g(\btheta^{*})\right)\Rightarrow \mathcal{N}\left(\bzero,\nabla g(\btheta^{*})^{\top}\bV\nabla g(\btheta^{*})\right).
\]
\end{theorem}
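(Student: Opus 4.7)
}
The plan is to reduce the multidimensional result to a component-wise application of the machinery already developed for the one-dimensional Theorem~\ref{thm: SAA_CLT}, supplemented by one new ingredient (consistency of the Jacobian estimator $\hat{\bJ}_n$) and two standard devices (Cram\'er--Wold and the delta method). First, Theorem~\ref{prop:SAA_consistency_multidim} (a multivariate lift of Theorem~\ref{prop:SAA_consistency}, whose proof follows the same bracketing / martingale Glivenko--Cantelli argument in Lemma~\ref{Lem: unif_converg_aver_obj} applied to each coordinate of $\bF$) gives $\hat{\btheta}_n \to \btheta^*$ a.s. Combining this with Assumption~\ref{assu: GC_Jacobian} and rerunning the proof of Lemma~\ref{Lem: unif_converg_aver_obj} on the class $\sfF_{D,\delta}$ yields $\hat{\bJ}_n \to \bJ(\btheta^*)$ a.s. By continuity of $I$ and of $\Var_{\bX\sim P_{\bealpha}}(\bF(\bX,\btheta)\ell(\bX,\balpha))$ in $(\btheta,\bealpha)$, we then obtain $\bealpha_n = I(\hat\btheta_{n-1},\hat\bJ_{n-1}) \to \bealpha^* := I(\btheta^*,\bJ(\btheta^*))$ a.s.\ and the conditional covariance of each $\bF(\bX_i,\btheta^*)\ell(\bX_i,\bealpha_i)$ converges to $\bSigma$.

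Second, I would linearize the defining equation $\hat{\boldf}_n(\hat{\btheta}_n)=\bc=\boldf(\btheta^*)$, where $\hat{\boldf}_n(\btheta):=n^{-1}\sum_{i=1}^n \bF(\bX_i,\btheta)\ell(\bX_i,\bealpha_i)$. Writing
\[
\hat{\boldf}_n(\hat{\btheta}_n)-\boldf(\hat{\btheta}_n) - [\hat{\boldf}_n(\btheta^*)-\boldf(\btheta^*)] + [\boldf(\hat{\btheta}_n)-\boldf(\btheta^*)] + [\hat{\boldf}_n(\btheta^*)-\boldf(\btheta^*)] = \bzero,
\]
the first bracket is $o_p(n^{-1/2})$ by stochastic equicontinuity of the empirical process $\sqrt{n}(\hat{\boldf}_n-\boldf)$ at $\btheta^*$, which follows from Assumptions~\ref{assu:(uniform-integrable-entropy_multidim)}--\ref{assu:(Lindeberg-Feller_multidim)} applied coordinate-wise (each $V_{n,i}(f_{\btheta}^{(k)})$ is a martingale difference array, and the uniform entropy plus Lindeberg conditions are exactly what is needed to extend the one-dimensional argument behind Theorem~\ref{thm: SAA_CLT}). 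Using differentiability of $\boldf$ with Jacobian $\bJ(\btheta^*)$ and $\hat{\btheta}_n\to\btheta^*$, this rearranges to
\[
\sqrt{n}(\hat{\btheta}_n-\btheta^*) = -\bJ(\btheta^*)^{-1}\sqrt{n}\bigl[\hat{\boldf}_n(\btheta^*)-\boldf(\btheta^*)\bigr] + o_p(1).
\]

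Third, I would establish $\sqrt{n}[\hat{\boldf}_n(\btheta^*)-\boldf(\btheta^*)]\Rightarrow \mathcal{N}(\bzero,\bSigma)$ via a multivariate martingale CLT applied through Cram\'er--Wold: for any fixed $\blambda\in\mathbb{R}^d$, the sequence $\blambda^\top[\bF(\bX_i,\btheta^*)\ell(\bX_i,\bealpha_i)-\boldf(\btheta^*)]$ is a scalar martingale difference array whose conditional second moments satisfy $n^{-1}\sum_{i=1}^n\mathds{E}_{i-1}[(\blambda^\top(\cdot))^2]\to \blambda^\top\bSigma\blambda$ a.s.\ by the continuity of the variance in $\bealpha$ and $\bealpha_n\to\bealpha^*$, while Lindeberg's condition is inherited from Assumption~\ref{assu:(Lindeberg-Feller_multidim)}. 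Slutsky's theorem then gives the first displayed CLT with $\bV=\bJ(\btheta^*)^{-1}\bSigma\bJ(\btheta^*)^{-\top}$ (equivalently the stated form, under the paper's Jacobian convention). The second conclusion is the usual delta method: since $g$ has continuous partial derivatives not all vanishing at $\btheta^*$, $\sqrt{n}(g(\hat{\btheta}_n)-g(\btheta^*))=\nabla g(\btheta^*)^\top\sqrt{n}(\hat{\btheta}_n-\btheta^*)+o_p(1)$, yielding the stated normal limit with variance $\nabla g(\btheta^*)^\top\bV\nabla g(\btheta^*)$.

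The main technical obstacle is the stochastic equicontinuity step: one needs a multivariate Donsker-type result for martingale-difference arrays indexed by $\btheta\in\Theta_\delta$, with dependence across $i$ induced by the adaptive IS parameter $\bealpha_i$. Rather than redeveloping this, I would invoke the function-indexed martingale CLT of \cite{nishiyama2000} (as used in the proof of Theorem~\ref{thm: SAA_CLT}) coordinate-by-coordinate, each time verifying the uniform entropy bound via Assumption~\ref{assu:(uniform-integrable-entropy_multidim)}, and then combining the coordinates through joint tightness and Cram\'er--Wold. The secondary obstacle is the Jacobian consistency $\hat{\bJ}_n\to\bJ(\btheta^*)$, which is where Assumption~\ref{assu: GC_Jacobian} enters; once this holds almost surely and $I$ is continuous, the convergence $\bealpha_n\to\bealpha^*$ and hence the identification of the limiting covariance $\bSigma$ follow routinely.
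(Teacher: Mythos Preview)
Your proposal is correct and follows essentially the same route as the paper: consistency of $\hat{\btheta}_n$ and then of $\hat{\bJ}_n$ (via Assumption~\ref{assu: GC_Jacobian}) gives $\bealpha_n\to\bealpha^*$; coordinate-wise asymptotic equicontinuity of the martingale-difference empirical process is obtained from \cite{nishiyama2000} under Assumptions~\ref{assu:(uniform-integrable-entropy_multidim)}--\ref{assu:(Lindeberg-Feller_multidim)}; the CLT for $\sqrt{n}[\hat{\boldf}_n(\btheta^*)-\boldf(\btheta^*)]$ is established via Cram\'er--Wold plus a martingale CLT; and the delta method yields the statement for $g$. The only cosmetic difference is that the paper packages your explicit linearization step by invoking Theorem~3.3.1 of \cite{vanWell1996} (their Lemma~\ref{lem: Thm3.3.1}), which absorbs the $\sqrt{n}$-rate bookkeeping you sketch directly.
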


\subsubsection{SA.}
Let $\bV_n = \bF(\bX_{n+1},\hat{\btheta}_n)\ell(\bX_{n+1},\bealpha_{n+1})-\boldf(\hat{\btheta}_n)$ be the noise of {the estimated objective function}. 
\begin{assumption}\label{assu: sup_L2_multidim}
There exists a constant $C>0$ such that 
 $\mathds{E}_n\left[\left\Vert\bF(\bX_{n+1},\hat{\btheta}_{n})\ell(\bX_{n+1},\balpha_{n+1})\right\Vert_2^{2}\right]<C$.
\end{assumption}

Since typically $\bealpha_{n+1}=I(\hat{\betheta}_n,\hat{\bJ}_n)$ is designed to be a good sampler for the estimation of $\boldf({ \hat\betheta_n})$, it is natural that this assumption can hold as long as $I$ is reasonable. 
\begin{assumption}\label{assu: continuity_multidim}
The objective function $\boldf(\btheta)$ is continuous on $\btheta$, and $\boldf(\btheta)=\nabla h(\btheta)$
for some function $h$.
\end{assumption}

Suppose that we have some prior knowledge that $\btheta\in A\subset\mathbb{R}^{d}$ for some set $A$. Following the setup in Section 4.3 of \cite{kushner2003stochastic}, we suppose that $A$ can be defined in the following way. Let $q_{i}(\cdot),i=1,2,\cdots,p$ be continuously differentiable real-valued functions on $\mathbb{R}^{d}$ with gradients {$\nabla q_{i}(\theta)$}.
Let {$\nabla q_{i}(\theta)\neq 0$} when $q_{i}(\theta)=0$. $A$ is defined as
\begin{equation}\label{eq: def_H}
A:=\{\theta:q_{i}(\theta)\leq0,i=1,2,\cdots,p\}.
\end{equation}
A typical example is $A=[a_1,b_1]\times \dots\times[a_d,b_d]$, where $A$ is a rectangle. Then we introduce the following assumption regarding the shape of $A$ and the uniqueness of the solution:
\renewcommand{\theassumption}{A.E.2}
\begin{assumption}\label{assu: unique_solution_multidim}
$A$ is defined as above and is compact. $\btheta^{*}$ belongs to the interior of $A$ and is the only stationary point to the projected ODE:
\[
\frac{d}{dt}\btheta(t)=\boldf(\btheta)+\bz,\bz\in-C(\btheta),
\]
where $C(\theta)$ is defined as the cone of outer normals to the active constraint at $\btheta$.
\end{assumption}

The above assumptions are sufficient for strong consistency.
\begin{theorem}[Consistency of SA with embedded adaptive IS (multivariate case)]
\label{prop:RM_consistency_multidim}
Under Assumptions \ref{assu: sup_L2_multidim}-\ref{assu: continuity_multidim} and {\ref{assu: unique_solution_multidim}},
we have that the
$\hat{\btheta}_{n}$ {and $\bar{\btheta}_{n}$} defined in Algorithm \ref{alg: SA_rootfinding_blackbox_multidim} converges
to $\btheta^{*}$ {a.s.} 
\end{theorem}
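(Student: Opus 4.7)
The plan is to invoke Theorem 5.2.3 of \cite{kushner2003stochastic} (the projected ODE method for constrained stochastic approximation), following the same template as the one-dimensional Theorem \ref{prop:RM_consistency} whose details were deferred to Appendix \ref{sec: proof_consistency_SA}. The first step is to rewrite the recursion \eqref{eq:SA_update_multidim} in the canonical form
\[
\hat{\btheta}_{n}=\Pi_{A}\!\left[\hat{\btheta}_{n-1}+\gamma_{n}\bigl(-(\boldf(\hat{\btheta}_{n-1})-\bc)+\bV_{n-1}\bigr)\right],
\qquad
\bV_{n-1}:=\boldf(\hat{\btheta}_{n-1})-\bF(\bX_{n},\hat{\btheta}_{n-1})\ell(\bX_{n},\balpha_{n}).
\]
The key observation, already exploited in Section \ref{sec:SIS_SA}, is that $\bealpha_{n}\in\mathcal{F}_{n-1}$, so conditional on $\mathcal{F}_{n-1}$ the importance-weighted gradient is unbiased for $\boldf(\hat{\btheta}_{n-1})$; hence $\{\bV_{n-1}\}$ is an $\mathbb{R}^{d}$-valued martingale difference sequence with respect to $\{\mathcal{F}_{n}\}$.

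Next I would verify the hypotheses of the projected-ODE theorem one by one. The noise condition follows from Assumption \ref{assu: sup_L2_multidim}, which gives a uniform bound on $\mathds{E}_{n-1}\|\bV_{n-1}\|_{2}^{2}$ and thus the required $L^{2}$ tightness for martingale-difference noise. The stepsize $\gamma_{n}=\gamma/n^{\alpha}$ with $\alpha\in(1/2,1]$ satisfies $\sum\gamma_{n}=\infty$ and $\sum\gamma_{n}^{2}<\infty$. Continuity of the mean field $\boldf$ follows from Assumption \ref{assu: continuity_multidim}. The constraint set $A$ from Assumption \ref{assu: unique_solution_multidim} is defined via smooth active constraints in the exact form required by Section 4.3 of \cite{kushner2003stochastic}, so the projection term $\bz\in -C(\btheta)$ and the associated projected ODE
\[
\dot{\btheta}(t)=-(\boldf(\btheta)-\bc)+\bz,\qquad \bz\in -C(\btheta),
\]
is well defined. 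Global asymptotic stability of $\btheta^{*}$ for this ODE is exactly what Assumption \ref{assu: unique_solution_multidim} posits, together with the Lyapunov function $V(\btheta)=h(\btheta)-\bc^{\top}\btheta$ from Assumption \ref{assu: continuity_multidim}: since $\nabla V=\boldf-\bc$, we have $\dot{V}=-\|\boldf(\btheta)-\bc\|^{2}+\bz^{\top}\nabla V\le 0$ along trajectories (the projection term contributes a nonpositive amount because $\bz\in -C(\btheta)$ points inward at the boundary). Applying Theorem 5.2.3 of \cite{kushner2003stochastic} then yields $\hat{\btheta}_{n}\to\btheta^{*}$ a.s.

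Once the RM-SA iterates are shown to converge a.s. to $\btheta^{*}$, strong consistency of the Polyak--Ruppert average $\bar{\btheta}_{n}=\frac{1}{n}\sum_{i=1}^{n}\hat{\btheta}_{i}$ is immediate from Ces\`aro's lemma applied componentwise. The main technical obstacle is not any single step but the bookkeeping needed to confirm that the adaptive IS does not break the martingale-difference structure: the generation rule $\bX_{n}\sim P_{\bealpha_{n}}$ with $\bealpha_{n}=I(\hat{\btheta}_{n-1},\hat{\bJ}_{n-1})$ must be traced through the filtration so that $\mathds{E}_{n-1}[\bF(\bX_{n},\hat{\btheta}_{n-1})\ell(\bX_{n},\balpha_{n})]=\boldf(\hat{\btheta}_{n-1})$ holds exactly; together with verifying that the Lyapunov function survives the projection argument on the boundary of $A$, these are the only places where multidimensionality introduces genuine subtleties beyond the one-dimensional proof.
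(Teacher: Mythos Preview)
Your proposal is correct and follows essentially the same approach as the paper: both invoke Theorem 2.3 in Section 5.2 of \cite{kushner2003stochastic}, verify its hypotheses (A2.1)--(A2.6) and the constraint set condition (A4.3.2) from Assumptions \ref{assu: sup_L2_multidim}, \ref{assu: continuity_multidim}, and \ref{assu: unique_solution_multidim}, and conclude convergence of $\hat{\btheta}_n$ to the unique stationary point $\btheta^*$ of the projected ODE. Your treatment is in fact slightly more detailed than the paper's---you spell out the Lyapunov role of $h(\btheta)-\bc^\top\btheta$ and explicitly invoke Ces\`aro for $\bar{\btheta}_n$---but the substance is identical.
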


The following assumptions are needed to guarantee asymptotic normality. 
\renewcommand{\theassumption}{A.E.3}
\begin{assumption}\label{assu:Polyak_obj1_multidim}
$\bR(\theta)$ is second-order differentiable in a neighborhood of $\btheta^*$ and $\bJ^*$ is positive definite.
\end{assumption}

We also need a uniform integrability condition for the noise:
\renewcommand{\theassumption}{A.\arabic{assumption}}
\setcounter{assumption}{7}
\begin{assumption}
\label{assu: Polyak_noise_multidim}

\[
\sup_{n}\mathds{E}_{n-1}[\left\Vert\bV_{n}\right\Vert|^{2}\mathbf{1}\{\left\Vert\bV_{n}\right\Vert>C\}]\stackrel{P}{\longrightarrow}0\text{ as \ensuremath{C\rightarrow\infty}.}
\]
\end{assumption}

Then for the RM-SA algorithm, we have the following asymptotic normality.
\begin{theorem} \label{prop: SA_CLT_multidim}
{\bf (Asymptotic normality of RM-SA with embedded adaptive IS (multivariate case))} Under Assumptions
\ref{assu: sup_L2_multidim}-\ref{assu: Polyak_noise_multidim} and \ref{assu: GC_Jacobian}-\ref{assu:Polyak_obj1_multidim}, 
suppose that the function $\mathds{E}_{\bX\sim P_{\bealpha}}\left[\left(\bF(\bX,\btheta)\ell(\bX,\balpha)\right)^{2}\right]$ is continuous in $\bealpha$ 
and the black-box function $I$ is continuous.
Let $\bP$ be an orthogonal matrix such that 
\[
\gamma\bP^{\top}\bJ(\btheta^{*})\bP=\bLambda
\]
is diagonal. 
Assume that the smallest eigenvalue $\min_{i\in\{1,2,\ldots,d\}}[\bLambda]_{ii}>{1}/{2}$. 
Then the RM-SA estimator $\hat{\btheta}_n$ in Algorithm \ref{alg: SA_rootfinding_blackbox_multidim} is asymptotically normal, viz.,
\[
\sqrt{n}\left(\hat{\btheta}_{n}-\btheta^{*}\right)\Rightarrow \mathcal{N}(\bzero,\bP\bM\bP^{\top}),
\]
where $[\bM]_{ij}=\gamma^2[\bP^{\top}\bSigma\bP]_{ij}([\bLambda]_{ii}+[\bLambda]_{jj}-1)^{-1}$ and $\bSigma = \Var_{\bX\sim P_{\bealpha^{*}}}\left(\bF(\bX,\btheta^{*})\ell(\bX,\balpha^{*})\right)$ {with $\balpha^{*} = I(\betheta^*,\bJ(\betheta^*))$}. 

Assume further that the scalar-valued performance function $g$ has all continuous partial derivatives at all dimensions at $\btheta^{*}$ and not all are zero.
Then
\[
\sqrt{n}\left(g(\hat{\btheta}_{n})-g(\btheta^{*})\right)\Rightarrow \mathcal{N}(\bzero,\nabla g(\btheta^{*})^{\top}\bP\bM\bP^{\top}\nabla g(\btheta^{*})).
\]
\end{theorem}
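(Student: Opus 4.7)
The plan is to adapt the one-dimensional argument used for Theorem \ref{prop: SA_CLT} to the multivariate setting by invoking the multivariate Fabian central limit theorem (\citealt{fabian1968}), together with the consistency results already established. First, by Theorem \ref{prop:RM_consistency_multidim}, $\hat\btheta_n\to\btheta^*$ almost surely. Since $\btheta^*$ lies in the interior of $A$ by Assumption \ref{assu: unique_solution_multidim}, the projection in \eqref{eq:SA_update_multidim} is eventually inactive with probability one, so for all large $n$ we may write
\[
\hat\btheta_n-\btheta^* = (\hat\btheta_{n-1}-\btheta^*) - \gamma_n\bigl(\boldf(\hat\btheta_{n-1})-\bc\bigr) - \gamma_n\bV_n,
\]
where $\bV_n := \bF(\bX_n,\hat\btheta_{n-1})\ell(\bX_n,\bealpha_n) - \boldf(\hat\btheta_{n-1})$ is a martingale difference with respect to $\mathcal F_{n-1}$ because $\bealpha_n\in\mathcal F_{n-1}$.

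Next, using differentiability of $\boldf$ at $\btheta^*$ (Assumption \ref{assu:Polyak_obj1_multidim}) together with $\boldf(\btheta^*)=\bc$, I Taylor expand to obtain $\boldf(\hat\btheta_{n-1})-\bc = \bJ(\btheta^*)(\hat\btheta_{n-1}-\btheta^*) + \bold r_n$ with $\|\bold r_n\| = o(\|\hat\btheta_{n-1}-\btheta^*\|)$. With $\gamma_n=\gamma/n$ the recursion takes the canonical Fabian form
\[
\hat\btheta_n - \btheta^* = \bigl(\bI - n^{-1}\gamma\bJ(\btheta^*)\bigr)(\hat\btheta_{n-1}-\btheta^*) + n^{-1}(\bPhi_n + \bT_n),
\]
where $\bPhi_n = -\gamma\bV_n$ is the martingale-difference noise and $\bT_n = -\gamma n\bold r_n / 1$ is a vanishing remainder. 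The conditions I then need to verify for the multivariate Fabian CLT are: (a) the smallest real part of the eigenvalues of $\gamma\bJ(\btheta^*)$ exceeds $1/2$, which is the assumption $\min_i [\bLambda]_{ii} > 1/2$; (b) $\bT_n\to\bzero$ in probability, which follows from consistency of $\hat\btheta_n$; (c) a uniform integrability condition on $\bPhi_n$, supplied by Assumption \ref{assu: Polyak_noise_multidim}; and (d) the conditional covariance $\mathds E_{n-1}[\bPhi_n\bPhi_n^\top]$ converges in probability to $\gamma^2\bSigma$. Given (a)--(d), Fabian's theorem yields $\sqrt n(\hat\btheta_n-\btheta^*)\Rightarrow\mathcal N(\bzero,\bP\bM\bP^\top)$, where the Lyapunov equation in the orthonormal $\bP$-basis gives exactly $[\bM]_{ij}=\gamma^2[\bP^\top\bSigma\bP]_{ij}([\bLambda]_{ii}+[\bLambda]_{jj}-1)^{-1}$.

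The main obstacle is condition (d), and it breaks into two subproblems. Subproblem one is to show $\bealpha_n\to\bealpha^*:=I(\btheta^*,\bJ(\btheta^*))$ almost surely. Since $I$ is assumed continuous, it suffices to show the Jacobian estimate is consistent: $\hat\bJ_n\to\bJ(\btheta^*)$ a.s. For this I would apply Assumption \ref{assu: GC_Jacobian} (the bracketing condition on the class of entrywise derivatives $[D\bF(\bX,\btheta)\ell(\bX,\bealpha)/D\btheta]_{ij}$) and replay the argument of Lemma \ref{Lem: unif_converg_aver_obj} to obtain uniform convergence of the sample-average derivative to $\bJ(\btheta)$ on a neighborhood of $\btheta^*$; combining with continuity of $\bJ$ and the a.s.~consistency $\hat\btheta_n\to\btheta^*$ gives $\hat\bJ_n\to\bJ(\btheta^*)$, whence $\bealpha_n\to\bealpha^*$ by continuity of $I$. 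Subproblem two is to upgrade this convergence of $\bealpha_n$ to convergence of the conditional second moment. By the assumed continuity of $\bealpha\mapsto\mathds E_{\bX\sim P_\bealpha}[(\bF(\bX,\btheta)\ell(\bX,\bealpha))^2]$ (and its obvious continuity in $\btheta$ implicit in the statement), together with $\hat\btheta_{n-1}\to\btheta^*$, the conditional second-moment matrix of $\bF(\bX_n,\hat\btheta_{n-1})\ell(\bX_n,\bealpha_n)$ converges in probability to $\mathds E_{\bX\sim P_{\bealpha^*}}[\bF(\bX,\btheta^*)\bF(\bX,\btheta^*)^\top\ell(\bX,\bealpha^*)^2]$, and subtracting $\boldf(\hat\btheta_{n-1})\boldf(\hat\btheta_{n-1})^\top\to\bc\bc^\top$ yields the conditional covariance $\bSigma$.

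Finally, the extension to $g$ follows from the delta method: since $g$ has continuous partial derivatives at $\btheta^*$ with $\nabla g(\btheta^*)\neq\bzero$, the continuous mapping $\bu\mapsto\nabla g(\btheta^*)^\top \bu$ applied to the weak limit gives
\[
\sqrt n\bigl(g(\hat\btheta_n)-g(\btheta^*)\bigr)\Rightarrow\mathcal N\bigl(0,\nabla g(\btheta^*)^\top\bP\bM\bP^\top\nabla g(\btheta^*)\bigr),
\]
as required. Routine verifications that I would omit in detail include the standard ``projection eventually inactive'' argument (using that $\btheta^*$ is an interior point of the compact set $A$), the check that Assumption \ref{assu: sup_L2_multidim} guarantees the noise terms are well-behaved enough to justify the Fabian-form representation, and the algebra of the Lyapunov equation in the diagonal $\bP$-basis that produces the explicit $[\bM]_{ij}$ formula.
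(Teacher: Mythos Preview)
Your overall architecture matches the paper's: cast the projected recursion in Fabian's canonical form, verify the noise conditions, and apply the delta method. Your route to $\bealpha_n\to\bealpha^*$---first showing $\hat\bJ_n\to\bJ(\btheta^*)$ via Assumption~\ref{assu: GC_Jacobian} and the uniform-convergence argument of Lemma~\ref{Lem: unif_converg_aver_obj}, then invoking continuity of $I$---is exactly what the paper does (this is their Lemma~\ref{lem: consistent_alpha}), and your treatment of the conditional covariance and the delta-method step are likewise correct.

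There is, however, a real gap in how you handle the linearization remainder. You freeze the matrix at $\gamma\bJ(\btheta^*)$ and push the Taylor remainder $\bold r_n=o(\|\hat\btheta_{n-1}-\btheta^*\|)$ into the $\bT_n$ slot, asserting that $\bT_n\to\bzero$ ``follows from consistency''. But in Fabian's recursion (Lemma~\ref{lem:fabian}) with $\alpha=\beta=1$, the drift enters as $n^{-3/2}\bT_n$, not $n^{-1}\bT_n$; matching your display forces $\bT_n=-\gamma\sqrt{n}\,\bold r_n$, and $\sqrt{n}\,\bold r_n=o\bigl(\sqrt{n}\|\hat\btheta_{n-1}-\btheta^*\|\bigr)\to 0$ only if $\sqrt{n}(\hat\btheta_{n-1}-\btheta^*)$ is already tight---precisely the conclusion you are after. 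The paper sidesteps this circularity by using the mean value theorem to write $\boldf(\hat\btheta_n)-\bc=\bJ(\bxi_n)(\hat\btheta_n-\btheta^*)$ with $\bxi_n$ on the segment from $\hat\btheta_n$ to $\btheta^*$, and then taking the \emph{time-varying} matrix $\bGamma_n=\gamma\bJ(\bxi_n)$ in Fabian's form (with $\bPhi_n=-\gamma\bI_d$ a fixed matrix, not the noise itself as in your notation). Consistency of $\hat\btheta_n$ and continuity of $\bJ$ (Assumption~\ref{assu:Polyak_obj1_multidim}) give $\bGamma_n\to\gamma\bJ(\btheta^*)$ directly, and the $\bT_n$ slot is reserved solely for the projection correction, which is eventually zero since $\btheta^*$ is interior to $A$. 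The fix is small but essential: absorb the linearization error into $\bGamma_n$, not $\bT_n$.
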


And for the PR-SA, we have the following asymptotic normality.
\begin{theorem}\label{prop: SA_average_CLT_multidim}
{\bf (Asymptotic normality of PR-SA with embedded adaptive IS (multivariate case))} Under the same assumptions for Theorem \ref{prop: SA_CLT_multidim}, 
the PR-SA estimator $\bar{\btheta}_n$ in Algorithm \ref{alg: SA_rootfinding_blackbox_multidim} is asymptotically normal, viz.,
\[
\sqrt{n}\left(\bar{\btheta}_{n}-\btheta^{*}\right)\Rightarrow\mathcal{N}(\bzero,\bV),
\]
where $\bar{\btheta}_{n}=\sum_{i=1}^n\hat{\btheta}_{i}/n$, $\bV=\left[\bJ(\betheta^*)\right]^{-\top}\bSigma\left[\bJ(\betheta^*)\right]^{-1}$ and $\bSigma = \Var_{\bX\sim P_{\bealpha^{*}}}\left(\bF(\bX,\btheta^{*})\ell(\bX,\balpha^{*})\right)$ {with $\balpha^{*} = I(\betheta^*,\bJ(\betheta^*))$}. 

Assume further that the scalar-valued performance function $g$ has all continuous partial derivatives at all dimensions at $\btheta^{*}$ and not all are zero. Then
\[
\sqrt{n}\left(g(\bar{\btheta}_{n})-g(\btheta^{*})\right)\Rightarrow \mathcal{N}\left(\bzero,\nabla g(\btheta^{*})^{\top}\bV\nabla g(\btheta^{*})\right).
\]

\end{theorem}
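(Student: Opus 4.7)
The plan is to reduce this multivariate PR-SA central limit theorem to the classical Polyak–Juditsky result for averaged stochastic approximation, with the adaptive IS noise playing the role of a martingale difference sequence. The proof divides naturally into (a) recasting the projected recursion \eqref{eq:SA_update_multidim} as a tail unconstrained Robbins–Monro iteration, (b) verifying the conditional-covariance and Lindeberg-type hypotheses for the noise, and (c) invoking Polyak–Juditsky followed by a delta method for the $g$-version.

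First, combining consistency (Theorem \ref{prop:RM_consistency_multidim}) with $\btheta^*\in\mathrm{int}(A)$, the projection in \eqref{eq:SA_update_multidim} is a.s.\ inactive for all $n$ beyond some random index. Writing $\bV_{n-1}:=\bF(\bX_n,\hat{\btheta}_{n-1})\ell(\bX_n,\balpha_n)-\boldf(\hat{\btheta}_{n-1})$ and Taylor expanding $\boldf$ at $\btheta^*$ using Assumption \ref{assu: Obj_regular_multidim}, the tail of the trajectory obeys
\[
\hat{\btheta}_n-\btheta^* = \bigl(\bI-\gamma_n\bJ(\btheta^*)\bigr)(\hat{\btheta}_{n-1}-\btheta^*) - \gamma_n\bV_{n-1} - \gamma_n\br_n, \quad \|\br_n\|=o(\|\hat{\btheta}_{n-1}-\btheta^*\|),
\]
with $\gamma_n=\gamma/n^{\alpha}$, $\alpha\in(1/2,1)$. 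Because $\balpha_n$ is $\mathcal{F}_{n-1}$-measurable, $\{\bV_{n-1}\}$ is a martingale difference sequence with respect to $\{\mathcal{F}_{n-1}\}$.

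Next, I would verify the hypotheses of the Polyak–Juditsky CLT (see, e.g., Theorem 2 of \citealt{polyak1992} or Theorem 11.2.1 of \citealt{kushner2003stochastic}). Stability of $\bJ(\btheta^*)$ is granted by Assumption \ref{assu:Polyak_obj1_multidim}. For convergence of the conditional covariance $\mathds{E}_{n-1}[\bV_{n-1}\bV_{n-1}^{\top}]$ to $\bSigma$, I would argue: $\hat{\btheta}_{n-1}\to\btheta^*$ a.s.\ by Theorem \ref{prop:RM_consistency_multidim}; next, $\hat{\bJ}_n\to\bJ(\btheta^*)$ a.s.\ follows by applying a uniform law of large numbers to the class $\sfF_{D,\delta}$ supplied by Assumption \ref{assu: GC_Jacobian}, mimicking the bracketing argument of Lemma \ref{Lem: unif_converg_aver_obj}; continuity of $I$ then gives $\balpha_n=I(\hat{\btheta}_{n-1},\hat{\bJ}_{n-1})\to\balpha^*=I(\btheta^*,\bJ(\btheta^*))$ a.s.; finally, the assumed continuity of $\bealpha\mapsto\mathds{E}_{\bX\sim P_{\bealpha}}[(\bF(\bX,\btheta)\ell(\bX,\balpha))(\bF(\bX,\btheta)\ell(\bX,\balpha))^{\top}]$ at $(\btheta^*,\balpha^*)$ closes the loop. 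The Lindeberg-type tail condition on $\{\bV_n\}$ is immediate from Assumption \ref{assu: Polyak_noise_multidim}. Polyak–Juditsky then delivers $\sqrt{n}(\bar{\btheta}_n-\btheta^*)\Rightarrow\mathcal{N}(\bzero,\bV)$ with $\bV=[\bJ(\btheta^*)]^{-\top}\bSigma[\bJ(\btheta^*)]^{-1}$, and the delta method applied to the continuously differentiable $g$ yields the final univariate CLT.

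The main obstacle I anticipate is the a.s.\ consistency $\hat{\bJ}_n\to\bJ(\btheta^*)$, which underlies step (ii) above: one must interchange derivative and empirical average and then apply a uniform strong law to the derivative–likelihood-ratio class, which is precisely the role of Assumption \ref{assu: GC_Jacobian}. A subsidiary technicality is controlling the Taylor remainder $\br_n$ inside the Polyak–Juditsky decomposition so that the averaging of $\br_n$ contributes $o_p(n^{-1/2})$; this is standard once one knows $\|\hat{\btheta}_{n-1}-\btheta^*\|$ is of the correct stochastic order $O_p(\gamma_n^{1/2})$ implied by the non-averaged SA rate. The random finite number of initial projected steps is harmless because averaging washes out any fixed finite prefix.
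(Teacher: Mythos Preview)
Your proposal is correct and follows essentially the same route as the paper: both reduce the result to Theorem~2 of \cite{polyak1992}, handle the projection by noting it is eventually inactive since $\btheta^*\in\mathrm{int}(A)$, and then apply the delta method for the $g$-statement. One minor slip: the Taylor expansion of $\boldf$ should be justified via Assumption~\ref{assu:Polyak_obj1_multidim} (second-order differentiability near $\btheta^*$) rather than Assumption~\ref{assu: Obj_regular_multidim}, which belongs to the SAA hypothesis set and is not among the assumptions of Theorem~\ref{prop: SA_CLT_multidim}.
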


\section{Proofs on Theoretical Convergences}
This section provides all the proofs of the theoretical results in this paper. The proof of Theorem \ref{thm: SAA_CLT_multidim} is complicated, so we derive some intermediate results in Section \ref{appx:proof_of_A2} and provide the final proof in Section \ref{sc:finalproof_ThmA3}.
\subsection{Proof of Multidimensional Version of Lemma \ref{lem: average_consistent}}{\label{appx:prooflemma1}}
Here we provide a multidimensional version of Lemma \ref{lem: average_consistent} and its proof.
\begin{lemma}\label{lem: average_consistent_multidim}
Under Assumption \ref{assu: truncation_alpha_Multidim}, for each $\btheta\in\Theta$,
we have that 
\begin{equation}
\frac{\sum_{i=1}^{n}\bF(\bX_{i},\btheta)\ell(\bX_{i},\balpha_{i})}{n}\rightarrow \mathds{E}_{\bX\sim P}\left[\bF(\bX,\btheta)\right]~\text{a.s.}\
\end{equation}
\end{lemma}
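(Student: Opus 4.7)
The plan is to recognize that the sequence of centered terms forms a martingale difference array, and then to invoke a strong law of large numbers for martingales (Chow's extension of Kolmogorov's criterion) using the second-moment bound provided by Assumption \ref{assu: truncation_alpha_Multidim}.

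First, I would set up the martingale structure. Fix $\btheta \in \Theta$ and define, for each coordinate $k=1,\ldots,d$,
\[
Y_i^{(k)} := \bF^{(k)}(\bX_i,\btheta)\ell(\bX_i,\balpha_i) - \mathds{E}_{\bX\sim P}[\bF^{(k)}(\bX,\btheta)].
\]
Since $\balpha_i$ is $\mathcal{F}_{i-1}$-measurable (it depends only on $\hat{\btheta}_{i-1}$ and $\hat{\bJ}_{i-1}$, both of which are built from $\bX_1,\ldots,\bX_{i-1}$), and since conditionally on $\mathcal{F}_{i-1}$ we have $\bX_i \sim P_{\balpha_i}$, the likelihood ratio identity gives
\[
\mathds{E}_{i-1}[\bF^{(k)}(\bX_i,\btheta)\ell(\bX_i,\balpha_i)] = \int \bF^{(k)}(\bx,\btheta)\,dP(\bx) = \mathds{E}_{\bX\sim P}[\bF^{(k)}(\bX,\btheta)],
\]
so $\{Y_i^{(k)}\}$ is a martingale difference sequence with respect to $\{\mathcal{F}_i\}$.

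Next, I would bound the conditional second moments. Because variance is bounded by the raw second moment,
\[
\mathds{E}_{i-1}\!\left[\bigl(Y_i^{(k)}\bigr)^2\right] \le \mathds{E}_{i-1}\!\left[\bigl(\bF^{(k)}(\bX_i,\btheta)\ell(\bX_i,\balpha_i)\bigr)^2\right] = \mathds{E}_{\bX \sim P_{\balpha_i}}\!\left[\bigl(\bF^{(k)}(\bX,\btheta)\ell(\bX,\balpha_i)\bigr)^2\right].
\]
By Assumption \ref{assu: truncation_alpha_Multidim}, the right-hand side is $O(i^{1-\epsilon})$ for some $\epsilon>0$, and therefore
\[
\sum_{i=1}^\infty \frac{\mathds{E}_{i-1}\!\left[\bigl(Y_i^{(k)}\bigr)^2\right]}{i^2} \;\le\; C \sum_{i=1}^\infty \frac{1}{i^{1+\epsilon}} \;<\;\infty \quad \text{a.s.}
\]

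Finally, I would apply Chow's martingale strong law (a consequence of the martingale convergence theorem applied to $\sum_i Y_i^{(k)}/i$ combined with Kronecker's lemma): the summability above implies $\sum_{i=1}^n Y_i^{(k)}/n \to 0$ a.s. Doing this for each coordinate $k=1,\ldots,d$ and assembling the vector yields the conclusion
\[
\frac{1}{n}\sum_{i=1}^n \bF(\bX_i,\btheta)\ell(\bX_i,\balpha_i) \;\longrightarrow\; \mathds{E}_{\bX\sim P}[\bF(\bX,\btheta)] \quad \text{a.s.}
\]
There is no real obstacle here; the only delicate point is confirming that the filtration is set up so $\balpha_i$ is previsible, which is immediate from the construction of Algorithm \ref{alg:SAA_rootfinding_blackbox_multidim}. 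The growth rate $O(n^{1-\epsilon})$ in Assumption \ref{assu: truncation_alpha_Multidim} is precisely tuned so the $n^{-2}$-weighted series converges.
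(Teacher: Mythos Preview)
Your proposal is correct and follows essentially the same approach as the paper: recognize the coordinate-wise martingale difference structure, use Assumption~\ref{assu: truncation_alpha_Multidim} to obtain summability of $\sum_i \mathds{E}[(Y_i^{(k)})^2]/i^2$, and then apply the martingale convergence theorem together with Kronecker's lemma. The only cosmetic difference is that the paper bounds unconditional second moments while you bound conditional ones, but either route works and leads to the same conclusion.
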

\begin{proof}{Proof:}
Notice that $\ell(\bX,\balpha)$ is a scalar, so it suffices to prove that the pointwise convergence of each component of $\bF(\bX,\btheta)$. 
For ease of notation, we do not distinguish the components of $\bF(\bX,\btheta)$, and let $F(\bX,\btheta)$ be any component of $\bF(\bX,\btheta)$. Notice that, for each dimension, 
\begin{equation}
{F(\bX_{i},\btheta)\ell(\bX_{i},\bealpha_{i})-\mathds{E}_{\bX\sim P}\left[F(\bX,\btheta)\right]}
\end{equation}
is a martingale difference array. 

Assumption \ref{assu: truncation_alpha_Multidim} implies that 
\begin{equation}\label{eq: In_proof_average_consistent}
\sum_{i=1}^{\infty}\frac{\mathds{E}\left[\left(F(\bX_{i},\btheta)\ell(\bX_{i},\balpha_{i})-\mathds{E}_{\bX\sim P}[F(\bX,\btheta)]\right)^{2}\right]}{i^{2}}<\infty.
\end{equation}
Recall that $f(\btheta):= \mathds{E}_{\bX\sim P}\left[F(\bX,\btheta)\right]$, and from the martingale difference property, in the following computation of the second moment, the cross term will be zero, and we have
\begin{align*}
 & \mathds{E}\left[\left(\sum_{i=1}^{n}\frac{F(\bX_{i},\btheta)\ell(\bX_{i},\balpha_{i})-f(\btheta)}{i}\right)^2\right]\\
= & \mathds{E}\left[\mathds{E}\left[\left(\sum_{i=1}^{n}\frac{F(\bX_{i},\btheta)\ell(\bX_{i},\balpha_{i})-f(\btheta)}{i}\right)^2\right]\bigg|\mathcal{F}_{n-1}\right]\\
= & \mathds{E}\left[\left(\sum_{i=1}^{n-1}\frac{F(\bX_{i},\btheta)\ell(\bX_{i},\balpha_{i})-f(\btheta)}{i}\right)^2\right]+\mathds{E}\left[\left(\frac{F(\bX_{n},\btheta)\ell(\bX_{n},\balpha_{n})-f(\btheta)}{n}\right)^{2}\right]\\
 & +2\mathds{E}\left[\sum_{i=1}^{n-1}\left(\frac{F(\bX_{i},\btheta)\ell(\bX_{i},\balpha_{i})-f(\btheta)}{i}\right)\mathds{E}\left[\frac{F(\bX_{n},\btheta)\ell(\bX_{n},\balpha_{n})-f(\btheta)}{n}\bigg|\mathcal{F}_{n-1}\right]\right]\\
= & \mathds{E}\left[\left(\sum_{i=1}^{n-1}\frac{F(\bX_{i},\btheta)\ell(\bX_{i},\balpha_{i})-f(\btheta)}{i}\right)^2\right]+\mathds{E}\left[\left(\frac{F(\bX_{n},\btheta)\ell(\bX_{n},\balpha_{n})-f(\btheta)}{n}\right)^{2}\right]\\
= & \cdots\\
= & \sum_{i=1}^{n}\mathds{E}\left[\left(\frac{F(\bX_{i},\btheta)\ell(\bX_{i},\balpha_{i})-f(\btheta)}{i}\right)^{2}\right].
\end{align*}
So using (\ref{eq: In_proof_average_consistent}) we have that 
\begin{equation}
\sup_{n}\mathds{E}\left[\left(\sum_{i=1}^{n}\frac{F(\bX_{i},\btheta)\ell(\bX_{i},\balpha_{i})-f(\btheta)}{i}\right)^2\right]<\infty.
\end{equation}
Then for the absolute value,
\begin{equation}
\sup_{n}\mathds{E}\left[\left|\sum_{i=1}^{n}\frac{F(\bX_{i},\btheta)\ell(\bX_{i},\balpha_{i})-f(\btheta)}{i}\right|\right]<\infty.
\end{equation}
Then notice that $M_{n}=\sum_{i=1}^{n}\left[{(F(\bX_{i},\btheta)\ell(\bX_{i},\balpha_{i})-\mathds{E}_{\bX\sim P}F(\bX,\btheta))}/{i}\right]$
is a martingale. From the martingale convergence theorem (Theorem 5.2.8, \citealt{Durret2019}), we have that
there exists a.s. finite $Z$ such that 
\begin{equation}
\sum_{i=1}^{n}\frac{F(\bX_{i},\btheta)\ell(\bX_{i},\balpha_{i})-f(\btheta)}{i}\rightarrow Z.
\end{equation}
Then by Kronecker's lemma (Theorem 2.5.5, \citealt{Durret2019}), we have that 
\begin{equation}
\sum_{i=1}^{n}\frac{F(\bX_{i},\btheta)\ell(\bX_{i},\balpha_{i})-f(\btheta)}{n}\rightarrow 0~a.s.
\end{equation}
\hfill{$\Box$}\end{proof}

\subsection{Proof of Multidimensional Version of Lemma \ref{Lem: unif_converg_aver_obj}}\label{appx:prooflemma2}
The multidimensional version of Lemma \ref{Lem: unif_converg_aver_obj} is stated and followed by its proof.
\begin{lemma}\label{Lem: unif_converg_aver_obj_multidim}
Under Assumptions \ref{assu: truncation_alpha_Multidim} and \ref{assu: bracket_multidim},
we have that, as $n\rightarrow\infty$, 
\begin{equation}
\sup_{\betheta\in \Theta}\left\Vert\frac{\sum_{i=1}^{n}\bF(\bX_{i},\btheta)\ell(\bX_{i},\balpha_{i})}{n}-\mathds{E}_{\bX\sim P}\left[\bF(\bX,\btheta)\right]\right\Vert\rightarrow0~\text{a.s.}
\end{equation}
\end{lemma}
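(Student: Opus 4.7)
The proof follows the classical bracketing approach to uniform laws of large numbers, adapted to the martingale-difference structure. First I would reduce to a single coordinate: since $\ell(\bX,\bealpha)$ is scalar and $\bF$ has $d$ components, it suffices to show that for each coordinate function $F(\bX,\btheta)$ (any component of $\bF$), $\sup_{\btheta \in \Theta}|(1/n)\sum_{i=1}^n F(\bX_i,\btheta)\ell(\bX_i,\balpha_i) - \mathds{E}_{\bX\sim P}[F(\bX,\btheta)]| \to 0$ a.s.; the multidimensional conclusion then follows from the union bound over $d$ coordinates.

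Fix $\epsilon>0$. By Assumption \ref{assu: bracket_multidim} there exists a finite set of pairs $K_\epsilon$ such that every $f_\btheta(\bX,\bealpha) := F(\bX,\btheta)\ell(\bX,\bealpha)$ admits $(f_L,f_R)\in K_\epsilon$ with $f_L\le f_\btheta\le f_R$, and for every such pair the deterministic limits $L(f_L) := \lim_n (1/n)\sum_{i=1}^n f_L(\bX_i,\balpha_i)$ and $L(f_R)$ exist a.s.\ with $L(f_R)-L(f_L)\le\epsilon$. Because $K_\epsilon$ is finite, on a single a.s.\ event the convergence holds simultaneously for all bracket endpoints, so there is $N_\epsilon(\omega)$ with $|(1/n)\sum_i f_L(\bX_i,\balpha_i)-L(f_L)|<\epsilon$ and the analogous bound for $f_R$ for all $n\ge N_\epsilon$ and all $(f_L,f_R)\in K_\epsilon$. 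The key bridge to the expectation is Lemma \ref{lem: average_consistent_multidim}: for each fixed $\btheta$ the sandwich $f_L\le f_\btheta\le f_R$ passes to the limit and gives $L(f_L)\le \mathds{E}_{\bX\sim P}[F(\bX,\btheta)]\le L(f_R)$, because the middle average converges a.s.\ to $\mathds{E}_{\bX\sim P}[F(\bX,\btheta)]$ by Lemma \ref{lem: average_consistent_multidim}.

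Combining the sandwich inequalities, for any $\btheta$ with bracket $(f_L,f_R)$ and for $n\ge N_\epsilon$,
\begin{equation*}
\frac{1}{n}\sum_{i=1}^n F(\bX_i,\btheta)\ell(\bX_i,\balpha_i)-\mathds{E}_{\bX\sim P}[F(\bX,\btheta)]
\le \frac{1}{n}\sum_{i=1}^n f_R(\bX_i,\balpha_i)-L(f_L)
\le (L(f_R)-L(f_L))+\epsilon \le 2\epsilon,
\end{equation*}
and a symmetric lower bound $\ge -2\epsilon$ using the $f_L$ side. Since the bracket chosen depends on $\btheta$ but both bounds are uniform over the finite set $K_\epsilon$, we conclude $\sup_{\btheta\in\Theta}|(1/n)\sum_i F(\bX_i,\btheta)\ell(\bX_i,\balpha_i)-\mathds{E}_{\bX\sim P}[F(\bX,\btheta)]|\le 2\epsilon$ a.s.\ for $n\ge N_\epsilon$. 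Taking $\epsilon=1/k$ along integers $k$ and intersecting the corresponding a.s.\ events yields the uniform convergence claim.

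\textbf{Main obstacle.} The step I expect to require the most care is the bridge in the second paragraph: the bracket limits $L(f_L),L(f_R)$ are given abstractly by Assumption \ref{assu: bracket_multidim} (they are not a priori equal to $\mathds{E}_{\bX\sim P}[f_L]$ or $\mathds{E}_{\bX\sim P}[f_R]$, because the samples $\bX_i$ come from the adaptive sequence $\{P_{\balpha_i}\}$ rather than from $P$). The classical Glivenko--Cantelli bracketing argument typically compares bracket expectations directly, so here one must instead invoke Lemma \ref{lem: average_consistent_multidim} as the bridge that links the adaptive bracket limits to $\mathds{E}_{\bX\sim P}[F(\bX,\btheta)]$ via the sandwich---this is conceptually the only nontrivial departure from the i.i.d.\ setting, and once it is in place the rest is finite-union bookkeeping.
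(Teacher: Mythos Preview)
Your proposal is correct and follows essentially the same bracketing argument as the paper: reduce to a single coordinate, sandwich $f_\btheta$ between bracket endpoints from the finite set $K_\epsilon$, use Lemma \ref{lem: average_consistent_multidim} to link the bracket limits $L(f_L),L(f_R)$ to $\mathds{E}_{\bX\sim P}[F(\bX,\btheta)]$, and conclude by taking the supremum over $\btheta$ and using finiteness of $K_\epsilon$. The only cosmetic difference is that the paper chains the inequalities as $L(f_R)-\mathds{E}_{\bX\sim P}[F(\bX,\btheta)]\le L(f_R)-L(f_L)\le\epsilon$ directly (yielding a final bound of $\epsilon$), whereas you pass through $L(f_L)\le \mathds{E}_{\bX\sim P}[F(\bX,\btheta)]\le L(f_R)$ and obtain $2\epsilon$; both are equivalent.
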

\begin{proof}{Proof:}
Similar to the proof of Lemma \ref{lem: average_consistent_multidim}, let $F(\bX,\btheta)$ be any component of $\bF(\bX,\btheta)$, and it suffices to prove the one-dimensional case.
According to Lemma \ref{lem: average_consistent_multidim}, we have that
\begin{align*}
\lim_{n\rightarrow \infty} \frac{1}{n}\sum_{i=1}^n f_{R}(\bX_{i},\balpha_{i}) - \mathds{E}_{\bX\sim P}\left[F(\bX,\btheta)\right] &= \lim_{n\rightarrow \infty} \frac{1}{n}\sum_{i=1}^n \big[ f_{R}(\bX_{i},\balpha_{i})  - F(\bX_{i},\btheta)\ell(\bX_{i},\balpha_{i}) \big] \\
& \leq \lim_{n\rightarrow \infty} \frac{1}{n}\sum_{i=1}^n \big[ f_{R}(\bX_{i},\balpha_{i})  - f_{L}(\bX_{i},\balpha_{i}) \big] \leq \epsilon.
\end{align*}

Then for each $\btheta$, we have that
\begin{align*}
&\frac{\sum_{i=1}^{n}F(\bX_{i},\btheta)\ell(\bX_{i},\balpha_{i})}{n}-\mathds{E}_{\bX\sim P}\left[F(\bX,\btheta)\right] \\
& \leq\frac{1}{n}\sum_{i=1}^{n}\big\{f_{R}(\bX_{i},\balpha_{i})-\mathds{E}_{\bX\sim P}\left[F(\bX,\btheta)\right]\big\}\\
 & =\frac{1}{n}\sum_{i=1}^{n}f_{R}(\bX_{i},\balpha_{i})-\lim_{n\rightarrow\infty}\frac{1}{n}\sum_{i=1}^{n}f_{R}(\bX_{i},\balpha_{i}) +\lim_{n\rightarrow\infty}\frac{1}{n}\sum_{i=1}^{n}f_{R}(\bX_{i},\balpha_{i})-\mathds{E}_{\bX\sim P}\left[F(\bX,\btheta)\right]\\
 & \leq\frac{1}{n}\sum_{i=1}^{n}f_{R}(\bX_{i},\balpha_{i})-\lim_{n\rightarrow\infty}\frac{1}{n}\sum_{i=1}^{n}f_{R}(\bX_{i},\balpha_{i})+\epsilon.
\end{align*}
Similarly, we have the lower bound
\begin{equation*}
\frac{\sum_{i=1}^{n}F(\bX_{i},\btheta)\ell(\bX_{i},\balpha_{i})}{n}-\mathds{E}_{\bX\sim P}[F(\bX,\btheta)]\geq\frac{1}{n}\sum_{i=1}^{n}f_{L}(\bX_{i},\balpha_{i})-\lim_{n\rightarrow\infty}\frac{1}{n}\sum_{i=1}^{n}f_{L}(\bX_{i},\balpha_{i})-\epsilon.
\end{equation*}
Based on these two inequalities, we have that for each $\btheta$,
\begin{align*}
 & \left|\frac{1}{n}\sum_{i=1}^{n}F(\bX_{i},\btheta)\ell(\bX_{i},\balpha_{i})- \mathds{E}_{\bX\sim P}\left[ F(\bX,\btheta)\right]\right|\\
\leq & \epsilon+\max_{f=f_{L}\text{or \ensuremath{f_{R}};(\ensuremath{f_{L},f_{R}})\ensuremath{\in K_{\epsilon}}}}\left\{ \left|\frac{1}{n}\sum_{i=1}^{n}f(\bX_{i},\balpha_{i})-\lim_{n\rightarrow\infty}\frac{1}{n}\sum_{i=1}^{n}f(\bX_{i},\balpha_{i})\right|\right\}. 
\end{align*}

Taking supremum over $\btheta$, and let $n\rightarrow\infty$, since
$K_{\epsilon}$ is a finite set, we have that 
\[
\lim_{n\rightarrow\infty}\sup_{\betheta\in \Theta}\left|\frac{1}{n}\sum_{i=1}^{n}F(\bX_{i},\btheta)\ell(\bX_{i},\balpha_{i})-\mathds{E}_{\bX\sim P}\left[F(\bX,\btheta)\right]\right|\leq\epsilon.
\]
Then since $\epsilon$ is arbitrary, we have that
\[
\lim_{n\rightarrow\infty}\sup_{\betheta\in \Theta}\left|\frac{1}{n}\sum_{i=1}^{n}F(\bX_{i},\btheta)\ell(\bX_{i},\bealpha_{i})-\mathds{E}_{\bX\sim P}\left[F(\bX,\btheta)\right]\right|=0~a.s.
\]
\hfill{$\Box$}\end{proof}

\subsection{Proof of Theorem \ref{prop:SAA_consistency_multidim} (Multidimensional Version of Theorem \ref{prop:SAA_consistency})} \label{appx:proof_Thm2}
\proof{Proof:}
Denote $\boldf_{n}(\btheta)=\sum_{i=1}^{n}\bF(\bX_{i},\btheta)\ell(\bX_{i},\balpha_{i})/n$
and $\boldf(\btheta)=\mathds{E}_{\bX\sim P}\left[\bF(\bX,\btheta)\right]$. 
According to Lemma \ref{Lem: unif_converg_aver_obj_multidim}, $\boldf_{n}(\btheta)$ is uniformly convergent to $\boldf(\btheta)$. 
For each $\epsilon>0$, a.s., there exists $N_1(\epsilon)$ such that when $n\geq N_1(\epsilon)$,
\[
\sup_{\betheta\in \mathbb{R}^{d}}\left\Vert\boldf_{n}(\btheta)-\boldf(\btheta)\right\Vert \leq \frac{1}{2}\inf_{\Vert\betheta-\betheta^{*}\Vert\geq\epsilon}\left\Vert \boldf(\btheta)\right\Vert. 
\]
Since $\boldf_{n}(\hat{\btheta}_{n})\rightarrow \bzero$, from the last inequality
we have that there exists $N_{2}(\epsilon)$ such that 
\[
\Vert \boldf(\hat{\btheta}_{n}) \Vert \leq\frac{3}{4}\inf_{\Vert\betheta - \betheta^{*}\Vert\geq\epsilon}\left\Vert \boldf(\btheta)\right\Vert, 
\]
when $n\geq N_{2}(\epsilon)$. This implies $\Vert \hat{\btheta}_{n}-\btheta^{*}\Vert <\epsilon$
when $n\geq\max\{N_1(\epsilon),N_{2}(\epsilon)\}$. Since $\epsilon>0$
is arbitrary, we conclude that $\hat{\btheta}_{n}\rightarrow\btheta^{*}~a.s.$ \hfill{$\Box$}
\endproof

\subsection{Proof of Theorem \ref{thm: SAA_CLT_multidim} (Multidimensional Version of Theorem \ref{thm: SAA_CLT})}\label{appx:proof_of_A2}
To prove Theorem \ref{thm: SAA_CLT_multidim}, we need several lemmas. We first state these lemmas and their proofs (or references), and then prove the main theorem. 

\subsubsection{Consistency of $\bealpha_i$.}
We begin by showing the consistency of $\bealpha_i$. 
\begin{lemma}\label{lem: consistent_alpha}
Suppose that $I(\betheta,\bJ)$ is a continuous function and Assumptions \ref{assu: truncation_alpha_Multidim}-\ref{assu: Obj_regular_multidim} and Assumption \ref{assu: GC_Jacobian} hold. Then $\bealpha_i\rightarrow\bealpha^*$ a.s.
\end{lemma}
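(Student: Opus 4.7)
\proof{Proof proposal:}
The plan is to combine the already-established consistency $\hat{\btheta}_n\to\btheta^*$ a.s.\ with a uniform law of large numbers for the Jacobian estimator $\hat{\bJ}_n$, and then push the convergence through the continuous map $I$ and the projection.

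First I would invoke Theorem \ref{prop:SAA_consistency_multidim} (whose hypotheses are Assumptions \ref{assu: truncation_alpha_Multidim}--\ref{assu: Obj_regular_multidim}) to obtain $\hat{\btheta}_n\to\btheta^*$ a.s. The next and most substantive step is to show that $\hat{\bJ}_n\to\bJ(\btheta^*)$ a.s. By linearity of differentiation we may write
\[
\hat{\bJ}_n=\frac{1}{n}\sum_{i=1}^{n}\frac{D}{D\btheta}\bF(\bX_i,\hat{\btheta}_n)\ell(\bX_i,\balpha_i),
\]
so that each entry is a sample-average of a function in the class $\sfF_{D,\delta}$ defined in Assumption \ref{assu: GC_Jacobian}. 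That assumption is exactly the bracketing-number hypothesis required to apply Lemma \ref{Lem: unif_converg_aver_obj_multidim} (itself resting on Assumption \ref{assu: truncation_alpha_Multidim} for the pointwise martingale convergence) to the function class $\sfF_{D,\delta}$, giving the uniform convergence
\[
\sup_{\btheta\in\Theta_\delta}\left\|\frac{1}{n}\sum_{i=1}^n\frac{D}{D\btheta}\bF(\bX_i,\btheta)\ell(\bX_i,\balpha_i)-\bJ(\btheta)\right\|\to 0\ \text{a.s.}
\]
Because $\hat{\btheta}_n\to\btheta^*$ a.s., eventually $\hat{\btheta}_n\in\Theta_\delta$, and since $\bJ$ is continuous at $\btheta^*$ by Assumption \ref{assu: Obj_regular_multidim}, combining the two displays yields $\hat{\bJ}_n\to\bJ(\btheta^*)$ a.s.

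Finally, continuity of $I$ gives $I(\hat{\btheta}_n,\hat{\bJ}_n)\to I(\btheta^*,\bJ(\btheta^*))=\bealpha^*$ a.s. To close the argument, I would note that by the construction of the nested truncation sets $A_1\subset A_2\subset\cdots$ with $\bealpha^*\in\bigcup_i A_i$, there exists (deterministic) $N$ such that $\bealpha^*$ lies in the interior of $A_n$ for all $n\geq N$; hence for large enough $n$, the projection $\Pi_{A_{n+1}}$ acts as the identity on any sufficiently small neighborhood of $\bealpha^*$, and so $\bealpha_{n+1}=\Pi_{A_{n+1}}[I(\hat{\btheta}_n,\hat{\bJ}_n)]\to\bealpha^*$ a.s.

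The main obstacle is the uniform-in-$\btheta$ convergence of the Jacobian estimator, since naive pointwise convergence of $\hat{\bJ}_n$ at the random argument $\hat{\btheta}_n$ would not suffice; this is precisely why Assumption \ref{assu: GC_Jacobian} is imposed, in order to reuse the bracketing-number machinery of Lemma \ref{Lem: unif_converg_aver_obj_multidim} on the derivative class. A minor subtlety is the interchange of differentiation and the finite sum above, which is immediate as long as the individual summands are differentiable; a more delicate interchange of derivative and expectation (to identify the limit as $\bJ(\btheta)$) is already embedded in the regularity needed for Assumption \ref{assu: Obj_regular_multidim}. \hfill\Halmos
\endproof
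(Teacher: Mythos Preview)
Your proposal is correct and follows essentially the same approach as the paper: invoke Theorem~\ref{prop:SAA_consistency_multidim} for $\hat{\btheta}_n\to\btheta^*$, use Assumption~\ref{assu: GC_Jacobian} together with the argument of Lemma~\ref{Lem: unif_converg_aver_obj_multidim} to get uniform convergence of the Jacobian estimator on $\Theta_\delta$, combine to obtain $\hat{\bJ}_n\to\bJ(\btheta^*)$, and finish by continuity of $I$. Your explicit treatment of the projection $\Pi_{A_{n+1}}$ and of the continuity of $\bJ$ at $\btheta^*$ are minor elaborations that the paper leaves implicit, not a different route.
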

\proof{Proof of Lemma \ref{lem: consistent_alpha}:}
From Theorem \ref{prop:SAA_consistency_multidim}, we know that $\hat{\betheta}_n\rightarrow\betheta^*~a.s.$ From Assumption \ref{assu: GC_Jacobian} and with the same proof in \ref{appx:prooflemma2}, we have that 
\[
\hat \bJ_{n}(\btheta):=\frac{1}{n}\sum_{i=1}^{n}\frac{D}{D\btheta}\bF(\bX_{i},\btheta)\ell(\bX_{i},\balpha_{i})
\]
converges to $\bJ(\theta)$ uniformly for $\btheta\in\Theta_{\delta}$. Thus from the consistency of $\hat{\btheta}_{n}$, we have that $\bJ_{n}(\hat{\btheta}_{n})\rightarrow \bJ(\btheta^{*})$. So from the continuity of I, we have shown the result.
\hfill{$\Box$}
\endproof

\subsubsection{Asymptotic Equicontinuity.}
We next show a lemma about asymptotic equicontinuity. To describe the result, we introduce a notion called outer probability. Let $P^*$ denote the {\em outer probability} of a subset $B$ of $\Omega$, i.e., 
\[
P^*\{B\} = \inf\{P(A): B\subset A,A\in\mathcal{F}\}.
\]
Notice that, when $B$ is measurable, $P^*(B)$ is just the probability $P(B)$, and recall that for each measurable function $g$ 
\[
S_{n}(g)=\sum_{i=1}^n V_{n,i}(g)\text{~~with~~}V_{n,i}(g)=\frac{g(\bX_{i})\ell(\bX_{i},\balpha_{i})-\mathds{E}_{\bX\sim P}[g(\bX)]}{\sqrt{n}}.
\]
Then we can state the asymptotic equicontinuity in the following lemma.

\begin{lemma} \label{lem: Asymptotic_equicontinuity}
Suppose Assumptions \ref{assu:(uniform-integrable-entropy_multidim)} and \ref{assu:(Lindeberg-Feller_multidim)} hold. Then
for each dimension $k$, given $\epsilon>0$ and $\gamma>0$, there
exists $\eta>0$ for which 
\[
\limsup_{n\rightarrow\infty}P^{*}\left\{\sup_{\rho(f,f_{\betheta^*}^k)\leq\eta,f\in\sfF^{(k)} _\delta}\left|S_{n}(f)-S_{n}(f_{\betheta^*}^{(k)})\right|>\gamma\right\}\leq\epsilon.
\]
\end{lemma}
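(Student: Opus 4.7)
The plan is to follow a chaining argument adapted to martingale difference arrays, in the spirit of Nishiyama (2000), leveraging the two ingredients provided by Assumption \ref{assu:(uniform-integrable-entropy_multidim)}: (i) a hierarchy of covers $\Pi(\epsilon)$ whose conditional $L_2$-diameters are uniformly controlled, and (ii) an entropy integral that is finite. The Lindeberg condition in Assumption \ref{assu:(Lindeberg-Feller_multidim)} will be used to kill the truncation residual when invoking a Bernstein/Freedman-type maximal inequality for martingales.

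First, I would set up a dyadic chain. Choose $j_0 = j_0(\eta)$ with $2^{-j_0} \asymp \eta$, and for each $j \geq j_0$ consider the cover $\Pi(2^{-j}) = \{\sfF(2^{-j}; m): 1 \le m \le N_\Pi(2^{-j})\}$. For every $f \in \sfF^{(k)}_\delta$ with $\rho(f, f^{(k)}_{\betheta^*}) \le \eta$, pick a ``link'' $\pi_j(f)$ in one of the sets of $\Pi(2^{-j})$ that contains $f$, with $\pi_{j_0}(f) = f^{(k)}_{\betheta^*}$ up to an error absorbed into the initial step. Then write the telescoping identity
\begin{equation*}
S_n(f) - S_n(f^{(k)}_{\betheta^*}) = \sum_{j \ge j_0}\bigl[S_n(\pi_{j+1}(f)) - S_n(\pi_j(f))\bigr],
\end{equation*}
so that
\begin{equation*}
\sup_{\rho(f,f^{(k)}_{\betheta^*}) \le \eta} \bigl|S_n(f) - S_n(f^{(k)}_{\betheta^*})\bigr| \le \sum_{j \ge j_0}\, \max_{1 \le m \le N_\Pi(2^{-j})} \bigl| S_n\bigl(V_{n,\cdot}(\sfF(2^{-j}; m))\bigr)\bigr|,
\end{equation*}
where each increment lies in a set of $L_2$-radius $O(2^{-j})$ and is dominated pointwise by the adapted envelope $V_{n,i}(\sfF(2^{-j}; m))$.

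The central step is then a conditional-variance maximal inequality. For each $j$ and $m$, the conditional variance of the martingale
$\sum_{i=1}^n V_{n,i}(\sfF(2^{-j}; m))$
is $O_p((2^{-j})^2)$ by the first displayed bound in Assumption \ref{assu:(uniform-integrable-entropy_multidim)}. I would apply a truncation at a level $c_n \to 0$: on the event that every $V_{n,i}(E) \le c_n$, Freedman's inequality gives, for any $\lambda > 0$,
\begin{equation*}
P\!\left\{\max_m \bigl|S_n(\Delta_{j,m})\bigr| > \lambda\right\} \le 2 N_\Pi(2^{-j}) \exp\!\left(-\frac{\lambda^2}{C\,(2^{-j})^2 + c_n \lambda}\right),
\end{equation*}
where $\Delta_{j,m}$ denotes the relevant bracket. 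Choosing $\lambda = \lambda_j \asymp 2^{-j}\sqrt{1 + \log N_\Pi(2^{-j})}$ makes the union bound summable and yields, after summing over $j \ge j_0$, a bound of order
\begin{equation*}
\sum_{j \ge j_0} 2^{-j}\sqrt{1 + \log N_\Pi(2^{-j})} \;\lesssim\; \int_0^{\eta}\sqrt{\log N_\Pi(\epsilon)}\,d\epsilon,
\end{equation*}
which tends to $0$ as $\eta \downarrow 0$ by the finiteness of the entropy integral. The complementary event $\{\max_{i,n} V_{n,i}(E) > c_n\}$ has probability tending to zero by Assumption \ref{assu:(Lindeberg-Feller_multidim)}: Lindeberg's condition on $V_{n,i}(E)$ implies $\max_i V_{n,i}(E) \to 0$ in probability, so a suitable $c_n \downarrow 0$ can be chosen.

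The main obstacle is ensuring the martingale Bernstein/Freedman step is applied to a sufficiently rich but countable sub-collection so that ``$\sup$'' is measurable (hence the use of outer probability $P^*$) and so that the envelope-based dominating functions remain adapted; this is where stating Assumption \ref{assu:(uniform-integrable-entropy_multidim)} in terms of the $\mathcal{F}_i$-measurable envelopes $V_{n,j}(\sfF(\epsilon;m))$, rather than the raw $L_2$-diameter of the bracket under $P$, becomes crucial, and handling this rigorously (together with the separability-type passage from the countable dense subset provided by restricting $\epsilon$ to $\mathbb{Q}\cap(0,\Delta_\Pi]$) is the technically delicate part. Combining the three ingredients — telescoping over the chain, Freedman on each link with uniform conditional-variance control, and Lindeberg-driven truncation — then yields the claimed $\limsup P^*\{\cdot\} \le \epsilon$ upon taking $\eta$ small enough.
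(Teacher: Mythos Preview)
Your sketch is a valid plan and is essentially the content of Nishiyama's chaining argument carried out directly. The paper takes a different, shorter route: it invokes a discrete-time version of Lemma~3.3 in \cite{nishiyama2000} as a black box, which (after converting the covers of Assumption~\ref{assu:(uniform-integrable-entropy_multidim)} into a DFP by deleting overlaps) yields a \emph{finite partition} $\{\sfF_j\}_{1\le j\le N}$ of $\sfF^{(k)}_\delta$ with
\[
\limsup_{n\to\infty} P^*\Bigl\{\sup_{1\le j\le N}\ \sup_{f,g\in\sfF_j}\bigl|S_n(f)-S_n(g)\bigr|>\gamma\Bigr\}\le\epsilon.
\]
The paper then needs one extra step that your approach does not: it must argue that the partition cell containing $f^{(k)}_{\betheta^*}$ actually contains a full $\rho$-ball around it, so that every $f$ with $\rho(f,f^{(k)}_{\betheta^*})\le\eta$ lies in the same cell. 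This is done by observing that the sets in Assumption~\ref{assu:(uniform-integrable-entropy_multidim)} are open $\epsilon$-balls and tracing through Nishiyama's NFP-from-DFP construction (his Lemma~2.4) to verify that interior points are preserved. Your direct chaining, anchored at $f^{(k)}_{\betheta^*}$, sidesteps this topological detour at the cost of redoing the chaining yourself; in particular you would still need to build a nested refinement so that successive links $\pi_j(f),\pi_{j+1}(f)$ share a bracket, re-center after truncation to keep the martingale structure before applying Freedman, and make the maximal bound uniform over the countable sub-collection indexed by $\mathbb{Q}\cap(0,\Delta_\Pi]$---all standard, but precisely the work the paper outsources to the reference.
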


To prove this lemma, we first introduce a weak convergence result shown in \cite{nishiyama2000}.

To this end, we introduce some notations from Definition 2.2 of \cite{nishiyama2000}. 
Let $\sfF$ be any arbitrary set. $\Pi=\{\Pi(\epsilon)\}_{\epsilon\in(0,\Delta_{\Pi}]} $, where $\Delta_{\Pi}\in(0,\infty)\cap\mathbb{Q}$, is called a decreasing series of finite partitions (abb. DFP) [resp., nested series of finite partitions (abb. NFP)] of $\sfF$ if it satisfies the following (i), (ii) and (iii) [resp., (i), (ii), and (iii)']:
(i) Each $\Pi(\epsilon)=\{\sfF(\epsilon;k):1\leq k\leq N_{\Pi}(\epsilon)\}$ is a finite partition of $\sfF$.
(ii) $N_{\Pi}(\Delta_{\Pi})=1$ and $\lim_{\epsilon\rightarrow 0^+}N_{\Pi}(\epsilon)=\infty$.
(iii) $N_{\Pi}(\epsilon)\geq N_{\Pi}(\epsilon^{\prime})$ whenever $\epsilon\leq\epsilon^{\prime}$.
(iii') $\Pi(\epsilon)\supset\Pi(\epsilon^{\prime})$ whenever $\epsilon\leq\epsilon^{\prime}$.
From \cite{nishiyama2000} we can conclude the following result.
\begin{lemma}\label{lem: in_Nishiyama} (Discrete-time version of Lemma 3.3 of \citealt{nishiyama2000}) 
For any $\delta>0$, if Assumption \ref{assu:(Lindeberg-Feller_multidim)} holds, and there exists a DFP $\Pi$ of $\sfF^{(k)}_{\delta}$ such that:
\begin{flalign}
(a)~~~~~~~~~~~~~~~~~~~~~~\sup_{\epsilon\in(0,\Delta_{\Pi}]\cap \mathbb{Q}} \max_{1\leq m\leq N_{\Pi}(\epsilon)}\frac{\sqrt{\sum_{j=1}^{n}\mathds{E}_{j-1}\left\vert V_{n,j} (\sfF(\epsilon;m))\right\vert^2}}{\epsilon}=&O_p(1),&\nonumber
\end{flalign}
where for a set $\sfF^{\prime}$, $V_{n,j}(\sfF^{\prime})$ is defined as the smallest $\mathcal{F}_i$-measurable function that is greater than $\sup_{f,g\in \sfF^{\prime}}\left\vert V_{n,j}(f)-V_{n,j}(g)\right\vert$, 
and
\begin{flalign}
(b)~~~~~~~~~~~~~~~~~~~~~~~~~~~~~~~~~~~~~~~~~~~~\int_0^{\Delta_{\Pi}}\sqrt{\log N_{\Pi}(\epsilon)} d\epsilon<&\infty.&\nonumber
\end{flalign}
Then, for any $\epsilon,\gamma$, there exists a finite partition $\{\sfF_{j}:1\leq j\leq N\}$ of $\sfF^{(k)}_{\delta}$ such that 
\[
\limsup_{n\rightarrow \infty} P^* \left(\sup_{1\leq j\leq N, f,g \in \sfF_{j}}|S_n(f)-S_n(g)|>\gamma\right)\leq \epsilon.
\]
\end{lemma}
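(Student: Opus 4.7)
The plan is to establish the partition-based maximal inequality via a chaining argument along the DFP $\Pi$, adapting the continuous-time martingale argument of Nishiyama (2000) to the discrete-time setting. The output partition $\{\sfF_j : 1 \leq j \leq N\}$ will be the finest partition $\Pi(\epsilon_J)$ at a sufficiently small dyadic scale $\epsilon_J = \Delta_\Pi 2^{-J}$; the oscillation of $S_n(f)-S_n(g)$ within a common cell of $\Pi(\epsilon_J)$ will be controlled by telescoping through the coarser partitions $\Pi(\epsilon_j)$ and bounding each chain increment by a martingale maximal inequality.

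First I would set up the chain. Fix $\epsilon,\gamma>0$, let $\epsilon_j=\Delta_\Pi 2^{-j}$, and for each $f\in\sfF^{(k)}_\delta$ let $\pi_j(f)\in\Pi(\epsilon_j)$ be the unique cell containing $f$, with a chosen representative $f_j\in\pi_j(f)$. Decompose
\[
S_n(f)-S_n(f_0)\;=\;\sum_{j=1}^{J}\bigl[S_n(f_j)-S_n(f_{j-1})\bigr]\;+\;\bigl[S_n(f)-S_n(f_J)\bigr].
\]
Each chain increment $S_n(f_j)-S_n(f_{j-1})=\sum_{i=1}^n[V_{n,i}(f_j)-V_{n,i}(f_{j-1})]$ is a sum of martingale differences whose pointwise magnitude is dominated by the envelope $V_{n,i}(\pi_{j-1}(f))$; hence its predictable quadratic variation $\sum_{i=1}^n\mathds{E}_{i-1}|V_{n,i}(f_j)-V_{n,i}(f_{j-1})|^2$ is $O_p(\epsilon_{j-1}^2)$ uniformly over the at most $N_\Pi(\epsilon_j)^2$ possible pairs $(f_j,f_{j-1})$, by hypothesis (a).

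Next I would apply a Freedman-type martingale Bernstein inequality at each chain level. Since the jumps are not uniformly bounded (the likelihood ratio is unbounded), I would truncate them at a level $R_n\to\infty$ tied to the scale $\epsilon_j$, writing $V_{n,i}-V_{n,i}\mathbf{1}\{|V_{n,i}|>R_n\}$ plus a tail remainder; Assumption \ref{assu:(Lindeberg-Feller_multidim)} forces the contribution from the tail remainders (across all $j$ and all cells) to be $o_p(1)$, while Freedman's inequality applies to the truncated part. Taking threshold $\delta_j=C\epsilon_{j-1}\sqrt{\log N_\Pi(\epsilon_j)}$ and a union bound over the $N_\Pi(\epsilon_j)^2$ cells, the probability that some chain increment at level $j$ exceeds $\delta_j$ is at most $2 N_\Pi(\epsilon_j)^2 e^{-c(\log N_\Pi(\epsilon_j))}$, summable over $j$ for large enough $C$. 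Summing the thresholds gives
\[
\sum_{j=1}^{J}\delta_j\;\lesssim\;C\int_{\epsilon_J}^{\Delta_\Pi}\sqrt{\log N_\Pi(\epsilon)}\,d\epsilon,
\]
which is finite by hypothesis (b); choosing $C$ large, the total chain contribution stays below $\gamma/2$ with outer probability at least $1-\epsilon/2$.

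Finally I would handle the residual $S_n(f)-S_n(f_J)$ uniformly over $f$ inside a common cell of $\Pi(\epsilon_J)$. This supremum is bounded by $\sum_{i=1}^n V_{n,i}(\pi_J(f))$, whose predictable variance is $O_p(\epsilon_J^2)$ again by (a); an analogous Freedman-plus-truncation bound lets me pick $J$ large enough that the residual is below $\gamma/2$ with outer probability at least $1-\epsilon/2$. Combining the two events by union bound yields the claimed partition inequality with $\{\sfF_j\}_{j=1}^{N}=\Pi(\epsilon_J)$ and $N=N_\Pi(\epsilon_J)$. The main obstacle will be the truncation step: because $V_{n,i}$ inherits the unbounded likelihood ratio $\ell(\bX_i,\balpha_i)$, the Bernstein-type tail needs a careful calibration of $R_n$ versus $\epsilon_j$ so that Assumption \ref{assu:(Lindeberg-Feller_multidim)} kills the untruncated tails on a single event that works simultaneously across all scales $j=1,\dots,J$ and all $N_\Pi(\epsilon_j)^2$ cells, without blowing up the entropy-integral bound.
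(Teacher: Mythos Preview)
The paper does not prove this lemma from scratch: its entire argument is to observe that Assumption~\ref{assu:(Lindeberg-Feller_multidim)} matches Nishiyama's condition [L2$'$], that conditions (a)--(b) match his [PE$'$], and that Nishiyama's Section~4 carries the discrete-time analogue of his Section~3 verbatim, so the conclusion holds by citation. Your proposal instead reconstructs the underlying chaining proof.

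The skeleton you give---dyadic chaining along the partition scales, Freedman/Bernstein with truncation at each level, Lindeberg to absorb the unbounded-jump tails, entropy integral to sum the thresholds---is exactly the machinery behind Nishiyama's result, so the approach is right. Two points need tightening. First, the link step ``dominated by the envelope $V_{n,i}(\pi_{j-1}(f))$'' silently assumes the representative $f_j\in\pi_j(f)$ also lies in $\pi_{j-1}(f)$, i.e.\ that the partitions are nested; the hypothesis only gives a DFP, and one must first refine it to an NFP via Nishiyama's Lemma~2.4 (a step the paper invokes explicitly when it applies this lemma in the proof of Lemma~\ref{lem: Asymptotic_equicontinuity}). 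Second, your handling of the residual $\sup_{f\in\pi_J(f)}|S_n(f)-S_n(f_J)|$ does not work as written: bounding by the envelope sum $\sum_i V_{n,i}(\pi_J(f))$ destroys the martingale structure, and its compensator is of order $\sqrt{n}\,\epsilon_J$, which diverges for fixed $J$; a single Freedman step applies to one martingale, not to an uncountable supremum. The standard fix is to continue the chain beyond level $J$, so that the within-cell oscillation is controlled by $2\sum_{j>J}\delta_j\lesssim\int_0^{\epsilon_J}\sqrt{\log N_\Pi(\epsilon)}\,d\epsilon$, which is small for large $J$ by hypothesis~(b). With those two adjustments your reconstruction goes through, though for the paper's purposes the one-line citation suffices.
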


\proof{Proof:}
We observe that Assumption \ref{assu:(Lindeberg-Feller)} is equivalent to [L2'] in \cite{nishiyama2000} and the other conditions in this lemma is equivalent to [PE'] in \cite{nishiyama2000}. In \cite{nishiyama2000}, the discussion for the continuous-time model in Section 3 and the discussion for the discrete-time model in Section 4 are parallel. So this discrete-time version of Lemma 3.3 of \cite{nishiyama2000} holds. \hfill{$\Box$}
\endproof

\textbf{Proof of Lemma \ref{lem: Asymptotic_equicontinuity}}
Now we can prove Lemma \ref{lem: Asymptotic_equicontinuity} based on Lemma \ref{lem: in_Nishiyama}.
\proof{Proof of Lemma \ref{lem: Asymptotic_equicontinuity}.}
 For each $1\leq k\leq d$, let $\Pi_0=\{\Pi_0(\epsilon)\}_{\epsilon\in (0,\Delta_{\Pi})}$ be the set of covers for $\sfF_{\delta}^{(k)}$ in Assumption \ref{assu:(uniform-integrable-entropy_multidim)}. And write the elements in $\Pi_0(\epsilon)$ as $\Pi_0(\epsilon)=\{\sfF_0(\epsilon;m),1\leq m\leq N_{\Pi}(\epsilon)\}$. As we can see, the conditions required in Lemma \ref{lem: in_Nishiyama} have similar forms as Assumption \ref{assu:(uniform-integrable-entropy_multidim)}. Indeed, it is not hard to construct a DFP from $\Pi_0$ which satisfy the conditions in Lemma \ref{lem: in_Nishiyama}. To satisfy (i) in the definition of DFP, we just need to delete the overlapping areas in some sets in $\Pi_0(\epsilon)$ to make it a partition. Since this operation will not make any set larger and it does not change the number of sets, the conditions required for Lemma \ref{lem: in_Nishiyama} are inherited from Assumption \ref{assu:(uniform-integrable-entropy_multidim)}.
So with Lemma \ref{lem: in_Nishiyama}, we have that there exists a finite partition $\{\sfF_j^{(k)}:1\leq j\leq J\}$ of $\sfF_{\delta}^{(k)}$ such that
\begin{equation}\label{eq: partition_finite}
\limsup_{n\rightarrow \infty} P^* \left(\sup_{1\leq j\leq J, f,g \in \sfF^{(k)}_j}|S_n(f)-S_n(g)|>\gamma\right)\leq \epsilon.
\end{equation}

Based on this, to derive the result of Lemma \ref{lem: Asymptotic_equicontinuity} from the above inequality, it suffices to explain that one of the sets in $\sfF^{(k)}_j, 1\leq j\leq J$  contains $f_{\betheta^*}^k$ as an interior point under $L_2$-distance $\rho$. This claim can be seen from the proof in \cite{nishiyama2000}. Indeed, by Assumption \ref{assu:(uniform-integrable-entropy_multidim)}, the sets in $\Pi_0$ are $\epsilon$-balls, which are open sets under $L_2$-distance $\rho$.  
In addition, from the proof of Lemma 3.3 of \cite{nishiyama2000}, we can see that the sets $\sfF^{(k)}_j$ in \eqref{eq: partition_finite} comes from a NFP (denote by $\{\Pi^{\prime}(\epsilon)\}$) which is constructed (using the procedure depicted in Lemma 2.4 of \citealt{nishiyama2000}) from the DFP $\{\Pi(\epsilon)\}$. And by checking the construction in the proof of Lemma 2.4 of \cite{nishiyama2000} we can check that, if each $\Pi(\epsilon)$ contains a set that contains $f_{\betheta^*}^k$ as an interior point, then for each $\epsilon$, $\Pi^{\prime}(\epsilon)$  also has a set that contains $f_{\betheta^*}^k$ as interior point. And this implies $f_{\betheta^*}$ is an interior point of one of $\sfF^{(k)}_j$ in \eqref{eq: partition_finite}. \hfill{$\Box$}
\endproof

\subsubsection{Asymptotic Normality Result for the Sample Average Estimator.}

Lemma \ref{lem: Asymptotic_equicontinuity} provides the asymptotic equicontinuity for the sum of martingale difference array at  $f_{\betheta^*}$. 
Then, given the choices of $\balpha_i,i=1,2,\ldots$, we can use the result in \cite{vanWell1996}, given by Lemma \ref{lem: Thm3.3.1}, to obtain the asymptotic normality of $\hat{\btheta}_{n}$. 
\begin{lemma}(Theorem 3.3.1, \citealt{vanWell1996}) \label{lem: Thm3.3.1}
Let $\Psi_{n}$ and $\Psi$ be random maps and a fixed map, respectively, from $\bTheta$ into a Banach space such that
\begin{equation}\label{eq:3.3.2}
 \sqrt{n}\left(\Psi_{n}-\Psi\right)\left(\hat{\betheta}_{n}\right)-\sqrt{n}\left(\Psi_{n}-\Psi\right)\left(\betheta^*\right)=o_{P}\left(1+\sqrt{n}\left\|\hat{\betheta}_{n}-\betheta^*\right\|\right),
 \end{equation}
and such that the sequence $\sqrt{n}\left(\Psi_{n}-\Psi\right)\left(\betheta^*\right)$ converges in distribution to a tight random element $Z .$ Let $\betheta \mapsto \Psi(\betheta)$ be Fr\'{e}chet-differentiable at $\betheta^*$ with a continuously invertible derivative $\dot\Psi_{\betheta^*}$.
If $\Psi\left(\betheta^*\right)=0$ and $\hat{\betheta}_{n}$ satisfies $\Psi_{n}\left(\hat{\betheta}_{n}\right)=o_{P}\left(n^{-1 / 2}\right)$ and converges in outer probability to $\betheta^*,$ then
\[
\sqrt{n} \dot{\Psi}_{\betheta^*}\left(\hat{\betheta}_{n}-\betheta^*\right)=-\sqrt{n}\left(\Psi_{n}-\Psi\right)\left(\betheta^*\right)+o_{P}(1).
\]
Consequently, $\sqrt{n}\left(\hat{\betheta}_{n}-\betheta^*\right) \leadsto-\dot{\Psi}_{\betheta^*}^{-1} Z .$ If it is known that the sequence $\sqrt{n}\left\|\hat{\betheta}_{n}-\betheta^*\right\|$ is asymptotically tight, then the first conclusion is valid without the assumption of continuous invertibility of $\dot{\Psi}_{\betheta^*} .$ If it is known that $\sqrt{n}\left(\hat{\betheta}_{n}-\betheta^*\right)$ is asymptotically tight, then it suffices that $\Psi$ is Hadamard-differentiable.
\end{lemma}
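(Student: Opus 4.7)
The plan is to prove the displayed master equation by combining the stochastic equicontinuity hypothesis (3.3.2), Fréchet differentiability of $\Psi$, and the tightness of $\sqrt{n}(\Psi_n-\Psi)(\betheta^*)$, then deducing the weak limit via the continuous mapping theorem applied to $\dot\Psi_{\betheta^*}^{-1}$. The starting point is the algebraic identity
\[
\Psi_n(\hat\betheta_n)-\Psi(\betheta^*)=\bigl[(\Psi_n-\Psi)(\hat\betheta_n)-(\Psi_n-\Psi)(\betheta^*)\bigr]+(\Psi_n-\Psi)(\betheta^*)+\bigl[\Psi(\hat\betheta_n)-\Psi(\betheta^*)\bigr].
\]
Using $\Psi_n(\hat\betheta_n)=o_P(n^{-1/2})$, $\Psi(\betheta^*)=0$, hypothesis (3.3.2) on the bracketed term, and Fréchet differentiability on the last bracket, I would rearrange to
\[
\dot\Psi_{\betheta^*}(\hat\betheta_n-\betheta^*)=-(\Psi_n-\Psi)(\betheta^*)+o_P\!\bigl(n^{-1/2}+\|\hat\betheta_n-\betheta^*\|\bigr).
\]

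Next I would carry out a bootstrap argument to upgrade the error term. Since $\dot\Psi_{\betheta^*}$ is continuously invertible, there is a constant $c>0$ with $\|\dot\Psi_{\betheta^*} h\|\ge c\|h\|$ for every $h$. Applying this to $h=\hat\betheta_n-\betheta^*$ yields
\[
c\,\|\hat\betheta_n-\betheta^*\|\le \|(\Psi_n-\Psi)(\betheta^*)\|+o_P\!\bigl(n^{-1/2}+\|\hat\betheta_n-\betheta^*\|\bigr).
\]
Tightness of $\sqrt{n}(\Psi_n-\Psi)(\betheta^*)$ forces $\|(\Psi_n-\Psi)(\betheta^*)\|=O_P(n^{-1/2})$, so absorbing the $o_P(\|\hat\betheta_n-\betheta^*\|)$ term into the left side proves $\|\hat\betheta_n-\betheta^*\|=O_P(n^{-1/2})$. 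Plugging this rate back into the previous display collapses the remainder to $o_P(n^{-1/2})$, giving the claimed identity. Multiplying through by $\sqrt{n}$ and applying the continuous linear map $-\dot\Psi_{\betheta^*}^{-1}$ together with Slutsky's lemma and the assumed weak convergence to $Z$ delivers $\sqrt{n}(\hat\betheta_n-\betheta^*)\leadsto -\dot\Psi_{\betheta^*}^{-1}Z$.

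For the two addenda I would argue as follows. If $\sqrt{n}\|\hat\betheta_n-\betheta^*\|$ is already known to be asymptotically tight, the bootstrap step is unnecessary because the $o_P(n^{-1/2}+\|\hat\betheta_n-\betheta^*\|)$ remainder is immediately $o_P(n^{-1/2})$; hence continuous invertibility of $\dot\Psi_{\betheta^*}$ is not needed to reach the stated linearization. If one instead only has Hadamard differentiability and $\sqrt{n}(\hat\betheta_n-\betheta^*)$ is tight, I would replace the Fréchet expansion by the Hadamard-differentiability formulation: along any subsequence on which $\sqrt{n}(\hat\betheta_n-\betheta^*)$ converges in distribution to some $H$, the Hadamard property applied with $t_n=n^{-1/2}$ and direction $\sqrt{n}(\hat\betheta_n-\betheta^*)$ gives $\sqrt{n}(\Psi(\hat\betheta_n)-\Psi(\betheta^*))=\dot\Psi_{\betheta^*}(\sqrt{n}(\hat\betheta_n-\betheta^*))+o_P(1)$, and the earlier identity then yields the same conclusion subsequentially; uniqueness of the limit extends it to the full sequence.

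The main obstacle, and the place where the argument is genuinely subtle, is the self-referential bootstrap that turns the weaker hypothesis $\|\hat\betheta_n-\betheta^*\|\stackrel{P}{\to}0$ into the sharp rate $O_P(n^{-1/2})$. The continuous invertibility of $\dot\Psi_{\betheta^*}$ is used in exactly one place, namely to produce the lower bound $\|\dot\Psi_{\betheta^*} h\|\gtrsim \|h\|$ that lets us solve for $\|\hat\betheta_n-\betheta^*\|$ on the left side after the $o_P(\|\hat\betheta_n-\betheta^*\|)$ term is dominated; everything else is bookkeeping with Slutsky and the continuous mapping theorem. The reason the theorem distinguishes its three conclusions is precisely to clarify whether this invertibility-based rate argument is required or can be bypassed by an a priori tightness assumption.
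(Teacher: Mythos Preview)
Your proof is correct and is essentially the standard argument from van der Vaart and Wellner (1996, Theorem~3.3.1). Note, however, that the paper does not give its own proof of this lemma at all: it is quoted verbatim as an external result and then applied in the proof of Theorem~\ref{thm: SAA_CLT_multidim}, so there is no ``paper's proof'' to compare against beyond the original reference, which your sketch faithfully reproduces.
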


\subsubsection{Final Proof of Theorem \ref{thm: SAA_CLT_multidim}.}\label{sc:finalproof_ThmA3}
Finally, we will provide the proof of Theorem \ref{thm: SAA_CLT_multidim}.
\proof{Proof of Theorem \ref{thm: SAA_CLT_multidim}:}
To prove this theorem, we use Lemma \ref{lem: Thm3.3.1}, and we will establish the conditions for Lemma \ref{lem: Thm3.3.1}. 
In our case, $\Psi_n$ in Lemma \ref{lem: Thm3.3.1} is the map
\[
\Psi_n(\betheta)=\sum_{i=1}^{n}\frac{\bF(\bX_{i},\btheta)\ell(\bX_{i},\balpha_{i})}{n}-\bc
\]
and $\Psi$ in Lemma \ref{lem: Thm3.3.1} is the map
\[
\Psi(\betheta)= \mathds{E}_{\bX\sim P}[\bF(\bX,\btheta)]-\bc.
\]
And it can be seen that 
\[
\Psi_n(\betheta)-\Psi(\betheta)=\sum_{i=1}^{n}\frac{\bF(\bX_{i},\btheta)\ell(\bX_{i},\balpha_{i})-\mathds{E}_{\bX\sim P}[\bF(\bX,\btheta)]}{\sqrt{n}}.
\]
Since $\Psi_n$ and $\Psi$ are functions that map from $\mathds{R}^d$ to $\mathds{R}^d$, we do not need to worry about the tightness issue as stated in Lemma \ref{lem: Thm3.3.1}. Also, the differentiability in the functional sense reduces to the usual differentiability which is implied by our smoothness assumption on the objective (Assumption \ref{assu: Obj_regular_multidim}). So to establish the conditions for Lemma \ref{lem: Thm3.3.1}, it suffices to verify condition \eqref{eq:3.3.2} and show the convergence 
\[
\sum_{i=1}^{n}\frac{\bF(\bX_{i},\btheta^{*})\ell(\bX_{i},\balpha_{i})-\mathds{E}_{\bX\sim P}[\bF(\bX,\btheta^{*})]}{\sqrt{n}}\Rightarrow \mathcal{N}(\bzero,\bSigma).
\]

For condition \eqref{eq:3.3.2}, we will show that
\begin{align*}
  \sum_{i=1}^{n}\frac{\bF(\bX_{i},\hat{\btheta}_{n})\ell(\bX_{i},\balpha_{i})-\mathds{E}_{\bX\sim P}[\bF(\bX,\hat{\btheta}_{n})]}{\sqrt{n}}-\sum_{i=1}^{n}\frac{\bF(\bX_{i},\btheta^*)\ell(\bX_{i},\balpha_{i})-\mathds{E}_{\bX\sim P}[\bF(\bX,\btheta^*)]}{\sqrt{n}}
=  o_{P}\left(1 \right).
\end{align*}
It suffices to show that, for each dimension of $\bF$, the above equality is correct.

Given strong consistency $\hat{\btheta}_{n}\rightarrow\btheta^{*}$, this could
be implied by Lemma \ref{lem: Asymptotic_equicontinuity}. To be more precise, from our definition of $S_n$, we can check that for
each dimension $k\in\{1,2,\ldots,d\}$, 
\begin{align*}
 & \sum_{i=1}^{n}\frac{\bF^{(k)}(\bX_{i},\btheta)\ell(\bX_{i},\balpha_{i})-\mathds{E}_{\bX\sim P}\left[\bF^{(k)}(\bX,\btheta)\right]}{\sqrt{n}}-\sum_{i=1}^{n}\frac{\bF^{(k)}(\bX_{i},\btheta)\ell(\bX_{i},\balpha_{i})-\mathds{E}_{\bX\sim P}\left[\bF^{(k)}(\bX,\btheta)\right]}{\sqrt{n}}\\
= & S_{n}\left(f_{\hat{\betheta}_{n}}^{(k)}\right)-S_{n}\left(f_{\betheta^{*}}^{(k)}\right).
\end{align*}
For any $\epsilon>0$ and $\gamma>0$, from the result of Lemma \ref{lem: Asymptotic_equicontinuity},
we can find $\eta$ and $N_1$ such that, when $n\geq N_1$,
\[
P^{*}\left\{\sup_{\rho\left(f_{\betheta}^{(k)},f_{\betheta^{*}}^{(k)}\right)\leq\eta}\left|S_{n}(f^{(k)}_{\betheta})-S_{n}(f^{(k)}_{\betheta^{*}})\right|>\gamma\right\}\leq \epsilon.
\] 
Notice that 
\begin{eqnarray*}
\rho(\betheta,\betheta^*)&=&\left(\mathds{E}_{\bX\sim P}\left[\left\Vert\bF(\bX,\betheta_1)-\bF(\bX,\betheta_2)\right\Vert_2^2\right]\right)^{1/2}\\
&\geq& \left(\mathds{E}_{\bX\sim P}\left[\left(\bF^{(k)}(\bX,\betheta_1)-\bF^{(k)}(\bX,\betheta_2)\right)^2\right]\right)^{1/2}\\
&=&\rho\left(f_{\betheta}^{(k)},f_{\betheta^{*}}^{(k)}\right),
\end{eqnarray*}
and we conclude that when $n\geq N_1$, 
\[
P^{*}\left\{\sup_{\rho(\betheta,\betheta^*)\leq\eta}\left|S_{n}(f^{(k)}_{\betheta})-S_{n}(f^{(k)}_{\betheta^{*}})\right|>\gamma\right\}\leq \epsilon.
\] 

From the strong consistency $\hat{\btheta}_{n}\rightarrow \btheta^*$, we have that $\rho\left(\hat\betheta_n,\betheta^*\right)\rightarrow0$ as $\hat \btheta_n\rightarrow\btheta^{*}$,
so there exist constant $N_{2}$
such that when $n\geq N_{2}$, 
\[
P\left\{\rho\left(\hat\betheta_n,\betheta^*\right)>\eta\right\}\leq\epsilon.
\]
Combing the above two results, we have that when $n\geq\max\{N_1,N_{2}\}$,
\[
P^{*}\left\{\left|S_{n}^{(k)}\left(f_{\hat{\betheta}_{n}}\right)-S_{n}^{(k)}\left(f_{\betheta^{*}}\right)\right|>\gamma\right\}\leq2\epsilon.
\]
Hence since $\epsilon>0$ and $\gamma>0$ is arbitrary, we have that
\[
S_{n}^{(k)}(f_{\hat{\betheta}_{n}})-S_{n}^{(k)}(f_{\betheta^{*}})\stackrel{P}{\longrightarrow} 0.
\]
This verifies the condition \eqref{eq:3.3.2}.

Next, we show the convergence 
\[
\sum_{i=1}^{n}\frac{\bF(\bX_{i},\btheta^{*})\ell(\bX_{i},\balpha_{i})-\mathds{E}_{\bX\sim P}[\bF(\bX,\btheta^{*})]}{\sqrt{n}}\Rightarrow \mathcal{N}(\bzero,\bSigma).
\]
To show this, from the Cram\'er-Wold device (Theorem 3.10.6, \citealt{Durret2019}), it suffices to show that for each vector $\bv\in\mathbb{R}^{d}$,
\[
\bv^{\top}\frac{\sum_{i=1}^{n}\bF(\bX_{i},\btheta^{*})\ell(\bX_{i},\balpha_{i})-\mathds{E}_{\bX\sim P}[\bF(\bX,\btheta^{*})]}{\sqrt{n}}\Rightarrow \mathcal{N}(\bzero,\bv^{\top}\bSigma \bv).
\]

We will use the martingale central limit theorem (Theorem 8.2.8, \citealt{Durret2019}) to show this, and need to verify the following two conditions: 
\begin{eqnarray*}
&\text{(C1)}&~~~\frac{1}{n}\sum_{i=1}^{n}\Var_{i-1}\left(\bv^{\top}\bF(\bX_{i},\btheta^{*})\ell(\bX_{i},\balpha_{i})\right)\stackrel{P}{\longrightarrow}\bv^{\top}\bSigma \bv, \label{eq: LF_1}\\
&\text{(C2)}&~~~\frac{1}{n}\sum_{i=1}^{n}\mathds{E}_{i-1}\left[\left(\bv^{\top}\bF(\bX_{i},\btheta^{*})\ell(\bX_{i},\balpha_{i})-\bv^{\top}\bc\right)^{2}\mathbf{1}{\left\{\left|\bv^{\top}\bF(\bX_{i},\btheta^{*})\ell(\bX_{i},\balpha_{i})-\bv^{\top}\bc\right|\geq\epsilon\sqrt{n}\right\}}\right]\stackrel{P}{\longrightarrow}0.\label{eq: LF_2}
\end{eqnarray*}

According to Lemma \ref{lem: consistent_alpha} and the assumption that $\mathds{E}_{\bX\sim P_{\bealpha}}\left[\left(\bF(\bX,\btheta)\ell(\bX,\balpha)\right)^{2}\right]$ is continuous in $\bealpha$, we have that $\Var_{i-1}\left(\bF(\bX_{i},\btheta^{*})\ell(\bX_{i},\balpha_{i})\right)\rightarrow \Var_{\bX\sim P_{\bealpha^*}}\left(\bF(\bX,\btheta)\ell(\bX,\balpha^*)\right)$, so (C1) is verified and we have that $\bSigma=\Var_{\bX\sim P_{\bealpha^*}}\left[\bF(\bX,\btheta)\ell(\bX,\balpha^*)\right]$. 
For (C2), roughly speaking, it will follow from Assumption \ref{assu:(Lindeberg-Feller_multidim)}. This is not hard to see when $d=1$. 
But when $d\geq2$, we will need some algebra. Let $v^{(k)}$ denote the $k$th component of $\bv$. 
From the definition of the envelope function, and the fact that $\mathds{E}_{\bX\sim P}[\bF(\bX,\btheta^{*})]=\bc$, we have that
\[
\left|\frac{\bF^{(k)}(\bX_{i},\btheta^{*})\ell(\bX_{i},\balpha_{i})-\bc^{(k)}}{\sqrt{n}}\right| \le V_{n,i}^{(k)}(E). 
\]
Then
\begin{align}
 & \frac{1}{n}\sum_{i=1}^{n}\mathds{E}_{i-1}\left[\left(\bv^{\top}\bF(\bX_{i},\btheta^{*})\ell(\bX_{i},\balpha_{i})-\bv^{\top}\bc\right)^{2}\mathbf{1}\{\left|\bv^{\top}\bF(\bX_{i},\btheta^{*})\ell(\bX_{i},\balpha_{i})-\bv^{\top}\bc\right|\geq\epsilon\sqrt{n}\}\right]\nonumber \\
= & \sum_{i=1}^{n}\mathds{E}_{i-1}\left[\left(\bv^{\top}\frac{\bF(\bX_{i},\btheta^{*})\ell(\bX_{i},\balpha_{i})-\bc}{\sqrt{n}}\right)^{2}\mathbf{1}\left\{\left|\bv^{\top}\frac{\bF(\bX_{i},\btheta^{*})\ell(\bX_{i},\balpha_{i})-\bc}{\sqrt{n}}\right|\geq\epsilon\right\}\right]\nonumber \\
\leq & \sum_{i=1}^{n}\mathds{E}_{i-1}\left[\left(\sum_{k=1}^{d}|v^{(k)}|V_{n,i}^{(k)}(E)\right)^{2}\mathbf{1}\left\{\sum_{k=1}^{d}|v^{(k)}|V_{n,i}^{(k)}(E)>\epsilon\right\}\right].\label{eq: pf_linderberge_envolope}
\end{align}

Then, we will use Assumption \ref{assu:(Lindeberg-Feller_multidim)} to show that the RHS above goes to zero as $n\rightarrow\infty$. 
From Assumption \ref{assu:(Lindeberg-Feller_multidim)}, we have that for each $k=1,2,\cdots,d$, and $\forall\epsilon>0$, 
\begin{align}
&\sum_{i=1}^{n}\mathds{E}_{i-1}\left[\left(|v^{(k)}|V_{n,i}^{(k)}(E)\right)^{2}\mathbf{1}\left\{|v^{(k)}|V_{n,i}^{(k)}(E)>\epsilon\right\}\right]\nonumber \\ 
&=\left|v^{(k)}\right|^2\sum_{i=1}^{n}\mathds{E}_{i-1}\left[\left(V_{n,i}^{(k)}(E)\right)^{2}\mathbf{1}\left\{V_{n,i}^{(k)}(E)>\frac{\epsilon}{|v^{(k)}|}\right\}\right]\stackrel{P}{\longrightarrow}0.\label{eq: proof_linderberg}
\end{align}
Based on this result, we could bound the RHS of (\ref{eq: pf_linderberge_envolope})
as follows
\begin{align*}
 & \sum_{i=1}^{n}\mathds{E}_{i-1}\left[\left(\sum_{k=1}^{d}|v^{(k)}|V_{n,i}^{(k)}(E)\right)^{2}\mathbf{1}\left\{\sum_{k=1}^{d}|v^{(k)}|V_{n,i}^{(k)}(E)>\epsilon\right\}\right]\\
\leq & \sum_{i=1}^{n}\mathds{E}_{i-1}\left[\left(\sum_{k=1}^{d}|v^{(k)}|V_{n,i}^{(k)}(E)\right)^{2}\left(\sum_{k=1}^{d}\mathbf{1}\left\{|v^{(k)}|V_{n,i}^{(k)}(E)>\frac{\epsilon}{d}\right\}\right)\right]\\
\leq & d\sum_{i=1}^{n}\mathds{E}_{i-1}\left[\sum_{k=1}^{d}\left(|v^{(k)}|V_{n,i}^{(k)}(E)\right)^{2}\left(\sum_{k=1}^{d}\mathbf{1}\left\{|v^{(k)}|V_{n,i}^{(k)}(E)>\frac{\epsilon}{d}\right\}\right)\right]\\
\leq & d^{2}\sum_{k=1}^{d}\sum_{i=1}^{n}\mathds{E}_{i-1}\left[\left(|v^{(k)}|V_{n,i}^{(k)}(E)\right)^{2}\mathbf{1}\left\{|v^{(k)}|V_{n,i}^{(k)}(E)>\frac{\epsilon}{d}\right\}\right] \stackrel{P}{\longrightarrow}0~~\text{as}~~n\rightarrow\infty.
\end{align*}
In the last inequality, we used rearrangement inequality and it goes to zero according to \eqref{eq: proof_linderberg}. Then applying (\ref{eq: pf_linderberge_envolope}), (C2) is verified. \hfill{$\Box$}
\endproof

\subsection{Proof of Theorem \ref{prop:RM_consistency_multidim} (Multidimensional Version of Theorem \ref{prop:RM_consistency})}\label{sec: proof_consistency_SA}
\proof{Proof:}
 We verify the conditions for Theorem 2.3 in Section 5.2 of \cite{kushner2003stochastic}. The constraint set condition we impose in Assumption \ref{assu: unique_solution_multidim} satisfies (A4.3.2) in \cite{kushner2003stochastic}. Then we check (A2.1)-(A2.6) in Section 5.2 of \cite{kushner2003stochastic}. Corresponding to our algorithm, $Y_n$ in \cite{kushner2003stochastic} is given by $\bF (\bX_{n+1},\hat{\betheta}_n)\ell (\bX_{n+1},\bealpha_{n+1})-\bc$. So (A2.1) is checked by Assumption \ref{assu: sup_L2_multidim}. (A2.2) holds with $\beta_n=0$. (A2.3) is checked by the continuity of $\boldf(\betheta)$. And it is also easy to check the choice of stepsize in our algorithm satisfies
(A2.4). (A2.5) holds since $\beta_n=0$. (A2.6) follows from our Assumption \ref{assu: continuity_multidim}. Now we have verified the assumptions. And our Assumption \ref{assu: unique_solution_multidim} implies that there is only one stationary point to the projected ODE. So, by Theorem 2.3 in Section 5.2 of \cite{kushner2003stochastic}, we have that $\hat{\betheta}_n\rightarrow\betheta^*$ a.s., as desired.

\hfill{$\Box$}\endproof

\subsection{Proof of Theorem \ref{prop: SA_CLT_multidim} (Multidimensional Version of Theorem \ref{prop: SA_CLT})}\label{appx:proofTheorem3}
Before proving Theorem \ref{prop: SA_CLT_multidim}, we need the following lemma.
\begin{lemma}[Theorem 2.2, \citealt{fabian1968}]\label{lem:fabian}
Suppose $k$ is a positive integer, $\mathcal{F}_{n}$ a non-decreasing sequence of $\sigma$-fields, $\mathcal{F}_{n} \subset \mathcal{F}$; suppose $\bU_{n}, \bV_{n}, \bT_{n} \in \mathbb{R}^{k}, \bT \in \mathbb{R}^{k}, \bGamma_{n}, \bPhi_{n} \in \mathbb{R}^{k \times k}, \bSigma, \bGamma, \bPhi, \bP \in \mathbb{R}^{k\times k}$,
$\bGamma$ is positive definite, $\bP$ is orthogonal and $\bP^{\top} \bGamma \bP=\bLambda$ diagonal. 
Suppose $\bGamma_{n}, \bPhi_{n-1},$ $\bV_{n-1}$ are $\mathcal{F}_{n}$-measurable, $C, \alpha, \beta \in \mathbb{R}$ and
\begin{equation}
\bGamma_{n} \rightarrow \bGamma, ~~\bPhi_{n} \rightarrow \bPhi, ~~\bT_{n} \rightarrow \bT ~~ \text{or}~~ \mathds{E}\left[\left\|T_{n}-T\right\|\right] \rightarrow 0,
\end{equation}
\begin{equation}\label{eq: limit_conditionalvar_var}
\mathds{E}_{n} [\bV_{n}]=\bzero, ~~ C>\left\| \mathds{E}_{n}\left[\bV_{n} \bV_{n}^{\top}\right]-\bSigma\right\| \rightarrow 0,
\end{equation}
and, with 
\begin{equation}
\sigma_{j, r}^{2}=\mathds{E} \left[\left\|\bV_{j}\right\|^{2} \mathbf{1}\left\{\left\|\bV_{j}\right\|^{2} \geq r j^{\alpha}\right\}\right],
\end{equation}
let either
\begin{equation}
\lim _{j \rightarrow \infty} \sigma_{j, r}^{2}=0 ~~ \text { for every }r>0,
\end{equation}
or
\begin{equation}
\alpha=1, ~~ \lim _{n \rightarrow \infty} n^{-1} \sum_{j=1}^{n} \sigma_{j, r}^{2}=0 ~~ \text { for every }r>0.
\end{equation}
Suppose that, with $\lambda=\min _{i} [\Lambda]_{ii}$, $\beta_{+}=\beta$ if $\alpha=1$, $\beta_{+}=0$ if $\alpha \neq 1$,
\begin{equation}
0<\alpha \leqq 1, \quad 0 \leqq \beta, \quad \beta_{+}<2 \lambda
\end{equation}
$\operatorname{and}$
\begin{equation}
\bU_{n+1}=\left(\bI-n^{-\alpha} \bGamma_{n}\right) \bU_{n}+n^{-(\alpha+\beta) / 2} \bPhi_{n} \bV_{n}+n^{-\alpha-\beta / 2} \bT_{n}.
\end{equation}
Then the asymptotic distribution of $n^{\beta / 2} \bU_{n}$ is normal with mean $\left(\bGamma-\left(\beta_{+} / 2\right) I\right)^{-1} \bT$
and covariance matrix $\bP \bM \bP^{\top}$, where
\begin{equation}
[\bM]_{i j}=\left[\bP^{\top} \bPhi \bSigma \bPhi^{\top} \bP\right]_{i j}\left([\bLambda]_{i i}+[\bLambda]_{j j}-\beta_{+}\right)^{-1}.
\end{equation}
\end{lemma}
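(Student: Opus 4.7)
The plan is to exploit the linearity of the recursion by diagonalizing via the orthogonal matrix $\bP$ and then treating the deterministic mean and the martingale fluctuation separately. Let $\mathbf{W}_n := \bP^\top \bU_n$; left-multiplying the recursion by $\bP^\top$ produces an analogous recursion in which $\bP^\top\bGamma_n\bP \to \bLambda$ is diagonal. A standard Gronwall-type bound on the propagator reduces the problem to the ``limiting'' recursion
\[
\mathbf{W}_{n+1} = (\bI - n^{-\alpha}\bLambda)\mathbf{W}_n + n^{-(\alpha+\beta)/2}\bP^\top\bPhi\, \bV_n + n^{-\alpha-\beta/2}\bP^\top\bT,
\]
since the perturbations $\bGamma_n-\bGamma$, $\bPhi_n-\bPhi$, $\bT_n-\bT$ all vanish and their effect on the scaled error $n^{\beta/2}\bU_n$ is $o(1)$.

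For the mean, taking expectations gives a deterministic recursion $\mathbf{m}_{n+1} = (\bI - n^{-\alpha}\bLambda)\mathbf{m}_n + n^{-\alpha-\beta/2}\bP^\top\bT$. Scaling by $n^{\beta/2}$ and using $\prod_{k=j}^{n}(1-k^{-\alpha}[\bLambda]_{ii}) \approx \exp(-[\bLambda]_{ii}\sum_{k=j}^{n}k^{-\alpha})$, I would show by Abel summation that $n^{\beta/2}\mathbf{m}_n$ converges to $(\bLambda - (\beta_+/2)\bI)^{-1}\bP^\top\bT$, which after inverting the diagonalization gives the claimed mean $(\bGamma - (\beta_+/2)\bI)^{-1}\bT$. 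The dichotomy $\beta_+=\beta$ when $\alpha=1$ versus $\beta_+=0$ otherwise reflects that for $\alpha=1$ the homogeneous decay rate $n^{-[\bLambda]_{ii}}$ is polynomial and competes on the same scale as the scaling $n^{\beta/2}$, whereas for $\alpha<1$ the decay is stretched-exponential and dominates.

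For the CLT, write $\mathbf{W}_n - \mathbf{m}_n = \sum_{j=1}^{n-1} \boldsymbol{\Psi}_{n,j}\, \bP^\top\bPhi\, \bV_j$, where the diagonal propagator is $\boldsymbol{\Psi}_{n,j} = j^{-(\alpha+\beta)/2}\prod_{k=j+1}^{n-1}(\bI-k^{-\alpha}\bLambda)$. Applying the martingale CLT to $n^{\beta/2}(\mathbf{W}_n-\mathbf{m}_n)$, the predictable quadratic variation converges, via \eqref{eq: limit_conditionalvar_var}, to a matrix whose $(i,j)$-entry equals $[\bP^\top\bPhi\bSigma\bPhi^\top\bP]_{ij}$ times $\lim_{n\to\infty} n^{\beta}\sum_{k=1}^{n-1} k^{-\alpha-\beta}\prod_{\ell>k}(1-\ell^{-\alpha}[\bLambda]_{ii})(1-\ell^{-\alpha}[\bLambda]_{jj})$; a Riemann-sum computation evaluates this to $([\bLambda]_{ii}+[\bLambda]_{jj}-\beta_+)^{-1}$. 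The Lindeberg-type hypothesis on $\sigma_{j,r}^{2}$ verifies the conditional Lindeberg condition for the triangular array, so the CLT applies, and Slutsky's theorem combined with the mean analysis delivers the asserted joint limit $\mathcal{N}((\bGamma - (\beta_+/2)\bI)^{-1}\bT,\, \bP\bM\bP^\top)$.

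The main obstacle is the covariance bookkeeping: since $\bP^\top\bPhi\bSigma\bPhi^\top\bP$ is not generally diagonal, one must track cross-coordinate products of the diagonal propagator, and one must simultaneously handle the four regimes defined by $\alpha\in\{=1,<1\}$ and $\beta\in\{=0,>0\}$, in which different terms dominate the Riemann sum and even the scaling exponent of the propagator changes qualitatively. Verifying that the perturbations $\bGamma_n-\bGamma$, $\bPhi_n-\bPhi$, $\bT_n-\bT$ contribute only lower-order fluctuations, despite the slow stepsize $n^{-\alpha}$, is the other delicate piece; this relies on the assumption $\beta_+ < 2\lambda$, which guarantees that the propagator contracts fast enough for the error bounds to close.
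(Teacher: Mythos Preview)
The paper does not prove this lemma at all: it is quoted verbatim as Theorem~2.2 of \cite{fabian1968} and invoked as a black box in the proof of Theorem~\ref{prop: SA_CLT_multidim}. So there is no in-paper argument to compare your proposal against.

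That said, your sketch is the standard route to Fabian's theorem and is essentially how the original 1968 proof proceeds: diagonalize via $\bP$, separate the deterministic mean from the martingale fluctuation, control the propagator $\prod_{k}(\bI-k^{-\alpha}\bLambda)$ coordinatewise, and apply a martingale CLT to the triangular array with the Lindeberg condition coming from the $\sigma_{j,r}^2$ hypothesis. Your identification of the two technical sticking points---the cross-coordinate covariance sums (since $\bP^\top\bPhi\bSigma\bPhi^\top\bP$ need not be diagonal) and the perturbation analysis for $\bGamma_n-\bGamma$, $\bPhi_n-\bPhi$, $\bT_n-\bT$---is accurate; Fabian handles the latter by a careful inductive bound on $\mathds{E}\|\bU_n\|^2$ that uses $\beta_+<2\lambda$ exactly as you anticipate. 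The only place where your outline is a bit loose is the claim that the perturbations contribute $o(1)$ after scaling: this requires showing that $n^{\beta/2}\bU_n$ is tight \emph{before} you can discard the $o(1)$ multiplicative errors, and Fabian establishes this via a preliminary second-moment bound rather than taking it for granted.
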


Next, we will prove Theorem \ref{prop: SA_CLT_multidim}.
\proof{Proof of Theorem \ref{prop: SA_CLT_multidim}:}
From Lemma \ref{lem:fabian}, with the same argument as in the analysis of SAA (see condition (C1) in the proof of Theorem \ref{thm: SAA_CLT_multidim}), we have that $\mathds{E}_n [\bV_n\bV_n^{\top}]\rightarrow\bSigma$ where $\bSigma=\Var_{\bX\sim P_{\bealpha^*}}\left(\bF(\bX,\betheta)\ell(\bX,\bealpha^*)\right)$. This implies condition \eqref{eq: limit_conditionalvar_var} of Lemma \ref{lem:fabian}. By the mean value theorem,
there exists $\bxi_{n}$ on the line segment between $\hat{\btheta}_{n}$
and $\btheta^{*}$ such that 
\[
\boldf(\hat{\btheta}_{n})=\frac{D}{D\betheta}\boldf(\bxi_{n})(\hat{\btheta}_{n}-\btheta^{*}).
\]
Then let $\bY_{n}=\bF(\bX_{n+1},\hat{\btheta}_{n})\ell(\bX_{n+1},\balpha_{n+1})-\bc,\bV_{n}=\bY_{n}+\bc-\boldf(\hat{\btheta}_{n}),\bU_{n}=\hat{\btheta}_{n}-\btheta^{*},\alpha=1,\beta=1,\bGamma_{n}=\gamma\bJ(\bxi_{n}),\bPhi_{n}=-\gamma \bI_d.$ Let 
\[
\bT_n = \Pi_A\left[\hat{\btheta}_{n}-\gamma_{n}\bF(\bX_n,\hat \btheta_{n-1})\ell(\bX_{n},\balpha_{n})\right] - \left[\hat{\btheta}_{n}-\gamma_{n}\bF(\bX_n,\hat \btheta_{n-1})\ell(\bX_{n},\balpha_{n})\right]
\]
be the projection term. Then one can check that we could write our iteration (\ref{eq:SA_update_multidim})
as follows:
\[
\bU_{n+1}=(\bI-n^{-\alpha}\bGamma_{n})\bU_{n}+n^{-(\alpha+\beta)/2}\bPhi_{n}\bV_{n}+n^{-\alpha-\beta/2}\bT_{n}.
\]

Notice that from the fact that $\balpha_{n+1}\in\mathcal{F}_{n}$, $\bV_{n}$ is a martingale difference array, i.e., $\mathds{E}_{n}[\bV_{n}]=\mathds{E}_{n}[\bF(\bX_{n+1},\hat{\btheta}_{n})\ell(\bX_{n+1},\balpha_{n+1})]-\boldf(\hat{\btheta}_{n})=\bzero$. From the strong consistency $\hat{\betheta}_n\rightarrow\betheta^* $ a.s., 
and the assumption that $\btheta^{*}\in A^{o}$, we have that with probability one,
there exists $N$ such that $\bT_{n}=\bzero$ after $n\geq N$. Hence $\bT_n\rightarrow \bT = \bzero $, so the condition for $\bT_n$ in the theorem is satisfied with $\bT=\bzero$. The condition $\mathds{E}_{n}\left[\bV_{n} \bV_{n}^{\top}\right]\rightarrow\bSigma$ follows from the consistency of $\bealpha$ and the continuity of the variance with respect to $\bealpha$ (the same as the proof of (C1) in \ref{sc:finalproof_ThmA3}). 
The other conditions of the theorem just follow from definition and our assumptions. And the result of the theorem tells us that 
\[
\sqrt{n}(\hat{\btheta}_{n}-\btheta^{*})\Rightarrow N(\bzero,\bP\bM\bP^{\top}).
\]
Then applying the Delta method, we will get the claim as in the theorem.
\hfill{$\Box$}\endproof

\subsection{Proof of Theorem \ref{prop: SA_average_CLT_multidim} (multidimensional version of Theorem \ref{prop: SA_average_CLT})}\label{appx:proofprop6}

\begin{proof}{Proof:}
Our proof is based on Theorem 2 of \cite{polyak1992}. First we verify Assumptions 3.1-3.4 of \cite{polyak1992}. For Assumption 3.1, we can pick $V(\theta) = h(\betheta)$ where $h(\betheta)$ is the function such that $\nabla h =\boldf$ in our Assumption \ref{assu: unique_solution_multidim}. Then we can verify Assumptions 3.1-3.2 of \cite{polyak1992} based on the positive definiteness of the Jabobian of $\boldf$ at $\betheta^*$ and the second-order differentiability. Assumption 3.3 follows from our Assumptions \ref{assu: sup_L2_multidim} and \ref{assu: Polyak_noise_multidim}. And it is easy to check our stepsize satisfies Assumption 3.4 of \cite{polyak1992}. Thus we have verified the conditions for \cite{polyak1992}. Applying that result we will get our claimed result. 

In the above argument, we have assumed the validity of Theorem 2 of \cite{polyak1992} under projection, which is not present in their original theorem. We argue that adding a projection does not affect the asymptotic result.
Actually, when projection is performed, we can use $\bxi_{n}+\bz_{n}$ to replace $\bxi_{n}$ in the proof of \cite{polyak1992}, where $\bz_{n}$ accounts for the change induced by projection in each update of $\btheta$.
Since $\bz_{n}$ will become zero when $n$ is large, we find that performing this replacement will not affect the proof. 
So the same result holds.

\hfill{$\Box$}\end{proof}


\subsection{Proof of Theorem \ref{prop: SAA_QE_opt}}\label{appx:proof_prop_SAA_QE_opt}

\begin{proof}{Proof:}
First we check the assumptions for Theorem \ref{prop:SAA_consistency} and get the strong consistency $\hat{q}_n\rightarrow q^*$ a.s. 

Assumption \ref{assu: truncation_alpha} corresponds to Assumption \ref{assu: SAA_QE_consistency}.
A difference is that the result holds for $q^{*}+\delta$ (not all $q$). 
Actually, from the proof of Lemma \ref{lem: average_consistent_multidim},
we have that under Assumption \ref{assu: SAA_QE_consistency}, 
\[
\frac{1}{n}\sum_{i=1}^{n}\mathbf{1}\{h(\bX_{i})\leq q^{*}+\delta\}\ell(\bX_{i},\balpha_{i})-F_h(q^{*}+\delta)\rightarrow0
\]
for $q=q^{*}+\delta$. Hence when $n$ is large enough, 
\[
\frac{1}{n}\sum_{i=1}^{n}\mathbf{1}\{h(\bX_{i})\leq q^{*}+\delta\}\ell(\bX_{i},\balpha_{i})\geq F_h(q^{*})=p,
\]
which implies
\[
\hat{q}_{n}=\inf\left\{q:\frac{1}{n}\text{\ensuremath{\sum_{i=1}^{n}\mathbf{1}\{h(\bX_{i})\geq q\}\ell(\bX_{i},\balpha_{i})}}\geq p\right\}\leq q^{*}+\delta.
\]


For Assumption \ref{assu: bracket}, for any $\epsilon>0$ choose $\{p_i\}$ to be a set of $[0,1]$-valued real numbers such that $(p_i-\epsilon/2, p_i+\epsilon/2)$ forms a $\epsilon$-net of $[0,1]$. Then let $Q_\epsilon =\{q_i: P\{h(\bX)\leq q_i\}=p_i-\epsilon/2~\text{or}~P\{h(\bX)\leq q_i\}=p_i+\epsilon/2\}$. Then define
\begin{equation*}
K_{\epsilon}=\{\left(f_{L}(\bX,\balpha),f_{R}(\bX,\balpha)\right):=\left(\mathbf{1}\{h(\bX)\leq q_{L}\}\ell(\bX,\balpha),\mathbf{1}\{h(\bX)\leq q_{R}\}\ell(\bX,\balpha)\right)|q_{L},q_{R}\in Q_{\epsilon}\}.
\end{equation*}
We only need at most $\left\lceil {1}/{\epsilon}\right\rceil $
elements in $Q_{\epsilon}$ and this implies that $K_{\epsilon}$ is a finite set.

For Assumption \ref{assu: Obj_regular}, it follows from Assumption \ref{assu: SAA_QE_obj}
and the monotonicity of the objective function. 

Then we apply Theorem \ref{prop:SAA_consistency}. One difference in the quantile estimation case is that \[
\hat{q}_{n}=\inf\left\{q:\text{\ensuremath{\frac{1}{n}\sum_{i=1}^{n}\mathbf{1}\{h(\bX_{i})\leq q\}\ell(\bX_{i},\balpha_{i})}}\geq p\right\}
\]
is not the exact solution to $\sum_{i=1}^{n}\mathbf{1}\{h(\bX_{i})\leq q\}\ell(\bX_{i},\balpha_{i})/n=p$ as in Algorithm \ref{alg:SAA_rootfinding_blackbox}. But
by Lemma \ref{Lem: unif_converg_aver_obj}, we have uniform convergence of $\sum_{i=1}^{n}\mathbf{1}\{h(\bX_{i})\leq q\}\ell(\bX_{i},\balpha_{i})/n$
to $F_h(q)$, which implies that $\hat{q}_n$
is an approximate solution in the sense that 
\[
\frac{1}{n}\sum_{i=1}^{n}\mathbf{1}\{h(\bX_{i})\leq\hat{q}_{n})\ell(\bX_{i},\balpha_{i})\rightarrow p~~\text{as}~~n\to\infty.
\]
With this condition, by checking the proof of Theorem \ref{prop:SAA_consistency}, we get $\hat{q}_n\rightarrow q^*$ a.s.

Next, we use a somehow different argument to show the asymptotic normality result. The proof is similar in spirit to the proof of Theorem \ref{thm: SAA_CLT_multidim}, but we can exploit the special structure of quantile estimation. By doing this, the required condition would be slightly milder. More precisely, we will not need the counterpart of Assumption \ref{assu:(uniform-integrable-entropy)}.  

For any $t\in\mathbb{R}$, define the martingale difference triangular
array:
\[
Z_{n,i}=\frac{\mathbf{1}\{h(\mathbf{X}_{i})\leq(q+t\sqrt{n^{-1}})\}\ell(\mathbf{X}_{i},\boldsymbol{\alpha}_{i})-F_h(q+t\sqrt{n^{-1}})}{\sqrt{n \Var_{\bX\sim P_{\boldsymbol{\alpha}^{*}}}\left(\mathbf{1}\left\{ h(\mathbf{X})\leq(q+t\sqrt{n^{-1}})\right\} \ell(\mathbf{X},\boldsymbol{\alpha}^{*})\right)}}.
\]
We will use Theorem 8.2.4 of \cite{Durret2019}. We observe that for any $u\in[0,1]$,
\begin{align*}
 & \lim_{n\rightarrow\infty}\sum_{m=1}^{\left\lfloor nu \right\rfloor}\mathds{E}[Z_{n,m}^{2}|\mathcal{F}_{m-1}]\\
= & \lim_{n\rightarrow\infty}\frac{1}{n}\sum_{m=1}^{\left\lfloor nu \right\rfloor}\frac{\Var_{\bX\sim P_{\boldsymbol{\alpha}_{m}}}\left(\mathbf{1}\left\{ h(\mathbf{X})\leq(q^{*}+t\sqrt{n^{-1}})\right\} \ell(\mathbf{X},\boldsymbol{\alpha}_{m})\right)}{\Var_{\bX\sim P_{\boldsymbol{\alpha}^{*}}}\left(\mathbf{1}\left\{ h(\mathbf{X})\leq(q^{*}+t\sqrt{n^{-1}})\right\} \ell(\mathbf{X},\boldsymbol{\alpha}^{*})\right)},
\end{align*}
where $\left\lfloor nu \right\rfloor$ is the greatest integer less than or equal to $nu$.

By our Assumption \ref{assu: continuous_variance}, for any $\epsilon>0$ we can find $\delta>0$
such that whenever $\left\Vert \boldsymbol{\alpha}-\boldsymbol{\alpha}^{*}\right\Vert \leq\delta$, $\left|t\sqrt{n^{-1}}\right|\leq\delta$,
we have 
\[
\left|\Var_{\bX\sim P_{\boldsymbol{\alpha}}}\left(\mathbf{1}\left\{ h(\mathbf{X})\leq(q+t\sqrt{n^{-1}})\right\} \ell(\mathbf{X},\boldsymbol{\alpha})\right)-\Var_{\bX\sim P_{\boldsymbol{\alpha}^{*}}}\left(\mathbf{1}\left\{ h(\mathbf{X})\leq q^{*}\right\} \ell(\mathbf{X},\boldsymbol{\alpha}^{*})\right)\right|\leq\epsilon.
\]

From the continuity of $I$ and the consistency of $\hat{q}_{i}$,
we have that $\boldsymbol{\alpha}_{i}\rightarrow\boldsymbol{\alpha}^*$ a.s. So there must exist $N=N(\omega)$ such that 
\[
\left|\Var_{\bX\sim P_{\boldsymbol{\alpha}_{m}}}\left(\mathbf{1}\left\{ h(\mathbf{X})\leq(q+t\sqrt{n^{-1}})\right\} \ell(\mathbf{X},\boldsymbol{\alpha}_{m})\right)-\Var_{\bX\sim P_{\boldsymbol{\alpha}^{*}}}\left(\mathbf{1}\left\{ h(\mathbf{X})\leq q^{*}\right\} \ell(\mathbf{X},\boldsymbol{\alpha}^{*})\right)\right|\leq\epsilon
\]
whenever $m>N$. Thus with probability one, 
\[
\lim_{n\rightarrow\infty}\sum_{m=1}^{\left\lfloor nu \right\rfloor}\mathds{E}[Z_{n,m}^{2}|\mathcal{F}_{m-1}]=u.
\]
Also notice that by Proposition \ref{prop: LF_sufficient}, our Assumption \ref{assu: QE_Feller} implies the second condition of Theorem 8.2.4 of \cite{Durret2019}, so using that theorem we get  
\[
\sum_{i=1}^{n}Z_{n,i}\Rightarrow \mathcal{N}(0,1).
\]
With this result, consider 
\[
G_{n}(t)=P(\sqrt{n}(\hat{q}_{n}-q^{*})\leq t),
\]
which could be written as 
\begin{align*}
 & G_{n}(t) =  P(\hat{q}_{n}\leq q^{*}+t\sqrt{n^{-1}})\\
= & P\left(p\leq\frac{1}{n}\sum_{m=1}^{n}\left(\mathbf{1}\left\{ h(\mathbf{X})\leq(q^{*}+t\sqrt{n^{-1}})\right\} \ell(\mathbf{X},\boldsymbol{\alpha}_{m})\right)\right)\\
= & P\left(\sum_{i=1}^{n}Z_{n,i}\geq\frac{-\sqrt{n}F_h\left(\left(q^{*}+t\sqrt{n^{-1}}\right)-p\right)}{\sqrt{\Var_{\bX\sim P_{\boldsymbol{\alpha}^{*}}}\left(\mathbf{1}\left\{ h(\mathbf{X})\leq q^{*}\right\} \ell(\mathbf{X},\boldsymbol{\alpha}^{*})\right)}}\right).
\end{align*}
From the differentiability of $F$ we know that $\sqrt{n}F_h\left(\left(q^{*}+t\sqrt{n^{-1}}\right)-p\right)\rightarrow tf_h(q^{*})$.
Thus by Slutsky's theorem, the above probability converges to 
\begin{equation*}
\Phi\left(\frac{tf_h(q^{*})}{\sqrt{\Var_{\bX\sim P_{\boldsymbol{\alpha}^{*}}}\left(\mathbf{1}\left\{ h(\mathbf{X})\leq q^{*}\right\} \ell(\mathbf{X},\boldsymbol{\alpha}^{*})\right)}}\right),
\end{equation*}
which is equal to 
\begin{equation*}
P\left(\mathcal{N}\left(0,\frac{\Var_{\bX\sim P_{\boldsymbol{\alpha}^{*}}}\left(\mathbf{1}\left\{ h(\mathbf{X})\leq q^{*}\right\} \ell(\mathbf{X},\boldsymbol{\alpha}^{*})\right)}{\left(f_h(q^{*})\right)^{2}}\right)\leq t\right).
\end{equation*}
Thus, as claimed,
\begin{equation*}
\sqrt{n}(\hat{q}_{n}-q^{*})\Rightarrow \mathcal{N}\left(0,\frac{\Var_{\bX\sim P_{\boldsymbol{\alpha}^{*}}}\left(\mathbf{1}\left\{ h(\mathbf{X})\leq q^{*}\right\} \ell(\mathbf{X},\boldsymbol{\alpha}^{*})\right)}{\left(f_h(q^{*})\right)^{2}}\right).
\end{equation*}

\hfill{$\Box$}
\end{proof}


\section{Proofs on Assumption Verification}
This section provides proofs of Propositions \ref{prop: verify_entropy}-\ref{prop: PN_sufficient}, which verify some critical technical assumptions used in proving theoretical convergences of the algorithms. This section also provides proofs of Propositions \ref{prop: verification_normal} and \ref{prop: verification_pVaR}, which verify the assumptions in the numerical examples.
\subsection{Proof of Proposition \ref{prop: verify_entropy}}
\proof{Proof:}
For each $\epsilon$, we let $\Pi_1(\epsilon)=\{A_1(\epsilon),A_2(\epsilon),\dots,A_{m(\epsilon)}(\epsilon)\}$ be an $\epsilon$-covering of $\Theta_\delta$ under $\rho$ where $m(\epsilon)=N(\epsilon,\Theta_{\delta},\rho)$.  Then $\Pi(\epsilon)=\{\{f_{\theta}:\theta\in A_1(\epsilon)\},\dots,\{f_{\theta}:\theta\in A_{m(\epsilon)}(\epsilon)\}\}$ is an $\epsilon$-covering of $\sfF_{\delta}$ under $\rho$. We argue that this choice of $\Pi$ satisfies the conditions in Assumption \ref{assu:(uniform-integrable-entropy)}. The condition in the second display of Assumption \ref{assu:(uniform-integrable-entropy)} is verified by condition (i) of this proposition. It remains to verify the condition in the first display of Assumption \ref{assu:(uniform-integrable-entropy)}. 

Recall that $V_{n,i}(f_{\theta}) = (F(\bX_i,\theta)\ell(\bX_i,\bealpha_i)-f(\theta))/{\sqrt{n}}$. So for any $\theta_1,\theta_2\in\Theta_{\delta}$, we have that 
\begin{eqnarray*}
n(V_{n,i}(f_{\theta_1})-V_{n,i}(f_{\theta_2}))^2 &\leq& 2(F(\bX_i,\theta_1)-F(\bX_i,\theta_2))^2{(\ell(\bX_i,\bealpha_i))^2} + 2(f(\theta_1)-f(\theta_2))^2 \\
&\leq& 2(F(\bX_i,\theta_1)-F(\bX_i,\theta_2))^2{(\ell(\bX_i,\bealpha_i))^2} + 2(\rho(\theta_1,\theta_2))^2\\
&\leq& 2(L(\bX_i,\bealpha_i)\ell(\bX_i,\bealpha_i)+1)(\rho(\theta_1,\theta_2))^2.
\end{eqnarray*}
Here the last inequality follows from condition (ii) of this proposition. For any $f_{\theta_1},f_{\theta_2}$ in the same $\epsilon$-ball $\sfF(\epsilon;k)$, we have that $\rho(\theta_1,\theta_2)\leq 2\epsilon$, so
\[
n(V_{n,i}(f_{\theta_1})-V_{n,i}(f_{\theta_2}))^2\leq 8(L(\bX_i,\bealpha_i)\ell(\bX_i,\bealpha_i)+1)\epsilon^2.
\]
This implies $(V_{n,j}(\sfF(\epsilon;k)))^2\leq 8(L(\bX_i,\bealpha_i)\ell(\bX_i,\bealpha_i)+1)\epsilon^2$, so 
\begin{eqnarray*}
\frac{\sqrt{\sum_{j=1}^{n}\mathds{E}_{j-1}[\left\vert V_{n,j} (\sfF(\epsilon;k))\right\vert^2]}}{\epsilon}&\leq& \sqrt{8+\frac{8}{n}\sum_{i=1}^n\mathds{E}_{\bX_i\sim P_{\bealpha_i}}[L(\bX_i,\bealpha_i)\ell(\bX_i,\alpha_i)]}\\
&=&\sqrt{8+\frac{8}{n}\sum_{i=1}^n\mathds{E}_{\bX\sim P}[L(\bX,\bealpha_i)]}.
\end{eqnarray*}
The RHS is independent of $k$ and $\epsilon$. Thus taking supremum w.r.t. $k$ and $\epsilon$, we get that
\[
\sup_{\epsilon\in(0,\Delta_{\Pi}]\cap \mathbb{Q}} \max_{1\leq k\leq N_{\Pi}(\epsilon)}\frac{\sqrt{\sum_{j=1}^{n}\mathds{E}_{j-1}[\left\vert V_{n,j} (\sfF(\epsilon;k))\right\vert^2]}}{\epsilon}\leq \sqrt{8+\frac{8}{n}\sum_{i=1}^n\mathds{E}_{\bX\sim P}[L(\bX,\bealpha_i)]}.
\]
Since $\bealpha_i\rightarrow\bealpha$ a.s., and $\sup_{\left\Vert\bealpha-\bealpha^*\right\Vert\leq \delta_1}\mathds{E}[L(\bX,\bealpha)]<\infty$ as assumed in this proposition, we have that the RHS is $O_p(1)$. Thus we have shown the desired result. 
\hfill{$\Box$}
\endproof

\subsection{Proof of Proposition \ref{prop: LF_sufficient} and Proposition \ref{prop: PN_sufficient}}{\label{appx:proofprop13}}
\proof{Proof:}
We know that 
\[
\sum_{i=1}^{n}\mathds{E}_{i-1}\left[V_{n,i}(E)^{2}\mathbf{1}\{V_{n,i}(E)>\epsilon\}\right]=\frac{1}{n}\sum_{i=1}^{n}\mathds{E}_{i-1}\left[V_{i}(E)^{2}\mathbf{1}\{V_{i}(E)>\epsilon\sqrt{n}\}\right],
\]
where $V_{i}(E)=\sqrt{n}V_{n,i}(E)$, and from the definition of $V_{n,i}(E)$,
we have that $V_{i}(E)$ is envelope function for 
\[
\left\{ \left|F(\bX,\theta)\ell(\bX,\bealpha)-f(\theta)\right|:\theta\in\Theta_{\delta}\right\}.
\]
Notice that for any $R$, $\epsilon\sqrt{n}>R$ when $n$ is large
enough. So in order to show that (Proposition \ref{prop: LF_sufficient})
\[
\frac{1}{n}\sum_{i=1}^{n}\mathds{E}_{i-1}\left[V_{i}(E)^{2}\mathbf{1}\{V_{i}(E)>\epsilon\sqrt{n}\}\right]\rightarrow 0,
\]
it suffices to show that, with probability 1, there exists $N(\omega)<\infty$
such that 
\begin{equation}
\lim_{R\rightarrow\infty}\sup_{n\geq N(\omega)}\mathds{E}_{i-1}\left[V_{i}(E)^{2}\mathbf{1}\{V_{i}(E)>R\}\right]=0.\label{eq: LF_proof}
\end{equation}
We can see that this also implies the condition in Assumption \ref{assu: Polyak_noise}. So we will show \eqref{eq: LF_proof} for the rest of the proof. 

Let $C=\sup_{\theta\in\Theta_{\delta}}f(\theta)$. Notice that for
all $\theta\in\Theta_{\delta},\left\Vert \balpha-\balpha^{*}\right\Vert \leq\delta_{1}$, by the assumption of Proposition \ref{prop: LF_sufficient}, 
\[
V(\bX)\ell(\bX,\bealpha)\geq\left|F(\bX,\theta)\ell(\bX,\bealpha)\right|^{2}\geq\left(\left|F(\bX,\theta)\ell(\bX,\bealpha)-f(\theta)\right|-C\right)^{2}.
\]
When $\sup_{\theta\in\Theta_{\delta}}\left|F(\bX,\theta)\ell(\bX,\balpha)-f(\theta)\right|\geq2C$,
taking supremum over $\theta$ in the above inequality, we get
\[
V(\bX)\ell(\bX,\bealpha)\geq\left(\sup_{\theta\in\Theta_{\delta}}\left|F(\bX,\theta)\ell(\bX,\bealpha)-f(\theta)\right|-C\right)^{2}\geq\frac{1}{4}\left(\sup_{\theta\in\Theta_{\delta}}\left|F(X,\theta)\ell(\bX,\bealpha)-f(\theta)\right|\right)^{2}.
\]
So by arguing the case when $\sup_{\theta\in\Theta_{\delta}}\left|F(\bX,\theta)\ell(\bX,\balpha)-f(\theta)\right|\geq2C$ and the case when $\sup_{\theta\in\Theta_{\delta}}\left|F(\bX,\theta)\ell(\bX,\balpha)-f(\theta)\right|\leq 2C$ separately, we always have 
\[
\max\{V(\bX)\ell(\bX,\alpha),C^{2}\}\geq\frac{1}{4}\left(\sup_{\theta\in\Theta_{\delta}}\left|F(\bX,\theta)\ell(\bX,\bealpha)-f(\theta)\right|\right)^{2}.
\]
Thus from the definition of $V_{i}(E)$ we have that when $\left\Vert \bealpha_{i}-\bealpha^{*}\right\Vert \leq\delta_{1}$,
\[
\max\{V(\bX_{i})\ell(\bX_{i},\bealpha_{i}),C^2\}\geq\frac{1}{4}(V_{i}(E))^{2}.
\]
Now we take $R>C^2$. The above relation tells us that when $V_{i}(E)>R$,
we must have $V(\bX_{i})\ell(\bX_{i},\bealpha_{i})\geq (V_{i}(E))^{2}/4$.
Thus 
\[
V_{i}(E)^{2}\mathbf{1}\{V_{i}(E)>R\}\leq4V(\bX_{i})\ell(\bX_{i},\balpha_{i})\mathbf{1}\left\{V(\bX_{i})\ell(\bX_{i},\balpha_{i})\geq\frac{1}{4}R^{2}\right\}.
\]
So given $\left\Vert \balpha_{i}-\balpha^{*}\right\Vert \leq\delta_{1}$,
we have that 
\begin{align*}
 & \mathds{E}_{i-1}\left[V_{i}(E)^{2}\mathbf{1}\left\{V_{i}(E)\geq R\right\}\right]\\
\leq & 4\mathds{E}_{i-1}\left[V(\bX_{i})\ell(\bX_{i},\balpha_{i})\mathbf{1}\left\{V(\bX_{i})\ell(\bX_{i},\bealpha_{i})\geq\frac{1}{4}R^{2}\right\}\right]\\
= & 4\mathds{E}_{\bX\sim P}\left[V(\bX)\mathbf{1}\left\{V(\bX)\ell(\bX,\bealpha_{i})\geq\frac{1}{4}R^{2}\right\}\right]\\
\leq & 4\mathds{E}_{\bX\sim P}\left[V(\bX)\mathbf{1}\left\{\sup_{\left\Vert \bealpha-\bealpha^{*}\right\Vert \leq\delta_{1}}V(\bX)\ell(\bX,\bealpha)\geq\frac{1}{4}R^{2}\right\}\right].
\end{align*}
As $R\rightarrow\infty$, $P\left(\sup_{\left\Vert \bealpha-\bealpha^{*}\right\Vert \leq\delta_{1}}V(\bX)\ell(\bX,\bealpha)\geq R^{2}/4\right)\rightarrow0$,
thus given $\left\Vert \balpha_{i}-\balpha^{*}\right\Vert \leq\delta_{1}$,
we have that $\lim_{R\rightarrow\infty}\mathds{E}_{i-1}\left[V_{i}(E)^{2}\mathbf{1}\{V_{i}(E)\geq R\}\right]=0$.
From the consistency of $\balpha_i$, there always exists $N(\omega)$
such that $\left\Vert \balpha_{i}-\balpha^{*}\right\Vert \leq\delta_{1}$
when $i>N(\omega)$. So \eqref{eq: LF_proof} is proved.
\hfill{$\Box$}
\endproof

\subsection{Assumption Verification in Numerical Examples}
In this subsection, we verify the assumptions for algorithms introduced in Section \ref{sec:example}. Since most of the algorithms in Section \ref{sec:example} considers extreme large quantile, we verify the assumptions using the estimator of form $\mathbf{1}\{Z\geq x\}\ell (Z,\alpha)$ (see the discussion at the end of Section \ref{sec:quantile}). 

For these quantile estimation algorithms, we will check the assumption introduced in Section \ref{sec:quantile}. The assumptions regarding the smoothness and regularity of the objective functions are all easy to check because in the examples, all of the distribution functions are smooth with positive gradients around the target quantile. 
Other assumptions that may require a little bit of algebra to show are Assumptions \ref{assu: SAA_QE_consistency}, \ref{assu: QE_Feller} and \ref{assu: QE_Fabian_noise}. They are summarized as follows. 
\begin{itemize}
\item \textbf{(SAA1)} For the SAA method, there exists a $\delta>0$ such that 
\[
\mathds{E}_{n-1}\left[\left(\mathbf{1}\{Z_n\geq q^{*}-\delta\}\ell(Z_n,\alpha_{n})\right)^{2}\right]=O\left(n^{\frac{1}{2}-\epsilon}\right)
\]
for some $\epsilon>0$.
\item \textbf{(SAA2)} For the SAA method, there exists $\delta,\delta_1>0$ and a function $V(Z)\geq \sup_{\left\Vert\alpha-\alpha^*\right\Vert\leq\delta_1}\mathbf{1}\{Z\leq q^*+\delta\}\ell(Z,\alpha)$ and $\mathds{E}_{Z\sim P}[V(Z)]<\infty$.
\item \textbf{(SA1)} For the SA method, $\sup_{m}\mathds{E}_{m}[|V_{m}|^{2}]<C$ for some constant C, where $V_m = \mathbf{1}\{Z_{n+1}\geq \hat{q}_n\}\ell (Z_{n+1},\alpha_{n+1})-(1-p)$ and $Z_{n+1}\sim P_{\alpha_{n+1}}$.
\end{itemize}


Essentially, all of these assumptions can be regarded as some proper boundedness of the estimator $\mathbf{1}\{Z\geq q\}\ell$. 
Actually, if we can show that $\mathbf{1}\{Z\geq q\}\ell$ has a constant upper bound, then all of the above conditions hold automatically.

\subsubsection{Proof of Proposition \ref{prop: verification_normal}.}
\indent{\bf For (SAA1)}, we have $\ell(x,\alpha_{n})=e^{-\alpha_{n}x+{\alpha_{n}^{2}}/{2}}$.
Then for $\bar{q}=q^{*}-\delta$,
\begin{align*}
\mathds{E}_{n-1}\left[\left(\mathbf{1}\{Z_{n}\geq\bar{q}\}\ell(Z_{n},\alpha_{n})\right)\right]^{2} & =\int_{\bar{q}}^{\infty}e^{-\frac{(x-\alpha_{n})^{2}}{2}}e^{-2\alpha_{n}x+\alpha_{n}^{2}}dx\\
 & =e^{\alpha_{n}^{2}}\int_{\bar{q}}^{\infty}e^{-\frac{(x+\alpha_{n})^{2}}{2}}dx\\
 & \leq e^{\alpha_{n}^{2}}.
\end{align*}
It is clear that our choice of $A_n$ can guarantee that  $e^{\alpha_{n}^{2}}=O(n^{1-\epsilon})$, so we have 
\[
\mathds{E}\left[\mathbf{1}\{Z_{n}\leq\bar{q}\}\ell(Z_{n},\alpha_{n})^{2}\right]=O(n^{1-\epsilon}).
\]

{\bf For (SAA2)}, for any $\alpha\in(\alpha^*-\delta_1,\alpha^*+\delta_1) $, we have that  
$$\ell(x,\alpha) = e^{-\alpha x+\alpha^2/2}\leq e^{-\alpha^* x+\delta_1 \left|x\right|+ \alpha^2/2}\leq e^{-\alpha^* x+(\alpha^*+\delta_1)^2/2}(e^{x\delta_1}+e^{-x\delta_1}).$$
And it is clear that when $x\sim N(0,1)$, the RHS has finite expectation. 

{\bf For (SA1)}, for any $q\in A\subset \mathbb{R}_+$, we know that  $\ell(x,q)$ is bounded by $1$ when $x\geq q$.
Hence $\mathbf{1}\{Z_n\geq\hat{q}_n)\ell(Z_n,\hat{q}_n\}\leq1$ is always true for all $n$. This implies the condition.

\subsubsection{Proof of Proposition \ref{prop: verification_pVaR}.}\label{subsec: verification_pVaR}

\indent{\bf For (SAA1)}, follows from equation \eqref{eq:m2upbound} and $\alpha$ belongs to a bounded set. 

{\bf For (SAA2)}, we notice that  $[I(q_{\min}),I(q_{\max})]$ is a compact subset of $S:=\{\alpha:\psi(\alpha)\  \text{exists and is finite}\} = \{\alpha: 1-2\alpha\lambda_i>0,~i=1,2,\dots,m\}$. Thus we may find $\delta>0$ such that $(\alpha^*-\delta,\alpha+\delta)\subset S$. Then by convexity of $-Q\alpha+\psi(\alpha)$ (as a function of $\alpha$), we have that for any $\alpha\in (\alpha^*-\delta,\alpha^*+\delta)$, 
\begin{align*}
\ell(Q,\alpha) &\leq \exp (\max \{-Q(\alpha^*-\delta)+\psi(\alpha^*-\delta),-Q(\alpha^*+\delta)+\psi(\alpha^*+\delta)\})\\
&\leq \exp (-Q(\alpha^*-\delta)+\psi(\alpha^*-\delta)) +\exp (-Q(\alpha^*+\delta)+\psi(\alpha^*+\delta)).
\end{align*}
The RHS has finite expectation since $(\alpha^*-\delta,\alpha^*+\delta)\subset S$.

{\bf For (SA1)}, notice that the choice of $\alpha_{n+1}$ would minimize the second order moment $m(x,\alpha)$ in \eqref{eq:m2upbound} when $x=\hat{q}_n$. And when $\alpha=0$, $m(x,0)=1$. Thus we always have that $m(\hat{q}_n,\hat{\alpha}_{n+1})\leq 1$. So $\mathds{E}_n [V_n^2]\leq 1$.

\section{Additional Examples}\label{appx:add_example}
This section provides theoretical analysis and numerical results for exponential distribution and Pareto-tailed distribution.
\subsection{Exponential Distribution}


Suppose that the original distribution is exponential with parameter $\lambda$. Given $\hat{q}_{n-1}$, we will find a new exponential
distribution with parameter $\alpha_n=I(\hat{q}_{n-1})$ such that the variance of $\mathbf{1}\{Z_{n}\geq\hat{q}_{n-1}\}\ell(Z_{n},\alpha(\hat{q}_{n-1}))$ will
be minimized. Since the conditional expectation of $\mathbf{1}\{Z_{n}\geq\hat{q}_{n-1}\}\ell(Z_{n},\alpha)$ is always $1-F(\hat{q}_{n-1})$, to minimize its variance, it suffices to minimize its second moment. The second moment
is given by 
\[
\mathds{E}_{Z\sim P_{\alpha}}\left[\mathbf{1}\{Z\geq\hat{q}_{n-1}\}\ell(Z,\alpha)^{2}\right]=\int_{\hat{q}_{n-1}}^{\infty}\alpha e^{-\alpha x}\left(\frac{\lambda e^{-\lambda x}}{\alpha e^{-\alpha x}}\right)^{2}dx=\lambda^{2}\frac{e^{-2\lambda\hat{q}_{n-1}+\alpha\hat{q}_{n-1}}}{\alpha(2\lambda-\alpha)}.
\]
Minimizing over $\alpha\in(0,2\lambda)$, we find that the optimal
choice of $\alpha$ is given by 
\begin{equation}\label{eq:exp_I_function}
I(\hat{q}_{n-1})=\frac{\lambda\hat{q}_{n-1}+1-\sqrt{1+\lambda^{2}\hat{q}_{n-1}^{2}}}{\hat{q}_{n-1}}.
\end{equation}

\subsubsection{Verification of Assumptions.}

\indent{\bf For (SAA1)}, if we choose $\alpha_n=I(\hat{q}_{n-1})$, we have that for any $\bar{q}\in[q-\delta,q+\delta]$,
\begin{align*}
\mathds{E}\left[\left(\mathbf{1}\{Z_{n}\geq\bar{q}\}\ell(X_{n},\alpha_{n})\right)^{2}\right] & =\lambda^{2}\mathds{E}\left[\frac{e^{-2\lambda\bar{q}+\alpha_{n}\bar{q}}}{\alpha_{n}(2\lambda-\alpha_{n})}\right].
\end{align*}
We need this to be $O(n^{{1}/{2}-\varepsilon})$. Since we always have
that $I(\hat{q}_{n-1})\in(0,2\lambda)$, the exponential term
in the numerator is always bounded in interval $[e^{-2\lambda\bar{q}},1]$,
which will not affect the asymptotic rate. Now we consider the denominator.
From the expression of $I(q)$, we know that it is non-increasing
and $I(0+)=\lambda,I(\infty)=0$. Hence $(2\lambda-\alpha_{n})$
is also bounded away from 0. To bound the reciprocal ${1}/{\alpha_{n}}$, we need to introduce truncation set $A_n$ which guarantees that ${1}/{\alpha}=O(n^{1-\epsilon})$ uniformly for $\alpha\in A_n$. And our final choice of $\alpha_n$ is $\Pi_{A_n}I[\hat{q}_{n-1}]$.

{\bf For (SAA2)},
\[
\ell(x,\alpha)=\frac{\lambda e^{-\lambda x}}{\alpha e^{-\alpha x}}.
\]
The discussion for (SAA1) implies that $\alpha^*\in (0,\lambda)$. We also notice that, as long as $\lambda>\alpha >\alpha^*/2>0$, we have
{
\[
\frac{\lambda e^{-\lambda x}}{\alpha e^{-\alpha x}}\leq \frac{2\lambda e^{-\lambda x}}{\alpha^* e^{-\lambda x}} \cdot \frac{2\lambda}{\alpha^*}.
\]}
The RHS is a bounded by constant so of course it has finite expectation. 

{\bf For (SA1)}, we have that
\[
\mathds{E}_{Z\sim P_{\alpha_n}}[\mathbf{1}\{Z\geq\hat{q}_{n-1}\}(\ell(Z_{n},\alpha_{n}))^{2}]=\lambda^{2}\hat{q}_{n-1}^{2}\frac{e^{-\lambda\hat{q}_{n-1}+1-\sqrt{1+\lambda^{2}\hat{q}_{n-1}^{2}}}}{2(\sqrt{1+\lambda^{2}\hat{q}_{n-1}^{2}}-1)}.
\]
From the minimizing property, this is always smaller than the variance
with the original measure, which is $F(\hat{q}_{n-1})(1-F(\hat{q}_{n-1}))\leq {1}/{4}$.

\subsubsection{Asymptotic Results.}

For both SAA and PR-SA, we will reach an asymptotic variance
of 
\[
\frac{\mathds{E}\left[\mathbf{1}\{Z\geq q^{*}\}(\ell(Z,\alpha^*))^{2}\right]-p^{2}}{(f(q^{*}))^2},
\]
which is (notice that $p=1-F(q^{*})=e^{-\lambda q^{*}}$)
\[
\frac{1}{\lambda^{2}e^{-2\lambda q^{*}}}\left(\lambda^{2}(q^{*})^2\frac{e^{-\lambda q^{*}+1-\sqrt{1+\lambda^{2}(q^{*})^2}}}{2(\sqrt{1+\lambda^{2}(q^{*})^2}-1)}-p^{2}\right)=(q^{*})^2\frac{e^{\lambda q^{*}+1-\sqrt{1+\lambda^{2}(q^{*})^2}}}{2(\sqrt{1+\lambda^{2}(q^{*})^2}-1)}-\frac{1}{\lambda^{2}},
\]
hence we have the CLT (for PR-SA, replace $\hat{q}_{n}$
with $\bar{q}_{n}$ here)
\[
\sqrt{n}(\hat{q}_{n}-q^{*})\Rightarrow \mathcal{N}\left(0,(q^{*})^2\frac{e^{\lambda q^{*}+1-\sqrt{1+\lambda^{2}(q^{*})^2}}}{2(\sqrt{1+\lambda^{2}(q^{*})^2}-1)}-\frac{1}{\lambda^{2}}\right).
\]
As $q^{*}$ goes to infinity, the variance will still go to infinity.
But the speed is much slower than ${1}/{(1-F(q^{*}))}=e^{\lambda q^{*}}$.

For RM-SA with stepsize $\gamma_{n}={\gamma}/{n}$,
the asymptotic result would be 
\[
\sqrt{n}(\hat{q}_{n}-q^{*})\Rightarrow \mathcal{N}\left(0,\frac{\gamma^{2}}{2\gamma-\lambda e^{-\lambda q^{*}}}\left(\lambda^{2}(q^{*})^2\frac{e^{-\lambda q^{*}+1-\sqrt{1+\lambda^{2}(q^{*})^2}}}{2(\sqrt{1+\lambda^{2}(q^{*})^2}-1)}-p^{2}\right)\right).
\]

\subsubsection{Numerical Experiments.}

We consider estimating the quantile of an exponential distribution with parameter $\lambda=2$. 
The algorithmic configurations for SAA, RM-SA and PR-SA with and without adaptive IS are the same as in the normal distribution example, and the IS parameter is given by \eqref{eq:exp_I_function}.
Similarly we set $p=0.99,0.999,0.9999$, and vary the total number of simulation samples from $500$ to $500\times 2^8$ to estimate the quantiles. We repeat the procedure $200$ times to calculate the variance and MSE of the estimated quantiles. Figures \ref{fig:exp_var}-\ref{fig:exp_mse} show the results, and Tables \ref{tab:exp01}-\ref{tab:exp0001} show their numerical details.
\begin{figure}[H]
\begin{minipage}[t]{0.33\linewidth}
\centering
\includegraphics[width=5.5cm]{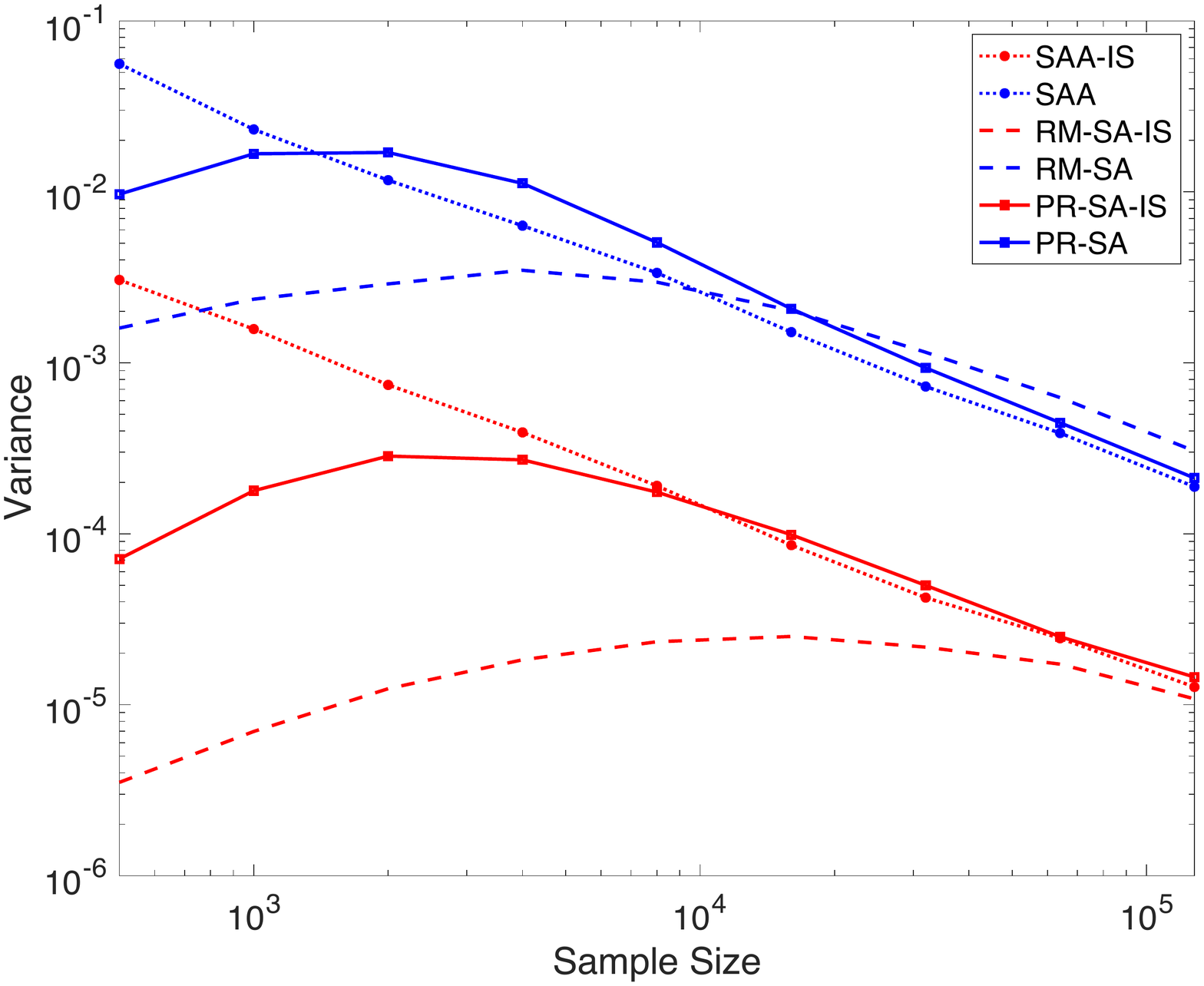}
\end{minipage}
\begin{minipage}[t]{0.33\linewidth}
\centering
\includegraphics[width=5.5cm]{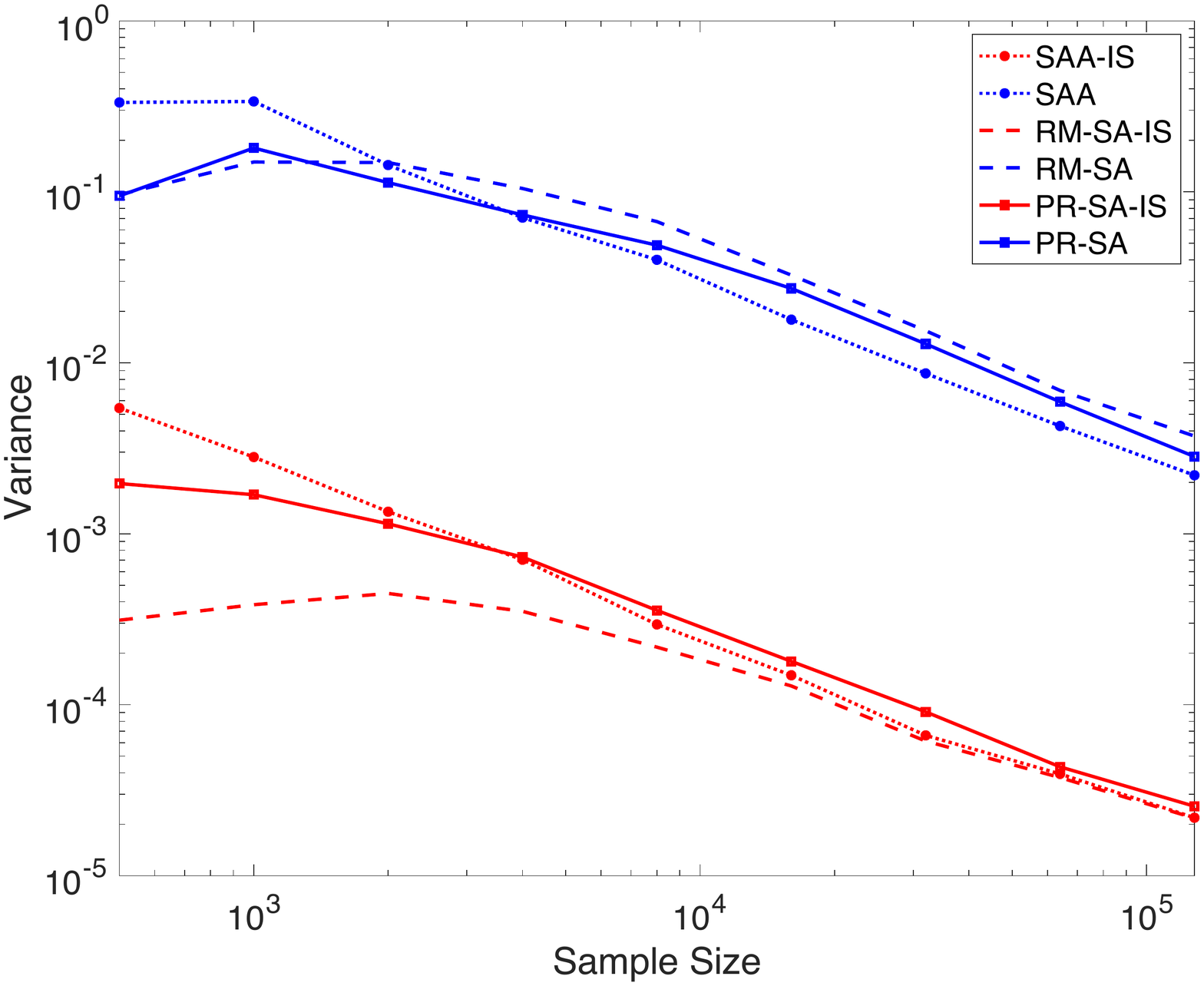}
\end{minipage}
\begin{minipage}[t]{0.32\linewidth}
\centering
\includegraphics[width=5.5cm]{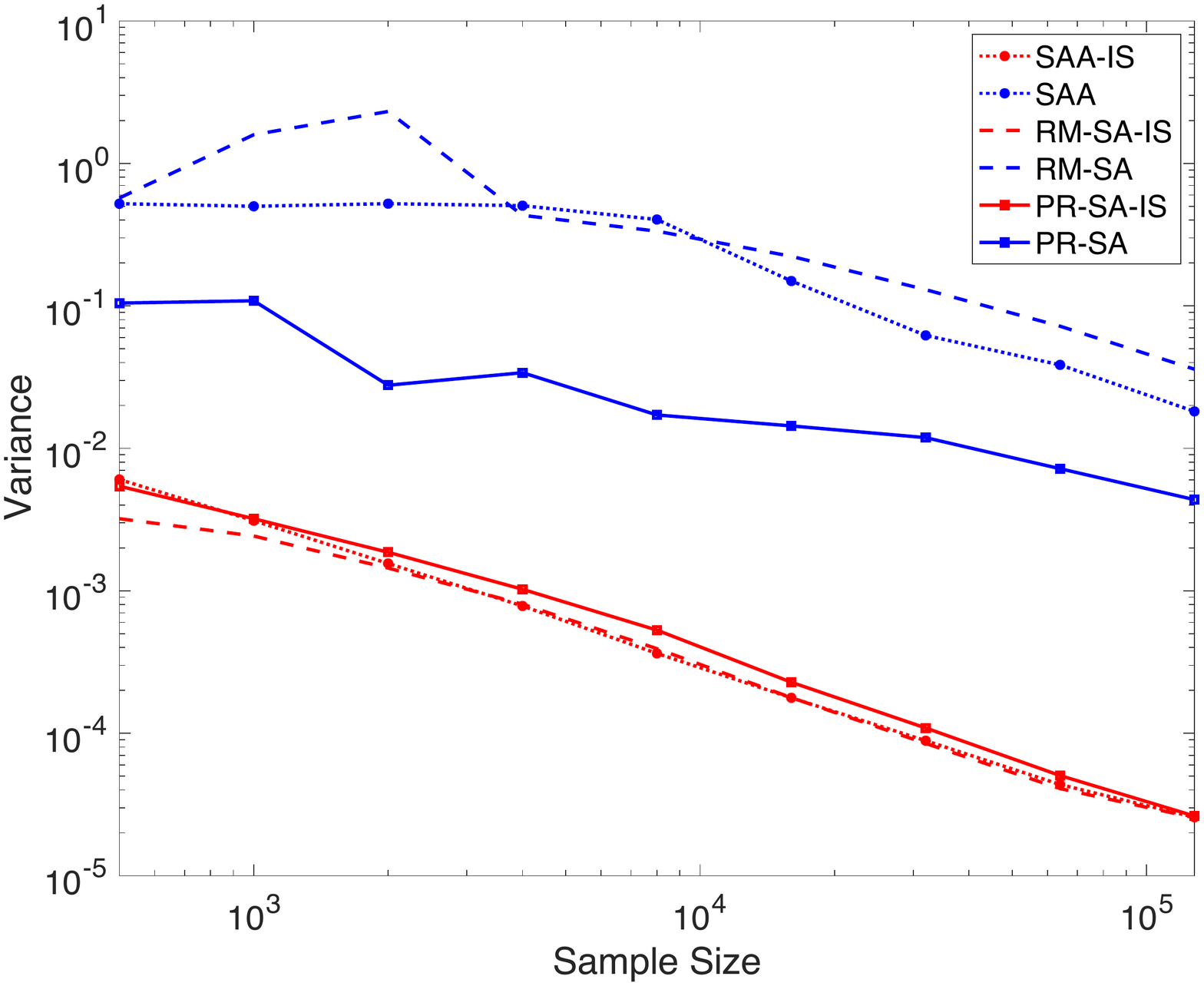}
\end{minipage}
\caption{Variance of SAA, RM-SA, PR-SA, with and without adaptive IS for exponential distribution ($p=0.99$ for the left panel; $p=0.999$ for the middle panel; $p=0.9999$ for the right panel)}
\label{fig:exp_var}
\end{figure}


\begin{figure}[H]
\begin{minipage}[t]{0.33\linewidth}
\centering
\includegraphics[width=5.5cm]{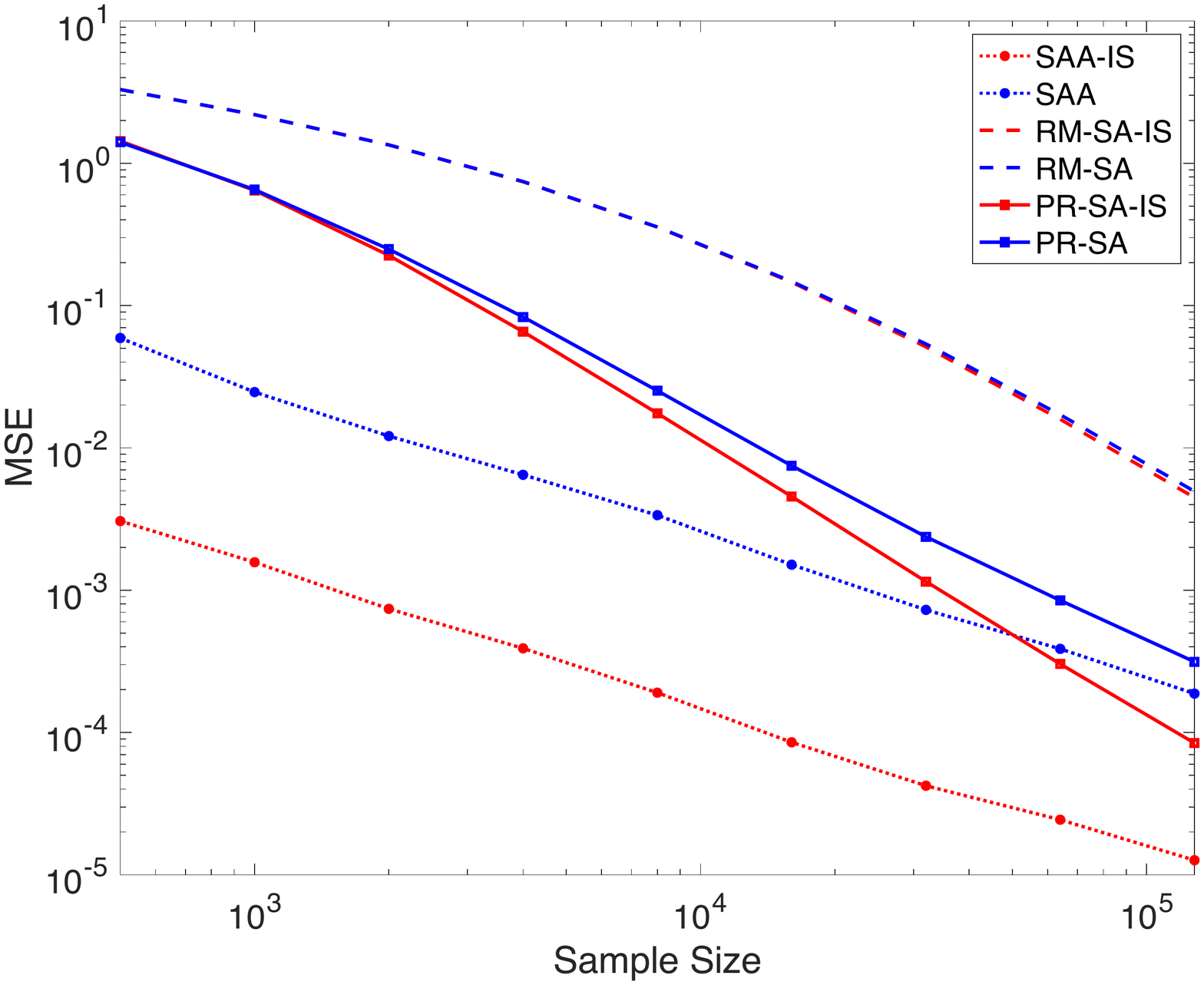}
\end{minipage}
\begin{minipage}[t]{0.33\linewidth}
\centering
\includegraphics[width=5.5cm]{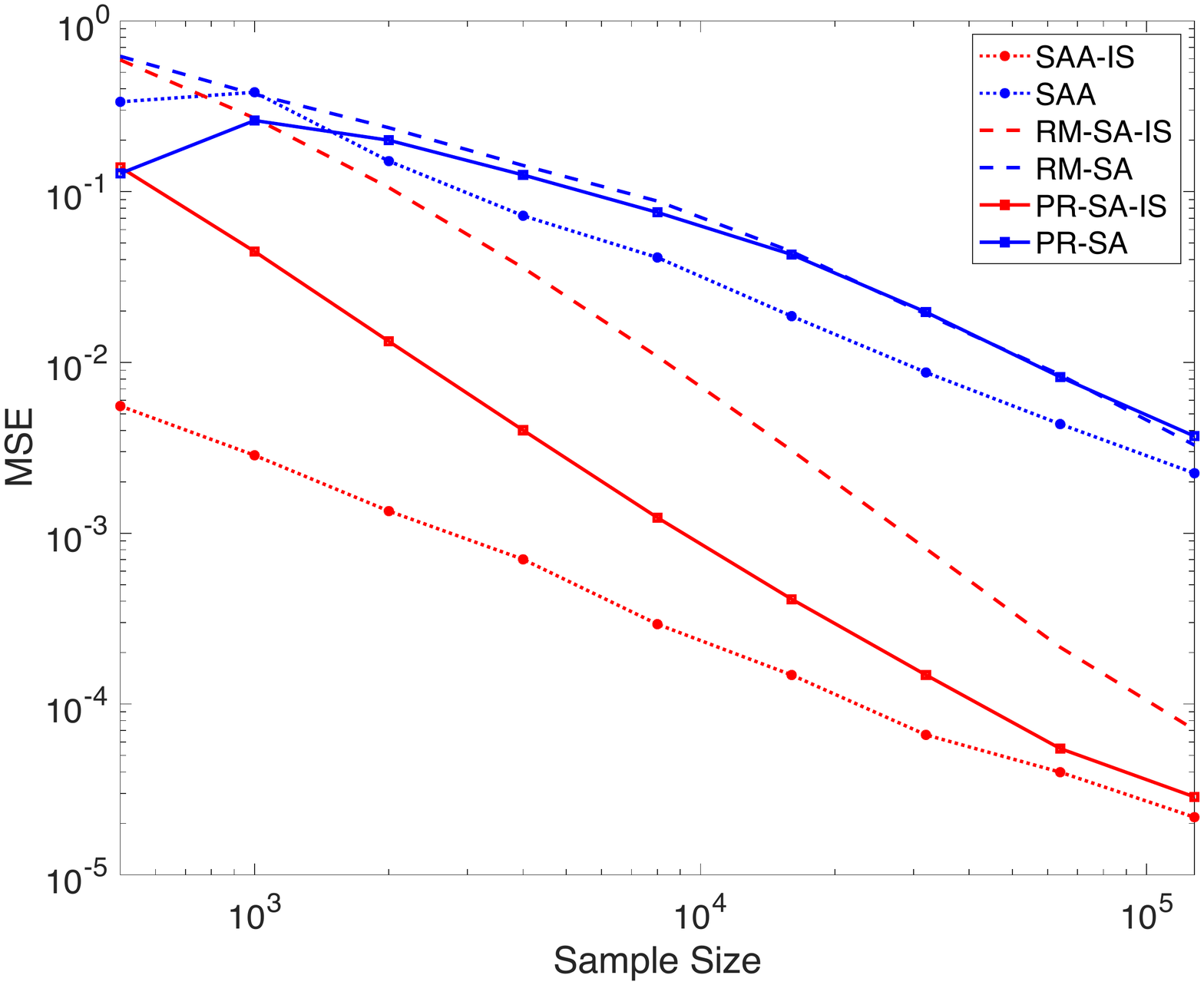}
\end{minipage}
\begin{minipage}[t]{0.32\linewidth}
\centering
\includegraphics[width=5.5cm]{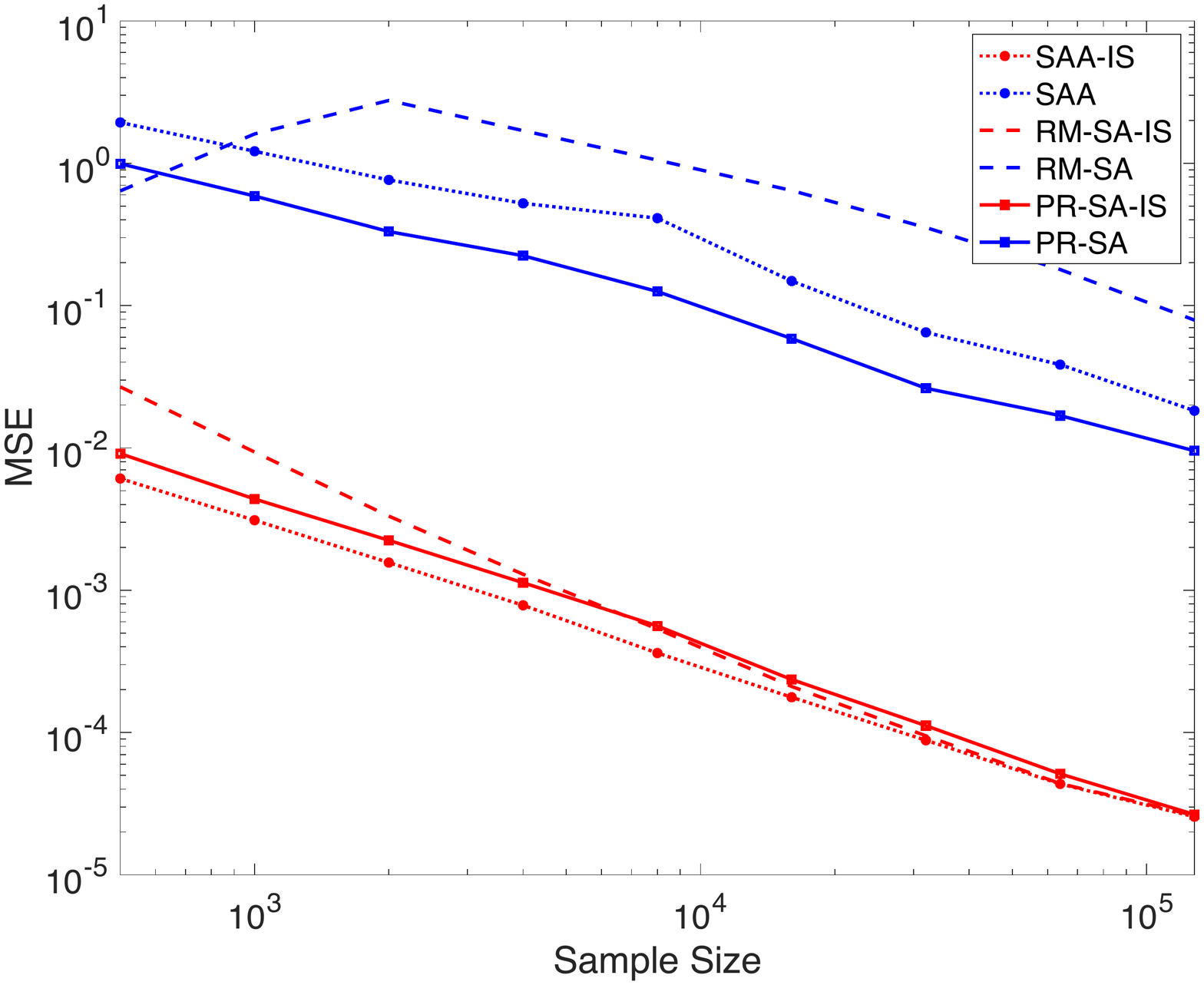}
\end{minipage}
\caption{MSE of SAA, RM-SA, PR-SA, with and without adaptive IS for exponential distribution ($p=0.99$ for the left panel; $p=0.999$ for the middle panel; $p=0.9999$ for the right panel)}
\label{fig:exp_mse}
\end{figure}

\begin{table}[H]
\centering
\caption{Variance of SAA, RM-SA and PR-SA with and without adaptive IS for exponential distribution ($p=0.99$)}
\label{tab:exp01}
\small
\begin{tabular}{c|l l l|l l l|l l l l|}
\toprule
 Sample Size    & SAA-IS   & SAA      & ratio  & RM-SA-IS    & RM-SA       & ratio  & PR-SA-IS   & PR-SA      & ratio \\
 \midrule
500    & 3.05E-03 & 5.60E-02 & 18 & 3.51E-06 & 1.60E-03 & 454 & 7.13E-05 & 9.69E-03 & 136 \\
1000   & 1.58E-03 & 2.32E-02 & 15 & 6.98E-06 & 2.36E-03 & 338 & 1.79E-04 & 1.67E-02 & 94  \\
2000   & 7.43E-04 & 1.17E-02 & 16 & 1.24E-05 & 2.90E-03 & 233 & 2.85E-04 & 1.70E-02 & 60  \\
4000   & 3.92E-04 & 6.34E-03 & 16 & 1.83E-05 & 3.48E-03 & 190 & 2.71E-04 & 1.12E-02 & 41  \\
8000   & 1.91E-04 & 3.36E-03 & 18 & 2.34E-05 & 2.96E-03 & 127 & 1.75E-04 & 5.06E-03 & 29  \\
16000  & 8.58E-05 & 1.51E-03 & 18 & 2.51E-05 & 2.02E-03 & 81  & 9.87E-05 & 2.07E-03 & 21  \\
32000  & 4.23E-05 & 7.27E-04 & 17 & 2.17E-05 & 1.15E-03 & 53  & 4.99E-05 & 9.34E-04 & 18  \\
64000  & 2.45E-05 & 3.88E-04 & 16 & 1.73E-05 & 6.26E-04 & 36  & 2.50E-05 & 4.46E-04 & 18  \\
128000 & 1.27E-05 & 1.89E-04 & 15 & 1.08E-05 & 3.05E-04 & 28  & 1.45E-05 & 2.12E-04 & 15   \\ 
\bottomrule
\end{tabular}
\end{table}

\begin{table}[H]
\centering
\caption{Variance of SAA, RM-SA and PR-SA with and without adaptive IS for exponential distribution ($p=0.999$)}
\label{tab:expl001}
\small
\begin{tabular}{c|l l l|l l l|l l l l|}
\toprule
 Sample Size     & SAA-IS   & SAA      & ratio  & RM-SA-IS    & RM-SA       & ratio  & PR-SA-IS   & PR-SA      & ratio  \\  
\midrule
500    & 5.44E-03 & 3.33E-01 & 61  & 3.12E-04 & 9.63E-02 & 308 & 1.97E-03 & 9.47E-02 & 48  \\
1000   & 2.81E-03 & 3.38E-01 & 120 & 3.85E-04 & 1.49E-01 & 389 & 1.69E-03 & 1.81E-01 & 107 \\
2000   & 1.35E-03 & 1.43E-01 & 107 & 4.48E-04 & 1.48E-01 & 331 & 1.14E-03 & 1.13E-01 & 99  \\
4000   & 7.05E-04 & 7.07E-02 & 100 & 3.52E-04 & 1.05E-01 & 297 & 7.31E-04 & 7.33E-02 & 100 \\
8000   & 2.95E-04 & 4.01E-02 & 136 & 2.17E-04 & 6.70E-02 & 308 & 3.56E-04 & 4.88E-02 & 137 \\
16000  & 1.49E-04 & 1.79E-02 & 120 & 1.29E-04 & 3.26E-02 & 253 & 1.79E-04 & 2.72E-02 & 152 \\
32000  & 6.62E-05 & 8.67E-03 & 131 & 6.13E-05 & 1.54E-02 & 251 & 9.08E-05 & 1.29E-02 & 142 \\
64000  & 3.94E-05 & 4.27E-03 & 108 & 3.75E-05 & 6.89E-03 & 184 & 4.33E-05 & 5.91E-03 & 137 \\
128000 & 2.18E-05 & 2.20E-03 & 101 & 2.17E-05 & 3.73E-03 & 172 & 2.55E-05 & 2.83E-03 & 111  \\ 
 \bottomrule
\end{tabular}
\end{table}

\begin{table}[H]
\centering
\caption{Variance of SAA, RM-SA and PR-SA with and without adaptive IS for exponential distribution ($p=0.9999$)}
\label{tab:exp0001}
\small
\begin{tabular}{c|l l l|l l l|l l l l|}
\toprule
 Sample Size     & SAA-IS   & SAA      & ratio  & RM-SA-IS    & RM-SA       & ratio  & PR-SA-IS   & PR-SA      & ratio  \\  
 \midrule
 500    & 6.04E-03 & 5.21E-01 & 86   & 3.21E-03 & 5.71E-01 & 178  & 5.42E-03 & 1.05E-01 & 19  \\
1000   & 3.10E-03 & 5.00E-01 & 161  & 2.42E-03 & 1.59E+00 & 656 & 3.20E-03 & 1.09E-01 & 34  \\
2000   & 1.55E-03 & 5.21E-01 & 335  & 1.44E-03 & 2.31E+00 & 1603 & 1.86E-03 & 2.77E-02 & 15  \\
4000   & 7.82E-04 & 5.04E-01 & 645  & 8.05E-04 & 4.32E-01 & 537  & 1.02E-03 & 3.39E-02 & 33  \\
8000   & 3.63E-04 & 4.04E-01 & 1111 & 3.92E-04 & 3.34E-01 & 852  & 5.29E-04 & 1.71E-02 & 32  \\
16000  & 1.77E-04 & 1.49E-01 & 843  & 1.78E-04 & 2.23E-01 & 1253 & 2.28E-04 & 1.44E-02 & 63  \\
32000  & 8.86E-05 & 6.20E-02 & 700  & 8.48E-05 & 1.30E-01 & 1532 & 1.09E-04 & 1.19E-02 & 109 \\
64000  & 4.37E-05 & 3.85E-02 & 881  & 4.09E-05 & 7.19E-02 & 1760 & 5.05E-05 & 7.29E-03 & 145 \\
128000 & 2.57E-05 & 1.81E-02 & 706  & 2.55E-05 & 3.59E-02 & 1406 & 2.62E-05 & 4.36E-03 & 166 \\ 
 \bottomrule
\end{tabular}
\end{table}

The variance reduction effect can be seen clearly by observing that the red curves in Figure \ref{fig:exp_var} are significantly below the blue curves. Also, the variance reduction ratio grows quickly as $p$ goes to 1. For example, 
in Tables \ref{tab:exp01}-\ref{tab:exp0001} where $p$ takes 0.99, 0.999, and 0.9999, respectively, fixing the sample size 128000, the variance reduction ratio for SAA-IS takes 15, 101, and 706, respectively.  

\subsection{Pareto-tailed Distribution}

We consider a Pareto-tailed distribution. By using IS, we change $P(Z\geq x)=x^{-\lambda},x\geq1$ to $P(Z\geq x)=x^{-\alpha},x\geq1$. 
As in the previous examples, we will use the upper-side estimator (i.e., $\mathbf{1}\{Z\geq q\}\ell(Z,\alpha)$). Suppose that the original distribution has parameter $\lambda$. Given $\hat{q}_{n-1}$, we will find a new measure $\alpha_n=I(\hat{q}_{n-1})$ such that
the variance of $\mathbf{1}\{Z\geq\hat{q}_{n-1}\}\ell(Z,\alpha)$ is minimized. As in the last subsection, it suffices to minimize the second moment.
The second moment is given by 
\[
\mathds{E}_{Z\sim P_{\alpha}}[\mathbf{1}\{Z\geq \hat q_{n-1}\}(\ell(Z,\alpha))^2]=\int_{\hat{q}_{n-1}}^{\infty}\alpha x^{-\alpha-1}\frac{\lambda^{2}x^{-2\lambda-2}}{\alpha^{2}x^{-2\alpha-2}}dx=\frac{\lambda^{2}}{\alpha}\frac{\hat{q}_{n-1}^{-2\lambda+\alpha}}{2\lambda-\alpha}.
\]
Then we need to find $\alpha\in(0,2\lambda)$ to minimize ${\hat{q}_{n-1}^{\alpha}}/{(\alpha(2\lambda-\alpha))}$.
With some computations, we can show that 
\begin{equation}\label{eq:paro_I_function}
I(\hat{q}_{n-1})=\frac{1+\lambda\log\hat{q}_{n-1}-\sqrt{1+\lambda^{2}\log^{2}\hat{q}_{n-1}}}{\log\hat{q}_{n-1}}.
\end{equation}

\subsubsection{Verification of Assumptions.}
\indent{\bf For (SAA1)}, if we let $\alpha_n = I(\hat{q}_{n-1})$, we have that for any $\bar{q}\in[q-\delta,q+\delta]$,
\begin{align*}
\mathds{E}\left[\mathbf{1}\{Z_{n}>\bar{q}\}\left(\ell(Z_{n},\alpha_{n})\right)^{2} \right]& =\mathds{E}\left[\frac{\lambda^{2}}{\alpha_{n}}\frac{\bar{q}^{-2\lambda+\alpha_{n}}}{2\lambda-\alpha_{n}}\right].
\end{align*}
We need this to be $O(n^{{1}/{2}-\varepsilon})$. Since we always have
that $I(\hat{q}_{n-1})\in(0,\lambda)$, the exponential term in
the numerator is always bounded in interval $[\bar{q}^{-2\lambda},\bar{q}^{-\lambda}]$,
which will not affect the asymptotic rate. Now we only need to guarantee that the denominator cannot be too close to 0.

From the expression of $I(q)$, we know that it is nonincreasing
and $I(0+)=\lambda,I(\infty)=0$. Hence $(2\lambda-\alpha_{n})$
is also bounded away from 0. To bound the reciprocal of $\alpha_{n}$, we 
will need that the truncation set $A_n$ satisfies ${1}/{\alpha}=O(n^{1-\epsilon})$ uniformly for $\alpha\in A_n$. 

{\bf For (SAA2) },
\[
\ell(x,\alpha)=\frac{\lambda x^{-\lambda-1}}{\alpha x^{-\alpha-1}}.
\]
Since $\alpha^*\in(0,\lambda)$, we have that ${x^{-\lambda-1}}/{x^{-\alpha-1}}$
is always bounded for $\alpha$ in a neighborhood of $\alpha^*$. Then following similar discussion as in the previous
subsection, we can verify these assumptions.

{\bf For (SA1)}, as in the previous example, this follows from the minimizing property.

\subsubsection{Asymptotic Results.}

We have the second moment of the estimator
\[
\mathds{E}_{n-1}[\mathbf{1}\{Z_n\geq\hat{q}_{n-1}\}(\ell(Z_n,\alpha_n))^{2}]=\frac{\lambda^{2}}{\alpha(\hat{q}_{n-1})}\frac{{{\hat q}_{n-1}}^{-2\lambda+\alpha(\hat{q}_{n-1})}}{2\lambda-\alpha(\hat{q}_{n-1})}.
\]

From this, for both SAA and PR-SA, we will reach an asymptotic
variance of (similar to the computation in the previous example)
\[
\frac{1}{\lambda^{2}\left(q^{*}\right)^{-2\lambda-2}}\left(\frac{\lambda^{2}}{\alpha(q^{*})}\frac{{{(q^*)}}^{-2\lambda+\alpha(q^{*})}}{2\lambda-\alpha(q^{*})}-\left(q^{*}\right)^{-2\lambda}\right)=\left(q^{*}\right)^{2}\left(\frac{1}{\alpha(q^{*})}\frac{{{(q^*)}}^{\alpha(q^{*})}}{2\lambda-\alpha(q^{*})}-\frac{1}{\lambda^{2}}\right).
\]
Here, $I(q)={(1+\lambda\log q-\sqrt{1+\lambda^{2}(\log q)^2})}/{\log q}$. This means we have the CLT (for PR-SA, replace $\hat{q}_{n}$
with $\bar{q}_{n}$ here)
\[
\sqrt{n}(\hat{q}_{n}-q)\Rightarrow \mathcal{N}\left(0,\left(q^{*}\right)^{2}\left(\frac{1}{I(q^{*})}\frac{(q^*)^{I(q^{*})}}{2\lambda-I(q^{*})}-\frac{1}{\lambda^{2}}\right)\right).
\]
As we can see, as $q^*$ goes to infinity, the variance goes to infinity.
But the speed is much slower than ${{(q^*)}}^{\lambda}$.

For RM-SA stepsize $\gamma_{n}={\gamma}/{n}$,
the asymptotic result would be 
\[
\sqrt{n}(\hat{q}_{n}-q)\Rightarrow \mathcal{N}\left(0,\frac{\gamma^{2}}{2\gamma-\lambda\left(q^{*}\right)^{-\lambda-1}}\left(\frac{\lambda^{2}}{I(q^{*})}\frac{{{(q^*)}}^{-2\lambda+I(q^{*})}}{2\lambda-I(q^{*})}-\left(q^{*}\right)^{-2\lambda}\right)\right).
\]

\subsubsection{Numerical Experiments.}
We consider a Pareto-tailed distribution
\begin{equation}
\Pr\{X\geq x\} = x^{-\lambda}
\end{equation} 
with $\lambda=2$.
We change the parameter $\lambda$ to $\alpha$ in IS. Specifically, the optimal IS parameter is given by \eqref{eq:paro_I_function}.
Similar to the previous examples, we set $p=0.99,0.999,0.9999$, and vary the total number of simulation samples from $500$ to $500\times 2^8$ to estimate the quantiles. We repeat the procedure $200$ times to calculate the variance and MSE of the estimated quantiles.
Figures \ref{fig:tail_var}-\ref{fig:tail_mse} show the results, and Tables \ref{tab:tail01}-\ref{tab:tail0001} further show their numerical details.
\begin{figure}[H]
\begin{minipage}[t]{0.33\linewidth}
\centering
\includegraphics[width=5.5cm]{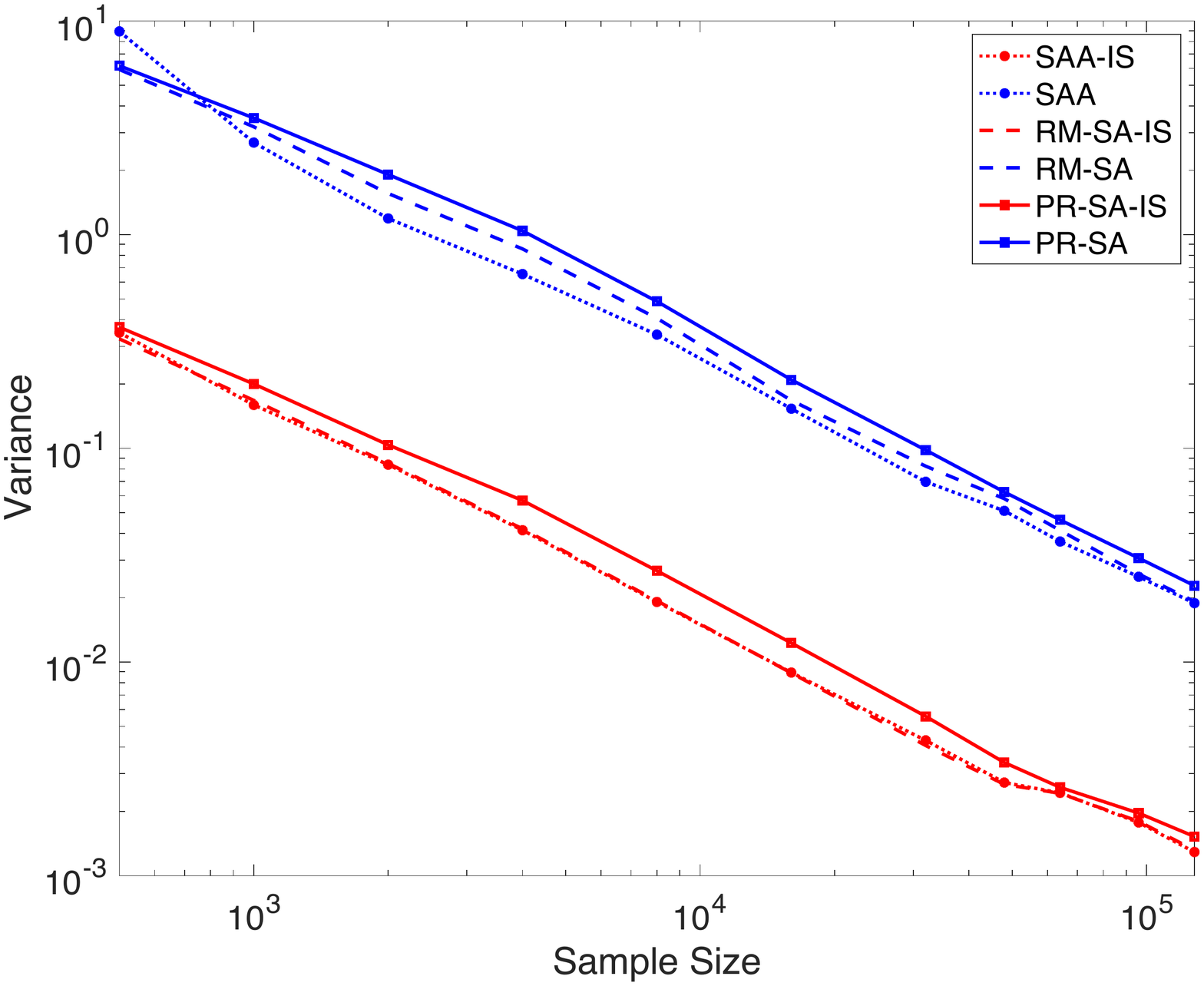}
\end{minipage}
\begin{minipage}[t]{0.33\linewidth}
\centering
\includegraphics[width=5.5cm]{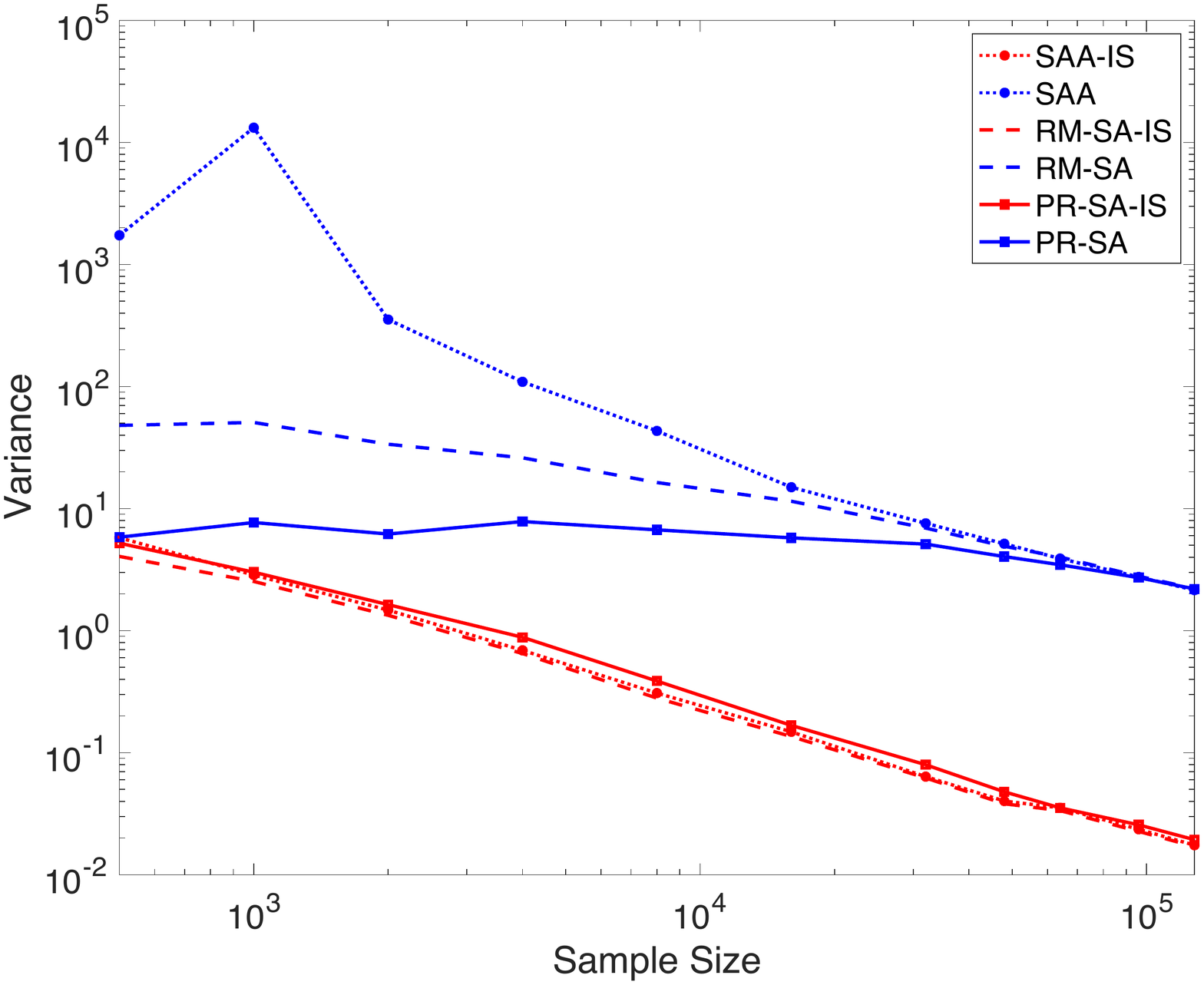}
\end{minipage}
\begin{minipage}[t]{0.32\linewidth}
\centering
\includegraphics[width=5.5cm]{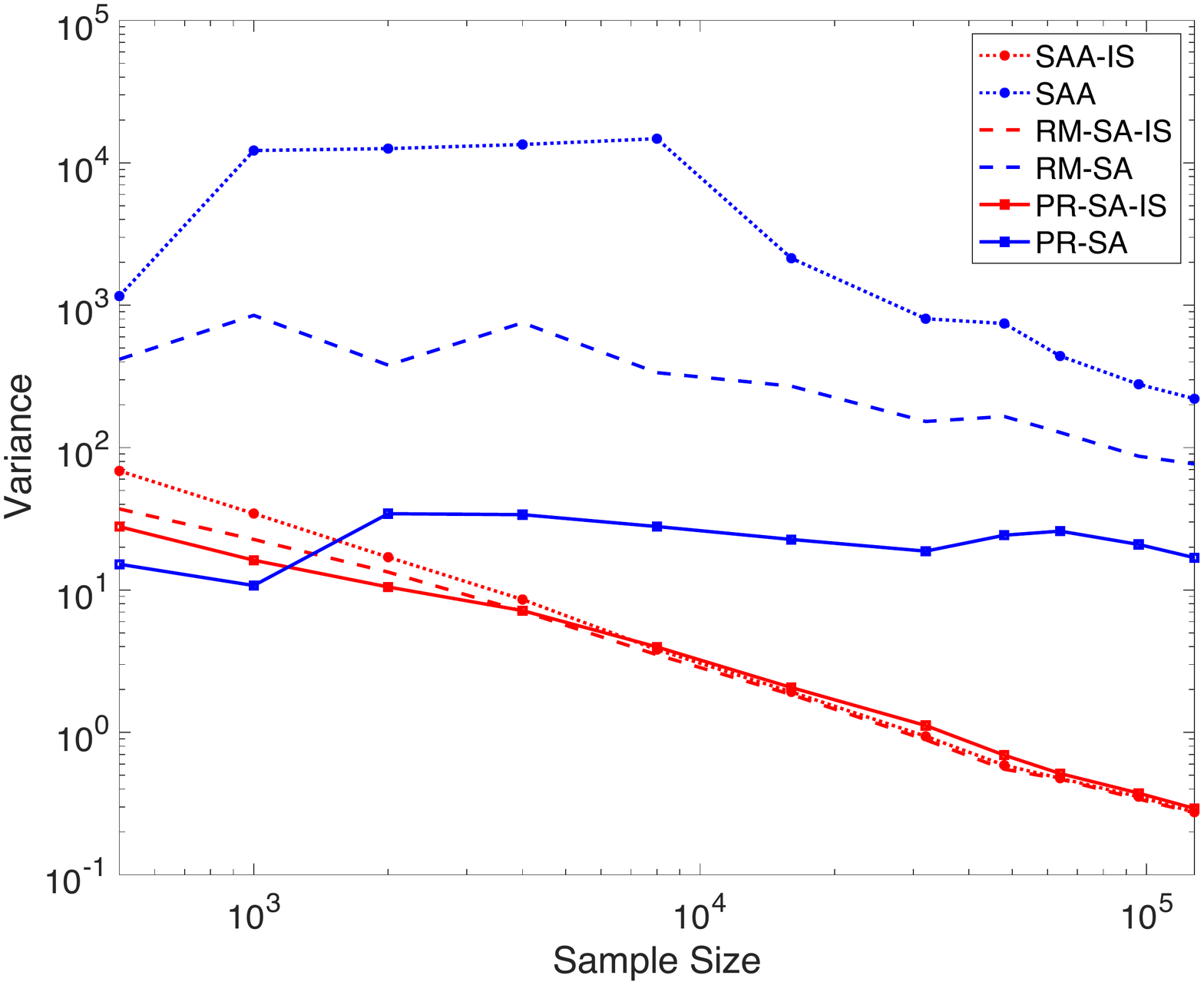}
\end{minipage}
\caption{Variance of SAA, RM-SA, PR-SA, with and without adaptive IS for Pareto-tailed distribution ($p=0.99$ for the left panel; $p=0.999$ for the middle panel; $p=0.9999$ for the right panel)}
\label{fig:tail_var}
\end{figure}


\begin{figure}[H]
\begin{minipage}[t]{0.33\linewidth}
\centering
\includegraphics[width=5.5cm]{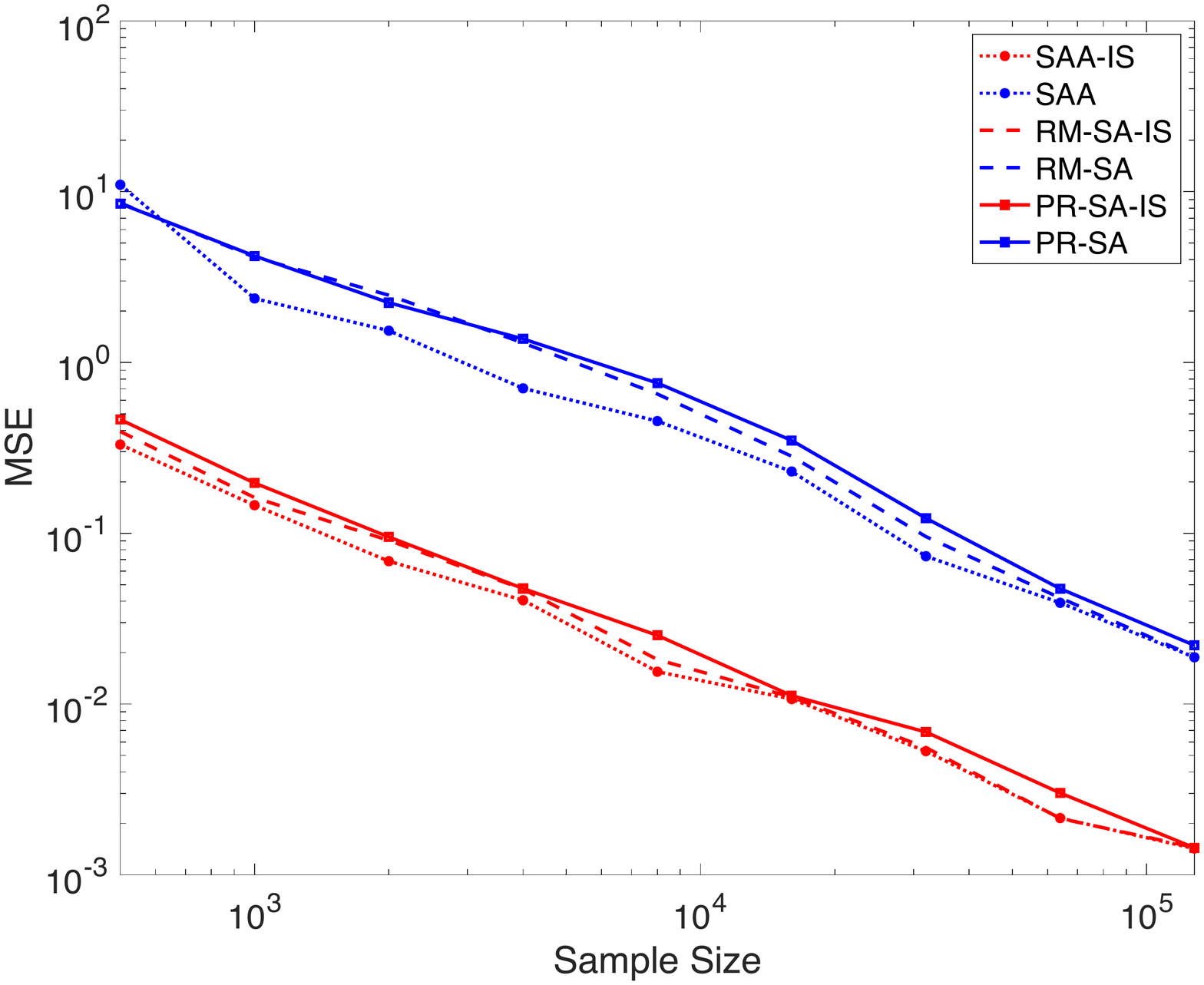}
\end{minipage}
\begin{minipage}[t]{0.33\linewidth}
\centering
\includegraphics[width=5.5cm]{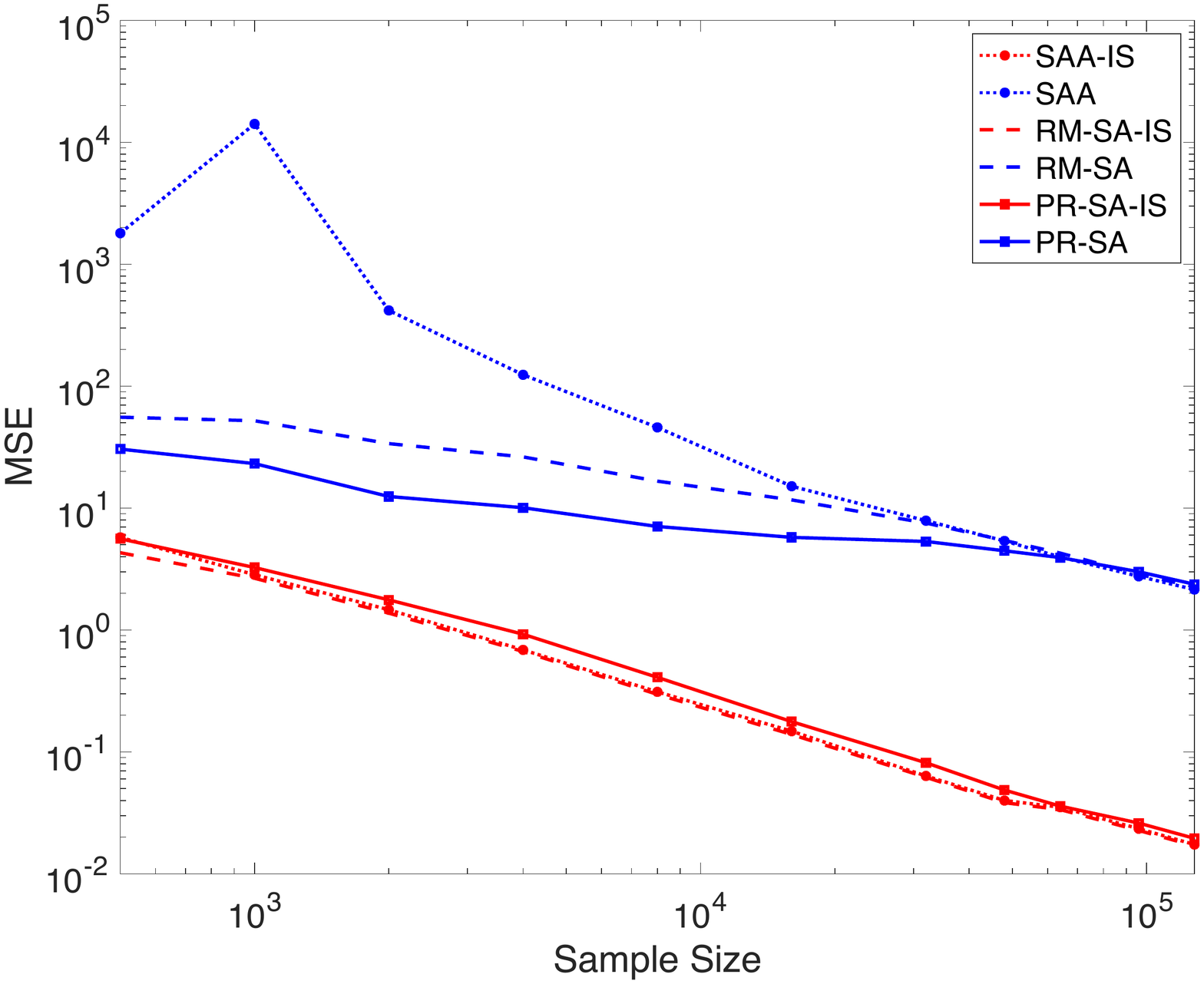}
\end{minipage}
\begin{minipage}[t]{0.32\linewidth}
\centering
\includegraphics[width=5.5cm]{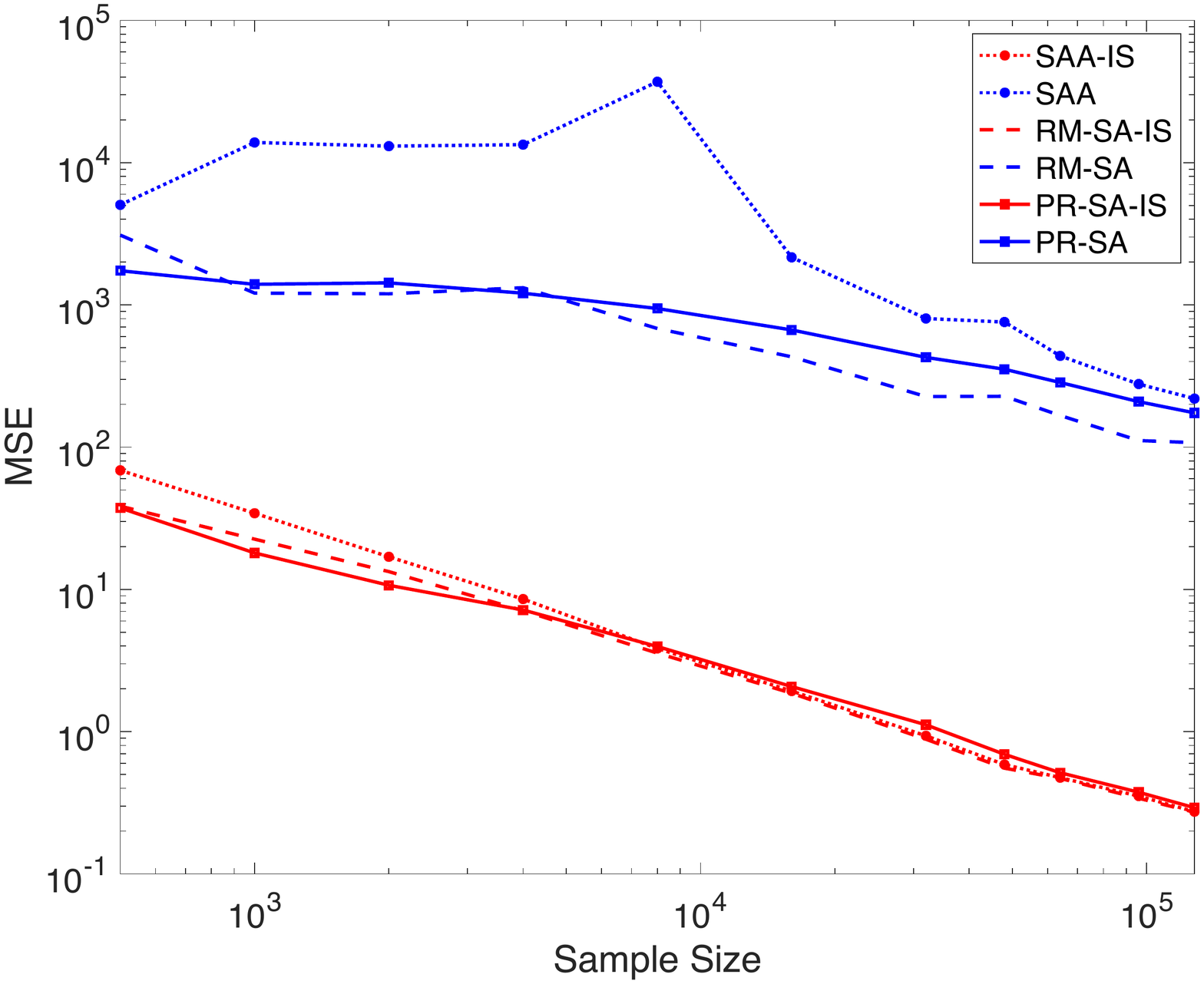}
\end{minipage}
\caption{MSE of SAA, RM-SA, PR-SA, with and without adaptive IS for Pareto-tailed distribution ($p=0.99$ for the left panel; $p=0.999$ for the middle panel; $p=0.9999$ for the right panel)}
\label{fig:tail_mse}
\end{figure}

\begin{table}[H]
\centering
\caption{Variance of SAA, RM-SA and PR-SA with and without adaptive IS for Pareto-tailed distribution ($p=0.99$)}
\label{tab:tail01}
\small
\begin{tabular}{c|l l l|l l l|l l l l|}
\toprule
 Sample Size     & SAA-IS   & SAA      & ratio  & RM-SA-IS    & RM-SA       & ratio  & PR-SA-IS   & PR-SA      & ratio  \\  
 \midrule
 500    & 3.48E-01 & 8.93E+00 & 26 & 3.25E-01 & 5.92E+00 & 18 & 3.69E-01 & 6.16E+00 & 17 \\
1000   & 1.60E-01 & 2.70E+00 & 17 & 1.68E-01 & 3.19E+00 & 19 & 2.00E-01 & 3.51E+00 & 18 \\
2000   & 8.39E-02 & 1.19E+00 & 14 & 8.50E-02 & 1.56E+00 & 18 & 1.04E-01 & 1.91E+00 & 18 \\
4000   & 4.14E-02 & 6.53E-01 & 16 & 4.20E-02 & 8.57E-01 & 20 & 5.70E-02 & 1.04E+00 & 18 \\
8000   & 1.91E-02 & 3.40E-01 & 18 & 1.93E-02 & 4.06E-01 & 21 & 2.67E-02 & 4.87E-01 & 18 \\
16000  & 8.93E-03 & 1.53E-01 & 17 & 8.88E-03 & 1.68E-01 & 19 & 1.23E-02 & 2.09E-01 & 17 \\
32000  & 4.30E-03 & 6.97E-02 & 16 & 4.07E-03 & 8.25E-02 & 20 & 5.57E-03 & 9.81E-02 & 18 \\
64000  & 2.44E-03 & 3.66E-02 & 15 & 2.43E-03 & 4.13E-02 & 17 & 2.59E-03 & 4.62E-02 & 18 \\
128000 & 1.29E-03 & 1.89E-02 & 15 & 1.32E-03 & 1.92E-02 & 15 & 1.52E-03 & 2.28E-02 & 15\\
 \bottomrule
\end{tabular}
\end{table}

\begin{table}[H]
\centering
\caption{Variance of SAA, RM-SA and PR-SA with and without adaptive IS for Pareto-tailed distribution ($p=0.999$)}
\label{tab:tail001}
\small
\begin{tabular}{c|l l l|l l l|l l l l|}
\toprule
 Sample Size     & SAA-IS   & SAA      & ratio  & RM-SA-IS    & RM-SA       & ratio  & PR-SA-IS   & PR-SA      & ratio  \\  
 \midrule
 500    & 5.77E+00 & 1.74E+03 & 301  & 4.06E+00 & 4.80E+01 & 12  & 5.20E+00 & 5.83E+00 & 1.1   \\
1000   & 2.85E+00 & 1.32E+04 & 4623 & 2.53E+00 & 5.08E+01 & 20  & 3.01E+00 & 7.69E+00 & 2.6   \\
2000   & 1.47E+00 & 3.54E+02 & 240  & 1.34E+00 & 3.37E+01 & 25  & 1.63E+00 & 6.19E+00 & 3.8   \\
4000   & 6.89E-01 & 1.09E+02 & 159  & 6.49E-01 & 2.61E+01 & 40  & 8.82E-01 & 7.83E+00 & 8.9   \\
8000   & 3.08E-01 & 4.33E+01 & 141  & 2.80E-01 & 1.64E+01 & 58  & 3.86E-01 & 6.70E+00 & 17  \\
16000  & 1.48E-01 & 1.50E+01 & 101  & 1.36E-01 & 1.15E+01 & 85  & 1.67E-01 & 5.75E+00 & 34  \\
32000  & 6.37E-02 & 7.58E+00 & 119  & 6.21E-02 & 6.92E+00 & 112 & 7.97E-02 & 5.12E+00 & 64  \\
64000  & 3.54E-02 & 3.89E+00 & 110  & 3.35E-02 & 4.01E+00 & 120 & 3.53E-02 & 3.47E+00 & 98  \\
128000 & 1.75E-02 & 2.15E+00 & 123  & 1.70E-02 & 2.21E+00 & 130 & 1.94E-02 & 2.20E+00 & 114\\
 \bottomrule
\end{tabular}
\end{table}

\begin{table}[H]
\centering
\caption{Variance of SAA, RM-SA and PR-SA with and without adaptive IS for Pareto-tailed distribution ($p=0.9999$)}
\label{tab:tail0001}
\small
\begin{tabular}{c|l l l|l l l|l l l l|}
\toprule
 Sample Size     & SAA-IS   & SAA      & ratio  & RM-SA-IS    & RM-SA       & ratio  & PR-SA-IS   & PR-SA      & ratio  \\  
 \midrule
500    & 6.86E+01 & 1.16E+03 & 17   & 3.71E+01 & 4.17E+02 & 11  & 2.78E+01 & 1.52E+01 & 0.5  \\
1000   & 3.44E+01 & 1.22E+04 & 354  & 2.26E+01 & 8.49E+02 & 38  & 1.62E+01 & 1.07E+01 & 0.7  \\
2000   & 1.70E+01 & 1.26E+04 & 740  & 1.34E+01 & 3.79E+02 & 28  & 1.05E+01 & 3.43E+01 & 3.3  \\
4000   & 8.57E+00 & 1.34E+04 & 1568 & 7.08E+00 & 7.52E+02 & 106 & 7.16E+00 & 3.38E+01 & 4.7  \\
8000   & 3.81E+00 & 1.48E+04 & 3875 & 3.51E+00 & 3.36E+02 & 96  & 3.98E+00 & 2.79E+01 & 7.0  \\
16000  & 1.92E+00 & 2.14E+03 & 1110 & 1.84E+00 & 2.70E+02 & 147 & 2.06E+00 & 2.26E+01 & 11 \\
32000  & 9.39E-01 & 8.03E+02 & 855  & 8.86E-01 & 1.53E+02 & 172 & 1.12E+00 & 1.87E+01 & 17 \\
64000  & 4.76E-01 & 4.40E+02 & 923  & 4.70E-01 & 1.28E+02 & 272 & 5.14E-01 & 2.59E+01 & 50 \\
128000 & 2.75E-01 & 2.20E+02 & 800  & 2.72E-01 & 7.67E+01 & 281 & 2.92E-01 & 1.69E+01 & 58\\
 \bottomrule
\end{tabular}
\end{table}

As in the previous examples, the variance reduction effect is clearly seen by comparing between the red curves and blue curves in Figure \ref{fig:tail_var}, and there is more variance reduction when $p$ is closer to 1. For example, in Tables \ref{tab:tail01}-\ref{tab:tail0001} where $p$ takes $0.99,0.999$, and $0.9999$, respectively, fixing the sample size as 128000, the variance reduction ratio for SAA-IS takes 15, 123, and 800, respectively.

\end{document}